\algrenewcommand\algorithmicrequire{{Data:}}
\algrenewcommand\algorithmicensure{Initialize:}
\newtheorem{mytheorem}{Theorem}
\newtheorem{mylemma}{Lemma}
\newcommand{\mc}{\mathcal}
\newcommand{\wSimon}[2]{\ensuremath{\mathsf{w}_{{#1}}\textnormal{Simon-}{#2}}} 
\newcommand{\NTSIQ}{\mathrm{NTS}_{\textnormal{IQ}}}
\newcommand{\EEB}{E_{\textnormal{EB}}}
\newcommand{\floor}[1]{\left\lfloor {#1} \right\rfloor} 
\newcommand{\bes}{\begin{subequations}}
\newcommand{\ees}{\end{subequations}}
\newcommand{\beq}{\begin{equation}}
\newcommand{\eeq}{\end{equation}}
\newcommand{\mcO}{\mathcal{O}} 
\newcommand{\mcOf}{\mathcal{O}_f} 
\newcommand{\expv}[1]{\langle #1\rangle} 
\newcommand{\abs}[1]{\ensuremath{\left|#1\right|}} 
\newcommand{\ave}[1]{\langle{#1}\rangle}
\newcommand{\ignore}[1]{}
\DeclareMathOperator{\HW}{HW} 
\def\Tr{\mathrm{Tr}}
\def\Pr{\mathrm{Pr}}
\newcommand{\ketbra}[1]{|{#1}\rangle\!\langle{#1}|}
\newcommand{\NTS}{\mathrm{NTS}}
\newcommand{\DKL}{D_{\mathrm{KL}}}
\newcommand{\vect}[1]{\boldsymbol{#1}}
\begin{document}

\title{Demonstration of Algorithmic Quantum Speedup for an Abelian Hidden Subgroup Problem}

\author{Phattharaporn Singkanipa}
\affiliation{Department of Physics,
University of Southern California, Los Angeles, CA 90089, USA}

\author{Victor Kasatkin}
\affiliation{Viterbi School of Engineering,
University of Southern California, Los Angeles, CA 90089, USA}
\author{Zeyuan Zhou}
\affiliation{William H. Miller III Department of
Physics \& Astronomy, Johns Hopkins University, Baltimore, Maryland 21218, USA}
\author{Gregory Quiroz}
\affiliation{William H. Miller III Department of
Physics \& Astronomy, Johns Hopkins University, Baltimore, Maryland 21218, USA}
\affiliation{Johns Hopkins University Applied Physics Laboratory, Laurel, Maryland 20723, USA}

\author{Daniel A. Lidar}
\affiliation{Departments of Electrical Engineering, Chemistry, Physics and Astronomy, and Center for Quantum Information Science \& Technology,
University of Southern California, Los Angeles, CA 90089, USA}

\date{\today}

\begin{abstract}
Simon's problem is to find a hidden period (a bitstring) encoded into an unknown $2$-to-$1$ function. It is one of the earliest problems for which an exponential quantum speedup was proven for ideal, noiseless quantum computers, albeit in the oracle model. Here, using two different $127$-qubit IBM Quantum superconducting processors, we demonstrate an algorithmic quantum speedup for a variant of Simon's problem where the hidden period has a restricted Hamming weight $w$. For sufficiently small values of $w$ and for circuits involving up to $58$ qubits, we demonstrate an exponential speedup, albeit of a lower quality than the speedup predicted for the noiseless algorithm. The speedup exponent and the range of $w$ values for which an exponential speedup exists are significantly enhanced when the computation is protected by dynamical decoupling. Further enhancement is achieved with measurement error mitigation. This constitutes a demonstration of a bona fide quantum advantage for an Abelian hidden subgroup problem.
\end{abstract}

\maketitle

\section{Introduction}

Quantum algorithms have been known to outperform classical algorithms for more than 30 years~\cite{Deutsch:92,bernsteinQuantumComplexityTheory1997,Simon:94,Grover:97a,Shor:97,childs2003exponential,Van-Dam:2006aa,Harrow:2009aa,montanaroQuantumAlgorithmsOverview2016,Bravyi:2017aa,Bravyi:2020aa,Bharti:2022aa,Daley:2022vu}, assuming that they run on ideal, noiseless quantum devices. However, today's noisy intermediate-scale quantum (NISQ) \cite{Preskill2018} devices are functional on a relatively small scale of several hundreds of qubits and are highly susceptible to performance degradation due to decoherence and control errors.
A central current focus is the experimental demonstration of an algorithmic quantum speedup on these devices, i.e., a scaling advantage for a quantum algorithm solving a computational problem. A variety of such demonstrations have been reported~\cite{Albash:2017aa,King:2019aa,Saggio:2021vh,Centrone:2021tq,Maslov:2021aa,Xia:2021ux,Huang:2021,Ebadi:22,zhouExperimentalQuantumAdvantage2022,King:22,Kim:2023aa}, but the classical hardness of the problems chosen in these demonstrations relied on computational complexity conjectures or the complexity of a restricted set of classical algorithms. These demonstrations are separate and distinct from recent quantum supremacy results~\cite{aaronson2016,Arute:2019aa,wu2021strong,Zhong:2020aa,Zhong:2021wv,morvan2023phase} and from benchmarking the performance of NISQ devices~\cite{figgattComplete3QubitGrover2017,wrightBenchmarking11qubitQuantum2019,royProgrammableSuperconductingProcessor2020,Pelofske2022,Lubinski2021}. The former does not corroborate a scaling advantage~\cite{Barak:spoofing,Zlokapa:2023aa,Aharonov:22}, while the latter focuses on demonstrating better-than-classical probabilities of success without resolving the question of scaling of the time to solve the computational problem with problem size, which is essential for the demonstration of an algorithmic quantum speedup~\cite{speedup}.

Recently, a conjecture-free algorithmic quantum speedup was demonstrated in the oracle model against the best possible classical algorithm~\cite{pokharel2022demonstration}. In particular, a polynomial algorithmic quantum scaling advantage for the single-shot version of the Bernstein-Vazirani algorithm was observed when implemented on a 27-qubit IBM Quantum processor with noise suppression via dynamical decoupling (DD)~\cite{Viola:98,Viola:99,Zanardi:1999fk,Vitali:99,Duan:98e}, a well-established error suppression method for NISQ devices~\cite{Pokharel2018, Arute:2019aa, souza2020process,tripathi2021suppression,jurcevicDemonstrationQuantumVolume2021,raviVAQEMVariationalApproach2021,Zeyuan:22, baumer2023efficient, seif2024suppressing, Shirizly:2024aa, Baumer2024, evert2024syncopated, brown2024efficient,tripathi2024quditdynamicaldecouplingsuperconducting}. 
The Bernstein-Vazirani algorithm was among the very first algorithms for which a quantum speedup was rigorously proven, and in this sense, it is of historical significance. Even more interesting would be an algorithmic quantum speedup for a problem that belongs to the class of Abelian hidden subgroup problems~\cite{Jozsa:2001aa}, which includes integer factorization.
Simon's problem~\cite{Simon:94}, a precursor to Shor's factoring algorithm~\cite{Shor:97}, involves finding a hidden subgroup of the group $(\mathbb{Z}_2^n, \oplus)$, which is the $n$-fold direct product of the cyclic group $\mathbb{Z}_2$ (the integers modulo 2) with the group operation being bitwise XOR ($\oplus$), and this group is Abelian. In Simon's problem, the Abelian hidden subgroup consists of the identity and a secret string $b$, and the goal is to determine $b$. 

Here, we revisit Simon's problem and demonstrate an unequivocal algorithmic quantum speedup for a restricted Hamming-weight version of this problem using a pair of IBM Quantum processors. Similar to Ref.~\cite{pokharel2022demonstration}, we find an enhanced quantum speedup when the computation is protected by DD. The use of measurement error mitigation (MEM) further enhances the scaling advantage we observe. Our result can be viewed as bringing the field of NISQ algorithms closer to a demonstration of a quantum speedup via Shor's algorithm. It also highlights the essential role of quantum error suppression methods in such a demonstration.

To set the stage, we now explain the restricted-weight Simon's problem that is the focus of this work. 
In the original formulation of Simon's problem, we are given a function
$f_b: \{0,1\}^n \rightarrow \{0,1\}^n$,
where $n$ is the problem size, and we are promised that $f_b$ is either 1-to-1 or 2-to-1, such that
$\forall x,y \in \{0,1\}^n$,
$f_b(x) = f_b(y)$
if and only if $x = y$ or $x = y \oplus b$ for a hidden bitstring $b\in \{0,1\}^n$. We wish to determine which condition holds for $f_b$ and, in the latter case, find $b$. Since the first condition (1-to-1) has one possible hidden bitstring $b=0^n$, in this work we only consider the 2-to-1 version of Simon's problem, and our goal is to determine the length-$n$ hidden bitstring $b$. It is well known (and we revisit this below in detail) that the classical and quantum query complexities are $O(2^{n/2})$ and $O(n)$, respectively.

Instead of allowing all possible binary hidden bitstrings $b\in\{0,1\}^n$ as in the original Simon's problem, we restrict their Hamming weight (HW), that is, the number of $1$'s in $b$. We refer to this modification as the `restricted-HW' version of Simon's algorithm, denoted \wSimon{w}{n}. Here $w\le n$ is the maximum allowed HW. We will show that this modification allows us to exhibit a quantum speedup for relatively shallow circuits, whose depth is set by $w$. The need for the restriction arises because current NISQ devices are still too noisy to fully solve the original, unrestricted ($w=n$) Simon's problem when $n$ becomes large.

We work in the setting of a guessing game, the rules of which are tuned in order to make the speedup possible and are explained in \cref{s:rules}. In particular, we introduce an oracle-query metric we call NTS (number-of-oracle-queries-to-solution), to quantify the performance of different players of this game. The optimal classical and quantum algorithms to solve Simon's problem are described in \cref{sec:clqm-algo}. We show that the classical algorithm requires $\Omega(n^{w/2})$ oracle queries, whereas the quantum algorithm requires $\sim w\log_2(n)$ queries. In \cref{sec:speedup} we quantify how a quantum speedup can be detected using the NTS metric. In \cref{sec:expt} we discuss the experimental setup and IBM Quantum devices we used to perform our algorithmic speedup tests, including the DD sequences we selected to enhance performance. Then, in \cref{sec:results} we discuss our results and the evidence for a quantum speedup in the \wSimon{w}{n} problem. We conclude in \cref{sec:conc}. The appendices contain additional technical details as well as supplemental experimental results.

Finally, we caveat the speedup result we find by noting that although the structure of the oracle is unknown to the player who wishes to solve Simon's problem, someone needs to play the role of the `verifier' who acts like a referee, that is, construct the oracle for the player and verify whether the player's guess is correct. The verifier needs to know the structure of the quantum oracle $f_b$ in order to construct the quantum circuit for each hidden bitstring $b$. Since it is a Clifford circuit, the oracle we construct in this work can be efficiently simulated by a classical computer. However, this does not destroy the claimed speedup because our setting assumes the `black box' scenario \cite{Zantema2022}, i.e., the players are not allowed to know the structure of the oracle and solve the problem in linear time by classically constructing the oracle on their own.

\section{Rules of the Game}
\label{s:rules}

The game is designed for a single player. If there are multiple players, they can play the game individually and then compare their scores. At a high level, the game works as follows. A function $f\colon \{0, 1\}^n \to \{0, 1\}^n$ that satisfies the 2-to-1 condition is chosen uniformly at random. The function is not known to the player, but the player has oracle access to compute $f(x)$ for any $x\in \{0, 1\}^n$. For classical players, oracle access means that the player can send a query $x$ to the oracle and receive $f(x)$ in return. For quantum players, we define a unitary $\mcOf$ such that $\mcOf \ket{x} \ket{a} = \ket{x} \ket{a \oplus f(x)}$. The player performs a certain number of such oracle queries and arbitrary classical computations and then guesses the hidden bitstring $b$. The correctness of the guess is checked, and a new round begins, i.e., a new function is chosen, and the game is repeated. The goal of the player is to maximize the number of correct guesses using the smallest number of oracle queries.

The rest of this section is devoted to specifying the details missing in the simplified description above: the set of functions $f$, the oracle access mechanism, and the exact formula used to score the player.

\subsection{Set of functions \texorpdfstring{$f$}{f}}
\label{sec:rules:f}

We would like to design a game that is hard for a classical computer and easy for a quantum computer. The classical complexity lower bound (see \cref{sec:cl-algo}) is based on the assumption that the only information about $b$ that the classical player can extract by evaluating $f$ at various points is the presence or absence of a match $f(x) = f(y)$ for pairs of queries $(x, y)$. In $\wSimon{w}{n}$, the conditions are as follows: (i) $\HW(b)\leq w$; (ii) $f(x) = f(y)$ if and only if $x = y$ or $x = y \oplus b$; and (iii) $b \neq 0^n$. The function $f$ is then chosen uniformly at random from the set of all functions that satisfy these conditions. Such a broad class of functions $f$ may seem overly complicated, but to illustrate the importance of this choice, consider an extreme alternative: $b$ is chosen uniformly from all bitstrings that satisfy conditions (i) and (iii), and $f(x) = \min(x, x \oplus b)$, where $\min$ yields the first bitstring in lexicographic ordering. In this case, since $f(1^n) = 1^n \oplus b$, we have $b = 1^n \oplus f(1^n)$. Therefore, the classical player can find $b$ in a single query: $x=1^n$.

\subsection{Oracle access mechanism}

Unfortunately, choosing $f$ uniformly at random from the set of all functions that meet conditions (i)--(iii) leads to a problem. To select a function we need to pick its value for each pair of inputs $x$ and $x\oplus b$. There are $2^{n-1}$ such pairs and $2^n$ possible outputs, all of which should be different. One way to compute the number of such functions is to first pick one of the $\binom{2^n}{2^{n-1}}$ subsets as the image of $f$ and then one of the $(2^{n-1})!$ ways to map $2^{n-1}$ inputs to $2^{n-1}$ outputs, giving the total number of functions $f$ for a fixed $b$ as $\binom{2^n}{2^{n-1}} (2^{n-1})! = 2^{n}! / (2^{n-1})! > 2^{(n-1)2^{n-1}}$, which means that the choice of $f$ requires at least $(n-1)2^{n-1}$ encoding bits. One would thus expect exponentially many gates to be required to implement the unitary quantum oracle $\mcOf$ in a quantum circuit. Such large circuits are infeasible on NISQ devices. To circumvent this problem, we introduce the notion of a compiler that preserves the shallowness of the quantum circuits implementing $\mcOf$ but prevents making the problem trivial by offloading all computations to classical post-processing. The details of this compiler are given in \cref{app:rules:oracle}. The corresponding Simon's oracle construction is described in \cref{app:simon-oracle}.

Under our rules of the game, the player does not see the inner workings of the oracle implementation. This is because otherwise, the player would be able to efficiently reconstruct $b$. Nor does the player get to see the inner workings of the compiler since an efficient reconstruction of $b$ would be possible if the player saw (1) the circuit that was actually sent to the quantum device and (2) the mapping between the qubits of that circuit and the qubits in the original circuit.

\subsection{Scoring}
\label{sec:scoring}

Any function $f$ satisfying (i)--(iii) can be decomposed as $f(x) = f_1(f_0(x))$, where $f_0$ is any 2-to-1 function satisfying condition (ii) with the same $b$, and $f_1$ is a permutation of bitstrings. A uniformly random $f$ can be obtained by fixing $f_0 = f_{0b}$ for every $b$, then picking $b$ and $f_1$ uniformly at random.    
Since exponentially many bits are required to record the permutation $f_1$, the program to compute $f_1$ occupies exponential memory. If no caching is involved, this program would need to be loaded into memory every time, which takes exponential time in $n$. Thus, classical post-processing may easily remove most of the quantum advantage. To circumvent this,
we introduce a scoring function that only accounts for the number
of oracle queries and ignores the time needed for all other steps of the protocol.

Given the goal of maximizing the number of correct guesses of $b$ while minimizing the number of oracle queries, the players' performance 
can be measured in terms of the average score per query, which we denote by $\NTS^{-1}$,
where
\begin{equation}
  \NTS = \frac{\expv{Q}}{\expv{P}} .
  \label{eq:NTS.def}
\end{equation}
Here NTS denotes the \emph{number-of-oracle-queries-to-solution},
$\expv{\bullet}$ denotes the expectation value (which can be obtained
as an average over many rounds of the game), $Q$ is the number of
oracle queries, and $P$ is the score obtained in a round.
The best player is the one with the highest $\NTS^{-1}$.

Naively, one could set $P = 1$ if the guess is correct and $P = 0$ otherwise.
However, this would allow the players to use the following strategy. Let
$N$ be the total number of options for $b$, and let $C$ be a constant satisfying
$N \geq C > 0$.
With probability $C / N$, the player executes a single query and discards
the result; otherwise, $0$ queries are executed. Then the player guesses
$b$ uniformly at random. The expected score is $1/N$, and the expected number
of queries is $C/N$. Thus, $\NTS = C$, and the player can make $\NTS$ arbitrarily
small (e.g., $\NTS = 1$) by picking $C$ small enough.

To avoid this, we introduce a penalty for incorrect guesses. Let $p_r = 1/N$
be the probability of a random $b$ being correct. Then we set $P = 1$ if the guess
is correct and $P = -p_r / (1 - p_r)$ otherwise. This ensures that the randomly
guessing player has $\expv{P} =  p_r \times 1 + (1-p_r) \times [-p_r / (1 - p_r)] = 0$ and $\NTS = \infty$.

To recap, we focus on counting the number of oracle calls 
for two main reasons:
(i) optimizations performed by the compiler may change the time significantly, and 
(ii) the data generated by NISQ devices, contrary to classical and ideal quantum algorithms, requires non-trivial post-processing, which will destroy any quantum speedup we might otherwise derive from such devices. Indeed, the classical post-processing algorithm we describe in \cref{app:post-processing} has a higher cost than that of the classical algorithm we describe in the next section.

\section{Classical and quantum algorithms}
\label{sec:clqm-algo}

We now explain in detail the algorithms for the different variants of the algorithm we consider:
  (i) classical,
  (ii) noiseless quantum, 
  (iiia) NISQ, and
  (iiib) NISQ without measurement errors, simulated using measurement error mitigation (MEM). 

We first comment briefly on the quantum variants. In the \emph{noiseless quantum} case, an ideal gate-based quantum computer executes the circuit the player designs. Whether any compilation is performed is irrelevant to determining the optimal score since such a compilation does not affect the number of oracle calls or the probability distribution of the outcomes of the circuit execution returned to the player. In the \emph{NISQ} case, we run the compiled circuit on a NISQ device and, after post-processing, return the result to the player. The score a player receives in this case depends not only on the player's strategy but also on the NISQ device used. In the \emph{NISQ without measurement errors} case, we run the circuit multiple times and use MEM to estimate the probability distribution we would have obtained from a single run if measurement errors were absent. Then we sample from that probability distribution, apply the post-processing, and return the result to the player.

From the standpoint of the player, the procedure for implementing the `NISQ' and `NISQ with MEM' variants remains identical. This is due to MEM being processed within the compiler, with the observer receiving solely the outcome, presumed to incorporate measurement errors (NISQ) or devoid of such errors (NISQ with MEM). 

\subsection{Classical algorithm}
\label{sec:cl-algo}

The original Simon's problem can be solved using a classical deterministic algorithm in $\Theta(2^{n/2})$ queries \cite{Simon:94,Cai2018}.\footnote{Recall that $f(x) \in \Theta[g(x)]$ means that $f$ and $g$ grow at the rate asymptotically: there exist two positive constants $c_1$ and $c_2$ and $x_0>0$ such that for all $x \ge x_0$: $c_1 g(x) \leq f(x) \leq c_2 g(x)$.}
We sharpen this bound here and generalize it to the setting of the \wSimon{w}{n} problem. 

Let $S$ be the set of all possible values for the hidden bitstring $b\neq 0^n$. In the original Simon's problem the size of this set is $N_n\equiv 2^n-1$.
We now consider \wSimon{w}{n}, where the set $S$ of possible values of $b$ is restricted by $\HW(b)=w<n$. The size of this set is
\begin{equation}
N_w \equiv \sum_{j=1}^{w}\binom{n}{j}.
\label{eq:N_w}
\end{equation}

\begin{mytheorem}
\label{th:Simon-classical}
A lower bound on the worst-case number of queries needed by a classical player to solve \wSimon{w}{n} is:
\begin{equation}
     k \geq \left\lceil \sqrt{2N_w-\frac{7}{4}}+\frac{1}{2}\right\rceil \equiv k_{\min}(N_w) .
     \label{eq:cl-worst}
\end{equation}
\end{mytheorem}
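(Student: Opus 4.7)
The plan is a classical adversary (decision-tree) argument. The crucial observation is that the only information a classical player can extract from a query response is collisions: if $f(x_i)=f(x_j)$ with $i\neq j$, the 2-to-1 promise immediately identifies $b = x_i\oplus x_j$; whereas if the responses seen so far are pairwise distinct, the player has only learned that $b$ must avoid the set $\{x_i\oplus x_j : 1\le i<j\le k\}$ of pairwise XORs.

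First, I would fix an arbitrary deterministic querying strategy and construct an adversary that maintains a candidate set $\mathcal{B}\subseteq S$ of still-consistent hidden periods, initially $\mathcal{B}=S$ with $|S|=N_w$. In response to the $k$th query $x_k$ the adversary returns a previously-unseen value $y_k$; this answer is consistent with every $b\in\mathcal{B}$ except those of the form $b = x_k\oplus x_i$ with $i<k$. Thus each query removes at most $k-1$ elements from $\mathcal{B}$, so after $k$ queries $|\mathcal{B}|\geq N_w - \binom{k}{2}$. A short consistency check shows that such an all-distinct response sequence really does extend to a legal 2-to-1 function $f_b$ for every surviving $b\in\mathcal{B}$: since $b\notin\{x_i\oplus x_j\}$, the partner points $x_i\oplus b$ are distinct from each other and from the queried $x_j$'s, so one sets $f_b(x_i)=f_b(x_i\oplus b)=y_i$ and pairs the remaining $2^n-2k$ inputs via $x\leftrightarrow x\oplus b$, assigning fresh distinct images.

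Because a guess the player commits to must be correct against the worst-case $b\in\mathcal{B}$, the strategy can only terminate with certainty once $|\mathcal{B}|\leq 1$. This forces $\binom{k}{2}\geq N_w - 1$, i.e.\ $k^2 - k - 2(N_w-1)\geq 0$, whose positive root is $k \geq \tfrac{1}{2} + \sqrt{2N_w - \tfrac{7}{4}}$. Rounding up to the nearest integer then yields \eqref{eq:cl-worst}.

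The step I expect to require the most care is the consistency check: verifying that the adversary can, at every point in the interrogation, actually produce an all-distinct answer compatible with every remaining $b\in\mathcal{B}$ simultaneously, without being inadvertently forced into a contradiction earlier than the naive counting argument suggests. Everything else reduces to elementary counting of XOR pairs and a single quadratic inequality.
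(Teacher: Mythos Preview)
Your proposal is correct and follows essentially the same approach as the paper: both argue that after $k$ distinct-response queries the only information obtained is that $b\notin\{x_i\oplus x_j:i<j\}$, so one needs $\binom{k}{2}\geq N_w-1$, and then solve the resulting quadratic. The paper justifies the ``only information'' claim by invoking the uniform distribution over admissible $f$'s (so the posterior on $b$ stays uniform on the surviving set), whereas you phrase it as an explicit adversary with a consistency check; these are equivalent formulations of the same decision-tree lower bound.
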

\noindent The proof is presented in \cref{app:alternative-deriv}.

Note that since this result is for a deterministic classical algorithm, it is also a lower bound on the worst-case classical NTS. 
\Cref{eq:NTS.def}, however, involves not the worst-case but the average number of queries $\expv{Q}$.
We show in \cref{app:alternative-deriv} that the lower bound on the expected number of classical queries $\expv{Q_C}$ needed to know $b$ exactly is: 
\begin{equation}
     \expv{Q_C} \geq 
     k-\frac{k(k-1)(k-2)}{6N_w} .
    \label{eq:cl-expect}
\end{equation}
As mentioned above, guessing $b$ when it is unknown to the classical player does not increase the $\NTS$; hence, 
the lower bound on the classical NTS $\equiv \NTS_C =  {\expv{Q_C}}/{\expv{P_C}} \ge {\expv{Q_C}}$ is given by the r.h.s. of \cref{eq:cl-expect}. Combining this with \cref{eq:cl-worst} gives
\begin{equation}
\NTS_C \ge \NTS^{\text{lb}}_C(N_w) \equiv k_{\min}-\frac{k_{\min}(k_{\min}-1)(k_{\min}-2)}{6N_w}
    \label{eq:cl-expect2}
\end{equation}
as the lower bound on the average case classical NTS. 

Note that $\binom{n}{w} \le N_w \le \min(w \binom{n}{w}, 2^n -1)$.
Thus, for constant $w$ (independent of $n$), $N_w \sim n^w$, $k_{\min} \sim n^{w/2}$, and \cref{eq:cl-expect2} yields $\NTS_C\sim n^{w/2}$. Without restricting $\HW(b)$, $\NTS_C=O(2^{n/2})$. However, with a fixed upper limit $w$ on $\HW(b)$, $\NTS_C$ is polynomial asymptotically (as $n\to \infty$).

Although \cref{eq:cl-expect2} lower-bounds $\NTS_C$, there is no guarantee that the player will find a sequence of bitstrings $x = \{x_1,x_2,\dots,x_k\}$ that achieves it.  Nevertheless, henceforth we use the lower bound given by \cref{eq:cl-expect2} as the metric for $\NTS_C$ because if the quantum algorithm defeats this lower bound, the implication is an unequivocal quantum speedup. In \cref{app:bounds-gap}, we derive an upper bound on $\NTS_C$ and discuss the gap between the lower and upper bounds.

\subsection{Noiseless quantum algorithm}
\label{sec:q-algo}

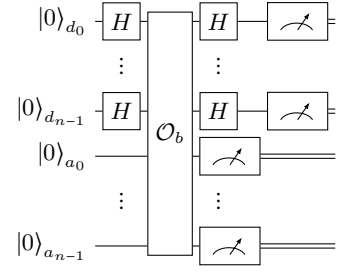
\begin{figure}
    \centering
    \begin{tikzpicture}
        \begin{yquant}
            qubit {$\ket{0}_{d_0}$} x1;
            nobit d0;
            qubit {$\ket{0}_{d_{n-1}}$} xn;
            qubit {$\ket{0}_{a_0}$} a1;
            nobit d1;
            qubit {$\ket{0}_{a_{n-1}}$} an;
            h x1;
            h xn;
            text {$\vdots$} d0;
            text {$\vdots$} d1;
            box {$\mc{O}_b$} (x1, d0, xn, d1, a1, an);
            h x1;
            h xn;
            measure a1;
            measure an;

            align x1, xn, a1, an;
            measure x1;
            measure xn;
            text {$\vdots$} d0;
            text {$\vdots$} d1;
        \end{yquant}
    \end{tikzpicture}
    \caption{Quantum circuit for solving Simon's problem with a length-$n$ hidden bitstring $b$. The structure of the oracle $\mc{O}_b$ is not visible to the player who wishes to guess $b$. The top $n$ measurement results form the bitstring $z$, and the bottom $n$ measurement results are discarded in the algorithm.}
    \label{fig:simon-circ}
\end{figure}

The original Simon-$n$ problem can also, in theory, be solved on a noiseless quantum computer (QC). \Cref{fig:simon-circ} shows a quantum circuit that executes this algorithm. The circuit uses $2n$ qubits for \wSimon{w}{n}; the first $n$ are data qubits, labeled $d_j$, and the other $n$ are ancilla qubits, labeled $a_j$, where $j=0,1,\dots,n-1$. The box $\mc{O}_b$ is the oracle whose action is $f_b(x)$ on an input $x$, but its circuit is hidden from the players. Each execution of the circuit produces a uniformly random $z$
such that $b \cdot z = 0$.
The well-known proof that $O(n)$ executions of this circuit (on average) on an ideal quantum computer are sufficient for solving Simon-$n$ is given in \cref{app:SimonQ} for completeness.

For our purposes, we need the $\NTS$ for the noiseless quantum \wSimon{w}{n} problem, for arbitrary $w$. More precisely, we need the $\NTS$ for an 
optimal, ideal quantum player executing the circuit in \cref{fig:simon-circ} until $b \in S$ can be uniquely determined, which we denote by $\NTSIQ$ for `ideal quantum'. To obtain this, we first consider a generalization of the \wSimon{w}{n} problem to the case where $b$ is known to belong to some subset $S$ of $\mathbb{Z}_2^n \setminus \{0^n\}$, and obtain $\NTSIQ$ exactly in the two limits $w=1,\infty$, after which we construct $\NTSIQ$ for arbitrary $1<w<\infty$ via interpolation.

\begin{mytheorem}
\label{th:1}
If $\abs{S} = 1$ then $\NTSIQ = 0$.
  If $\abs{S} \geq 2$ then
  \begin{equation}
    \label{eq:NTSIQ-bounds}
    \log_2 \abs{S} \leq \NTSIQ \leq \log_2 (\abs{S} - 1) + 2.
  \end{equation}
  For \wSimon{w}{n}, we have $\abs{S} = N_w$.
For Simon-$n$ (i.e., \wSimon{\infty}{n}), $\abs{S} = 2^{n} - 1$ and
  \begin{equation}
    \label{eq:NTSIQ-winf}
    \NTSIQ = \sum_{k=1}^{n-1} \frac{1}{1-2^{-k}} = n + \EEB - 1 + O(2^{-n}),
  \end{equation}
  where $\EEB = 1.60669\dots$ is the Erd\H{o}s
-Borwein constant~\cite{Erdos-Borwein}.
  For \wSimon{1}{n}, $\abs{S} = n$ and
  \begin{equation}
    \label{eq:NTSIQ-w1}
    \NTSIQ = \log_2(n) + \frac12 + \frac{\gamma}{\ln(2)} + \chi(\log_2(n)) + O\left(\frac{1}{n}\right),
  \end{equation}
  where $\gamma = 0.57721\dots$ is the Euler-Mascheroni constant
  and $\chi$ is a small ($\ll 1$) periodic function with mean $0$.
\end{mytheorem}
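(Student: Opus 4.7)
My plan is to work with the candidate set $C_k = S \cap V_k^\perp$, where $V_k = \mathrm{span}(z_1,\ldots,z_k)$, and the stopping time $T = \min\{k : |C_k|=1\}$, so that $\NTSIQ = \mathbb{E}[T]$. The prior $b \sim \mathrm{Uniform}(S)$ together with $z_i \mid b \sim \mathrm{Uniform}(b^\perp)$ makes the posterior of $b$ given $z_{\leq k}$ uniform on $C_k$. Under this posterior, any $b' \in C_k$ with $b' \neq b$ satisfies $z_{k+1} \cdot b' = 0$ with probability $1/2$ (since two distinct nonzero $\mathbb{F}_2$-vectors are linearly independent), while the true $b$ always survives, yielding the key recursion
\begin{equation}
  \mathbb{E}\bigl[|C_{k+1}| \,\big|\, C_k\bigr] = \frac{|C_k|+1}{2}, \qquad \mathbb{E}[|C_k|-1] = \frac{|S|-1}{2^k}.
\end{equation}
When $|S|=1$, $\NTSIQ=0$ trivially; otherwise this recursion drives the global bounds.

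For the upper bound, I apply Markov's inequality, $\Pr[T>k] \leq \min\bigl(1,(|S|-1)/2^k\bigr)$, and split the sum at $L=\lfloor\log_2(|S|-1)\rfloor$:
\begin{equation}
  \mathbb{E}[T] = \sum_{k\geq 0}\Pr[T>k] \leq (L+1) + \sum_{k\geq L+1}\frac{|S|-1}{2^k} = (L+1) + \frac{|S|-1}{2^L}.
\end{equation}
Writing $c=(|S|-1)/2^L \in [1,2)$ recasts this as $\log_2(|S|-1) + (1+c-\log_2 c)$, and elementary calculus gives $1+c-\log_2 c \leq 2$ on $[1,2)$. For the lower bound I would show that $X_k \equiv \log_2|C_k| + k$ is a submartingale for the natural filtration of $\{z_i\}$. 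The engine is $I(b;z_{k+1}\mid z_{\leq k}) \leq 1$, which follows from $H(z_{k+1}\mid z_{\leq k}) \leq n$ and $H(z_{k+1}\mid b,z_{\leq k}) = H(z_{k+1}\mid b) = n-1$ (since $z_{k+1}$ is uniform on $b^\perp$, conditionally independent of the past given $b$). Combined with $H(b\mid z_{\leq k}) = \log_2|C_k|$ from the uniform posterior, the chain rule gives $\mathbb{E}[\log_2|C_{k+1}|\mid z_{\leq k}] \geq \log_2|C_k| - 1$. Optional stopping, justified by $\mathbb{E}[T]<\infty$ (from the upper bound just proven) and by the uniformly bounded increments $|X_{k+1}-X_k|\leq\log_2|S|$, yields $\mathbb{E}[T] = \mathbb{E}[X_T] \geq X_0 = \log_2|S|$.

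For Simon-$n$, where $S=\mathbb{F}_2^n\setminus\{0^n\}$, the algorithm terminates precisely when $\dim V_k = n-1$. Given $\dim V_k = k$, the inclusion $V_k \subseteq b^\perp$ together with $z_{k+1}$ uniform on $b^\perp$ yields $\Pr[z_{k+1}\in V_k] = 2^k/2^{n-1}$, so the number of queries needed to raise the dimension from $k$ to $k+1$ is geometric with success probability $1-2^{k-n+1}$. By memorylessness these waits are independent, so
\begin{equation}
  \NTSIQ = \sum_{k=0}^{n-2}\frac{1}{1-2^{k-n+1}} = \sum_{j=1}^{n-1}\frac{1}{1-2^{-j}} = (n-1) + \sum_{j=1}^{n-1}\frac{1}{2^j-1},
\end{equation}
and the tail $\sum_{j\geq n}(2^j-1)^{-1}$ is $O(2^{-n})$, producing $n+\EEB-1+O(2^{-n})$.

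For \wSimon{1}{n}, $S=\{e_1,\dots,e_n\}$; when $b=e_{i^\star}$, the coordinates of $z$ at positions $j\neq i^\star$ are i.i.d.\ fair coins, so the candidate $e_j$ is eliminated at the first query whose $j$-th bit equals $1$, a geometric-$(1/2)$ time $X_j$. The $\{X_j\}_{j\neq i^\star}$ are i.i.d., hence $T = \max_{j\neq i^\star} X_j$ is the maximum of $n-1$ i.i.d.\ geometric-$(1/2)$ random variables. The classical expansion
\begin{equation}
  \mathbb{E}\!\left[\max_{1\leq j\leq N} X_j\right] = \log_2 N + \frac{\gamma}{\ln 2} + \frac12 + \chi(\log_2 N) + O(1/N)
\end{equation}
follows from the exact representation $\mathbb{E}[\max] = \sum_{k\geq 0}[1-(1-2^{-k})^N]$ via Mellin-transform analysis: the transform in $N$ has simple poles at $s=-2\pi i m/\ln 2$ whose residues generate the mean-zero periodic $\chi$. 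Substituting $N=n-1$ and absorbing the shift $\log_2(n-1) = \log_2 n + O(1/n)$ into the $O(1/n)$ remainder gives the stated formula. I expect the main obstacle to be the Mellin analysis producing $\chi$ together with a uniform $O(1/n)$ error bound; the two bounds for general $|S|$ and the Simon-$n$ exact formula follow cleanly from the elementary recursions above.
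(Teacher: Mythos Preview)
Your proof is correct and follows essentially the same approach as the paper: the upper bound via Markov on $|C_k|-1$ is equivalent to the paper's union bound over the geometric elimination times $X_b$, and your lower bound rests on the same mutual-information inequality $I(b;z_{k+1}\mid z_{\leq k})\leq 1$, which you package rigorously via optional stopping where the paper argues more informally. The Simon-$n$ and \wSimon{1}{n} cases likewise match the paper's dimension-reduction and max-of-i.i.d.-geometric arguments (the paper simply cites \cite{poblete2006binomial} for the latter asymptotic, which your Mellin sketch reproduces).
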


The proof is given in \cref{app:th1-proof}.
Numerically, we estimate the amplitude of $\chi$ in \cref{eq:NTSIQ-w1} as $1.58 \times 10^{-6}$.

We now construct the aforementioned interpolation. Let $t \equiv N_w / (2^n - 1)$; $t$ represents the density of $b$'s in the space of all non-zero bitstrings. In \cref{eq:NTSIQ-winf} $t=1$ and in \cref{eq:NTSIQ-w1} $t$ approaches $0$.
The interpolation between \cref{eq:NTSIQ-winf,eq:NTSIQ-w1} is then:
\begin{equation}
\NTSIQ(t) = \log_2(N_w) + \left(\frac12 + \frac{\gamma}{\ln(2)}\right) (1-t) + (\EEB - 1) t .
\label{eq:NTSIQ-interp}
\end{equation}
\Cref{app:NTSIQ} demonstrates that this interpolation is an accurate estimate of the actual $\NTSIQ$ for the \wSimon{w}{n} problem when $1<w<\infty$.

\subsection{NISQ algorithm}
\label{sec:NISQ-algo}

The `NISQ algorithm' executes the circuit depicted in \cref{fig:simon-circ} on a NISQ device. 
Recall that to determine the correct hidden bitstring $b$ in the context of a noiseless quantum algorithm, it is necessary to acquire a set of $n-1$ independent measurements $z$, all of which satisfy the orthogonality condition $b \cdot z = 0$. However, in the presence of noise, the measurement outcomes from the first $n$ qubits are expected to deviate from the noiseless case, potentially leading to incorrect values for $z$. The likelihood of obtaining at least one incorrect $z$ increases as the problem size $n$ increases, and given the prevailing noise levels in current NISQ devices, the probability of encountering at least one erroneous $z$ is high even for relatively small problem sizes. Consequently, attempting to determine the parameter $b$ based solely on $n-1$ measurements is highly likely to yield an incorrect solution.

As a remedy, we employ a modified algorithm, which does not have any explicit bound on the number of queries, but the theory predicts that for the Simon-$n$ problem $\ln(N_w) \exp(O(n^2))$ measurements are required on average, and for the $\wSimon{n}{w}$ problem $\ln(N_w) \exp(O(w^2))$ measurements are required on average. Our empirical data roughly supports these theoretical estimates.
We assume that the player has access to the probability distribution $\Pr(z\cdot b=0,z\neq 0)$ of the quantum device. An example of such data is presented in \cref{app:MEM}.
This data is obtained by running the quantum device in a separate experiment, distinct from the main experiment designed to solve Simon's problem. With this probability in hand, the player utilizes Bayesian statistics to compute prior and posterior probabilities after each call to the oracle. Based on these probabilities, the player decides whether to guess $b$ or to make an additional call. For a comprehensive description, see \cref{app:post-processing}.

We subject the outcomes from the `NISQ' and `NISQ with MEM' algorithms to identical post-processing methodologies. The speedup comparison between the classical and quantum algorithms uses the better of the outcomes observed between these two quantum variants.

\section{Quantum speedup}
\label{sec:speedup}

In \cref{sec:scoring}, we motivated and defined the NTS
metric, which we use to compare the number of oracle queries
for classical and NISQ solutions to Simon's problem. Instead of computing the NTS for all possible oracles $\mcOf$, or even for all $N_{w}$ values of $b$, we note that our NISQ implementation for $\wSimon{w}{n}$ depends only on $n$, $w$, and the Hamming weight $\HW(b)=i$, which allows us to focus only on $w\le n$ instances of $b$: one representative $b$ for each $i$. In addition, instead of rerunning our circuits for every new value of $n$, we extract all $n$ values from the largest-$n$ circuits we implement. For full details, see the reduction method described in \cref{app:circ-reduction}.

\subsection{Quantum speedup quantification}

Let $h_i = {n\choose i}$ be the corresponding number of length-$n$ bitstrings $b$ and let $Q_i$ be the corresponding total number of oracle queries; we can write the number of queries for a given $i$ as $h_i Q_i$, and thus the total number of queries per hidden bitstring as $\ave{Q} = \sum_{i=1}^{n} h_i Q_i/N_n$. Let $p_i$ be the probability of the correct guess for a representative bitstring $b$ with $\HW(b) = i$. That is, $p_i$ is the fraction of successful guesses of the representative $b$ based on the $Q_i$ oracle calls. Assuming all $h_i$ bitstrings of fixed Hamming weight $i$ are guessed with the same probability $p_i$, the total probability of successfully guessing all weight-$i$ bitstrings is $h_i p_i$, and the expected overall probability of success over all bitstrings is $\ave{p} = \sum_{i=1}^{n} h_i p_i/N_n$. In the restricted $\wSimon{w}{n}$ version, since the player knows $w$, a random guess succeeds with probability $p_r=1/N_w$, where $N_w$ replaces $N_n$.
Then the average score (which includes the penalty for incorrect guesses and was defined in \cref{sec:scoring}) can be written as $\ave{P} = \ave{p}\times 1 + (1-\ave{p})\times [-p_r/(1-p_r)] = \ave{p} - (1-\ave{p})/(N_w-1) = \frac{1}{N_w-1}(N_w\ave{p}-1)$. We can now express the quantum NTS for the unrestricted ($w=n$) or restricted ($w<n$) version of Simon's problem as:
\begin{equation}
  \NTS_Q(n; w)
  = \frac{\ave{Q}}{\ave{P}}
  = \frac{N_w-1}{N_w} \frac{\sum_{i=1}^{w} h_i Q_i}{\sum_{i=1}^{w} h_i p_i-1} .
\label{eq:NTS-wn}
\end{equation}
Recall that we need to compare $\NTS_Q$ to the lower bound on the average-case classical NTS, i.e., $\NTS^{\text{lb}}_C$ given by \cref{eq:cl-expect2}, which is a simple function of $k_{\min}(N_w)$, the lower bound on the worst-case classical NTS given by \cref{eq:cl-worst}. Since $k_{\min}(N_w) \sim N_w^{1/2}$ for $N_w \gg 1$, it is more convenient to define the NTS scaling with respect to $N_w$ than $n$.

We can now define an \emph{algorithmic quantum speedup} for the $\wSimon{w}{n}$ problem as a better scaling with $N_w$ of the function $\NTS_Q$ [\cref{eq:NTS-wn}] than the function $\NTS^{\text{lb}}_C$ [\cref{eq:cl-expect2}]. We determine the scaling by fitting two two-parameter models and one three-parameter model that is intermediate between the first two. All three models satisfy the constraint $\NTS(N_w=1)=0$, since when $N_w=1$ the only possible string is $b=1$, so no oracle calls are needed.

First, consider a polylogarithmic (polylog) model in $N_w$:
\begin{equation}
    \NTS_{\text{polylog}}=a  (\log_2{N_w})^{\alpha} .
    \label{eq:polylog-NTS}
\end{equation}
Second, consider a polynomial (poly) model in $N_w$:
\begin{equation}
    \NTS_{\text{poly}}= b [N_w^\beta-1] .
    \label{eq:poly-NTS}
\end{equation}
Third, consider an intermediate  (mixed) model parameterized by three parameters:
\begin{equation}
    \NTS_{\text{mix}} = c[e^{C (\log_2 N_w)^\gamma\ln(1+\log_2 N_w)^{1-\gamma}}-1].
    \label{eq:mixed-hw-NTS}
\end{equation}
This model is inspired the by scaling of the classical number sieve algorithm for integer factoring, and its scaling is known as $L$-notation~\cite{Lenstra:2011aa}. When $\gamma=0$, $\NTS_{\text{mix}}=c[(1+\log_2 N_w)^C-1]$, which has the same scaling as the polylog model. When $\gamma= 1$, $\NTS_{\text{mix}}=c[N_w^{C/\ln 2}-1]$, which is the poly model. When $\gamma > 1$, the mixed model grows even faster than the poly model. When $\gamma < 0$ the mixed model gives an unreasonable scaling (which decreases after increasing). Hence, we constrained $0\le \gamma \le 1$ and $C,c>0$ to ensure that the mixed model provides a meaningful intermediate between the polylog and poly models. 

Since the classical scaling, $\NTS_C \sim N_w^{1/2}$, follows the poly model, an \emph{exponential} algorithmic quantum speedup is observed when $\NTS_Q$ follows the polylog model. If instead, $\NTS_Q$ follows the poly model with $\beta_Q < \beta_C$ for a fixed $w$, a \emph{polynomial} speedup is observed. When $\NTS_Q$ follows the mixed model with $0\le \gamma \le 1$, a \emph{subexponential} (and superpolynomial) algorithmic quantum speedup is observed. 

We refer to $\alpha$, $\beta$, and $\gamma$ as the scaling exponents. The fitting parameters $a$, $b$, $c$ and $C$ do not matter for scaling purposes. We explain how we determine which model is the better fit in \cref{sec:results}.

\subsection{Theoretical derivation of quantum speedup on NISQ devices}
\label{sec:NISQ-speedup-expectation}

\subsubsection{Exponential error model}
\label{sssec:poisson}

We now present a simple error model to predict the NTS scaling and to assess whether a speedup should be expected in the presence of noise. We prove a theorem (\cref{thm:NTS-bound}) that is a key result of this work: \emph{the restricted-weight Simon's problem exhibits an exponential quantum speedup even in the NISQ setting}. 


Recall from \cref{eq:NTSIQ-interp} that with a noiseless quantum computer only $\log_2(N_w) + O(1)$ circuit executions would be needed to find $b$.
To understand how this is modified in the NISQ setting, we use a simple error model
where we only care about the probability $p$ of $\kappa=0$ non-benign errors, i.e.,
errors that may affect the value of $z \cdot b$.
This probability is $p = \Pr(\kappa=0) = \prod_{j=1}^{M} (1-p_j)$,
where $j$ indexes the error locations in the circuit (gates and idle times between them)
which contribute to the value of $z \cdot b$,
$p_j$ is the corresponding error probability,
and $M$ is the number of locations.
We can rewrite this as $p = \exp(\sum_{j=1}^{M} \ln(1-p_j)) = \exp(M \langle-\ln(1-p_j)\rangle)$.
I.e., an absence of errors occurs with probability $p = e^{-q}$,
where $q \equiv M \langle -\ln(1-p_j)\rangle$.
As the circuit (i.e., $M$) grows, we expect $\langle-\ln(1-p_j)\rangle$ to approach a constant value and $q$ to asymptotically become proportional to $M$.

This simple exponential error model is appropriate for the circuits used in our experiments but would need to be adjusted if circuits involving features improving the resilience to noise, such as error detection, were used instead, or if $q/M$ deviates from a constant,
e.g., due to error correlation with the timing of gates.

When at least one non-benign error occurs (i.e., with probability $1 - p$),
we assume that the output bitstring $z$ is uniformly random,
while in the absence of such errors (probability $p$),
we obtain $z$ uniformly at random from the set of bitstrings
$z'$ satisfying $b \cdot z' = 0$.

The specific circuit used determines how $q$ depends on $n$ and $i = \HW(b)$. We assume the same circuits as used by the NISQ player in the experiments we
present in this work, whose depth is $\Theta(i)$.
Since $b \cdot z = \sum_j b_j z_j$ is invariant under errors that flip any $z_j$ for which $b_j=0$,
only the qubits corresponding to $b_j=1$
and the corresponding output qubits are subject to non-benign errors.
Thus, the number of error locations $L$ is $\Theta(i)$ (the number of qubits)
times $\Theta(i)$ (the depth), i.e., $q \in \Theta(i^2)$ uniformly in $n$.
In particular, $q < \lambda i^2 + o(i^2)$ for some $\lambda > 0$.
For the unrestricted Simon's problem, the maximum Hamming weight of $b$ is $n$,
thus $q \in O(n^2)$.
The role of error suppression via DD is to reduce $\lambda$.

\subsubsection{Simplified exponential error model}
In the error model above $q$ depends on the Hamming weight $i$ of $b$, which makes the theoretical analysis of the speedup challenging.
Below, for simplicity, we assume a constant $q$
and call the corresponding error model the \emph{simplified exponential error model}.
In this model, $z$ is sampled uniformly from all bitstrings satisfying
$b \cdot z = 0$ with probability $p = e^{-q}$ and uniformly at random
from all bitstrings in $\mathbb{F}_2^n$ otherwise.
A self-contained theoretical analysis of learning $b$ in this model is
presented in \cref{app:dkl}. One of the key results can
be summarized as follows:

\begin{mytheorem}
\label{thm:NTS-bound}
Assume a noisy quantum computer whose non-benign errors obey the simplified exponential error model with parameter
$q$ (independent of $b$). Then the \wSimon{w}{n} problem is solved in time
\begin{equation}
  \NTS_Q(n; w) \leq \frac{8 \ln(N_w) e^{2q}}{1 - N_w^{-2} + O(e^{-q})}.
\end{equation}
\end{mytheorem}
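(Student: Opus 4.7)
The strategy is to analyze a likelihood-ratio test for the hidden bitstring $b$ under the simplified exponential error model, bounding separately $\expv{Q}$ and $\expv{P}$ in $\NTS_Q = \expv{Q}/\expv{P}$. First I would write the per-query output distribution in closed form: with $p = e^{-q}$,
\begin{equation}
  P_b(z) \;=\; 2^{-n}\bigl[1 + (-1)^{b\cdot z}\,p\bigr].
\end{equation}
For any alternative $b'\neq b$ the joint marginal of $(b\cdot z,\, b'\cdot z)$ has cell probabilities $(1\pm p)/4$, from which a direct calculation yields
\begin{equation}
  \DKL(P_b \,\|\, P_{b'}) \;=\; \tfrac{p}{2}\,\ln\tfrac{1+p}{1-p} \;=\; p^2 + O(p^4).
\end{equation}
Each query therefore contributes information of order $e^{-2q}$ per alternative, which already explains the $\ln(N_w)\,e^{2q}$ scaling at the level of magnitudes.

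I would then fix the detector: take $Q = \lceil 8\ln(N_w)\,e^{2q}\rceil$ oracle queries and output the maximum-likelihood estimator $\hat b = \arg\max_{b'\in S}\prod_{k=1}^{Q} P_{b'}(z_k)$. The log-likelihood ratio $\Lambda_Q(b,b') = \sum_k \ln[P_b(z_k)/P_{b'}(z_k)]$ is an i.i.d.\ sum of variables supported on $\{-M,0,+M\}$ with $M = \ln\tfrac{1+p}{1-p}$, mean $\DKL$, and variance $\tfrac12 M^2$. A Bernstein inequality bounds $\Pr_b[\Lambda_Q(b,b')\le 0]$, and a union bound over the $N_w - 1$ alternatives produces a failure probability $\epsilon \le N_w^{-2} + O(e^{-q})$ for the chosen $Q$; this is the step that pins the numerical constant in the theorem. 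The lower bound on $\expv{P}$ then follows immediately: since $P = 1$ on success and $P = -p_r/(1-p_r)$ on failure with $p_r = 1/N_w$,
\begin{equation}
  \expv{P} \;=\; 1 \;-\; \frac{\epsilon}{1-p_r} \;\ge\; 1 - N_w^{-2} + O(e^{-q}),
\end{equation}
and dividing $\expv{Q} \le 8\ln(N_w)\,e^{2q}$ by this lower bound yields the claimed inequality.

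The main obstacle I expect is the concentration step with the correct constants: a plain Hoeffding bound using only the range $M \approx 2p$ is too loose because $\DKL$ is of order $M^2$ rather than $M$, so the exponent $\sim Qp^2/4$ is only recovered by a Bernstein-type inequality that exploits the variance $\tfrac12 M^2$. Once this sharper tail estimate is in place, the remaining checks---independence of samples across queries, measurability of the Bayes rule, the $O(e^{-q})$ correction coming from the non-kernel branch of the error model, and the handling of likelihood ties---are routine and are worked out in \cref{app:dkl}.
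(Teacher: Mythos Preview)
Your approach is genuinely different from the paper's. You fix $Q = \lceil 8\ln(N_w)e^{2q}\rceil$, apply MLE, and bound the error probability via Bernstein plus a union bound, aiming for $\expv{P} \ge 1 - N_w^{-2}$. The paper instead uses an \emph{adaptive} stopping rule---run until some candidate's posterior reaches $(1+1/N_w)/2$, then guess it---which gives $\expv{P}\ge 1/2$ directly. All the work then goes into bounding $\expv{Q}$ via the information-theoretic \cref{lm:Qmax} in \cref{app:dkl}: each query lowers the posterior entropy by at least $I(B;Z_p)\approx (1-K)p^2/(2\ln 2)$ bits, and the stopping rule caps $K=\sum_b P_B(b)^2$ at $K_1=(1+N_w^{-2})/2$, yielding $\expv{Q}\le 4\ln(N_w)e^{2q}/(1-N_w^{-2}+O(e^{-q}))$. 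So in the paper the $1-N_w^{-2}$ denominator comes from the information-gain bound in $\expv{Q}$, not from $\expv{P}$.

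There is a concrete gap in your constants. With $Q = 8\ln(N_w)/p^2$ the optimal Chernoff exponent for $\Pr_b[\Lambda_Q(b,b')\le 0]$ is $Q\ln\!\bigl[2/(1+\sqrt{1-p^2})\bigr] = Qp^2/4 + O(Qp^4) = 2\ln N_w + O(p^2\ln N_w)$, so the per-pair error is $\approx N_w^{-2}$. After the union bound over $N_w-1$ alternatives this gives $\epsilon \lesssim N_w^{-1}$, \emph{not} $N_w^{-2}$ as you claim, and hence only $\expv{P}\ge 1-O(N_w^{-1})$. Your method therefore proves $\NTS_Q \le 8\ln(N_w)e^{2q}/(1-O(N_w^{-1}))$, which has the right leading order but a strictly weaker denominator than the stated theorem. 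Raising $Q$ to $12\ln(N_w)e^{2q}$ would recover $\epsilon\le N_w^{-2}$ but spoil the constant $8$; the fixed-query/union-bound route cannot match both simultaneously, which is exactly why the paper's adaptive entropy argument is needed for the stated form.
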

This is an asymptotic exponential speedup [in $\ln(N_w)$]
in the NISQ setting for
the bounded Hamming weight case if $q$ has a bound independent of $n$.

Here, the conditions on $q$ are consistent with the simplified exponential error model
for circuits used in the experiments in this work but may not hold
for other circuits.

\begin{proof}
  Consider an agent who executes the circuit until the posterior
  probability of one of the options is at least $(1 + 1/N_w)/2$,
  then submits that guess. By construction, this agent
  has $\mathbb{E}(P) \geq 1/2$ [recall that $\mathbb{E}(P)$ is
  the denominator in the NTS formula \cref{eq:NTS.def}].
  In \cref{app:dkl}, \cref{lm:Qmax} and \cref{eq:Qmax.asymptotic}
  we obtain a bound on the expected number of queries
  [numerator in \cref{eq:NTS.def}]:
  \begin{equation}
    \label{eq:Qmax.asymptotic.copy}
    \mathbb{E}(Q) \leq \frac{2 \ln(2) p^{-2} H(P_B)}{1 - K_1 + O(K_1 p)}.
  \end{equation}
  Here $K_1 = (1 + N_w^{-2})/2$,
  $H(P_B) = \log_2(N_w)$ is the entropy of the prior distribution over $b$,
  and $p = e^{-q}$. Substituting these values into \cref{eq:Qmax.asymptotic.copy} we obtain
  \begin{equation}
    \label{eq:Qmax.asymptotic.q}
    \mathbb{E}(Q) \leq \frac{4 \ln(N_w) e^{2q}}{1 - N_w^{-2} + O(e^{-q})}.
  \end{equation}
\end{proof}

Note that these conclusions depend on the validity of the simplified exponential error model we have assumed. In addition, the calculation is information-theoretic and does not take into account the postprocessing cost (it only counts the number of queries).
We are unaware of any postprocessing algorithm that would solve the noisy problem for $b$ in polynomial time, whereas the classical complexity is $\sqrt{N_w}$.
However, according to the rules of the game we specified, this does not destroy the speedup because only the number of queries is counted.

\cref{thm:NTS-bound} gives a rigorous upper bound on $\NTS_Q$ but does not explain the dynamics of how information about $b$ is acquired and updated via the results of progressive circuit executions. We next provide an approximate, heuristic model that provides a more explicit model of this process.

Let $B$ be a random variable indicating the correct (unknown) value of $b$ and
let $X_b(0) = 1 / N_w$ for $b \in S$ (where $S$ is the set of all options for $b$ with $\abs{S} = N_w$)
be the prior (uniform) distribution of $b$. Then, our model is:
\begin{equation}
\label{eq:X_b-model}
  X_b(t) = \frac{X_b(0) e^{-Y_b(t)}}{\sum_{b' \in S} X_{b'}(0) e^{-Y_{b'}(t)}},
\end{equation}
where $t$ is time,\footnote{%
Assume that Alice learns information about $B$ by continuously obtaining identically distributed weak observations over  time, and these observations are independent conditional on the value of $B$. Assume further that, as in the simplified exponential error model, the distribution of observations is symmetric (i.e., for any permutation of $S$ there is a permutation of the space of observations preserving the joint probability distribution). We conjecture that this would necessarily lead to a model like in \cref{eq:X_b-model} (i.e., this model is universal in some sense) up to a rescaling of time, i.e., if Alice's time is $t'$ then we would need to rescale it to obtain $t$: $t = C t'$. Hence, $t$ can also be interpreted as the scaled number of queries.}
$Y_b(t) = W_b(t) + t \delta_{b, B}$ and $W_b(t)$ are i.i.d. standard Wiener processes.\footnote{A Wiener process $(W_{t})_{t \ge 0}$ is a continuous-time stochastic process
satisfying the following properties:
  (1) $W_{0} = 0$;
  (2) For any $0 \le s < t$, the increment $W_{t} - W_{s}$ is independent of the $\sigma$-algebra (defined below) generated by $\{W_{u}:u \le s\}$;
  (3) The increments are stationary and Gaussian, specifically $\forall 0 \le s < t$ we have $W_{t}-W_{s} \sim \mathcal{N}(0,t-s)$, where $\mathcal{N}(\mu,\sigma^{2})$ denotes a normal distribution with mean $\mu$ and variance $\sigma^{2}$;
  (4) With probability $1$, the sample paths $t \mapsto W_{t}$ are continuous.}
Here $\delta_{b, B}$ is $1$ if $b$ is the correct bitstring, $0$ otherwise.
It can be shown that $X_b(t)$ is a time-homogeneous Markov martingale;\footnote{This means that $X_b(t)$ is a Markov process with no explicit $t$-dependence in its transition probabilities and whose expected future given the present equals the present value: $\mathbb{E}[X_b(t+\Delta)|\mathcal{F}_t] = X_b(t)$ $\forall t\ge0,\Delta>0$}
that $X_b(t) = \Pr(b = B| \mathcal{F}_t)$, where $\mathcal{F}_t$ is the $\sigma$-algebra generated by the $X_b(s)$ for $s \leq t$
and measure $0$ sets;\footnote{A $\sigma$-algebra represents all the events about which we can, in principle, assign probabilities. A probability measure assigns a probability to each event in the $\sigma$-algebra. Including measure zero sets in the $\sigma$-algebra ensures that if an event $A$ is in the $\sigma$-algebra and $B$ differs from $A$ by a measure zero set, then $B$ is also measurable.} that $B$ is $\mathcal{F}_{\infty}$-measurable but not $\mathcal{F}_t$-measurable for any $t < \infty$;\footnote{This means that knowing the process $X_b(t)$ for all $t\ge 0$ allows one to determine $B$ with probability $1$, but knowledge of $X_b(s)$ up to any finite $t$ does not suffice.}
and that
\begin{equation}
  \label{eq:d-HJ-Xt}
  \frac{d}{dt}\mathbb{E}[ H_{J}(X(t))| \mathcal{F}_{t}] = -\frac{1}{2 \ln(2)} \left(1 - \sum_{j=1}^{n} X_j(t)^2\right)
\end{equation}
where $H_{J}(X(t)) = -\sum_{j=1}^{n} X_j(t) \log_2(X_j(t))$ is the Shannon entropy of $X(t)$ (interpreted as a distribution over $S$).

The right hand side of \cref{eq:d-HJ-Xt} is the amount of information gained per unit time about $B$. This can be compared with information learned per query
in the simplified exponential error model, which is
\begin{equation}
  \label{eq:I.B.Zp.asymptotic.copy}
  I(B; Z_p) = \frac{(1 - K) p^2}{2 \ln(2)} + O(p^4 + K p^3)
\end{equation}
[see \cref{eq:I.B.Zp.asymptotic} in \cref{app:dkl}
for the derivation of this expression],
where $K = \sum_{b\in S} P_B(b)^2$: the sum of the squares of prior probabilities of the correct bitstrings before the query was executed. In the limit $p \to 0$ \cref{eq:d-HJ-Xt,eq:I.B.Zp.asymptotic.copy} match if we execute $p^{-2} = e^{2q}$
queries per unit of time.

Hence, $X_b(t)$ can be used as a model for the posterior probability of $B$ after $t e^{2q}$ queries, so that $t$ is the scaled number of queries.
This toy model can be used to derive heuristic
predictions for the speedup, stopping criteria, etc.

In the next section, we present our experimental results testing the possibility of a speedup in the setting of the restricted Hamming-weight Simon's problem.

\section{Experimental Implementation}
\label{sec:expt}

Our experiments were conducted using the $127$-qubit devices Sherbrooke (ibm\_sherbrooke) and Brisbane (ibm\_brisbane), as well as the $27$-qubit devices Cairo (ibm\_cairo) and Kolkata (ibmq\_kolkata), whose specifications are detailed in \cref{app:device}.

The quantum circuit for each hidden bitstring $b$ was constructed according to \cref{fig:simon-circ}, then compiled to fit the architecture of the devices using the as-late-as-possible (ALAP) schedule. This schedule initializes each qubit just before its first operation. We performed our experiments in two main modes: (i) `with DD', i.e., error-suppressed experiments with different dynamical decoupling sequences, and (ii) `without DD,' i.e., the circuit as specified in \cref{fig:simon-circ} without any additional error suppression.  

\begin{figure}[t]
    \centering
    \includegraphics[width=0.99\linewidth]{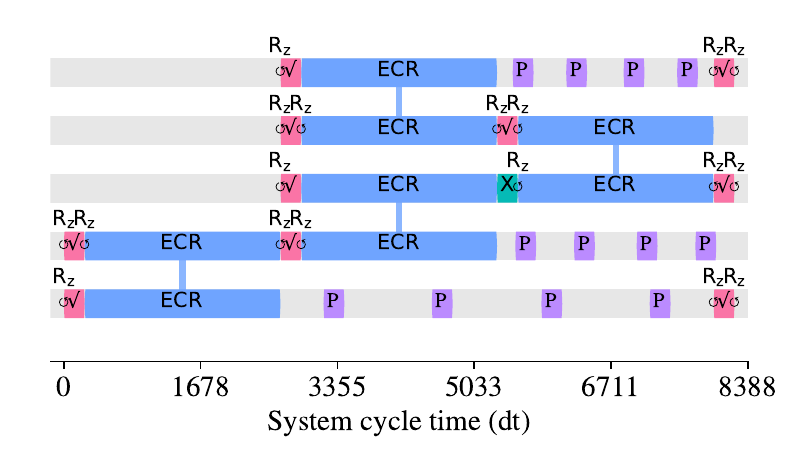}
    \caption{Circuit for Simon-3 ($b=111$) compiled into the Sherbrooke architecture using the ALAP schedule. The XY4 DD pulse sequences shown here as an example (the ``P'' makes inside the purple boxes) are placed such that they fill all available idle spaces after each qubit is initialized. Pulse intervals vary depending on the length of the idle period. circular arrows done $R_z$ gates and $\sqrt{}$ denotes the $\sqrt{X}$ gate. ECR denotes an echoed cross-resonance gate, which is equivalent to a CNOT up to single-qubit rotations. 
    }
    \label{fig:DD}
\end{figure}

We use a pre-compiled quantum circuit aligned with the underlying device architecture as our base circuit. The base circuit contains idle gap periods, suitable for the integration of DD sequences. We incorporated one iteration of a DD sequence into each idle gap, with the condition that if the duration of the period is insufficient for sequence inclusion, the gap remains unoccupied~\cite{pokharel2022demonstration}. The spacing of DD pulses is contingent upon both the length of the idle gap and the total number of pulses of the DD sequence. The compiled Simon-$3$ circuit ($b=111$) is depicted in \cref{fig:DD}. The idle intervals within this circuit are schematically populated with a $4$-pulse DD sequence.

We thoroughly examined a broad set of DD sequences: $\mathcal{D}=\{$CPMG, RGA$_{2x}$, XY4, UR$_6$, RGA$_{8a}$, UR$_{10}$, RGA$_{16b}$, RGA$_{32c}$, UR$_{32}\}$. These sequences were selected from a comprehensive array of sequences investigated in Ref.~\cite{DD-survey} on $1$-$5$ qubit IBM Quantum devices, which includes detailed descriptions of these sequences and concluded that sequence performance is highly device- and metric-dependent. Based on these findings, our sequence-selection process involved iterative experimentation, with smaller-scale trials conducted repeatedly to discern sequences that, on average, exhibited superior performance compared to others. We also investigated the crosstalk-robust sequences proposed in Ref.~\cite{Zeyuan:22}.

\begin{figure}[t]
    \centering
    \includegraphics[width=0.47\textwidth]{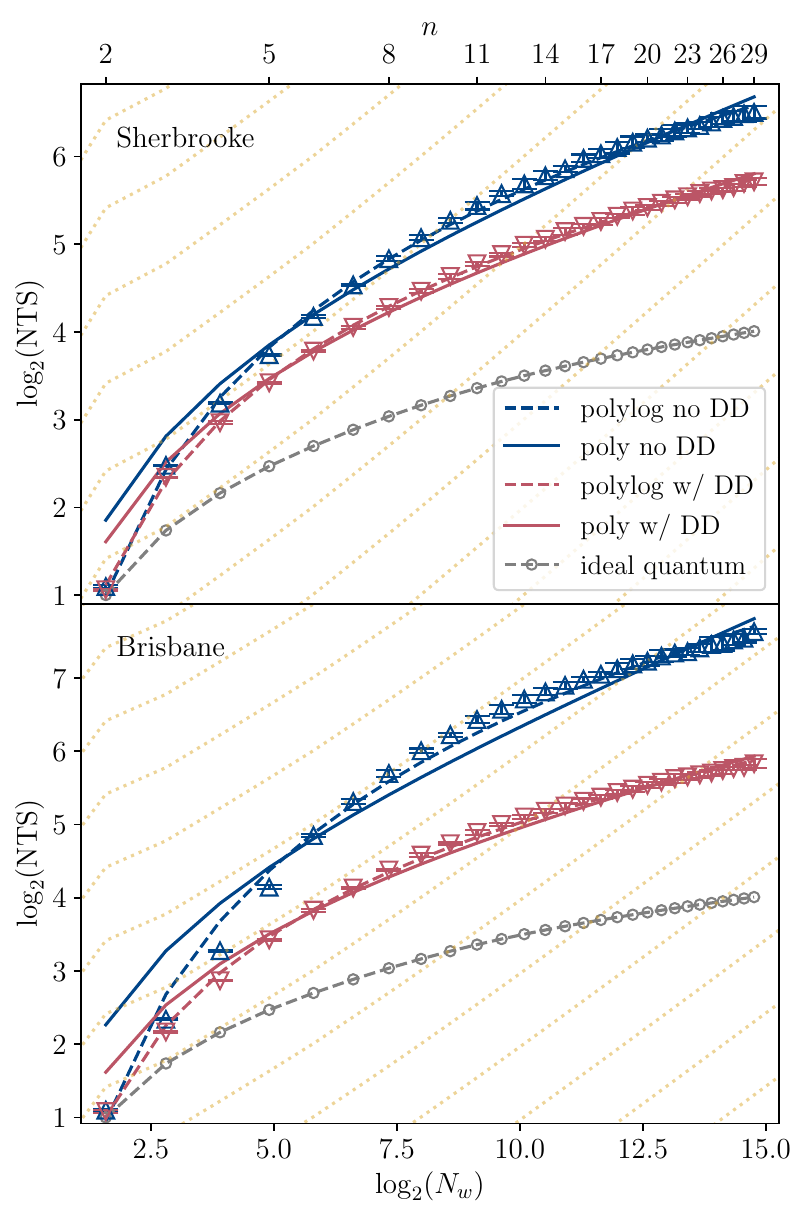}
    \caption{NTS as a function of $\log_2(N_w)$ and problem size $n$ on Sherbrooke (top) and Brisbane (bottom) for \wSimon{4}{29}, both with MEM. The blue lines represent the unprotected, no-DD form of the algorithm. The red lines represent the DD-protected form, i.e., a circuit where DD fills the idle gaps, with the optimal DD sequence from the set $\mathcal{D}$ used at each problem size $n$. The dashed lines denote fitting using the polylog model [\cref{eq:polylog-NTS}]; the solid lines denote fitting using the poly model [\cref{eq:poly-NTS}].
The error bars, representing confidence intervals derived through bootstrapping, extend $1\sigma$ in both directions from each data point. The yellow dotted lines represent the scaling of the lower bound on $\NTS_C$ and serve as a visual guide for the scaling of the poly model. The grey dashed line is the theoretical performance of the quantum algorithm running on a noiseless device, given by \cref{eq:NTSIQ-interp}.
}
    \label{fig:main-plot}
\end{figure}

Certain sequences, such as UR$_m$~\cite{Genov:2017aa} and RGA$_m$~\cite{Quiroz:2013fv}, are families depending on a parameter $m$, which (roughly) counts the number of pulses and the corresponding error suppression order; we optimized over $m$ and chose the best-performing sequences for the final experiment. The optimization procedure we used relies on the NTS metric. Subsequently, circuits incorporating idle periods, each filled with a specific sequence, were executed independently, and the corresponding NTS was computed. The DD sequence exhibiting the lowest NTS was chosen for each problem size $n$. Thus, our final results, discussed below, compare the unprotected `no-DD' circuits with DD-protected circuits where the NTS was minimized for each $n$ separately over all the sequences in $\mathcal{D}$. 

The results for all the sequences in $\mathcal{D}$, in support of our DD sequence ranking methodology, are presented in \cref{app:dd-rank}.

\section{Results and Discussion} 
\label{sec:results}

Our first main result is shown in \cref{fig:main-plot}, which plots the NTS as a function of problem size $n$ for the \wSimon{4}{29} problem run on both the $127$-qubit Sherbrooke and Brisbane devices. Additional results using the older $27$-qubit Cairo and Kolkata devices are provided in \cref{app:extra}. We used bootstrapping to compute all means and standard deviations, as documented in \cref{app:bootstrap}. As explained in \cref{sec:speedup}, we extracted all $n$ values from the largest-$n$ circuits we implemented, of $n=63$ ($126$ qubits). However, we found that for $n>29$, the results were no better than a random guess. We refer to $n\in [2,29]$ as the \emph{quantum range} (of problem sizes), since outside this range, the results are entirely classical. Thus, the results we present below were all extracted from the circuits of width $126$ qubits we implemented, after a partial trace that left us with circuits of $\le 58$ qubits, corresponding to the quantum range.

An immediate observation from \cref{fig:main-plot} is that \emph{the NTS scaling with DD is significantly better than without DD}. The measurement results were further post-processed using MEM before attempting to solve for $b$; see \cref{app:MEM} for details. This results in a slight additional improvement beyond DD. 

For a given \wSimon{w}{n} problem, we aim to determine whether a quantum speedup is observed and, if so, whether it is exponential, subexponential, or polynomial. First, we fit the $\NTS_Q$ data to determine whether it follows the polylog or poly model. Recall that $\NTS_C$ follows the poly model. Thus, if $\NTS_Q$ follows the polylog model, it exhibits an exponential speedup compared to $\NTS_C$. If $\NTS_Q$ follows the mixed model, it exhibits a subexponential/superpolynomial speedup compared to $\NTS_C$. 
Otherwise, we compare the scaling exponents of the poly models; if $\beta_Q < \beta_C$, the speedup is polynomial. 

It is visually clear from \cref{fig:main-plot} that, while far from exhibiting the ideal quantum scaling, in both the Sherbrooke and Brisbane cases the polylog model provides a better fit than the poly model for the \wSimon{4}{29} problem (we do not show the mixed model to avoid overcrowding the plot). This suggests an exponential quantum speedup over the quantum range of problem sizes for $w=4$. We devote the rest of our analysis to more rigorously establishing the validity of this statement, and to determining the range of $w$ values for which it holds in each case, i.e., Sherbrooke and Brisbane, with or without DD.

To determine whether the polylog, mixed, or poly model is a better fit for $\NTS_Q$, we use the coefficient of determination $R^2$ and the Akaike Information Criterion (AIC)~\cite{Akaike:1974aa} as statistical measures. These are independent measures of the quality of model fit. Namely, $0\le R^2\le 1$ measures the proportion of variance explained by the model and is based on the ratio of the residual sum of squares and the total sum of squares; when $R^2=1$, all the variance in the dependent variable can be explained by the independent variables. The AIC, on the other hand, is based on the maximum likelihood estimate and penalizes models with more parameters (since we only consider two-parameter models, the latter aspect does not affect our analysis). The numerical AIC value has no intrinsic significance as such, but lower values indicate a better model. 

In \cref{app:stat-analysis}, we present a more comprehensive statistical analysis that accounts for additional statistical measures beyond $R^2$ and the AIC; we reach identical conclusions regarding the suitability of the polylog and poly models. 

\begin{figure*}
    \centering
    \includegraphics[width=\columnwidth]{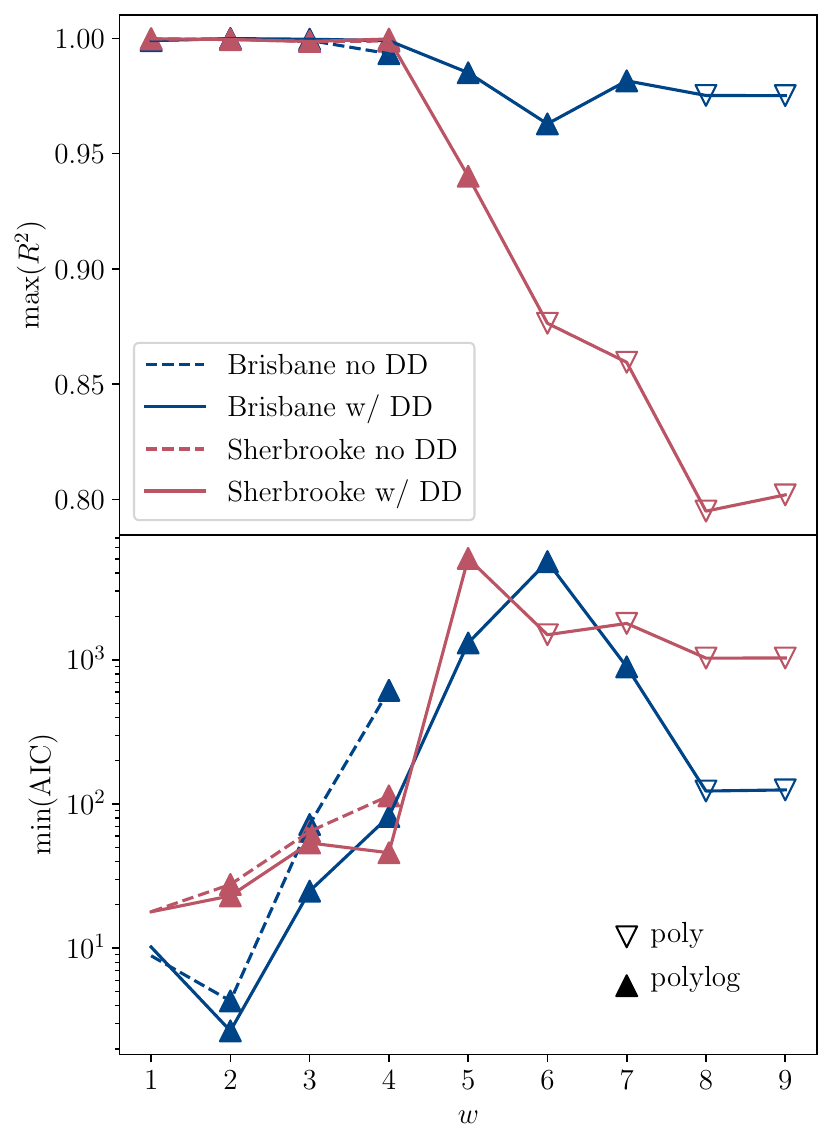}
    \includegraphics[width=\columnwidth]{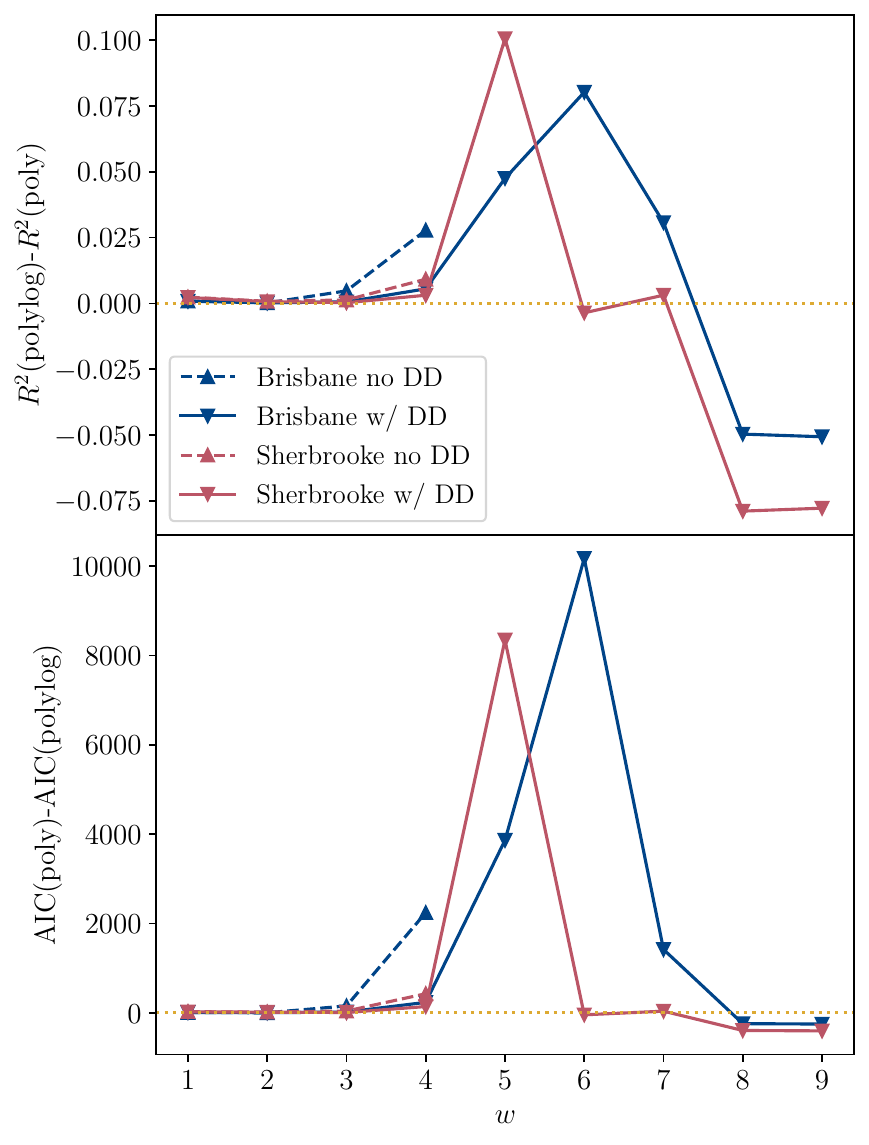}
    \caption{Test of whether the polylog or poly model is the better fit for \wSimon{w}{n} as a function of HW $w$, for both Sherbrooke and Brisbane, either with or without DD, and both with MEM. Top left: at each $w$ value, the symbol shows the model with the highest $R^2$ value. Bottom left: at each $w$ value, the symbol shows the model with the lowest AIC value. The $R^2$ and AIC criteria are in complete agreement. Top right: difference of the $R^2$ values of the polylog and poly models. Bottom right: difference of the AIC values of the poly and polylog models. 
The transition HW value, $w_t$, is the value of $w$ at which the poly model becomes a better fit than the polylog model; as seen from both the left and right panels, $w_t$ is identical for the $R^2$ and the AIC measures.}
    \label{fig:mixed_R^2+AIC}
\end{figure*}

We fit all three models as a function of $N_w$ (or $n$) for each \wSimon{w}{n} problem, for $1\le w \le 9$; the full set of fits is provided in \cref{app:QCTRL,app:all_results}. For each such fit, we compute the $R^2$ and AIC. For a given value of $w$, the model with the higher $R^2$ and lower AIC is the better fit. 
\cref{fig:mixed_R^2+AIC} shows the results for Brisbane and Sherbrooke, both with and without DD. For each HW (value of $w$), we plot the model with the highest $R^2$ (top left) and the lowest AIC (bottom left). Up triangles correspond to the polylog model and down triangles to the poly model. We do not show the results for the mixed model since, as discussed in detail in \cref{app:all_results}, we find that with the exception of two datapoints, it always reduces to either the polylog or the poly model. This means that \emph{we have no evidence of a subexponential/superpolynomial quantum speedup}. Comparing the $R^2$ and AIC criteria in \cref{fig:mixed_R^2+AIC}, we observe that they are in complete agreement on which model (polylog or poly) is the best fit in each case. 

We observe that for Brisbane with DD (blue, solid), the polylog model is the best fit for $w\in [1,7]$, corresponding to an exponential speedup over this range. Then, for $w\in [8,9]$, the poly model is the best fit; we address whether this corresponds to a polynomial speedup below. Similarly, for Sherbrooke with DD (red, solid), the polylog model is the best fit for $w\in [1,7]$ except at $w=6$, corresponding to an exponential speedup over this slightly smaller range. The poly model is the best fit for $w= \{6,8,9\}$. The more comprehensive statistical analysis we report in \cref{app:stat-analysis}, involving additional statistical measures, supports the same conclusions, including the poly model being a better fit for Sherbrooke with DD at $w=6$. Finally, we observe that for $w\in [1,4]$ the polylog model is the best fit for both Brisbane and Sherbrooke without DD, corresponding to an exponential speedup. We do not display results for $w>4$ due to the failure of Brisbane and Sherbrooke without DD to solve the \wSimon{w>4}{n} problem; consequently, we obtained $\le 3$ data points in the $\NTS_Q(n)$ plots, which is insufficient data for reliable model fitting. 

To provide a quantitive sense of the fit quality difference between different models, the right panels of \cref{fig:mixed_R^2+AIC} show the difference in model values between the polylog and poly models. The top-right panel is the difference in $R^2$, and the bottom-right panel is the difference in AIC; positive values indicate that the polylog model is better. We define $w_t$ as the HW value where the transition from the polylog model being better ($w<w_t$) to the poly model being better ($w\ge w_t$) occurs. We can then interpret $w_t-1$ as the largest HW such that an exponential quantum speedup is observed. For $w\ge w_t$ (except for Sherbrooke with DD at $w=7$), if a quantum speedup exists, it is polynomial.

\begin{figure}[t]
    \centering
    \includegraphics[width=0.49\textwidth]{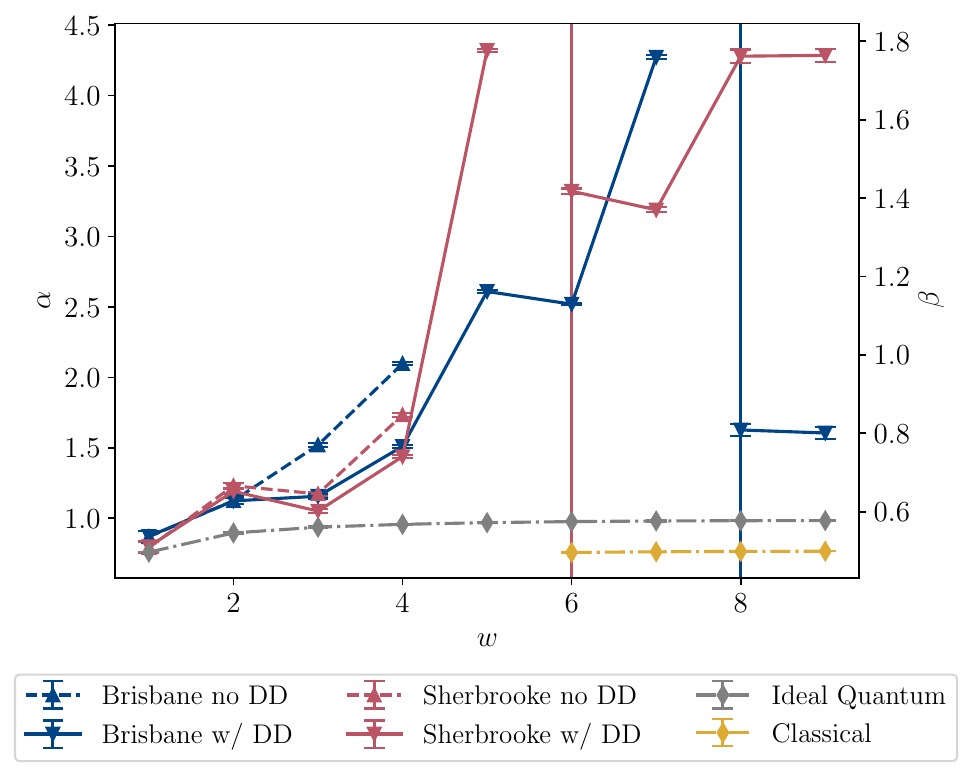}
    \caption{The fitted scaling parameters, $\alpha$ (left axis) and $\beta$ (right axis) of the polylog and poly models, respectively, as a function of $w$ for \wSimon{w}{n} on Sherbrooke and Brisbane, both with MEM. The vertical lines indicate the HW where the transition between the two models occurs, denoted $w_t$ in the text. The polylog (poly) model provides a better fit to the left (right) of the vertical line corresponding to Brisbane with DD (blue, solid) and Sherbrooke with DD (red, solid). For both Brisbane and Sherbrooke without DD the polylog model is always better, but only for $w\in[1,4]$, beyond which we do not have enough data (see text). When $\NTS_Q$ follows the polylog model, an exponential quantum speedup holds with a scaling exponent given by $\alpha$ [\cref{eq:polylog-NTS}]. For the poly model [\cref{eq:poly-NTS}], we find that the quantum slope is always above the classical slope indicated by the dashed-dotted yellow line, corresponding to a polynomial quantum slowdown. 
    The grey curve corresponds to the $\alpha$ value of an ideal (noise-free) quantum implementation. The error bars, representing the standard deviation of the fitted parameters on the bootstrapped data, extend $1\sigma$ in each direction from each data point. The Brisbane results are generally better than Sherbrooke's. The exponential speedup ``quality'', as quantified by the value of $\alpha$, generally deteriorates as $w$ increases.}
    \label{fig:allslopes}
\end{figure}

Having established the HW transition point for each model, we proceed to extract the corresponding scaling exponents. The results are shown in \cref{fig:allslopes}. The vertical lines in this figure indicate $w_t$, and the polylog model provides a better fit to the left side of each such line. Thus, for Brisbane with DD, the solid blue curve to the left of the vertical line at $w_t=8$ gives the scaling exponent $\alpha$ associated with the exponential speedup, and similarly for the red solid curve to the left of the vertical line at $w_t=6$ for Sherbrooke with DD. At or to the right of these two vertical lines, we observe the absence of a polynomial speedup for Brisbane or Sherbrooke since in both cases $\beta_Q > \beta_C$. We summarize the speedup results in \cref{tab:speedups}. 

\begin{table}[h]
    \centering
    \begin{tabular}{|c|c|c|c|}
        \hline
        & exponential speedup &  polynomial slowdown \\
        \hline
        Brisbane no DD & \( w\in[1,4] \) & insufficient data \\
        \hline
        Sherbrooke no DD &  \( w\in[1,4] \)  & insufficient data \\
        \hline
        Brisbane w/ DD & \( w\in[1,7] \) & \( w\in[8,9] \) \\
        \hline
        Sherbrooke w/ DD & \( w\in[1,7]\setminus\{6\} \)  & \( w\in\{6,8,9\} \) \\
                \hline
    \end{tabular}
    \caption{Summary of speedup results.}
    \label{tab:speedups}
\end{table}

These results largely agree with the prediction of an exponential speedup we made in \cref{sec:NISQ-speedup-expectation} based on a simple error model. The main deviation is the observation of a polynomial slowdown shown in \cref{tab:speedups}. In \cref{app:stat-analysis2} we report an analysis that drops the first four data points from our fits, as these points might exhibit small-size effects. When we repeat our model fitting under these conditions, we find that the polylog model (i.e., an exponential speedup) is always the better fit.

These speedups involve $127$ physical qubits, or $58$ after partial trace at the largest problem sizes for which the problems are solved in the quantum range. The previously reported algorithmic quantum speedup result for the single-shot Bernstein-Vazirani problem extended to $27$ physical qubits and used the $27$-qubit Montreal processor~\cite{pokharel2022demonstration}. Our present results thus increase both the number of physical qubits for which an algorithmic quantum speedup has been reported and the complexity of the quantum algorithm for which the result holds.

\section{Summary and Conclusions}
\label{sec:conc}

The goal of demonstrating an algorithmic quantum speedup, i.e., a quantum speedup that scales favorably as the problem size grows, is central to establishing the utility of quantum computers.
Simon's problem is an early example of the Abelian hidden subgroup problem and a precursor to Shor's factoring algorithm. It requires exponential time to solve on a classical computer but only linear time on a noiseless quantum computer, assuming we count oracle queries but do not account for the actual resources spent on executing the oracle. Here, we studied a modified version of Simon's problem, which restricts the allowed Hamming weight of the hidden bitstring to $w\le n$. The classical solution of this version scales as $n^{w/2}$. Our goal was to determine whether NISQ devices are capable of providing an algorithmic quantum speedup in solving this version of Simon's problem.

We ran restricted-HW Simon's algorithm experiments on the IBM Quantum platform and demonstrated that two $127$-qubit devices, Sherbrooke and Brisbane, exhibit an \emph{exponential algorithmic quantum speedup}, which extends to larger HW values when we incorporate suitably optimized DD protection. MEM slightly enhances the speedup. Additional optimization and further improvements are certainly possible; see \cref{app:QCTRL}.

These results significantly extend the scope of 
quantum speedups for oracular algorithms. More generally, our work expands the frontier of empirical quantum advantage results and hints that algorithmic quantum speedup results involving practically relevant algorithms may also be within reach. 

An interesting open question is whether NISQ-device quantum speedups could be achieved for algorithms with deeper and more complex circuits, e.g., Shor's algorithm. Our current implementation of Simon's problem requires roughly $400$ two-qubit gates (after compilation) and $60$ qubits, while a novel construction of Shor's algorithm requires on the order of thousands of two-qubit gates and tens of qubits \cite{rines2018high,10262370}. Whether entering the quantum range with such circuit complexity is within the capacities of error suppression and mitigation methods without requiring full-scale quantum error correction, is a key motivating question for the entire field.

\section{Acknowledgements}
We thank Bibek Pokharel for useful discussions. The research of PS, VK, and DAL was supported by the ARO MURI grant W911NF-22-S-0007 and is based in part upon work supported by the National Science Foundation the Quantum Leap Big Idea under Grant No. OMA-1936388. This material is also based upon work supported by the Defense Advanced Research Projects Agency (DARPA) under Contract No. HR001122C0063. ZZ and GQ acknowledge funding from the U.S. Department of Energy (DOE), Office
of Science, Office of Advanced Scientific Computing Research (ASCR), Accelerated Research in Quantum Computing program under Award Number DE-SC0020316. This research was conducted iusing IBM Quantum Systems provided through USC's IBM Quantum Innovation Center. This research also used resources of the Oak Ridge Leadership Computing Facility, which is a DOE Office of Science User Facility supported under Contract DE-AC05-00OR22725. 

\appendix

\section{Compiler}
\label{app:rules:oracle}

We explained in \cref{sec:rules:f} that even for a fixed $b$ the choice of $f$ requires
at least $(n-1)2^{n-1}$ encoding bits, leading to infeasibly deep circuits for NISQ devices. To circumvent this
problem, we introduce the notion of a compiler that performs the following functions:
(i) it hides the implementation details of $f$ from the player;
(ii) it takes a circuit $C$ with $0$ or more boxes labeled ``$\mcO$'' and produces
a circuit $C'$ obtained from $C$ by replacing each $\mcO$ box with
a circuit implementing it for the current $f$;
(iii) further compiles $C'$ to ensure that it is compatible with the
gate set and connectivity of the NISQ device,
to reduce the number of gates and the circuit depth,
and to select the best layout of the qubits on the device
(i.e., avoiding the noisiest qubits and couplings),
yielding a new circuit $C''$ and classical post-processing instructions;
(iv) sends $C''$ to the NISQ device for execution,
obtains the result, performs the post-processing, and returns the final result to the player. Note that in (ii), ``$\mcO$'' is just a box labeling a place to insert the oracle (which is unknown to the player), as opposed to $\mcO_f$, which is the unitary implementing the actual oracle.

\begin{figure}[ht]
  \centering
\subfigure[]{
\begin{tikzpicture}
    \begin{yquant}
      qubit {$\ket{0}^{\otimes n}$} x;
      qubit {$\ket{0}^{\otimes n}$} a;
      setstyle {very thick} x;
      setstyle {very thick} a;
      box {$H^{\otimes n}$} (x);
      box {$\mcO$} (x, a);
      box {$H^{\otimes n}$} (x);
      measure x, a;
      output {$z$} x;
      output {$a$} a;
    \end{yquant}
  \end{tikzpicture}
 }
\subfigure[]{ 
\begin{tikzpicture}
     \begin{yquant}
      qubit {$x$} x;
      qubit {$a$} a;
      setstyle {very thick} x;
      setstyle {very thick} a;
      hspace {9mm} a;
      [after=a]
      qubit {$\ket{0}^{\otimes n}$} b;
      setstyle {very thick} b;
      box {$\oplus f_0(x)$} b | x;
      box {$f_1$} b;
      cnot a | b;
      box {$f_1^{-1}$} b;
      box {$\oplus f_0(x)$} b | x;
      text {$\ket{0}^{\otimes n}$} b;
      discard b;
      output {$x$} x;
      output {$a$} a;
    \end{yquant}
  \end{tikzpicture}
 }
 \subfigure[]{  
  \begin{tikzpicture}
    \begin{yquant}
      qubit {$\ket{0}^{\otimes n}$} x;
      qubit {$\ket{0}^{\otimes n}$} a;
      setstyle {very thick} x;
      setstyle {very thick} a;
      box {$H^{\otimes n}$} (x);
      box {$\oplus f_0(x)$} a | x;
      box {$H^{\otimes n}$} (x);
      output {$z$} x;
      output {$a$} a;
      measure x, a;
    \end{yquant}
  \end{tikzpicture}
}
  \caption{(a) The intended circuit $C$. The thick wires indicate that each wire represents $n>1$ qubits. $H$ denotes a Hadamard gate and $\mcO$ denotes the oracle. (b) Circuit implementing the oracle $\mcOf$ for arbitrary $f$
    satisfying the conditions (i)--(iii). 
    Any such $f$ can be decomposed as $f(x) = f_1(f_0(x))$, where $f_0$ is any 2-to-1 function satisfying condition (ii) with the same $b$, and $f_1$ is a permutation of bitstrings. 
    All gates are classical in the sense that if the input
    $\ket{x, a}$ consists of a single computational basis state
    (with amplitude $1$), so does the output. The circuit for $f_1$
    may need 3-qubit gates and ancillary qubits. The unitary controlled-($\oplus f_0(x)$) maps a computational basis element $\ket{x}\ket{a}$ to $\ket{x}\ket{a \oplus f_0(x)}$. The gate sequence shown is one possible implementation of $\mcOf$: the circuit prepares $f(x)$ on a third $n$-qubit register, performs a CNOT with the $a$ register as the target, and uncomputes the third register to return it to the $\ket{0}$ state. (c) Compiled circuit. This circuit is equivalent to the unoptimized circuit shown in (a) and (b), assuming $f_1$ is applied to $a$ in post-processing (such optimization is allowed by the rules).}
  \label{fig:circuitC}
\end{figure}
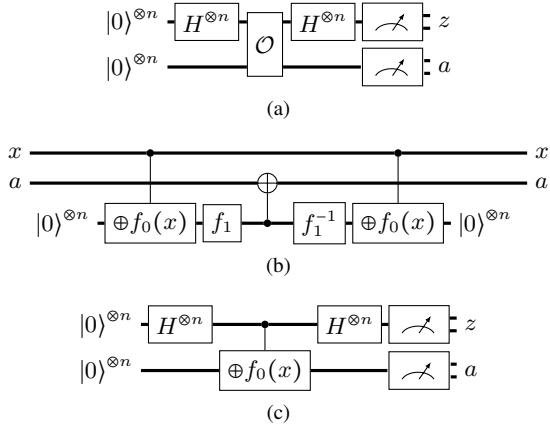

When defining these rules, we need to be careful about the optimizations allowed
in step (iii): if we allow too few, the circuit $C''$ may be too large to be executed
with reasonable fidelity; if we allow too many, $C''$ may become a no-op and
all computations will be performed by classical post-processing.

To understand the rules we choose, consider the intended circuit $C$ and the circuit $\mcOf$ implementing the oracle, shown in \cref{fig:circuitC}(a) and
\cref{fig:circuitC}(b), respectively. To see why \cref{fig:circuitC}(b) implements the oracle, note first that any $f$ satisfying conditions (i)-(iii) can be written as $f(x) = f_1(f_0(x))$. To verify the circuit, it suffices to ensure that it works as intended on computational basis states, i.e., it maps $\ket{x}\ket{a}$ to $\ket{x}\ket{a\oplus f_1(f_0(x))}$. As is easily verified, the list of states along the computation is $\ket{x}\ket{a}\mapsto \ket{x}\ket{a}\ket{0}^{\otimes n} \mapsto \ket{x}\ket{a}\ket{f_0(x)} \mapsto \ket{x}\ket{a}\ket{f_1(f_0(x))} \mapsto \ket{x}\ket{a\oplus f_1(f_0(x))}\ket{f_1(f_0(x))} \mapsto \ket{x}\ket{a\oplus f_1(f_0(x))}\ket{f_0(x)} \mapsto \ket{x}\ket{a\oplus f_1(f_0(x))}\ket{0}^{\otimes n} \mapsto \ket{x}\ket{a\oplus f_1(f_0(x))}$, as required.

Before formulating the rules below, we introduce the notion of classical gates.
We say that a gate is classical if it maps computational basis states
(with amplitude $1$) to computational basis states. In particular,
$X$, CNOT, and the Toffoli (CCNOT) are classical, but $Y$, $Z$, and $H$ are not.
Performing a classical gate immediately before the measurement in the computational
basis is equivalent to performing the measurement first and then applying
the equivalent classical post-processing.

Based on this, we state the following rules for the compiler: (1) it can perform a joint optimization of $C'$, that is, it does not have to preserve the separation between the gates inside and outside $\mcOf$;
  (2) it can perform any optimization that does not change the output distribution of $C'$ if it were executed on an ideal quantum computer;
  (3) it can replace classical gates before the measurement with the equivalent classical post-processing;
  (4) rules \#2 and \#3 can be applied any number of times in any order.

With these rules, we expect the compiler to be able to compile the circuit in \cref{fig:circuitC}(a) with $\mcOf$ from \cref{fig:circuitC}(b) into the circuit $C''$ as shown in \cref{fig:circuitC}(c). Moreover, we expect that the qubits of the $x$ and $a$ registers will be reordered such that circuits with the same $\HW(b)$ will be compiled into the same circuit. The application of $f_1$ and the inverse of the qubit reordering will be done by classical post-processing. Note that \cref{fig:simon-circ} is an expanded version of \cref{fig:circuitC}.

\section{Simon's oracle construction}
\label{app:simon-oracle}

Before presenting it, we remark that our oracle construction is not unique; it only needs to satisfy the condition specified by Simon's problem, i.e., $\forall x,y \in \{0,1\}^n$, $f_b(x) = f_b(y)$ if and only if $x = y$ or $x = y \oplus b$ for a hidden bitstring $b\in \{0,1\}^n$. 

To aid in visualization, we will interchangeably use quantum circuits and directed graphs to represent the oracle. This is useful because the oracle circuit consists entirely of CNOT gates, which can be represented as arrows from the controlled qubit to the target qubit. \cref{fig:circ-graph} shows an example of how to transform between the two representations. The data ($d_j$) and ancilla ($a_j$) qubits are drawn in the first and second columns in the graph representation, respectively.

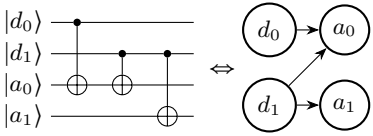
\begin{figure}[h!]
    \centering
    \begin{tikzpicture}
        \begin{yquantgroup}
            \registers{
                qubit {$\ket{d_0}$} x0;
                qubit {$\ket{d_1}$} x1;
                qubit {$\ket{a_0}$} a0;
                qubit {$\ket{a_1}$} a1;
            }
            \circuit{
                cnot a0 | x0;
                cnot a0 | x1;
                cnot a1 | x1;
            }
            \equals[$\Leftrightarrow$]
        \end{yquantgroup}
        \begin{scope}[every node/.style={circle,thick,draw}]
            \node (x0) at (3.6,-0.25) {$d_0$};
            \node (x1) at (3.6,-1.25) {$d_1$};
            \node (a0) at (4.6,-0.25) {$a_0$};
            \node (a1) at (4.6,-1.25) {$a_1$};
        \end{scope}
        \begin{scope}[>={Stealth[black]}]
            \draw[->] (x0) -- (a0);
            \draw[->] (x1) -- (a0);
            \draw[->] (x1) -- (a1);
        \end{scope}
    \end{tikzpicture}
    \caption{Conversion between a pure-CNOT circuit and its graph representation. A CNOT gate is an arrow from the controlled to the target qubit.}
    \label{fig:circ-graph}
\end{figure}

Two operations are used to construct the oracle: classical copy and classical addition modulo 2. Both are classical operations executed by quantum devices. 

\begin{itemize}
\item Classical copy:
This operation copies a classical state from a 1-qubit register into another 1-qubit register. \cref{fig:classical-cp} shows an example of copying $\ket{d_0 d_1}$ into the ancilla qubits $\ket{a_0a_1}$, which are initialized as $\ket{00}$.
\begin{figure}[h!]
    \centering
    \begin{tikzpicture}
        \begin{yquantgroup}
            \registers{
                qubit {$\ket{d_0}$} x0;
                qubit {$\ket{d_1}$} x1;
                qubit {$\ket{0}$} a0;
                qubit {$\ket{0}$} a1;
            }
            \circuit{
                cnot a0 | x0;
                cnot a1 | x1;
            }
            \equals
        \end{yquantgroup}
        \begin{scope}[every node/.style={circle,thick,draw}]
            \node (x0) at (3,-0.25) {$d_0$};
            \node (x1) at (3,-1.25) {$d_1$};
            \node (a0) at (4,-0.25) {$a_0$};
            \node (a1) at (4,-1.25) {$a_1$};
        \end{scope}
        \begin{scope}[>={Stealth[black]}]
            \draw[->] (x0) -- (a0);
            \draw[->] (x1) -- (a1);
        \end{scope}
    \end{tikzpicture}
    \caption{A quantum circuit for the operation $\ket{d_0d_1}\ket{00}\mapsto\ket{d_0d_1}\ket{d_0d_1}$. $d_0, d_1,a_0,a_1\in\{0,1\}$.}
    \label{fig:classical-cp}
\end{figure}

\item Classical XOR:
This operation performs the XOR (single-digit addition modulo 2) between two 1-qubit registers and stores the result in another 1-qubit register. \cref{fig:classical-add} shows an example of adding $\ket{d_0 d_1}$ and storing the result in the first ancilla qubit $\ket{a_0}$, which is initialized as $\ket{0}$.
\begin{figure}[h!]
    \centering
    \begin{tikzpicture}
        \begin{yquantgroup}
            \registers{
                qubit {$\ket{d_0}$} x0;
                qubit {$\ket{d_1}$} x1;
                qubit {$\ket{0}$} a0;
                qubit {$\ket{0}$} a1;
            }
            \circuit{
                cnot a0 | x0;
                cnot a0 | x1;
            }
            \equals
        \end{yquantgroup}
        \begin{scope}[every node/.style={circle,thick,draw}]
            \node (x0) at (3,-0.25) {$d_0$};
            \node (x1) at (3,-1.25) {$d_1$};
            \node (a0) at (4,-0.25) {$a_0$};
            \node (a1) at (4,-1.25) {$a_1$};
        \end{scope}
        \begin{scope}[>={Stealth[black]}]
            \draw[->] (x0) -- (a0);
            \draw[->] (x1) -- (a0);
        \end{scope}
    \end{tikzpicture}
    \caption{A quantum circuit that applies the operation $\ket{d_0d_1}\ket{00}\mapsto\ket{d_0d_1}\ket{d_0\oplus d_1,0}$. $d_0, d_1,a_0,a_1\in\{0,1\}$.}
    \label{fig:classical-add}
\end{figure}
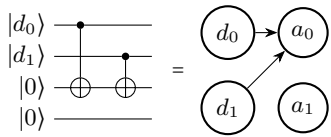

\end{itemize}

Next, we show how to construct a 2-to-1 oracle. 
We represent a length-$n$ bitstring $x=x_0\cdots x_{n-1}$, where $x_j\in\{0,1\}$, and use an oracle that transforms $\ket{x}\ket{0}\rightarrow \ket{x}\ket{f_b(x)}$ when $b=0^{n-i}1^{i}$. For each $b$ of this form, 
$f_b:\{0,1\}^n\mapsto \{0,1\}^n$ is defined as follows:
\begin{equation}
    \begin{aligned}
        f_b(x) = (x_0,\cdots,x_{n-i-1},0,x_{n-i+1}\oplus x_{n-i},\cdots,x_{n-1}\oplus x_{n-i})
            \end{aligned}
    \label{eq:fbx}
\end{equation}
for $1\le i\le n$. 

Equivalently, we can write this transformation as
\begin{equation}
    \begin{aligned}
        &\ket{x_0\cdots x_{n-1}}_d\ket{0^n}_a \mapsto \\
        &\ket{x_0\cdots x_{n-1}}_d\ket{x_0\cdots x_{n-i-1}0 (x_{n-i}\oplus x_{n-i+1})\cdots (x_{n-i}\oplus x_{n-1})}_a.
    \end{aligned}
    \label{eq:fbx-state}
\end{equation}

The following Lemmas show that $f_b(x)$ is a valid family of oracles for Simon's problem. 

\begin{mylemma}
\label{lem:2to1}
$f_b(x)=f_b(y)$ if and only if $x= y$ or $y =  x \oplus b$, where $b=0^{n-i}1^{i}$, $1\le i\le n$. I.e., $f_b(x)$ is a 2-to-1 function.
\end{mylemma}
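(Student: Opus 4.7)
The plan is to prove both directions of the biconditional by direct bitwise analysis, exploiting the very explicit coordinate description of $f_b$ given in \cref{eq:fbx}. Throughout, write $b = 0^{n-i} 1^i$, so that $b_j = 0$ for $j < n-i$ and $b_j = 1$ for $j \geq n-i$; in particular $(x \oplus b)_j = x_j$ for $j < n-i$ and $(x \oplus b)_j = 1 \oplus x_j$ for $j \geq n-i$.

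For the easy direction ($\Leftarrow$), I would first dispense with the trivial case $x = y$. Then I would set $y = x \oplus b$ and compute $f_b(y)$ coordinate by coordinate. For $j < n-i$, $y_j = x_j$, so the first $n-i$ coordinates of $f_b(y)$ match those of $f_b(x)$. The $(n-i)$-th coordinate of $f_b$ is hardcoded to $0$ by \cref{eq:fbx}. For $j > n-i$, I would use $y_j = 1 \oplus x_j$ and $y_{n-i} = 1 \oplus x_{n-i}$, so $y_j \oplus y_{n-i} = (1 \oplus x_j) \oplus (1 \oplus x_{n-i}) = x_j \oplus x_{n-i}$; the two $1$'s cancel mod $2$. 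Hence $f_b(y) = f_b(x)$.

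For the nontrivial direction ($\Rightarrow$), I would assume $f_b(x) = f_b(y)$ and read off, coordinate by coordinate, what this forces. The first $n-i$ coordinates give $x_j = y_j$ for $j < n-i$ directly. The $(n-i)$-th coordinate gives no information. For $j > n-i$, the equation $x_j \oplus x_{n-i} = y_j \oplus y_{n-i}$ is equivalent to $x_j \oplus y_j = x_{n-i} \oplus y_{n-i}$, i.e., the bit $c := x_{n-i} \oplus y_{n-i}$ is the \emph{same} for every $j \geq n-i$. Now I would split into two cases depending on whether $c = 0$ or $c = 1$. If $c = 0$ then $x_j = y_j$ for all $j \geq n-i$, which combined with the earlier equalities yields $x = y$. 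If $c = 1$ then $x_j \oplus y_j = 1$ for all $j \geq n-i$ and $x_j \oplus y_j = 0$ for $j < n-i$, which is exactly $x \oplus y = b$, i.e., $y = x \oplus b$.

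Having shown the biconditional, I would conclude by noting that since $b \neq 0^n$, the two solutions $x$ and $x \oplus b$ are always distinct, so every image has exactly two preimages; hence $f_b$ is 2-to-1. I do not expect any real obstacle: the proof is entirely a coordinate bookkeeping exercise, and the only subtlety worth flagging is that the special index $j = n-i$ carries no information (its image is clamped to $0$), which is precisely what allows the two-preimage ambiguity and is the mechanism by which the hidden string $b$ is encoded into the collision structure.
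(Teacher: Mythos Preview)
Your proof is correct. The easy direction ($\Leftarrow$) is essentially the same coordinate-by-coordinate computation as in the paper. For the converse, however, your route differs: you argue directly from the formula in \cref{eq:fbx}, introducing $c := x_{n-i}\oplus y_{n-i}$ and splitting into the cases $c=0$ (yielding $x=y$) and $c=1$ (yielding $x\oplus y=b$). The paper instead reasons via preimages, noting that $\{x,\,x\oplus b\}$ lies in the fiber $f_b^{-1}(f_b(x))$ and then concluding that the fiber is exactly this pair; as written, that step is not fully justified (it tacitly assumes the fiber has size two, or relies on an unstated counting argument that the image has size $2^{n-1}$). Your direct bitwise case analysis is more elementary and fully self-contained.
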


\begin{proof}
Consider two ancilla registers containing the output from two different inputs $x$ and $y$. According to \cref{eq:fbx-state}, we can write $f_b(x)$ and $f_b(y)$ as
\begin{equation}
    \begin{aligned}
        \ket{f_b(x)}&=\ket{x_0\cdots x_{n-i-1}0 (x_{n-i}\oplus x_{n-i+1})\cdots (x_{n-i}\oplus x_{n-1})}\\
        \ket{f_b(y)}&=\ket{y_0\cdots y_{n-i-1}0 (y_{n-i}\oplus y_{n-i+1})\cdots (y_{n-i}\oplus y_{n-1})}.
    \end{aligned}
    \label{eq:fx=fy}
\end{equation}

($\Rightarrow$) If $x= y$ or $y = x\oplus b$, then $f_b(x)=f_b(y)$: 
The statement is trivial when $x= y$.  We explicitly write $b = 0_0\cdots0_{n-i-1} 1_{n-i}\cdots1_{n-1}$, so that $\{y_0\cdots y_{n-i-1}\}=\{x_0\cdots x_{n-i-1}\}$ and $\{y_{n-i+1}\cdots y_{n-1}\}=\{x_{n-i+1}\oplus 1\cdots x_{n-1}\oplus 1\}$.
It follows from \cref{eq:fbx} that trivially $f_b(y_j) = f_b(x_j)$ for $j\le n-i$ and that 
for $j\ge n-i+1$, $f_b(y_j)=y_{n-i}\oplus y_j=(x_{n-i}\oplus 1)\oplus (x_j \oplus 1)=x_{n-i}\oplus x_j=f_b(x_j)$.

($\Leftarrow$) If $f_b(x)=f_b(y)$, then $x=y$ or $y=x\oplus b$:
The calculation in the `$\Rightarrow$' part of the proof shows that $f_b(y) = f_b(x)$ if $y =x\oplus b$; therefore,  $f_b^{-1}(x)$ outputs $x$ and $x\oplus b$ and likewise, $f_b^{-1}(y)$ outputs $y$ and $y\oplus b$. Hence,
$f_b(x)=f_b(y)$ implies that either (1) $x=y$ and $x\oplus b=y\oplus b$, or (2) $x=y\oplus b$ and $x\oplus b=y$. Case (1) corresponds to $x=y$ and case (2) corresponds to $y=x\oplus b$.

The conclusion that $f_b(x)$ is a 2-to-1 function now follows directly.
\end{proof}

Having shown that $f_b$ as defined in \cref{eq:fbx} and written explicitly in \cref{eq:fbx-state} is a valid oracle for Simon's problem, we proceed to construct its quantum version using classical copy (for the first $n-i$ positions) and classical XOR (for the last $i$ positions). The $a_{n-i}$ qubit remains in the $\ket{0}$ state, hence no operation is required.

To construct the first $n-i$ positions in \cref{eq:fbx-state}, one needs to copy the state in $d_0,\dots,d_{n-i-1}$ into $a_0,\dots,a_{n-i-1}$. The quantum operation for this is a CNOT from each qubit $d_j$ of the data register (controlled qubit) to the corresponding qubit $a_j$ in the ancilla register (target qubit), as shown in \cref{fig:classical-cp}.

To construct the last $i$ positions in \cref{eq:fbx-state}, one needs to perform XOR on $d_{n-i}$ and $d_j$ from the data register and store the result in the qubit $a_j$ in the ancilla register. Two CNOTs are required from two controlled qubits, $d_{n-i}$ and $d_j$, onto the same target qubit $a_j$, similarly to what is shown in \cref{fig:classical-add}.

We can construct an oracle for a Simon-$n$ problem using this method. Where to apply classical copy or classical XOR depends on the hidden bitstring $b=0^{n-i}1^{i}$: a $0$ is an instruction to copy, while a $1$ is an instruction to perform an XOR. \cref{fig:combine} shows an example of combining these two types of operators to construct an oracle for Simon-3 with $b=011$. \cref{fig:simon-n5-graph} shows an example of Simon-5 in the graph representation. 
An example of the full quantum circuit to solve a Simon-3 problem with $b=011$ is given in \cref{fig:simon-011}.

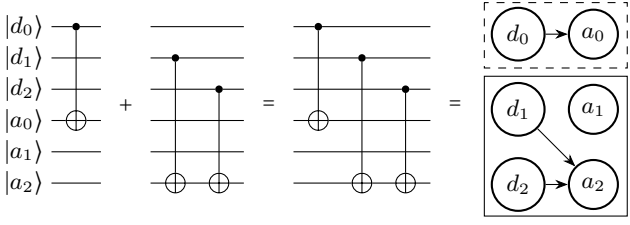
\begin{figure}[t]
    \centering
    \begin{tikzpicture}
       \begin{yquantgroup}[operator/separation=0.8mm]
            \registers{
                qubit {} x0;
                qubit {} x1;
                qubit {} x2;
                qubit {} a0;
                qubit {} a1;
                qubit {} a2;
            }
            \circuit{
                init {$\ket{d_0}$} x0;
                init {$\ket{d_1}$} x1;
                init {$\ket{d_2}$} x2;
                init {$\ket{a_0}$} a0;
                init {$\ket{a_1}$} a1;
                init {$\ket{a_2}$} a2;
                cnot a0 | x0;
            }
            \equals[$+$]
            \circuit{
                cnot a2 | x1;
                cnot a2 | x2;
            }
            \equals
            \circuit{
                cnot a0 | x0;
                cnot a2 | x1;
                cnot a2 | x2;
            }
            \equals
        \end{yquantgroup}
        \begin{scope}[every node/.style={circle,thick,draw}]
            \node (x0) at (6.9,-0.25) {$d_0$};
            \node (x1) at (6.9,-1.25) {$d_1$};
            \node (x2) at (6.9,-2.25) {$d_2$};
            \node (a0) at (7.9,-0.25) {$a_0$};
            \node (a1) at (7.9,-1.25) {$a_1$};
            \node (a2) at (7.9,-2.25) {$a_2$};
        \end{scope}
        \begin{scope}[>={Stealth[black]}]
            \draw[->] (x0) -- (a0);
            \draw[->] (x1) -- (a2);
            \draw[->] (x2) -- (a2);
        \end{scope}
        \draw[dashed] (6.45,0.17) rectangle (8.35,-0.70);
        \draw[-] (6.45,-0.81) rectangle (8.35,-2.67);
    \end{tikzpicture}
    \caption{Construction of an oracle for solving the Simon-3 problem with $b=011$ according to \cref{eq:fbx-state}. The first (second) part in the dashed (solid) box applies classical copy (XOR), corresponding to the $0$'s ($1$'s) in $b$.}
    \label{fig:combine}
\end{figure}

\begin{figure*}[t]
    \begin{tikzpicture}
        \begin{scope}[every node/.style={circle,thick,draw}]
            \node (4x0) at (0,1.7) {$d_0$};
            \node (4x1) at (0,0.7) {$d_1$};
            \node (4x2) at (0,-0.3) {$d_2$};
            \node (4x3) at (0,-1.3) {$d_3$};
            \node (4x4) at (0,-2.3) {$d_4$};
            \node (4a0) at (1,1.7) {$a_0$};
            \node (4a1) at (1,0.7) {$a_1$} ;
            \node (4a2) at (1,-0.3) {$a_2$};
            \node (4a3) at (1,-1.3) {$a_3$};
            \node (4a4) at (1,-2.3) {$a_4$};

            \node (3x0) at (3,1.7) {$d_0$};
            \node (3x1) at (3,0.7) {$d_1$};
            \node (3x2) at (3,-0.3) {$d_2$};
            \node (3x3) at (3,-1.3) {$d_3$};
            \node (3x4) at (3,-2.3) {$d_4$};
            \node (3a0) at (4,1.7) {$a_0$};
            \node (3a1) at (4,0.7) {$a_1$} ;
            \node (3a2) at (4,-0.3) {$a_2$};
            \node (3a3) at (4,-1.3) {$a_3$};
            \node (3a4) at (4,-2.3) {$a_4$};

            \node (2x0) at (6,1.7) {$d_0$};
            \node (2x1) at (6,0.7) {$d_1$};
            \node (2x2) at (6,-0.3) {$d_2$};
            \node (2x3) at (6,-1.3) {$d_3$};
            \node (2x4) at (6,-2.3) {$d_4$};
            \node (2a0) at (7,1.7) {$a_0$};
            \node (2a1) at (7,0.7) {$a_1$} ;
            \node (2a2) at (7,-0.3) {$a_2$};
            \node (2a3) at (7,-1.3) {$a_3$};
            \node (2a4) at (7,-2.3) {$a_4$};

            \node (1x0) at (9,1.7) {$d_0$};
            \node (1x1) at (9,0.7) {$d_1$};
            \node (1x2) at (9,-0.3) {$d_2$};
            \node (1x3) at (9,-1.3) {$d_3$};
            \node (1x4) at (9,-2.3) {$d_4$};
            \node (1a0) at (10,1.7) {$a_0$};
            \node (1a1) at (10,0.7) {$a_1$} ;
            \node (1a2) at (10,-0.3) {$a_2$};
            \node (1a3) at (10,-1.3) {$a_3$};
            \node (1a4) at (10,-2.3) {$a_4$};

            \node (x0) at (12,1.7) {$d_0$};
            \node (x1) at (12,0.7) {$d_1$};
            \node (x2) at (12,-0.3) {$d_2$};
            \node (x3) at (12,-1.3) {$d_3$};
            \node (x4) at (12,-2.3) {$d_4$};
            \node (a0) at (13,1.7) {$a_0$};
            \node (a1) at (13,0.7) {$a_1$} ;
            \node (a2) at (13,-0.3) {$a_2$};
            \node (a3) at (13,-1.3) {$a_3$};
            \node (a4) at (13,-2.3) {$a_4$};
        \end{scope}
        \node (eq) at (0.5,-3) {$b=0^41^1$} ;
        \node (eq) at (3.5,-3) {$b=0^31^2$} ;
        \node (eq) at (6.5,-3) {$b=0^21^3$} ;
        \node (eq) at (9.5,-3) {$b=0^11^4$} ;
        \node (eq) at (12.5,-3) {$b=0^01^5$} ;
        \begin{scope}[>={Stealth[black]}]
            \draw[->] (x0) -- (a1);
            \draw[->] (x0) -- (a2);
            \draw[->] (x0) -- (a3);
            \draw[->] (x0) -- (a4);
            \draw[->] (x1) -- (a1);
            \draw[->] (x2) -- (a2);
            \draw[->] (x3) -- (a3);
            \draw[->] (x4) -- (a4);

            \draw[->] (1x0) -- (1a0);
            \draw[->] (1x1) -- (1a2);
            \draw[->] (1x1) -- (1a3);
            \draw[->] (1x1) -- (1a4);
            \draw[->] (1x2) -- (1a2);
            \draw[->] (1x3) -- (1a3);
            \draw[->] (1x4) -- (1a4);

            \draw[->] (2x0) -- (2a0);
            \draw[->] (2x1) -- (2a1);
            \draw[->] (2x2) -- (2a3);
            \draw[->] (2x2) -- (2a4);
            \draw[->] (2x3) -- (2a3);
            \draw[->] (2x4) -- (2a4);

            \draw[->] (3x0) -- (3a0);
            \draw[->] (3x1) -- (3a1);
            \draw[->] (3x2) -- (3a2);
            \draw[->] (3x3) -- (3a4);
            \draw[->] (3x4) -- (3a4);

            \draw[->] (4x0) -- (4a0);
            \draw[->] (4x1) -- (4a1);
            \draw[->] (4x2) -- (4a2);
            \draw[->] (4x3) -- (4a3);
        \end{scope}

        \draw[dashed] (-0.5,2.15) rectangle (1.5,1.25);
        \draw[dashed] (-0.5,1.15) rectangle (1.5,0.25);
        \draw[dashed] (-0.5,0.15) rectangle (1.5,-0.75);
        \draw[dashed] (-0.5,-0.85) rectangle (1.5,-1.75);
        \draw[-] (-0.5,-1.85) rectangle (1.5,-2.75);

        \draw[dashed] (2.5,2.15) rectangle (4.5,1.25);
        \draw[dashed] (2.5,1.15) rectangle (4.5,0.25);
        \draw[dashed] (2.5,0.15) rectangle (4.5,-0.75);
        \draw[-] (2.5,-0.85) rectangle (4.5,-2.75);

        \draw[dashed] (5.5,2.15) rectangle (7.5,1.25);
        \draw[dashed] (5.5,1.15) rectangle (7.5,0.25);
        \draw[-] (5.5,0.15) rectangle (7.5,-2.75);

        \draw[dashed] (8.5,2.15) rectangle (10.5,1.25);
        \draw[-] (8.5,1.15) rectangle (10.5,-2.75);

        \draw[-] (11.5,2.15) rectangle (13.5,-2.75);
    \end{tikzpicture}
    \caption{Oracle construction for Simon-5 in the graph representation. Using the reduction procedure, we can remove the dashed boxes to arrive at Simon-$m$, where $m<5$.}
    \label{fig:simon-n5-graph}
\end{figure*}
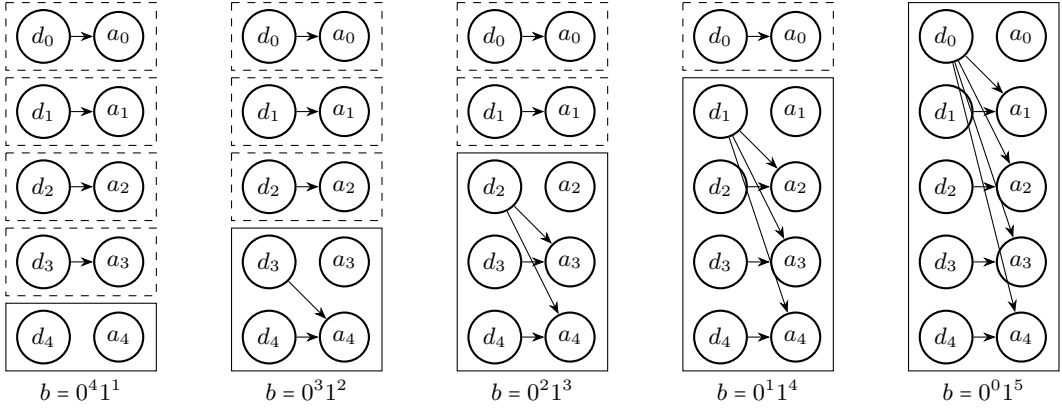

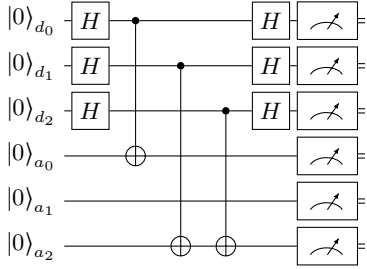
\begin{figure}[h!]
    \centering
    \begin{tikzpicture}
        \begin{yquantgroup}
            \registers{
                qubit {$\ket{0}_{d_0}$} x0;
                qubit {$\ket{0}_{d_1}$} x1;
                qubit {$\ket{0}_{d_2}$} x2;
                qubit {$\ket{0}_{a_0}$} a0;
                qubit {$\ket{0}_{a_1}$} a1;
                qubit {$\ket{0}_{a_2}$} a2;
            }
            \circuit{
                h x0, x1, x2;
                cnot a0 | x0;
                cnot a2 | x1;
                cnot a2 | x2;
                h x0, x1, x2;
                measure x0, x1, x2, a0, a1, a2;
            }
        \end{yquantgroup}
    \end{tikzpicture}
    \caption{The quantum circuit to solve the Simon-3 problem with $b=011$, which can be extracted from the dashed ($d_2,a_2$) box and the lowermost solid box in Simon-5 with $b=0^31^2$.}
    \label{fig:simon-011}
\end{figure}

\section{Two derivations and proofs of the classical complexity of Simon's problem}
\label{app:alternative-deriv}

We first present a simple derivation of the classical complexity of Simon's problem, in the spirit of the birthday paradox. Given any input bitstring pair $x_i\ne x_j$ such that $f_b(x_i)=f_b(x_j)$ (a collision), we have the solution to Simon's problem: $b=x_i\oplus x_j$. In the worst case, excluding the $0^n$ string, it takes $k_{\max} = 2^{n-1}$ strings to find a collision. But suppose we pick $k<k_{\max}$ strings at random. There are ${k\choose 2} = k(k-1)/2$ pairs. Given a random string $x_i$, we can pick a second random string $x_j$, and $x_j\ne 0^n,x_i$ with probability $1/(N_n-1)$, where $N_n=2^n-1$. This is also the probability of a collision between $x_i$ and $x_j$. Therefore
\begin{equation}
    \Pr[f_b(x_i)=f_b(x_j)]=\frac{{k\choose 2}}{N_n-1}, \quad i\neq j.
    \label{eq:prob-classical2}
\end{equation}
Setting $\Pr[f_b(x_i)=f_b(x_j)]= 1$ guarantees a collision.
Solving for $k$, the required number of queries $k$ in \cref{eq:prob-classical2} is $O(2^{n/2})$.

The classical complexity is affected by the change of the total number of possible $b$'s from $N_n$ to $N_w=\sum_{j=1}^{w}{n \choose j}$ in \wSimon{w}{n}. Hence, \cref{eq:prob-classical2} becomes:
\begin{equation}
    \Pr[f_b(x_i)=f_b(x_j)]=\frac{{k\choose 2}}{N_w-1} .
\end{equation}
By setting $\Pr[f_b(x_i)=f_b(x_j)]\ge 1$, we assume that a collision is not found until all bitstring pairs have been tested, which yields the worst-case number of classical queries (for the best possible classical algorithm), and we recover \cref{eq:cl-worst}.\\

\noindent Next, we present a formal proof of \cref{th:Simon-classical}. 
\begin{proof}
Consider first the original Simon's problem and a classical player
querying bitstrings $x_1, x_2, \dots, x_k$
to get values $y_1, \dots, y_k$, where $y_j = f(x_j)\in\{0,1\}$. Having made $m$ queries, the player forms the set $R_m = \{x_j \oplus x_l\colon 1 \leq j < l \leq m\}$.
If $b \in R_m$, then $b = x_j \oplus x_l$ for some $j, l \leq m$
and $y_j = y_l$. In this case, the player knows $b$ and their best strategy is to output it without further queries to the oracle.
Under the assumption that $f$ is a uniformly random function satisfying conditions (i)--(iii) discussed in \cref{s:rules}, if $b\notin R_m$, then
the only information learned after $m$ queries is that $b$ is not in $R_m$. Thus, the player cannot benefit from adjusting the sequence of queried bitstrings based on the learned information and can assume that the classical algorithm is fully described by the sequence $x_1, x_2, \dots, x_k$. Finally, one can check that the $\NTS$ cannot be improved by guessing prematurely. Hence, for an optimal algorithm, the player is guaranteed to know $b$ after all $k$ queries, i.e., $S \setminus R_k$ should contain at most one element.
On the other hand, $\abs{R_k} \leq k (k-1)/2$. I.e., $k$ should satisfy
\begin{equation}
  \label{eq:prob-classical}
  \frac{k(k-1)}{2} \geq N_n - 1.
\end{equation}
The minimal $k$ satisfying this inequality is the lower bound on the worst case complexity of solving the Simon's problem for a classical player: if the player is lucky they would be able to figure out $b$ earlier when they find $y_j = y_l$ for $j, l < k$.

Let $S_k = S \cap R_k$.
Similar to the above, if $b \notin S_k$, then this is the only information about $b$ available to the player, i.e., their posterior distribution of all possible values of $b$ is uniform in $S \setminus S_k$. 
The player is guaranteed to know the correct $b$ iff $S \setminus S_k$ contains at most one element, i.e., $\abs{S_k} \geq N_w - 1$, but $\abs{S_k} \leq \abs{R_k} \leq k (k-1)/2$. Hence,
\begin{equation}
  \label{eq:prob-classical.2}
  \frac{k(k-1)}{2} \geq N_w - 1.
\end{equation}
Solving \cref{eq:prob-classical.2} for $k$, we obtain \cref{eq:cl-worst}. 
\end{proof}

The lower bound \cref{eq:cl-expect} can now be obtained by summing the lower bounds on the probability
$b \notin S_i$. Since there are at most $i(i-1)/2$ elements in $S_i$, and the total number of possible $b$ values is $N_w$, we have $\Pr(b \in S_i)\le \frac{i(i-1)}{2N_w}$. The probability that the next query \#$i+1$ is needed (i.e., that $i$ queries are not enough: $i=0, \dots, k-1$) is $1-\Pr(b \in S_i)$, yielding
\begin{equation}
     \expv{Q_C} \geq 
     \sum_{i=0}^{k-1} \left(1-\frac{i(i-1)}{2N_w}\right) ,
    \label{eq:cl-expect-app}
\end{equation}
and after evaluating the sum we obtain \cref{eq:cl-expect}.

\section{Upper bound on \texorpdfstring{$\NTS_C$}{NTS C}}
\label{app:bounds-gap}

As remarked in the main text, there is no guarantee that the player will find a sequence of bitstrings $x = \{x_1,x_2,\dots,x_k\}$ that achieves the lower bound on $\NTS_C$ given by \cref{eq:cl-expect2}. For \cref{eq:cl-expect} to be an equality, all values of $x_l \oplus x_m$ should be different and satisfy $\HW(x_l \oplus x_m) \leq w$ for $1 \leq l < m \leq k$. To find an upper bound on $\NTS_C$, we consider an arbitrary sequence $x_1,\dots,x_k$ of bitstrings that a classical player intends to submit to the oracle. Without loss of generality, we can assume that this sequence is independent of the oracle replies unless a match $f(x_j) = f(x_l)$ is found for $j < l \leq k$ (in which case the classical player stops and returns $b = x_j \oplus x_l$). We can also assume that
\begin{equation}
\label{eq:k-def}
\abs{S\setminus S_j} \leq 1 \Leftrightarrow j = k.
\end{equation}
If \cref{eq:k-def} is not satisfied for $j=k$, the classical algorithm fails to find $b$ for $b \in S \setminus S_k$, and if \cref{eq:k-def} is satisfied for $j < k$, we can set $k$ to the smallest such $j$. \Cref{eq:k-def} can be seen as an exact version of \cref{eq:cl-worst}: it ensures that $k$ is the worst-case classical number of queries of an algorithm described by the sequence $x$. For $x$ satisfying \cref{eq:k-def}, we can compute the NTS exactly:
\begin{equation}
    \NTS_C(x)=\sum_{i=0}^{k-1}\left(\frac{|S\setminus S_i|}{|S|}\right)=\sum_{i=0}^{k-1}\left(1-\frac{|S_i|}{|S|}\right),
    \label{eq:nts-c-x}
\end{equation}
which is the exact version of \cref{eq:cl-expect}. Then
\begin{equation}
    \label{eq:nts-c-exact}
    \NTS_C = \min_{x:\ \mathrm{\cref{eq:k-def}}} \NTS_C(x).
\end{equation}
We do not know a method to find this minimum exactly for $n \geq 10$ because, as $n$ grows, the search space involves sequences of $k$ bitstrings, where $k$ grows exponentially with $n$, i.e., the search space size grows as $2^{n2^{n/2}}$ for the original Simon's problem and as $2^{n\sqrt{\binom{n}{w}}}$ for \wSimon{w}{n} with constant $w$, which is infeasible already for $n=10$, $w=6$. We circumvent this problem using a heuristic search for larger $n$. If we could show that the sequence $x$ we found using a heuristic search minimizes \cref{eq:nts-c-exact}, then we would be able to claim that we found the exact value of $\NTS_C$. In practice, we cannot show this for all but the few smallest values of $n$, hence can only claim that $\NTS_C \leq \NTS_C(x)$.

The solution we compute for \cref{eq:nts-c-exact} from a heuristic search serves as the upper bound for $\NTS_C$, while \cref{eq:cl-expect2} represents its lower bound. We selected the latter when contrasting the slope parameters $a_Q$ and $a_C$. This ensures that the quantum algorithm's performance surpasses the lower threshold of its classical counterpart, guaranteeing a quantum scaling advantage if achieved.

In larger problem sizes where the upper and lower bounds do not align, the actual $\NTS_C$ lies somewhere between the two. The discrepancy between the theoretical lower bound in \cref{eq:cl-expect2} and the heuristic upper bound in \cref{eq:nts-c-exact} is depicted in \cref{fig:UB-LB} for \wSimon{7}{n} from $n=2$ to $n=14$.

\begin{figure}[t]
    \centering
    \includegraphics[width=0.5\textwidth]{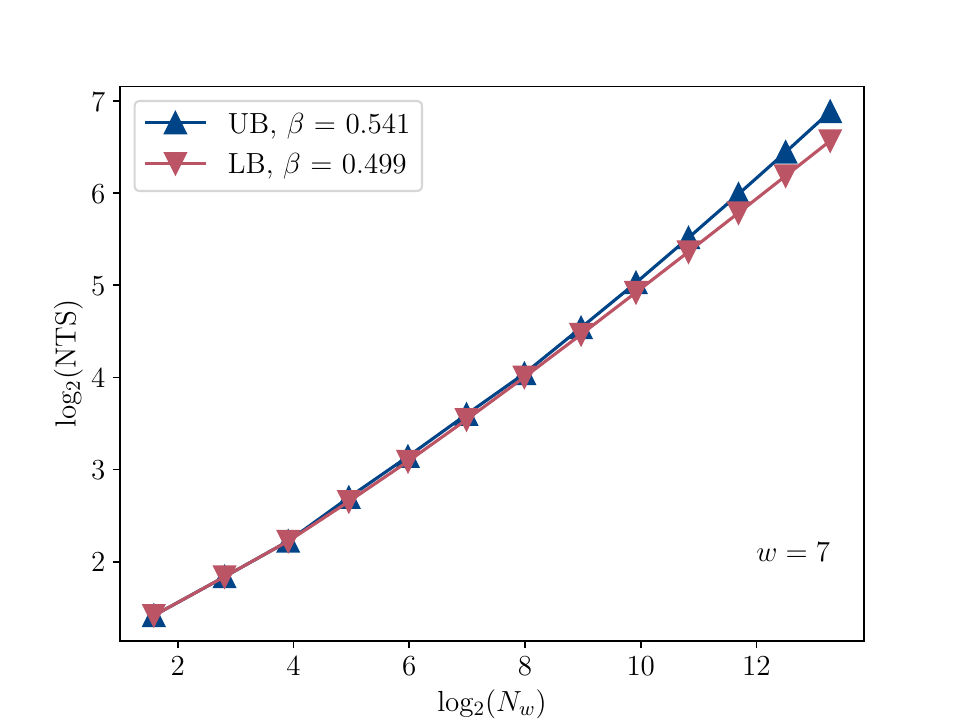}
    \caption{Difference between the upper bound (UB) and lower bound (LB) of $\NTS_C$ and the corresponding scaling parameter $\beta$ for $w=7$. We use the LB slope in the quantum speedup analysis.}
    \label{fig:UB-LB}
\end{figure}

\section{Solving \texorpdfstring{Simon-$n$ in $O(n)$}{Simon-n in O(n)} oracle queries on a noiseless QC}
\label{app:SimonQ}

For completeness, we provide the proof that the circuit in \cref{fig:simon-circ} solves Simon's problem in $O(n)$ oracle queries, assuming that it is run on a noiseless QC. Starting from the left of \cref{fig:simon-circ}, every qubit is initialized in the all-zero state $\ket{\psi_0}=\ket{0}^{\otimes n}\ket{0}^{\otimes n}$, and the first $n$ Hadamard gates put every data qubit into a uniform superposition:
\begin{equation}
    \ket{\psi_1}= H^{\otimes n}\ket{\psi_0}=\frac{1}{\sqrt{2^n}}\sum_{x\in\{0,1\}^n}\ket{x}\ket{0}^{\otimes n}.
\end{equation}
Next, for an input $x$, the oracle $\mc{O}_b$ outputs $f_b(x)$ stored in the ancilla register, where $f_b(x)=f_b(y)$ if and only if $y=x$ or $y=x\oplus b$,
\begin{equation}
    \ket{\psi_2}=\mc{O}_b\ket{\psi_1}=\frac{1}{\sqrt{2^n}}\sum_{x\in\{0,1\}^n}\ket{x}\ket{f_b(x)}.
\end{equation}
The ancilla qubits are then measured in the computational basis, an operation denoted by $M_a^{\otimes n}$. The result could be any $f_b(x)$ where $x\in\{0,1\}^n$ with equal probability, which is then discarded. The remaining state in the data register is
\begin{equation}
    \ket{\psi_3}=M_a^{\otimes n}\ket{\psi_2}=\frac{1}{\sqrt{2}}(\ket{x}+\ket{x\oplus b}).
\end{equation}
Applying the last set of Hadamard gates on the data qubits, we have
\bes
\begin{align}
        \ket{\psi_b}&= H^{\otimes n}\ket{\psi_3}\\
    &=\frac{1}{\sqrt{2^{n+1}}}\sum_{z\in \{0,1\}^n}[(-1)^{x\cdot z}+(-1)^{(x\oplus b)\cdot z}]\ket{z}\\
    &=\frac{1}{\sqrt{2^{n-1}}}\sum_{\{z|z\cdot b=0\}}(-1)^{x\cdot z}\ket{z},
\end{align}
\ees
where the last line arises from the fact that the element in the sum vanishes when $x\cdot z \neq (x\oplus b)\cdot z$. The remaining terms must have $x\cdot z = (x\oplus b)\cdot z$, which reduces to $z\cdot b = 0, \forall x$.

\section{Proof of \cref{th:1}}
\label{app:th1-proof}

\begin{proof}
The $\abs{S} = 1$ case is trivial.
  To prove the lower bound in \cref{eq:NTSIQ-bounds},
  consider the situation after $k$ executions of the circuit.
  It is known that $b$ is uniformly distributed in some subset $S_k$ of $S$ (e.g., $S_0 = S$).
  Let $B_k$ be the random variable representing the value of $b$
  and $Z_k$ be the random variable representing the value of $z$ obtained from the $k+1$'th execution of the circuit. Then the information gain about $B_k$ from $Z_k$ is given by their mutual information
   $I(B_k;Z_k) = H(B_k) + H(Z_k) - H(B_k,Z_k)$,
  where $H$ is the Shannon entropy. We have $H(B_k) = \log_2 \abs{S_k}$,
  $H(Z_k) \leq n$, and $H(B_k,Z_k) = (n-1) + \log_2 \abs{S_{k}}$ [because $(B_k,Z_k)$ is uniformly distributed across $\abs{S_{k}} 2^{n-1}$ pairs $(b, z)$: for each $b$ there are $2^{n-1}$ options for $z$ (the ones satisfying $b \cdot z = 0$)].
  Hence $I(B_k; Z_k) \leq 1$, and it follows that $\NTSIQ \geq \log_2 \abs{S}$:
  we learn at most $1$ bit of information in every step and we need to learn $\log_2 \abs{S}$ bits.

  To prove the upper bound in \cref{eq:NTSIQ-bounds}, let $b^*$ be the true (hidden) value of $b$
  and let $S^* = S \setminus \{b^*\}$. For each $b \in S^*$ let $X_b$ be the random variable
  representing the number of circuit executions needed to learn that $b^{*} \neq b$.
  Since upon each circuit execution
  $z$ is uniformly distributed among all $z$ s.t. $b \cdot z^{*} = 0$,
  we know that $X_b$ is geometrically distributed: $\Pr(X_b = k) = 2^{-k}$ for $k \geq 1$.
  Let $X_{\textnormal{max}} = \max_{b \in S^*} X_b$.
  Then
  \begin{equation}
    \label{eq:Xmax}
   \Pr(X_{\textnormal{max}} = k) \leq \sum_{b \in S^*} \Pr(X_b = k) = (\abs{S} - 1) 2^{-k},
  \end{equation}
which implies $\Pr(X_{\textnormal{max}} \geq k) \leq (\abs{S} - 1) 2^{1-k}$.
Now, let $a = \floor{\log_2(\abs{S}-1)}$ and let $b$ be the fractional part of $\log_2(\abs{S}-1)$ so that $\abs{S} - 1 = 2^{a + b}$. We have:
  \begin{multline}
    \label{eq:NTSIQ.Xmax}
    \NTSIQ = \mathbb{E} [X_{\textnormal{max}}] =
    \sum_{k=1}^{\infty} \Pr(X_{\textnormal{max}} \geq k) \\
    \leq \sum_{k=1}^{\infty} \min(1, 2^{a + b + 1 - k})
    = \sum_{k=1}^{a+1} 1 + \sum_{k=a+2}^{\infty} 2^{a+b+1-k} \\
    = a + 1 + 2^{b} = \log_2(\abs{S}-1) + 1 + 2^{b} - b \\
    \leq \log_2(\abs{S}-1) + 2 ,
  \end{multline}
  where in the second line we used $\sum^\infty_{k' = k} 2^{-k'} = 2^{1-k}$.

  \Cref{eq:NTSIQ-winf} follows from the observation that in the first step we either
  obtain $z=0^n$ with probability $2^{1-n}$, in which case we learn nothing about $b^{*}$,
  or we reduce the problem to Simon-($n-1$). Each term in the sum in \cref{eq:NTSIQ-winf}
  represents the expected number of steps needed to reduce the problem Simon-$(k+1)$ to Simon-$k$. The Erd\H{o}s
-Borwein constant is defined as $\EEB \equiv \sum_{k=1}^\infty 1/(2^k - 1)$, so $\sum_{k=1}^{n-1} \frac{1}{1-2^{-k}} - (n + \EEB - 1) = -\sum_{k=n}^{\infty} 1 / (2^k - 1) = O(2^{-n})$.

  \Cref{eq:NTSIQ-w1} follows from the observation that for $w=1$ all $X_b$ are independent,
  hence the task of computing $\NTSIQ$ reduces to the task of computing the expected value
  of the maximum of $n-1$ independent geometrically distributed random variables, which
  was solved in \cite{poblete2006binomial}.
\end{proof}

\section{Ideal quantum $\NTSIQ$ calculation}
\label{app:NTSIQ}

\Cref{th:1} states the upper and lower bounds for $\NTSIQ$ and that it can be calculated exactly for the two limiting cases $w=1$ and $w=\infty$. In the main text, we constructed an interpolation between the two known values, \cref{eq:NTSIQ-winf,eq:NTSIQ-w1}, which we can rewrite as:
\begin{equation}
    \NTSIQ = \log_2(N_w) + t \Gamma_{w=\infty} + (1 - t) \Gamma_{w=1},
    \label{eq:NTSIQ-rewrite}
\end{equation}
where the weighted $\Gamma_{w=\infty}=E_{\text{EB}} - 1$ and $\Gamma_{w=1}=0.5 + \gamma/ \ln(2)$ are constants from \cref{eq:NTSIQ-winf,eq:NTSIQ-w1}, respectively. 

An alternative approach is to use a Monte Carlo simulation to compute $\NTSIQ$. For a $\wSimon{w}{n}$ problem, we randomly picked 10 $b^*$'s from the possible set $S_0$ such that $|S_0|=N_w$. For each $b^*$, a series of valid $z_k$'s such that $b^*\cdot z_k=0$ is drawn. After each $z_k$ was drawn, we eliminated any $b$ in $S_k$ that did not satisfy $b\cdot z_k=0$ and stopped when $|S_k|=1$. At this point, we expected to only have $b^*$ as the only element left in $S_k$. The simulated $\NTSIQ$ was calculated 10,000 times for each $b^*$, and the average was taken. After performing the simulation for 10 $b^*$'s, we found the mean and standard deviation and reported this as the Monte Carlo version of $\NTSIQ$.

\cref{fig:NTSIQ} compares the upper bound and lower bound from \cref{eq:NTSIQ-bounds}, the interpolated version of $\NTSIQ$, and the Monte Carlo version of $\NTSIQ$ for $w=7$.

\begin{figure}[t]
    \centering
    \includegraphics[width=0.5\textwidth]{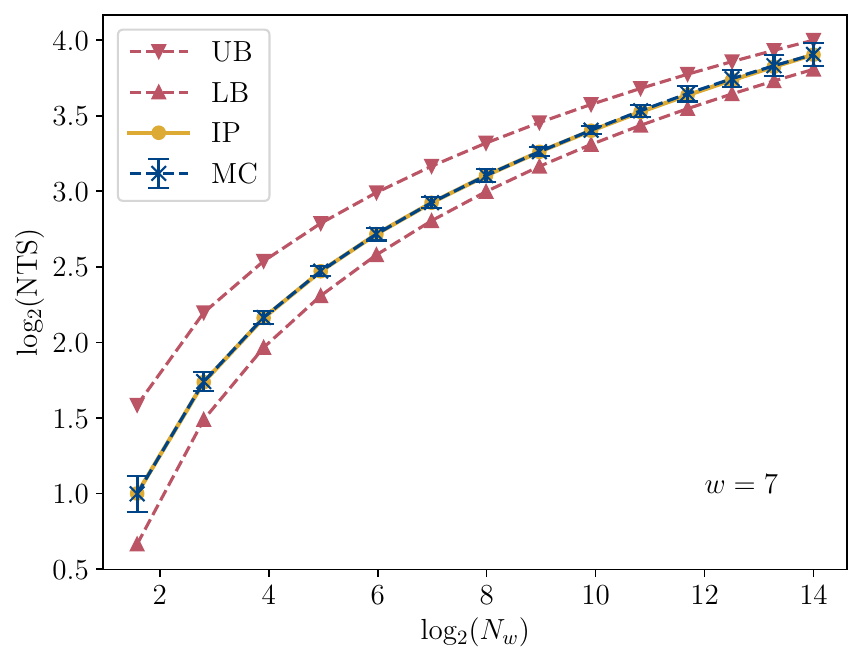}
    \caption{Upper bound (UB) and lower bound (LB) from \cref{eq:NTSIQ-bounds}, the interpolated version of $\NTSIQ$ (IP), and the Monte Carlo version of $\NTSIQ$ (MC) for $w=7$. The error bars on the blue lines are $5\sigma$ in each direction.}
    \label{fig:NTSIQ}
\end{figure}

\section{Error Mitigation and Mutual Information}
\label{app:MEM}

We first discuss our error mitigation strategy, and then delve more deeply into the underlying data using a mutual information analysis.

\subsection{Error Mitigation}
\label{subapp:MEM}

DD suppresses noise during circuit execution. To complement DD, we used measurement error mitigation (MEM)~\cite{Temme:2017aa} to reduce measurement-related errors in the NISQ algorithm setting. We applied two MEM methods, PyIBU~\cite{Srinivasan:22} and M3~\cite{Nation:2021aa}, in separate experiments. 

The results reported in the main plots for 127-qubit devices are mitigated using M3. The mitigation process is built-in and runs through Qiskit. It performs extra experiments alongside the main experiments to obtain information used to estimate the measurement error. This information is then used to correct the results from the main experiments.

We used PyIBU with measurement error information from the device properties on the days of the experiments. It runs MEM based on iterative Bayesian unfolding. Most MEM methods assume a response matrix $R$ that modifies the ideal measurement probability $\vect{\theta}$ to a noisy probability $\vect{p}=R\vect{\theta}$. 
MEM via iterative Bayesian unfolding \cite{Srinivasan:22} uses the Expectation-Maximization algorithm from machine learning and iteratively applies Bayes' rule to find the mitigated probability distribution $\theta_j^{k+1}=\sum_{i=1}^{2^n}p_i  R_{ij}\theta^k_j/(\sum_m R_{im}\theta^k_m)$, starting from the initial guess $\vect{\theta}^0$ until $\vect{\theta}^{k+1}$ converges at the $(k+1)$'th step.

\cref{fig:p_127} shows the probability of obtaining $z\cdot b=0$ as a function of $n$, the length of the hidden bitstring $b$. Note that the data is obtained from two separate experiments:
\begin{itemize}
\item Experiment \#1 (up to 30 qubits): DD is compared to DD+MEM in the same calibration cycle. Here, the DD sequence is performed once and MEM is performed offline by applying PyIBU on the results obtained from the DD experiment. This experiment shows that adding MEM to DD (blue curves) marginally improves the results over using DD only (red curves).
\item Experiment \#2 (up to 126 qubits): we ran DD+MEM in the same calibration cycle (green curves). This time, MEM is performed online via the built-in M3 on Qiskit. Since MEM is performed along with DD, there are no DD-only results in experiment \#2. Since this experiment was conducted separately from the PyIBU and no-MEM experiments, it should not be compared to the data from experiment \#1.
\end{itemize}

\subsection{Mutual information} 
\label{subapp:I}

Solving Simon's problem requires a series of ``good'' outcomes of the form $z\cdot b=0$ and $z\neq 0$. Let $p$ be the probability of $z\cdot b=0$ and let $q$ be the probability of obtaining $z=0$. The result of each measurement round then falls into one of the following three categories:
\begin{equation}
    \begin{aligned}
    \Pr(z= 0) &= q,\\
    \Pr(z\cdot b = 0 | z\neq 0) &= (1-q)p,\\
    \Pr(z\cdot b = 1 | z\neq 0) &= (1-q)(1-p).
    \end{aligned}
    \label{eq:prob-pq}
\end{equation}

We can extract $p$ and $q$ from the counts data of the experiment. 
The probability $\Pr(z\cdot b = 0) = q + (1-q)p$ for both $127$-qubit devices is plotted in \cref{fig:p_127}. Note that we set $p=0$ when $n=1$ by definition. This is because the only possible non-zero hidden bitstring is $b=1$ when $n=1$, hence, every $z$ such that $z\cdot b=0$ is $z=0$, which contributes only to $q$. With this choice, we start the plot in \cref{fig:p_127} from $n=2$.

We calculate the estimated information gained per circuit round, with the optimistic assumption that we gain the same amount of information in every round. Letting $S(X)$ denote the Shannon entropy of a random variable $X$, the classical mutual information $I(Z;B)$ is defined as
\begin{equation}
    \begin{aligned}
    I(Z;B)&=S(Z)+S(B)-S(Z|B)\\
    &=\sum_{z,b} \Pr(z,b) \log_2 \frac{\Pr(z,b)}{\Pr(z)\Pr(b)}\\
     &=\sum_{z,b} \Pr(z|b)\Pr(b) \log_2 \frac{\Pr(z|b)}{\Pr(z)}\\
     &= \left( \sum_b\Pr(b) \right)\left( \sum_{z} \Pr(z|b)\log_2 \frac{\Pr(z|b)}{\Pr(z)}\right)\\
     &=\sum_{z} \Pr(z|b) \log_2 \frac{\Pr(z|b)}{\Pr(z)}.
    \end{aligned}
\end{equation}
To go from the second to the third line, we used the fact that $\Pr(z|b)\Pr(b)=\Pr(z,b)$. In the fourth line, the summation over $b$ is separated from the summation over $z$ because the quantity in the second bracket is independent of $b$ by permutation symmetry. We used $\sum_b \Pr(b)=1$ to arrive at the last line and remove the dependence on $b$.

\begin{figure}[t]
    \centering
    \includegraphics[width=0.5\textwidth]{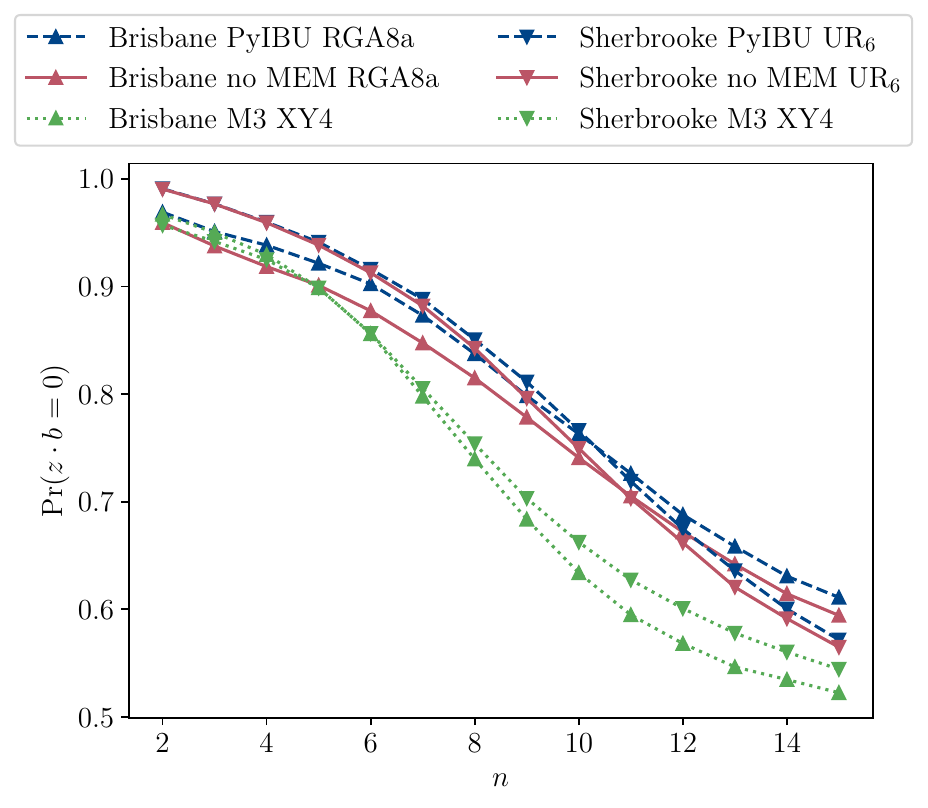}
    \caption{Probability of obtaining $z\cdot b=0$ with MEM (blue, green) and without MEM (red) on Brisbane and Sherbrooke, as a function of the bitstring length $n$. Each data point is a weighted average over HW from 1 to $n$. The data shown uses the best-performing DD sequences on each device.}
    \label{fig:p_127}
\end{figure}

\begin{figure}
    \centering
    \includegraphics[width=0.5\textwidth]{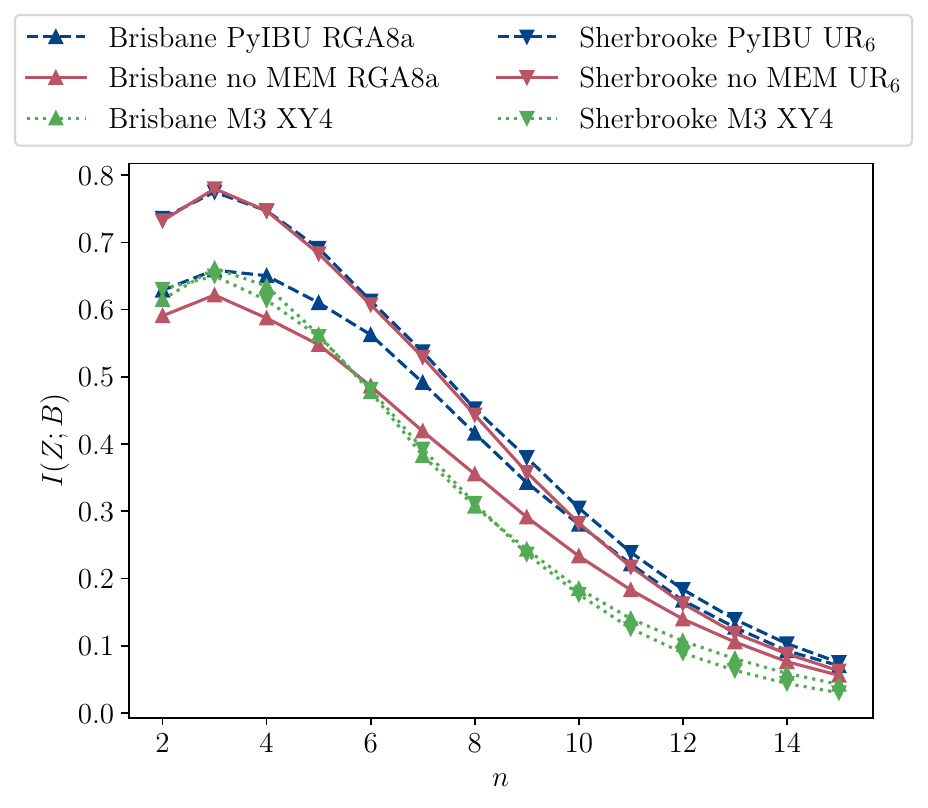}
    \caption{Information learned per circuit as a function of problem size $n$, with and without MEM from the best sequence on 127-qubit devices, Brisbane and Sherbrooke.}
    \label{fig:I_127}
\end{figure}

Consider dividing the sum into three cases according to \cref{eq:prob-pq}, we have $I(Z;B)=\sum_{i=1,2,3}I^i(Z;B)$, where $I^i(Z;B)$ is given as follows.
\begin{itemize}
    \item Case I: $z=0$.
    \begin{equation}
        \begin{aligned}
            I^1(Z;B) &= \Pr(z=0|b)\log_2\frac{\Pr(z=0|b)}{\Pr(z=0)}\\
            &= q\log_2\frac{q}{q} =0 .
        \end{aligned}
        \label{eq:I1}
    \end{equation}
    \item Case II: $z\cdot b=0$, $z\neq 0$. There are $N_2 = 2^{n-1}-1$ possibilities of $z$'s (half of $2^n$ and excluding 0).
    \begin{equation}
        \begin{aligned}
            I^2(Z;B) &= \sum_{z}\Pr(z\cdot b = 0\land z\neq 0|b)\log_2\frac{\Pr(z\cdot b = 0\land z\neq 0|b)}{\Pr(z\cdot b = 0\land z\neq 0)}\\
            &= \sum_{z}\frac{(1-q)p}{N_2}\log_2\frac{(1-q)p/N_2}{(1-q)/(2^n-1)}\\
            &=(1-q)p\log_2\frac{p(2^n-1)}{2^{n-1}-1}
        \end{aligned}
        \label{eq:I2}
    \end{equation}
    \item Case III: $z\cdot b=1$, $z\neq 0$. There are $N_3 = 2^{n-1}$ possibilities of $z$'s (half of $2^n$).
    \begin{equation}
        \begin{aligned}
            I^3(Z;B) &= \sum_{z}\Pr(z\cdot b = 1\land z\neq 0|b)\log_2\frac{\Pr(z\cdot b = 1\land z\neq 0|b)}{\Pr(z\cdot b = 1\land z\neq 0)}\\
            &= \sum_{z}\frac{(1-q)(1-p)}{N_3}\log_2\frac{(1-q)(1-p)/N_3}{(1-q)/(2^n-1)}\\
            &=(1-q)(1-p)\log_2\frac{(1-p)(2^n-1)}{2^{n-1}}
        \end{aligned}
        \label{eq:I3}
    \end{equation}
\end{itemize}

We can calculate the classical mutual information $I(Z;B)$ in terms of $p$ and $q$ as the sum of \cref{eq:I1}, \cref{eq:I2}, and \cref{eq:I3}:
\begin{equation}
    \begin{aligned}
    I(Z;B)&=(1-q)\left(p\log_2\frac{p(2^n-1)}{2^{n-1}-1}\right.\\
    &\qquad\qquad\left.+(1-p)\log_2\frac{(1-p)(2^n-1)}{2^{n-1}}\right).
    \end{aligned}
    \label{eq:IZB}
\end{equation}

The expected information learned per circuit, $I(Z;B)$, is plotted as a function of problem size $n$ in \cref{fig:I_127}. As explained above, the data is obtained from two separate experiments; (1) PyIBU and no-MEM are performed together, and (2) M3 is performed separately. Therefore, the data should not be compared across experiments. Thus, comparing only the PyIBU (blue) and no MEM (red) data, \cref{fig:I_127} shows that MEM results in a significant benefit in the Brisbane case, but only a very slight benefit in the Sherbrooke case. Interestingly, \cref{tab:127devices-limit15} shows that Brisbane has, on average, lower readout errors than Sherbrooke, so one might have expected the former to benefit less from MEM than the latter, in contrast to our results in \cref{fig:I_127}.

\section{Post-processing for the `NISQ' variant of the quantum algorithm}
\label{app:post-processing}

The original noiseless solver requires sampling $n-1$ independent output bitstrings $z$'s and solves for the unique $b$ that satisfies $z\cdot b=0, \forall z$. 
This means that if one of the $z$'s is altered by noise, the resulting $b$ is guaranteed to be wrong. To reduce the effect of noise, we employ \cref{algo:new}, a post-processing algorithm that applies in the `NISQ' setting described in the main text.

\cref{algo:new} assumes that the player has access to the distribution $\Pr(z\cdot b=0,z\neq 0)$ generated by the NISQ device, just as in \cref{fig:p_127}. This probability distribution must be obtained in an experiment separate from the experiments that obtain the stream $Z=\{z_1, z_2,\dots z_m\}$. The player starts by initializing all possible $b_i$ to have an equal prior probability of being the correct $b$. After each call to the oracle to obtain $z_j$, the player computes $\Pr_{\text{post}}$ according to whether $b_i\cdot z_j=0$, as denoted in \cref{algo:new}. The player has a choice to guess $b$ according to their probability distribution or to make more oracle calls, in which the prior probability of the next round is reset to be the posterior probability of the current round.

\begin{algorithm}[H]
\caption{Solving for $b$ given $\Pr(z\cdot b|\HW(b),z\neq 0)=0$.}
\label{algo:new}
\begin{algorithmic}
\Require input $n,w$, stream $Z=\{z_1, z_2,\dots z_m\}$ of bitstrings, \qquad $\Pr(z\cdot b|\HW(b),z\neq 0)=0$
\Ensure a list $B$ of all possible $b_i$: $\{b_1,b_2,\dots,b_{N_w}\}$
\Ensure a dictionary $\Pr_{\text{pre}}(B)$ of each $\Pr_{\text{pre}}(b_i)=1/N_w$\;
\Ensure a dictionary $\Pr_{\text{post}}(B)$\;
  \While{Termination condition not met}
    \For{$z_i$ in $Z$}
        \For{$b_j$ in $B$}
          \If{$z_i\cdot b_j =0$}
            \State $\Pr_{\text{post}}(b_j)=\Pr_{\text{pre}}(b_j)*f(\text{HW}(b_j))$\;
          \Else
            \State $\Pr_{\text{post}}(b_j)=\Pr_{\text{pre}}(b_j)*(1-f(\text{HW}(b_j)))$\;
          \EndIf
        \EndFor
        \State $\Pr_{\text{pre}}(b_j)=\Pr_{\text{post}}(b_j)$
    \EndFor
  \EndWhile
  \State \Return $\text{argmax}(\Pr_{\text{post}}(b_j))$\;
\end{algorithmic}
\end{algorithm}

\section{Circuit Complexity Reduction}
\label{app:circ-reduction}

To avoid the impractical task of running $2^n-1$ different oracles for every possible hidden bitstring for Simon-$n$ problems, we use permutation symmetry and only run $n$ circuits for a problem size $n$. Our chosen bitstrings are $b^i=0^{n-i}1^{i}$ for $1\le i\le n$. The number of queries $Q_i$ for each $b^i$ is weighted in the NTS calculation in \cref{eq:NTS-wn} according to the corresponding weight $h_i = {n \choose i}$, as explained in \cref{sec:speedup}. We assume that the permutation symmetry is processed within the compiler, as detailed in \cref{app:rules:oracle}. 

We aim to quantify the quantum speedup for a fixed $w\le n$. To further reduce the number of circuits, instead of running $\wSimon{w}{n}$ for every $n$ and $w$, we only run $\wSimon{w}{n_{\max}}$ and extract the $\wSimon{w}{m}$ cases where $m<n_{\max}$ during post-processing.

Furthermore, we reduce Simon-$n$ to Simon-$m$, where $m< n$. For example, we can use the result of the experiment at problem size $n=29$ to extract the results for smaller problem sizes, $m=2$ to $m=28$. \Cref{fig:simon-m-n} shows an example of the reduction from $n=3$ ($b=011$) to $m=2$ ($b=11$). 
\begin{figure}[h!]
    \centering
    \begin{tikzpicture}
        \begin{yquantgroup}
            \registers{
                qubit {$\ket{0}_{d_0}$} x0;
                qubit {$\ket{0}_{a_0}$} a0;
                qubit {$\ket{0}_{d_1}$} x1;
                qubit {$\ket{0}_{a_1}$} a1;
                qubit {$\ket{0}_{d_2}$} x2;
                qubit {$\ket{0}_{a_2}$} a2;
            }
            \circuit{
                h x0, x1, x2;
                cnot a0 | x0;
                cnot a2 | x1;
                cnot a2 | x2;
                h x0, x1, x2;
            }
            \equals[$\mapsto$]
            \circuit{
                h x1, x2;
                cnot a2 | x1;
                cnot a2 | x2;
                h x1, x2;
            }
        \end{yquantgroup}
    \end{tikzpicture}
    \caption{Circuit reduction from Simon-3 to Simon-2. The right circuit $m=2$ ($b=11$) can be extracted from the left circuit $n=3$ ($b=011$) by tracing out qubits $d_0$ and $a_0$.}
    \label{fig:simon-m-n}
\end{figure}
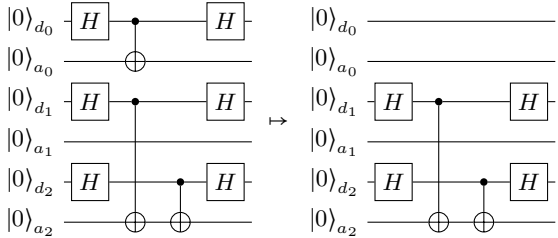

This reduction is possible under the assumption of a CPTP map governing the circuit in the open-system setting. We give the proof of this fact next. 

Since qubits $(d_j,a_j)$ are unentangled in our oracle construction from the rest of the circuit if the $j$th bit of the hidden bitstring $b$ is zero, these two qubits are separated from the rest of the oracle, as indicated by the individual dashed boxes in \cref{fig:simon-n5-graph}.
Therefore, the only difference between $b_n=0^{n-i}1^{i}$ and $b_m=0^{m-i}1^{i}$ is that the circuit for $|b_n|=n$ applies extra operations on extra qubits compared to the circuit for $|b_m|=m$. Let $m\in [i,n-1]$; then all the circuits for $b_m=0^{m-i}1^{i}$ have the identical output as the circuit for $b_n=0^{n-i}1^{i}$ if we consider only the overlapping set of qubits, as illustrated in \cref{fig:simon-m-n}. Intuitively, we may thus extract the Simon-$m$ results from the Simon-$n$ results by running only the Simon-$n$ circuits and tracing over the first $2(n-m)$ data qubits, a practice we implemented in our experiments and subsequent analysis. Note that the left circuit in \cref{fig:simon-m-n} is the same circuit as in \cref{fig:simon-011} but drawn with a different qubit order.

In the context of mapping the circuits to physical qubits in our experiments, the initial set of the first $2(n-m)$ qubits that we trace out remains the same for each experiment. For example, consider a bitstring from an $n=5$ experiment, with partial trace to $m=4$ and $m=3$. The five bits of $b=00011$ could be measured from physical qubits $q_5, q_1, q_{120}, q_{30}$, and $q_{100}$. The partial trace to obtain $b=0011$ is derived from the experiment involving $q_1, q_{120}, q_{30}$, and $q_{100}$. Similarly, the partial trace to obtain $b=011$ is derived from the experiment involving $q_{120}, q_{30}$, and $q_{100}$. The mapping from the precompiled circuit to the physical layout on the IBM machines was performed using the Qiskit compiler with optimization level 3. The compiler heuristically identifies an optimized circuit compatible with the heavy-hex connectivity. The measurement results were returned as bitstrings, maintaining the same bit order as in the precompiled circuit. Consequently, partial tracing can be done without concern for the device's physical connectivity.

Let us now prove the equivalence of our procedure to actually running the Simon-$m$ circuits, as long as the completely positive, trace preserving (CPTP) map governing the circuit in the open system case factors into a product over the ``copied'' and ``XOR-ed'' qubits, i.e., those corresponding to a 0 (copied) or 1 (XOR-ed) in the bitstring $b$ that defines the given oracle. 

In preparation for our more general discussion below, let us equivalently represent the action of the closed-system Simon-$n$ circuit with hidden bitstring $b$ on some initial state $\rho$ of the $n$ data and $n$ ancilla qubits as
\beq
\mathrm{Simon}_n(b)[\rho] = \Tr_{a}\left[ \left( H_d^{\otimes n} \circ \mathcal{O}_{b} \circ H_d^{\otimes n} \right) [\rho] \right] ,
\label{eq:Simon}
\eeq
where $\mathcal{O}_{b}$ represents the Simon oracle, and $\Tr_a$ means that the state of the ancilla qubits (labeled $a_0,\dots,a_{n-1}$) is discarded at the end, so that $\mathrm{Simon}_n(b)[\rho]$ is the state of the $n$ data qubits at the end of one run of the algorithm. Thus, if we write $\Pi_b = \ketbra{\psi_b}$ and $\ketbra{\psi_0} = \Pi_{0^n} \otimes \Pi_{0^n}$, then it follows from \cref{eq:Simon} that $\mathrm{Simon}_n(b)[\ketbra{\psi_0}] = \Pi_b$. 

\begin{figure*}[t]
    \centering
    \includegraphics[width=0.49\linewidth]{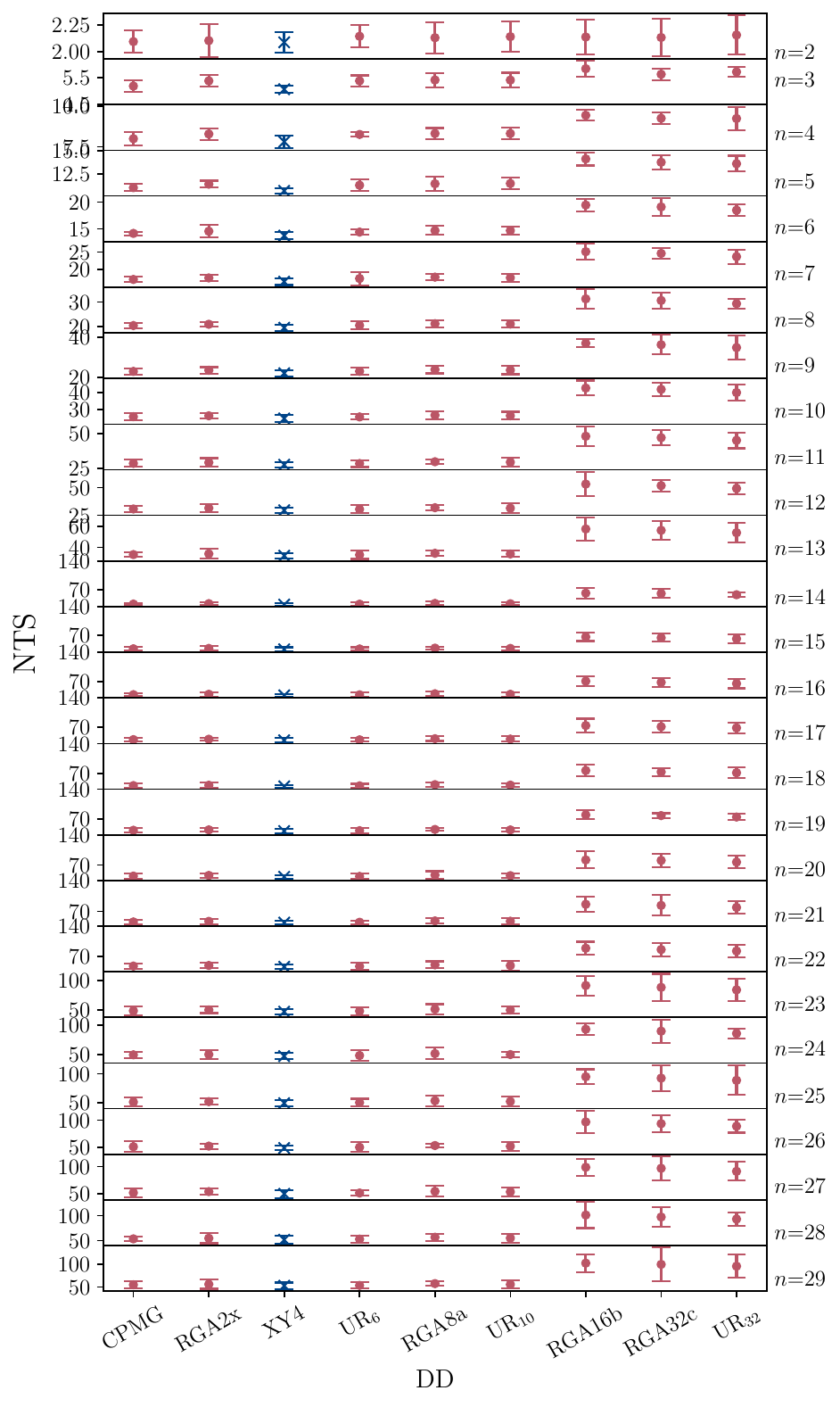}
    \includegraphics[width=0.49\linewidth]{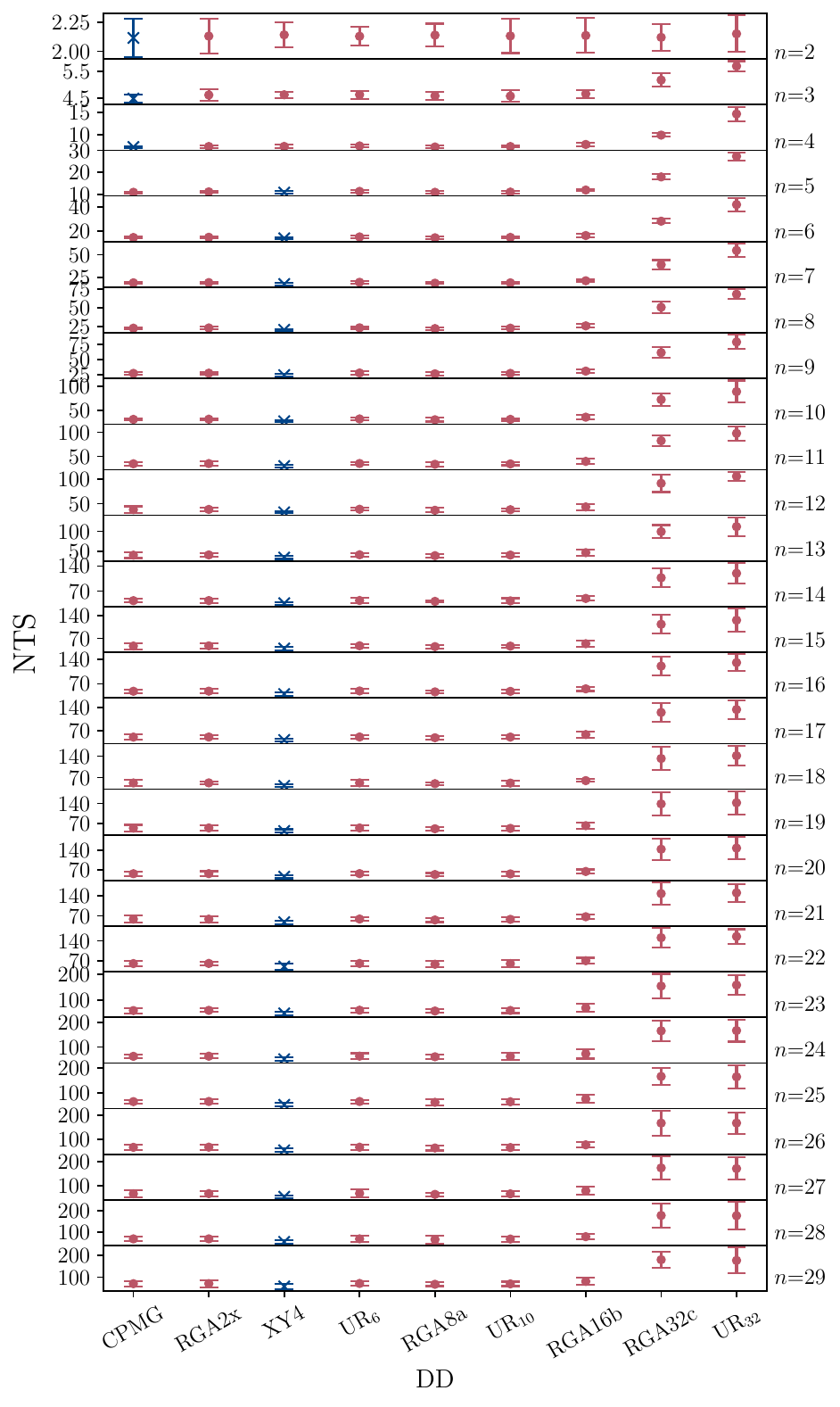}
    \caption{DD performance for \wSimon{4}{n} according to the NTS metric on Sherbrooke (left) and Brisbane (right) over the nine sequences in the set $\mathcal{D}=\{$CPMG, RGA$_{2x}$, XY4, UR$_6$, RGA$_{8a}$, UR$_{10}$, RGA$_{16b}$, RGA$_{32c}$, UR$_{32}\}$, and using Experiment \#2 as described in \cref{app:MEM}. Sequences colored in blue ($\times$) denote those identified as optimal performers on their respective machines at varying problem sizes $n$. The arrangement of sequences in the plot is sorted by the number of pulses in each sequence. XY4 emerges as the best-performing sequence overall on both Sherbrooke and Brisbane and becomes the top-performing sequence for all $n\ge 5$ on both devices. The error bars, representing confidence intervals derived through bootstrapping, extend 1$\sigma$ in both directions from each data point.}
    \label{fig:dd-perf}
\end{figure*}

Let us denote the qubit sets as follows: $D_\nu=\{d_0,\dots,d_{n-m-1}\}$ ($A_\nu=\{a_0,\dots,a_{n-m-1}\}$) being the first $n-m$ data (ancilla) qubits and $D_\mu=\{d_{n-m},\dots,d_{n-1}\}$ ($A_\mu=\{a_{n-m},\dots,a_{n-1}\}$) being the remaining $m$ data (ancilla) qubits in the circuit. Let $\cup_\nu\equiv D_\nu\cup A_\nu=\{ d_0,a_0, \dots, d_{n-m-1},a_{n-m-1} \}$. Let us show that
if $b_n = 0^{n-i}1^{i}$, then 
\beq
\mathrm{Simon}_m(b_m)[\Pi_{0^m} \otimes \Pi_{0^m}]
= \Tr_{\cup_\nu} \left[  \mathrm{Simon}_n(b_n)[\ketbra{\psi_0}] \right] ,
\label{eq:BV-equiv}
\eeq
where the trace means that the states of qubits in $\cup_\nu$ are discarded from the result of running $\mathrm{Simon}_n(b_n)$. To prove this claim in the absence of any noise, note that:
\bes
    \begin{align}
        \Tr_{\cup_\nu} & \left[  \mathrm{Simon}_n(b_n)[\ket{\psi_0}\! \bra{\psi_0}] \right] \\
        &=\Tr_{\cup_\nu}\left[\Pi_{b_n}\right]\\
        &=\Tr_{\cup_\nu}\left[\Pi_{\nu}\otimes \Pi_{b_m}\right]\\
         &=\Pi_{b_m}\\
         &=\mathrm{Simon}_m(b_m)[\Pi_{0^m}\otimes\Pi_{0^m}],
    \end{align}
\ees
as claimed. Here, $b_n=0^{n-i}1^{i}$, $b_m=0^{m-i}1^i$ ($m<n$) and in the third line, $\Pi_{\nu}=\ketbra{\psi_\nu}$, where $\ket{\psi_\nu}=\ket{+}^{n-m}$ is derived from the last line of \cref{eq:Simon}:
\bes
\begin{align}
        &\frac{1}{\sqrt{2^{n-1}}}\sum_{\{z|z\cdot b_n=0\}}(-1)^{x\cdot z}\ket{z}_d\\
        &=\frac{1}{\sqrt{2^{n-1}}}\sum_{y\in\{0,1\}^{n-m}}\ket{y}_{D_\nu}\otimes\sum_{\{z|z\cdot b_m=0\}}(-1)^{x\cdot z}\ket{z}_{D_\mu}\\
        &=\frac{1}{\sqrt{2^{n-m}}}\sum_{y\in\{0,1\}^{n-m}}\ket{y}_{D_\nu}\otimes\notag\\
        &\qquad \frac{1}{\sqrt{2^{m-1}}}\sum_{\{z|z\cdot b_m=0\}}(-1)^{x\cdot z}\ket{z}_{D_\mu}\\
        &=\ket{\psi_\nu} \otimes \ket{\psi_{b_m}}.
\end{align}
\ees

Now consider the case where each gate is represented not by a unitary but by a CPTP map. We can rewrite the initial state as
\begin{equation}
    \ketbra{\psi_0}=\Pi_{0^n}\otimes\Pi_{0^n}=\rho_\nu\otimes \rho_\mu,
\end{equation}
where $\rho_\nu=\Pi_{0^{n-m}}\otimes\Pi_{0^{n-m}}$ and $\rho_\mu=\Pi_{0^{m}}\otimes\Pi_{0^{m}}$.
The Simon oracle does not introduce any two-qubit gates between qubit sectors $\cup_\mu=D_\mu\cup A_\mu$ and $\cup_\nu = D_{\nu}\cup A_\nu$, so it is reasonable to assume that under the coupling to the environment, they remain uncoupled (as long as there is no unintended crosstalk between the two sectors). Therefore, the CPTP map for the noisy Simon-$n$ algorithm will be
\begin{align}
\mathcal{S}_n(b_n)[\rho_\nu \otimes \rho_\mu]  &= \Tr_{\cup_\nu}\left[ (\mathcal{H}_\nu \otimes \mathcal{H}_\mu) \circ  (\mathcal{O}_\nu \otimes \mathcal{O}_\mu) \right. \notag \\
&\qquad\qquad \left. \circ  (\mathcal{H}_\nu \otimes \mathcal{H}_\mu)[\rho_\nu \otimes \rho_\mu] \right], 
\end{align}
where $\mathcal{H}_{\nu,\mu}$ and $\mathcal{O}_{\nu,\mu}$ represent the CPTP maps corresponding to the experimental implementation of the unitaries $H$ (multi-qubit Hadamard) and $\mcOf$ (oracle) acting on qubit sectors $\nu,\mu$. Recall that for arbitrary CPTP maps $\mathcal{U}$ and $\mathcal{V}$ acting on a tensor-product space 
\begin{equation}
\mathcal{U} \otimes \mathcal{V}[\rho \otimes \sigma] = \mathcal{U}[\rho] \otimes \mathcal{V}[\sigma]. 
\end{equation}

Therefore, 
\begin{subequations}
\begin{align} 
&\Tr_{\cup_\nu} \left[  \mathcal{S}_n(b_n)[ \ket{\psi_0}\! \bra{\psi_0} ] \right] \\
&\quad = \Tr_{\cup_\nu} \left[  (\mathcal{H}_\nu \otimes \mathcal{H}_\mu) \circ  (\mathcal{O}_\nu \otimes \mathcal{O}_\mu) \right. \notag \\
&\left. \qquad \qquad\qquad \circ  (\mathcal{H}_\nu \otimes \mathcal{H}_\mu) [\rho_\nu \otimes \rho_\mu] \right] \\
&\quad = \Tr_{\cup_\nu} \left[  (\mathcal{H}_\nu \circ \mathcal{O}_\nu \circ \mathcal{H}_\nu) [\rho_\nu]\notag \right. \\
&\left. \qquad\qquad\qquad \otimes (\mathcal{H}_\mu \circ \mathcal{O}_\mu \circ \mathcal{H}_\mu) [ \rho_\mu] \right] \\
&\quad = \mathcal{S}_m(b_m)[\rho_\mu] \\
&\quad = \mathcal{S}_m(b_m)[\Pi_{0^m} \otimes \Pi_{0^m}]. 
\end{align}
\end{subequations}
This is the CPTP map generalization of the closed-system result~\cref{eq:BV-equiv}, and it shows that the reduction from Simon-$n$ to Simon-$m$ holds rigorously also in the open system setting, as long as the CPTP map factors according to the qubit sectors $\nu$ and $\mu$.

\section{Information-theoretic properties of the distribution of output bitstrings $z$'s}
\label{app:dkl}

In this section, which supports \cref{sec:NISQ-speedup-expectation}, information-related quantities are measured in bits, i.e., defined using $\log_2$. E.g., the Shannon entropy of the uniform distribution $U$ on $N$ elements is $H(U) = \log_2(N)$.

\begin{mylemma}
  \label{lm:dkl.vs.chi2}
  Let $P, Q$ be two probability distributions, let $\chi^2$ denote the chi-squared divergence, and let $dP / dQ$ denote the Radon-Nikodym derivative of $P$ relative to $Q$~\cite{guntuboyina2013sharp}. Then
  \begin{enumerate}
  \item If $dP / dQ \geq 1-p$ almost everywhere for some $p \in [0, 1]$ then
  \begin{equation}
    \label{eq:dkl.chi2.ub}
    D_{\text{KL}}(P \parallel Q) \leq \chi^2(P \parallel Q) C(-p) / \ln(2).
  \end{equation}
  \item If $dP / dQ \leq 1 + p$ almost everywhere for some $p \in [1, \infty]$ then
  \begin{equation}
    \label{eq:dkl.chi2.lb}
    D_{\text{KL}}(P \parallel Q) \geq \chi^2(P \parallel Q) C(p) / \ln(2).
  \end{equation}
  \end{enumerate} 
  Here $C(p) = \frac{1}{p^2}[(1+p) \ln(1+p) - p]$ for $p \in (-1, 0) \cup (0, \infty)$
  and extended by continuity to $[-1, \infty]$ as $C(-1) = 1$, $C(0) = 1/2$, $C(\infty) = 0$.
  $C(p)$ is monotonically decreasing on $[-1, \infty]$.
\end{mylemma}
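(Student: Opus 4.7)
The plan is to reduce both divergences to expectations against $Q$ of explicit functions of the density ratio $f = dP/dQ$, and then read off the two bounds from a single monotonicity property of $C$.

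First I would set $f = dP/dQ$ and $x = f - 1$, noting that $\int f\, dQ = 1$ gives $\mathbb{E}_Q[x] = 0$. Then
\[
  D_{\mathrm{KL}}(P\parallel Q)\ln 2 = \mathbb{E}_Q[(1+x)\ln(1+x)] = \mathbb{E}_Q\bigl[(1+x)\ln(1+x) - x\bigr],
\]
the second equality using $\mathbb{E}_Q[x] = 0$. Comparing with the definition of $C$, the integrand equals $x^2 C(x)$ whenever $x > -1$, so
\[
  D_{\mathrm{KL}}(P\parallel Q)\,\ln 2 = \mathbb{E}_Q[C(x)\, x^2], \qquad \chi^2(P\parallel Q) = \mathbb{E}_Q[x^2].
\]
The lemma is thus reduced to bounding $C(x)$ above or below by a constant on the relevant half-line.

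Next I would establish the monotonicity of $C$ on $[-1,\infty]$. Starting from $\ln(1+p) = \int_0^1 \frac{p\,ds}{1+sp}$ and a short algebraic rearrangement, one derives the compact representation
\[
  C(p) = \int_0^1 \frac{1-s}{1+sp}\, ds, \qquad p \in (-1, \infty).
\]
Differentiating under the integral gives $C'(p) = -\int_0^1 \frac{s(1-s)}{(1+sp)^2}\, ds \le 0$, so $C$ is non-increasing on $(-1,\infty)$; evaluating the integrand at the endpoints (or passing to the limit) recovers the values $C(-1) = 1$, $C(0) = 1/2$, $C(\infty) = 0$ stated in the lemma, and monotone convergence extends monotonicity to the closed interval.

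The two inequalities are then one-line corollaries. For (1), if $f \ge 1 - p$ almost everywhere then $x \ge -p$ almost everywhere, so monotonicity gives $C(x) \le C(-p)$ a.e., hence $\mathbb{E}_Q[C(x) x^2] \le C(-p)\,\mathbb{E}_Q[x^2]$, and dividing by $\ln 2$ yields the claimed bound. For (2), the same logic with $f \le 1 + p$ produces $x \le p$ a.e., $C(x) \ge C(p)$, and the reverse inequality. The main obstacle is really the monotonicity of $C$: a direct derivative computation produces $C'(p) = [2p - (p+2)\ln(1+p)]/p^3$, whose sign requires the Pad\'e-type bound $\ln(1+p) \ge 2p/(p+2)$ for $p > 0$ with the reverse for $p \in (-1, 0)$, forcing an unpleasant case analysis. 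I would prefer the integral representation above, which trivializes monotonicity and simultaneously pins down the three boundary values as a byproduct.
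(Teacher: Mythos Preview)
Your proof is correct and follows the same skeleton as the paper's: set $x = dP/dQ - 1$, express $D_{\mathrm{KL}}(P\parallel Q)\,\ln 2 = \mathbb{E}_Q[C(x)\,x^2]$ and $\chi^2(P\parallel Q) = \mathbb{E}_Q[x^2]$, and then bound $C(x)$ pointwise using monotonicity of $C$. The only substantive difference is how monotonicity is established. The paper differentiates $C$ directly and argues the sign of $C'$; the printed formula $C'(x) = -(2x^{-3}+x^{-2})\ln(1+x)$ is actually missing the $2/x^{2}$ term, and the correct expression is precisely your $[2x-(x+2)\ln(1+x)]/x^{3}$, whose sign does require the Pad\'e-type inequality $\ln(1+x)\gtrless 2x/(x+2)$ that you flagged. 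Your integral representation $C(p)=\int_{0}^{1}(1-s)/(1+sp)\,ds$ sidesteps that case split, makes $C'\le 0$ immediate, and delivers the boundary values $C(-1)=1$, $C(0)=\tfrac12$, $C(\infty)=0$ as a byproduct. The paper also opens by noting that if $P\not\ll Q$ both sides are $+\infty$ and the inequalities are vacuous; you might add a sentence to that effect, but it is a triviality.
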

We note that the series expansion of $C(p)$ at $p=0$ is
\begin{equation}
  \label{eq:dkl.chi2.cp-series}
  C(p) = \frac{1}{2} - \frac{p}{6} + O(p^2).
\end{equation}
We also note that the bounds are tight, i.e., there are distributions $P$
and $Q$ with the ratio $D_{\text{KL}} (P \parallel Q) / \chi^2(P \parallel Q)$
arbitrarily close to $C(-p) / \ln(2)$ and $C(p) / \ln(2)$ for parts 1 and 2
respectively. An example is the following: for $p \in [-1, \infty)$ consider
distributions on $\{0, 1\}$ with $\Pr(P = 0) = \epsilon (1 + p)$, $\Pr(Q = 0) = \epsilon$,
$\Pr(P = 1) = 1 - \epsilon (1 + p)$, and $\Pr(Q = 1) = 1 - \epsilon$.
Then one can calculate that
$D_{\text{KL}}(P \parallel Q) = \chi^2(P \parallel Q) C(p) / \ln(2) (1 + O(\epsilon))$.
For a more extensive and systematic study of bounds on $f$-divergences, see Ref.~\cite{guntuboyina2013sharp}.
\begin{proof}[Proof of \cref{lm:dkl.vs.chi2}]
  If $P$ is not absolutely continuous with respect to $Q$ then \cref{eq:dkl.chi2.ub,eq:dkl.chi2.lb} hold trivially ($\infty \leq \infty$).
  Otherwise, we have
  \bes
  \begin{align}
    \label{eq:dkl.chi2.2}
    D_{\text{KL}}(P \parallel Q) &= \int \frac{dP}{dQ} \log_2\left(\frac{dP}{dQ}\right) dQ\\
    &= \frac{1}{\ln(2)} \int f\left(\frac{dP}{dQ} - 1\right) dQ,
  \end{align}
  \ees
  where $f(x) = (x+1) \log_2(x+1) - x$.
  Similarly,
  \begin{equation}
    \label{eq:dkl.chi2.3}
    \chi^2(P \parallel Q) = \int g(x) dQ,
    \textrm{ where } g(x) = x^2.
  \end{equation}
  Consider $C(x) = f(x) / g(x)$ defined for $x \in [-1, \infty]$
  (values at $x \in \{-1, 0, \infty\}$ are defined by continuity as $C(-1) = 1$, $C(0) = 1/2$, $C(\infty) = 0$).
  Note that
  \begin{equation}
    \frac{\partial C(x)}{\partial x} = - (2x^{-3} + x^{-2}) \ln(1+x) < 0 \textrm{ for } x \in (-1, \infty).
  \end{equation}
  Therefore, $C(x)$ is monotonically decreasing on $[-1, \infty]$.
  In the context of the first claim, for $x = dP/dQ - 1$ we have $x \geq -p$. Therefore, $C(x) \leq C(-p)$, i.e., $f(x) \leq g(x) C(-p)$. Applying this to \cref{eq:dkl.chi2.2} and comparing the result with \cref{eq:dkl.chi2.3} we get \cref{eq:dkl.chi2.ub}.
  The second claim is shown similarly: we have $x \leq p$, which implies $C(x) \geq C(p)$, which implies \cref{eq:dkl.chi2.lb}.
\end{proof}

\begin{mylemma}
  \label{lm:dklz}
  Let $n$ be a positive integer, let
  $S \subset \mathbb{F}_2^n \setminus \{0\}$ be nonempty,
  and let $U$ be a random variable distributed uniformly on~$\mathbb{F}_2^n$.
  The following statements hold:

  \begin{enumerate}
    \item \label{item:dklz.any}
      For any distribution $P_Z$ on $\mathbb{F}_2^n$, we have
      \begin{equation}
        \label{eq:dklz.any}
        \DKL(P_Z \parallel P_U) = H(U) - H(Z)
        \leq \frac{\chi^2(P_Z \parallel P_U)}{\ln(2)}.
      \end{equation}

    \item \label{item:dklz.S}
      Suppose $Z$ is defined by the following two-stage sampling:
      \begin{enumerate}
        \item Pick $b$ uniformly at random from $S$.
        \item Given $b$, pick $z$ uniformly at random from
          $\{b\}^\perp  = \{\, z \in \mathbb{F}_2^n : b \cdot z = 0 \}$.
      \end{enumerate}
      Let $P_Z$ be the resulting distribution of $Z$.  Then
      \begin{equation}
        \label{eq:dklz.S}
        \chi^2(P_Z \parallel P_U) = \frac{1}{\abs{S}}.
      \end{equation}

    \item \label{item:dklz.PB}
      In the same setup as above, let $b$ instead be drawn according to
      some distribution $P_B$ supported on $S$.  That is,
      $\Pr[b = b_0] = P_B(b_0)$ for $b_0 \in S$,
      and given $b$, pick $z$ uniformly at random from
      $\{b\}^\perp$.  Then
      \begin{equation}
        \label{eq:dklz.PB}
        \chi^2(P_Z \parallel P_U) = \sum_{b \in S} P_B(b)^2.
      \end{equation}
  \end{enumerate}
  \smallskip \noindent \textbf{Note:}
  From numerical evidence, the bound in
  \cref{item:dklz.any} on $\DKL(P_Z \parallel P_U)$
  appears tight \emph{without} the factor $\ln(2)$
  for the specific distributions of parts~\ref{item:dklz.S} and~\ref{item:dklz.PB}.
  However, our current proof requires the $\ln(2)$ factor.
\end{mylemma}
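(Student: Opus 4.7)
\textbf{Proof plan for \cref{lm:dklz}.}

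The plan is to reduce everything to direct sums over $\mathbb{F}_2^n$, exploiting the fact that $P_U$ is uniform so that $\log_2(P_Z/P_U) = \log_2 P_Z + n$ and $\chi^2(P_Z \parallel P_U) = 2^n \sum_z P_Z(z)^2 - 1$. For part~\ref{item:dklz.any}, I first verify the equality $\DKL(P_Z \parallel P_U) = H(U) - H(Z)$ by expanding the KL definition and using $H(U) = n$. For the inequality, I would apply \cref{lm:dkl.vs.chi2}(1) with $p = 1$: the Radon--Nikodym derivative $dP_Z/dP_U = 2^n P_Z(z) \geq 0 = 1 - p$ holds trivially, and $C(-1) = 1$, so the bound collapses to $\DKL \le \chi^2/\ln(2)$.

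For part~\ref{item:dklz.S}, I would start by writing the induced distribution as
\begin{equation}
    P_Z(z) = \frac{1}{|S| \cdot 2^{n-1}} \, \bigl|\{b \in S : b \cdot z = 0\}\bigr|,
\end{equation}
then expand
\begin{equation}
    \sum_z P_Z(z)^2 = \frac{1}{|S|^2 \cdot 2^{2n-2}} \sum_{b, b' \in S} N(b, b'),
\end{equation}
where $N(b, b') = |\{z \in \mathbb{F}_2^n : b \cdot z = b' \cdot z = 0\}|$. The key observation is that over $\mathbb{F}_2$ any two distinct nonzero vectors are linearly independent, so $N(b, b) = 2^{n-1}$ and $N(b, b') = 2^{n-2}$ for $b \neq b'$. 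Splitting the double sum into diagonal and off-diagonal parts and simplifying yields $\sum_z P_Z(z)^2 = (1 + |S|)/(|S| \cdot 2^n)$, hence $\chi^2(P_Z \parallel P_U) = 1/|S|$.

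For part~\ref{item:dklz.PB}, the same route works with $1/|S|$ replaced by the weight $P_B(b)$: the distribution is $P_Z(z) = \frac{1}{2^{n-1}} \sum_{b \in S} P_B(b) \mathbf{1}[b \cdot z = 0]$, and the diagonal/off-diagonal split produces
\begin{equation}
    \sum_z P_Z(z)^2 = \frac{1}{2^n} \Bigl[ \sum_b P_B(b)^2 + \Bigl(\sum_b P_B(b)\Bigr)^2 \Bigr],
\end{equation}
using $\sum_{b \neq b'} P_B(b) P_B(b') = 1 - \sum_b P_B(b)^2$ together with the same linear-independence count. Since $\sum_b P_B(b) = 1$, subtracting $1$ gives $\chi^2 = \sum_b P_B(b)^2$, which also recovers part~\ref{item:dklz.S} as the uniform special case.

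The main obstacle is essentially a bookkeeping one: keeping the off-diagonal counting clean. The crucial structural input is the $\mathbb{F}_2$ linear-independence fact, which is what forces the off-diagonal kernel size to be exactly $2^{n-2}$; without it, the collapse into such clean closed forms would fail. Everything else is routine algebra, and the inequality in part~\ref{item:dklz.any} is an immediate invocation of the previous lemma. The tightness of the $\ln(2)$ factor remarked in the note is not something I would attempt to prove; I would leave it as the stated observation.
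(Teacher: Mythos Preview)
Your plan is correct and matches the paper's proof for part~\ref{item:dklz.any} exactly (expand the KL, identify $H(U)-H(Z)$, then invoke \cref{lm:dkl.vs.chi2} with $p=1$ and $C(-1)=1$). For parts~\ref{item:dklz.S} and~\ref{item:dklz.PB} you arrive at the same answer by a slightly different bookkeeping device: you expand $\sum_z P_Z(z)^2$ via indicators $\mathbf{1}[b\cdot z=0]$ and count kernel sizes $N(b,b')$, using that two distinct nonzero vectors in $\mathbb{F}_2^n$ are automatically linearly independent so $N(b,b')=2^{n-2}$ off-diagonal. The paper instead writes $P_Z(z)=2^{-n}(1+M_z)$ with $M_z=\sum_b P_B(b)(-1)^{b\cdot z}$ and invokes character orthogonality $\sum_z (-1)^{c\cdot z}=2^n\delta_{c,0}$ to collapse the cross terms. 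The two are equivalent through the identity $\mathbf{1}[b\cdot z=0]=\tfrac{1}{2}(1+(-1)^{b\cdot z})$; your route is marginally more combinatorial and self-contained, while the paper's Fourier substitution $M_z$ makes the algebra a touch shorter and treats part~\ref{item:dklz.S} as an immediate specialization of part~\ref{item:dklz.PB} rather than a separate computation.
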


\begin{proof}
  First, note that $D_{\text{KL}}(P_Z \parallel P_U) = H(U) - H(Z)$ holds for any distribution $P_Z$:
  by definition we have
  \bes
  \begin{align}
    &D_{\text{KL}}(P_Z \parallel P_U) = \sum_{z} P_Z(z) \log_2\left(\frac{P_Z(z)}{P_U(z)}\right) \\
    &\qquad = -H(P_Z) - \sum_{z} P_Z(z) \log_2(P_U(z)) \\
    &\qquad = H(P_U) - H(P_Z) .
  \end{align}
  \ees
  Further, the inequality
  \begin{equation}
    D_{\text{KL}}(P_Z \parallel P_U) \leq \chi^2(P_Z \parallel P_U) / \ln(2)
  \end{equation}
  holds for any two distributions $P_Z$ and $P_U$ and follows from the inequality $\ln(x) \leq x - 1$ for $x \geq 0$
  applied to $x = P_Z(z) / P_U(z)$. It also follows from \cref{lm:dkl.vs.chi2}, \cref{eq:dkl.chi2.ub} with $p = 1$.

  Note that part 2 of the lemma is a special case of part 3 with $P_B(b) = 1 / \abs{S}$.

  It remains to show that $\chi^2(P_Z \parallel P_U) = \sum_{b\in S} P_B(b)^2$.
  For $z \in \mathbb{F}_2^n$ let $M_z = \sum_{b \in S} P_B(b) (-1)^{b \cdot z}$.
  Then $P_Z(z) = 2^{-n} (1 + M_z)$ and
  \begin{multline}
    \chi^2(P_Z \parallel P_U) = \sum_{z} P_U(z) \left(\frac{P_Z(z)}{P_U(z)} - 1\right)^2 \\
    = 2^{-n} \sum_{z} \left(\frac{2^{-n} (1 + M_z)}{2^{-n}} - 1\right)^2
    = 2^{-n} \sum_z M_z^2 \\
    = 2^{-n} \sum_z \sum_{b,b'\in S} P_B(b) P_B(b') (-1)^{b\cdot z} (-1)^{b'\cdot z} \\
    = 2^{-n} \sum_{b,b' \in S} P_B(b) P_B(b') \sum_z (-1)^{(b+b')\cdot z} = \sum_{b\in S} P_B(b)^2.
  \end{multline}
  In the last step we observed that $\sum_z (-1)^{c \cdot z} = 2^n$ if $c = 0$ and $0$ otherwise,
  where $c = b + b'$.
\end{proof}

\begin{mylemma}
  \label{lm:dkl.zp}
  Let $n$, $S$, $P_B$ be as in \cref{lm:dklz}, and $p \in [0, 1]$.
  Let $K = \sum_{b\in S} P_B(b)^2$ as in \cref{eq:dklz.PB}.
  Let $B$ be a random variable distributed according to $P_B$
  on $S$ and $Z_p$ be a random variable defined as follows: given $B = b$, with probability $p$ we have $Z_p$ equal to
  an element of $\{b\}^\perp$ picked uniformly at random. Otherwise, $Z_p$ is an element of $\mathbb{F}_2^n$
  picked uniformly at random.
  Then
  \begin{equation}
    \label{eq:I.B.Zp}
    I(B; Z_p) = 1 - H((1+p)/2) - r ,
  \end{equation}
  where
  $H(p') = -p' \log_2(p') - (1-p') \log_2(1-p')$ is the Shannon entropy of a Bernoulli distribution and
  the remainder $r$ is equal to $D_{\text{KL}}(P_{Z_p} \parallel P_U)$ and satisfies
  \begin{equation}
    \label{eq:dkl.zp.r-bound}
    r \in \left[\frac{K p^2 C(p)}{\ln(2)}, \frac{K p^2 C(-p)}{\ln(2)}\right], \quad r = \frac{K p^2}{2\ln(2)} (1 + O(p)).
  \end{equation}
  The remainder $r$ may depend on $p, n, S, P_B$ and $C(x)$ is the same as in \cref{lm:dkl.vs.chi2}.
\end{mylemma}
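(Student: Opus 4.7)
The plan is to decompose $I(B;Z_p) = H(Z_p) - H(Z_p \mid B)$ and evaluate the two entropies separately. For the conditional entropy, I would first observe that conditional on $B=b$, the density $P_{Z_p \mid B=b}(z)$ equals $p\cdot 2^{-(n-1)} + (1-p)\cdot 2^{-n} = 2^{-n}(1+p)$ for the $2^{n-1}$ elements of $\{b\}^\perp$ and $(1-p)\cdot 2^{-n} = 2^{-n}(1-p)$ for the remaining $2^{n-1}$ elements. A direct calculation then gives
\begin{equation}
  H(Z_p\mid B=b) = n - \tfrac{1+p}{2}\log_2(1+p) - \tfrac{1-p}{2}\log_2(1-p),
\end{equation}
which, after pulling out the factor of $1/2$ inside the logs, simplifies to $n - 1 + H\bigl((1+p)/2\bigr)$. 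Since this value is independent of $b$, $H(Z_p\mid B) = n - 1 + H((1+p)/2)$.

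Next I would apply part~\ref{item:dklz.any} of \cref{lm:dklz} (which simply rewrites the KL divergence against the uniform distribution as an entropy gap) to get $H(Z_p) = n - D_{\text{KL}}(P_{Z_p}\parallel P_U) = n - r$. Combining with the previous step yields \cref{eq:I.B.Zp}. To obtain the bounds on $r$, I would compute $\chi^2(P_{Z_p}\parallel P_U)$ using the mixture structure $P_{Z_p} = (1-p)P_U + pP_Z$, where $P_Z$ is the distribution from part~\ref{item:dklz.PB} of \cref{lm:dklz}. Linearity in the density gives $P_{Z_p}(z) - P_U(z) = p(P_Z(z) - P_U(z))$, so the $\chi^2$ divergence is quadratic in $p$:
\begin{equation}
  \chi^2(P_{Z_p}\parallel P_U) = p^2\,\chi^2(P_Z\parallel P_U) = p^2 K,
\end{equation}
where the last equality uses part~\ref{item:dklz.PB} of \cref{lm:dklz}.

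Finally, to invoke \cref{lm:dkl.vs.chi2} I need control of the Radon-Nikodym derivative. Writing $P_Z(z) = 2^{-n}(1 + M_z)$ with $M_z = \sum_{b\in S} P_B(b)(-1)^{b\cdot z}$, and noting $\abs{M_z}\le \sum_{b\in S} P_B(b) = 1$, the mixture formula gives $dP_{Z_p}/dP_U = 1 + p M_z \in [1-p,\,1+p]$. Applying part~1 of \cref{lm:dkl.vs.chi2} with this lower bound yields the upper bound $r \le K p^2 C(-p)/\ln 2$, and applying part~2 with the upper bound yields $r \ge K p^2 C(p)/\ln 2$. The asymptotic expansion in \cref{eq:dkl.zp.r-bound} then follows immediately from the series $C(\pm p) = \tfrac{1}{2} \mp \tfrac{p}{6} + O(p^2)$ in \cref{eq:dkl.chi2.cp-series}. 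The calculation is largely bookkeeping; the one step that deserves care is matching the conditional-entropy expression to $H((1+p)/2)$ via the identity $H(q) = 1 - q\log_2 q' - (1-q)\log_2 q''$ where $q = (1+p)/2$, so that the constant $1$ appearing in \cref{eq:I.B.Zp} arises cleanly.
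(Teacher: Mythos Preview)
Your proposal is correct and follows essentially the same route as the paper: decompose $I(B;Z_p)=H(Z_p)-H(Z_p\mid B)$, compute the conditional entropy directly as $n-1+H((1+p)/2)$, write $H(Z_p)=n-r$ via the KL--entropy identity, and then bound $r$ by showing $\chi^2(P_{Z_p}\parallel P_U)=p^2K$ and invoking \cref{lm:dkl.vs.chi2}. In fact you supply a detail the paper glosses over, namely the explicit verification that $dP_{Z_p}/dP_U = 1+pM_z \in [1-p,1+p]$, which is precisely what is needed to apply both parts of \cref{lm:dkl.vs.chi2}.
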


We are also interested in the limit where $p \to 0$ and $K \in (0, 1)$ might have arbitrary
behavior (i.e., it might or might not go to zero). In this limit \cref{eq:I.B.Zp} is
\begin{equation}
  \label{eq:I.B.Zp.asymptotic}
  I(B; Z_p) = \frac{(1 - K) p^2}{2 \ln(2)} + O(p^4 + K p^3).
\end{equation}

\begin{proof}
  We let $r = D_{\text{KL}}(P_{Z_p} \parallel P_U)$.
  From the definition of $Z_p$, we compute
  \begin{equation}
    \label{eq:HZqB}
    H(Z_p | B) = n - 1 + H(Z_p \in \{b\}^\perp| B) = n - 1 + H((1+p)/2).
  \end{equation}
  Also, we know that $H(U) = n$ and $r = D_{\text{KL}}(P_{Z_p} \parallel P_U) = H(U) - H(Z_p)$, so $H(Z_p) = n - r$.
  Using these we compute
  \bes
  \begin{align}
    I(B; Z_p) &= H(Z_p) - H(Z_p | B) =\\
    &= n - r - (n - 1 + H((1+p)/2)) \\
    &= 1 - H((1+p)/2) - r.
  \end{align}
  \ees
  To estimate the remainder $r$ we
  let $Z$ be the same as in \cref{lm:dklz} and
  note that from definition of $Z_p$ and from \cref{lm:dklz} we have
  \begin{equation}
    \chi^2(P_{Z_p} \parallel P_U) = p^2 \chi^2(P_{Z} \parallel P_U) = K p^2.
  \end{equation}
  Thus, from the first statement of \cref{lm:dkl.vs.chi2} [i.e., from \cref{eq:dkl.chi2.ub}] we have
  \begin{equation}
    r \leq \frac{C(-p)}{\ln(2)} \chi^2(P_{Z_p} \parallel P_U) = \frac{K p^2 C(-p)}{\ln(2)}.
  \end{equation}
  The lower bound on $r$ similarly follows from the second statement in \cref{lm:dkl.vs.chi2}, i.e., \cref{eq:dkl.chi2.lb}. The second part of \cref{eq:dkl.zp.r-bound} follows from the first and the series expansion \cref{eq:dkl.chi2.cp-series} of $C(p)$.
\end{proof}

We now consider the whole game.
\begin{mylemma}
  \label{lm:Qmax}
  Suppose an agent is trying to solve Simon's problem with hidden bitstring $b$ having a prior distribution $P_B$
  on $S$ with entropy $H(P_B)$. Let $p \in (0, 1]$.
  Assume the agent has access to a stream of bitstrings $z$ generated as in \cref{lm:dkl.zp}
  until a certain stopping criterion is met. Further, assume that the agent always stops when
  $\sum_{b \in S} \Pr(B = b | Z_1 = z_1, \dots, Z_m = z_m)^2 \geq K_1$ for some $K_1 \in (0, 1)$.
  Then the expected number of queries used by the agent satisfies
  \begin{equation}
    \label{eq:Qmax}
    \mathbb{E}(Q) \leq \frac{H(P_B)}{1 - H[(1 + p)/2] - K_1 p^2 C(-p) / \ln(2)}.
  \end{equation}

\end{mylemma}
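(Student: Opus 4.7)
The plan is an information-theoretic potential argument. Each query reveals at most a bounded amount of information about $B$, the total information an agent can extract is capped by $H(P_B)$, and the stopping rule forces a uniform positive floor $\Delta I$ on the per-query information gain while the agent is still running. Combining these three facts gives $\mathbb{E}(Q) \leq H(P_B)/\Delta I$, which is exactly \cref{eq:Qmax}.

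To formalize this, let $\mathcal{F}_m = \sigma(Z_1,\dots,Z_m)$ and introduce the random posterior entropy $\hat{h}_m = -\sum_{b\in S}\Pr(B=b\mid\mathcal{F}_m)\log_2\Pr(B=b\mid\mathcal{F}_m)$, so that $\hat{h}_0 = H(P_B)$ and $\hat{h}_m \geq 0$. Because the stopping decision is $\mathcal{F}_m$-measurable, $Q$ is a stopping time and $\{Q\geq i\}\in\mathcal{F}_{i-1}$. Telescoping along a trajectory, then taking expectations and discarding the nonnegative term $\hat{h}_Q$ (first for the bounded stopping time $Q\wedge N$ and sending $N\to\infty$ by monotone convergence, if needed), one obtains
\begin{equation*}
  H(P_B) \;\geq\; \mathbb{E}\!\left[\sum_{i=1}^{Q}(\hat{h}_{i-1}-\hat{h}_i)\right] \;=\; \sum_{i=1}^{\infty}\mathbb{E}\!\left[\mathbf{1}_{\{Q\geq i\}}\,I(B;Z_i\mid\mathcal{F}_{i-1})\right],
\end{equation*}
where the $\mathcal{F}_{i-1}$-measurable indicator is pulled outside the inner expectation and we have used $\mathbb{E}[\hat{h}_{i-1}-\hat{h}_i\mid\mathcal{F}_{i-1}] = I(B;Z_i\mid\mathcal{F}_{i-1})$.

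It remains to lower-bound $I(B;Z_i\mid\mathcal{F}_{i-1})$ on $\{Q\geq i\}$. On this event the stopping criterion was not triggered at time $i-1$, so the posterior mass-squared $K=\sum_{b\in S}\Pr(B=b\mid\mathcal{F}_{i-1})^2 < K_1$. Applying \cref{lm:dkl.zp} conditionally on $\mathcal{F}_{i-1}$, with the current posterior playing the role of $P_B$, together with the upper bound on the remainder in \cref{eq:dkl.zp.r-bound}, gives
\begin{equation*}
  I(B;Z_i\mid\mathcal{F}_{i-1}) \;\geq\; 1 - H\!\left(\tfrac{1+p}{2}\right) - \frac{K\,p^2\,C(-p)}{\ln(2)} \;\geq\; \Delta I,
\end{equation*}
with $\Delta I = 1 - H((1+p)/2) - K_1 p^2 C(-p)/\ln(2)$ matching the denominator of \cref{eq:Qmax}. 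Summing over $i$ via $\mathbb{E}(Q)=\sum_i\Pr(Q\geq i)$ yields $H(P_B)\geq \Delta I\cdot\mathbb{E}(Q)$, which rearranges to \cref{eq:Qmax}. The delicate point, and the main obstacle, is the coupling between the stopping rule and the per-step bound: the criterion is phrased in terms of the posterior $K$ precisely so that on $\{Q\geq i\}$ the $K$-dependent remainder $r$ from \cref{lm:dkl.zp} is uniformly dominated by $K_1 p^2 C(-p)/\ln(2)$, turning $\Delta I$ into a valid uniform floor on information gain across all still-running trajectories; if the agent is allowed to stop earlier than the first hitting time of $\{K\geq K_1\}$, the same bound follows by monotonicity in the stopping time.
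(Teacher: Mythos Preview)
Your proof is correct and follows essentially the same approach as the paper: both use an information-theoretic potential argument in which the posterior entropy plus $I_{\min}$ times the number of queries is a supermartingale, with the per-step lower bound on information gain coming from \cref{lm:dkl.zp} together with the stopping rule guaranteeing $K<K_1$. The paper packages this as showing $A_m=\mathbb{E}(Q_m I_{\min}+H_m)$ is non-increasing, while you write the equivalent telescoping sum and apply optional stopping; these are the same argument in slightly different notation.
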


Note that asymptotically [in the same limit as in \cref{eq:I.B.Zp.asymptotic}] \cref{eq:Qmax} becomes
\begin{equation}
  \label{eq:Qmax.asymptotic}
  \mathbb{E}(Q) \leq 2 \ln(2) p^{-2} H(P_B) \frac{1}{1 - K_1 + O(K_1 p)}.
\end{equation}
Here, we replaced $O(p^2 + K_1 p)$ with $O(K_1 p)$ in the denominator because the $O(p^2)$ term is always non-negative and can thus be safely discarded while preserving the inequality.

We also note that an example of a stopping criterion is to stop whenever $\max_b \Pr(B = b | Z_1 = z_1, \dots, Z_m = z_m) \geq 1/2$.
This criterion has expected probability of guessing $b$ correctly at least $1/2$ and satisfies the condition of the lemma
with $K_1 = 1/2$.

\begin{proof}
  Let $I_{{\min}}$ be the minimal amount of information the agent learns about $b$
  from a single query assuming the agent knows some prior distribution $P_{B,m-1}$ of $B$
  and still decides to execute another query. From the condition on the stopping criterion
  assumed by this lemma and from \cref{lm:dkl.zp} we have
  \begin{equation}
    I_{{\min}} \geq 1 - H[(1 + p)/2] - K_1 p^2 C(-p) / \ln(2),
  \end{equation}
  and note that the RHS is the denominator of \cref{eq:Qmax}.
  Denote the number of queries used by the agent by $Q$, and let $Q_m = \min(Q, m)$.
  Denote by $H_m$ the random variable equal to the entropy
  of $P_{B, m}(b) = \Pr(B = b | Z_1 = z_1, \dots, Z_{Q_m} = z_{Q_m})$. By definition $Q_0 = 0$ and $H_0 = H(P_B)$.
  It is sufficient to show that $A_m = \mathbb{E}(Q_m I_{{\min}} + H_m)$ is a non-increasing function of $m$,
  i.e., that for every $m \geq 1$ we have $A_m \leq A_{m-1}$.
  This is trivially true if the expectation is taken only over events with $Q_m < m$,
  since in this case $Q_m = Q_{m-1}$ and $H_m = H_{m-1}$.
  Consider a case where $Q_m = m$ and fix $P_{B, m-1}$.
  Then
  \begin{multline}
    \mathbb{E}[(Q_m - Q_{m-1}) I_{{\min}} + (H_m - H_{m-1}) | Q_m = m, P_{B, m-1}] \\
    = I_{{\min}} - I(B; Z_p | P_{B, m-1}) \leq 0.
  \end{multline}
  Taking the expectation over $P_{B, m-1}$ we obtain $A_m \leq A_{m-1}$.
\end{proof}

\section{Device specifications}
\label{app:device}

We performed our experiments using the $127$-qubit devices Sherbrooke (ibm\_sherbrooke) and Brisbane (ibm\_brisbane), as well as the $27$-qubit devices Cairo (ibm\_cairo) and Kolkata (ibmq\_kolkata),
\cref{tab:127devices-limit15,tab:127devices,tab:27devices} give the device specifications on the day of the experiments.

We used $100,000$ shots per experiment parameterized by the Hamming weight of the oracle $b$ for each device.

\begin{table}[h!]
\centering
\begin{tabular}{|l|ccc|ccc|}
\hline
\multicolumn{1}{|c|}{} & \multicolumn{3}{c|}{Brisbane} & \multicolumn{3}{c|}{Sherbrooke} \\ \hline
\multicolumn{1}{|c|}{} & \multicolumn{1}{c|}{Min} & \multicolumn{1}{c|}{Mean} & Max & \multicolumn{1}{c|}{Min} & \multicolumn{1}{c|}{Mean} & Max \\ \hline
$T_1 (\mu s)$ & \multicolumn{1}{c|}{27.59} & \multicolumn{1}{c|}{290.03} & 519.75 & \multicolumn{1}{c|}{45.60} & \multicolumn{1}{c|}{233.38} & 404.29 \\ \hline
$T_2 (\mu s)$ & \multicolumn{1}{c|}{11.21} & \multicolumn{1}{c|}{173.18} & 439.99 & \multicolumn{1}{c|}{9.45} & \multicolumn{1}{c|}{144.76} & 366.50 \\ \hline
1QG Error (\%) & \multicolumn{1}{c|}{0.026} & \multicolumn{1}{c|}{0.080} & 0.424 & \multicolumn{1}{c|}{0.028} & \multicolumn{1}{c|}{0.118} & 3.324 \\ \hline
2QG Error (\%) & \multicolumn{1}{c|}{0.278} & \multicolumn{1}{c|}{0.376} & 100 & \multicolumn{1}{c|}{0.373} & \multicolumn{1}{c|}{2.437} & 100 \\ \hline
1QG Duration $(\mu s)$ & \multicolumn{1}{c|}{0.171} & \multicolumn{1}{c|}{0.171} & 0.171 & \multicolumn{1}{c|}{0.180} & \multicolumn{1}{c|}{0.180} & 0.180 \\ \hline
2QG Duration $(\mu s)$ & \multicolumn{1}{c|}{0.448} & \multicolumn{1}{c|}{0.539} & 0.882 & \multicolumn{1}{c|}{0.660} & \multicolumn{1}{c|}{0.665} & 0.860 \\ \hline
RO Error (\%) & \multicolumn{1}{c|}{0.180} & \multicolumn{1}{c|}{1.813} & 27.670 & \multicolumn{1}{c|}{0.430} & \multicolumn{1}{c|}{2.063} & 11.90 \\ \hline
RO Duration $(\mu s)$ & \multicolumn{1}{c|}{1.244} & \multicolumn{1}{c|}{1.244} & 1.244 & \multicolumn{1}{c|}{4.000} & \multicolumn{1}{c|}{4.000} & 4.000 \\ \hline
\end{tabular}
\caption{Device specifications for the 127-qubit devices Brisbane and Sherbrooke on May 5, 2023 (Experiment \#1). 1QG and 2QG denote 1-qubit gate and 2-qubit gate, respectively. RO denotes readout. The numbers shown are averages over the qubits used in our experiments.}
\label{tab:127devices-limit15}
\end{table}

\begin{table}[h!]
\centering
\begin{tabular}{|l|ccc|ccc|}
\hline
\multicolumn{1}{|c|}{} & \multicolumn{3}{c|}{Brisbane} & \multicolumn{3}{c|}{Sherbrooke} \\ \hline
\multicolumn{1}{|c|}{} & \multicolumn{1}{c|}{Min} & \multicolumn{1}{c|}{Mean} & Max & \multicolumn{1}{c|}{Min} & \multicolumn{1}{c|}{Mean} & Max \\ \hline
$T_1 (\mu s)$ & \multicolumn{1}{c|}{47.36} & \multicolumn{1}{c|}{238.44} & 386.83 & \multicolumn{1}{c|}{17.08} & \multicolumn{1}{c|}{257.73} & 466.35 \\ \hline
$T_2 (\mu s)$ & \multicolumn{1}{c|}{16.14} & \multicolumn{1}{c|}{156.72} & 446.73 & \multicolumn{1}{c|}{14.55} & \multicolumn{1}{c|}{188.32} & 571.77 \\ \hline
1QG Error (\%) & \multicolumn{1}{c|}{0.021} & \multicolumn{1}{c|}{0.098} & 0.974 & \multicolumn{1}{c|}{0.025} & \multicolumn{1}{c|}{0.116} & 1.856 \\ \hline
2QG Error (\%) & \multicolumn{1}{c|}{0.355} & \multicolumn{1}{c|}{0.901} & 2.945 & \multicolumn{1}{c|}{0.283} & \multicolumn{1}{c|}{5.122} & 100 \\ \hline
1QG Duration $(\mu s)$ & \multicolumn{1}{c|}{0.180} & \multicolumn{1}{c|}{0.180} & 0.180 & \multicolumn{1}{c|}{0.171} & \multicolumn{1}{c|}{0.171} & 0.171 \\ \hline
2QG Duration $(\mu s)$ & \multicolumn{1}{c|}{0.660} & \multicolumn{1}{c|}{0.666} & 1.100 & \multicolumn{1}{c|}{0.341} & \multicolumn{1}{c|}{0.541} & 0.882 \\ \hline
RO Error (\%) & \multicolumn{1}{c|}{0.470} & \multicolumn{1}{c|}{2.350} & 27.25 & \multicolumn{1}{c|}{0.300} & \multicolumn{1}{c|}{2.784} & 32.83 \\ \hline
RO Duration $(\mu s)$ & \multicolumn{1}{c|}{4.000} & \multicolumn{1}{c|}{4.000} & 4.000 & \multicolumn{1}{c|}{1.244} & \multicolumn{1}{c|}{1.244} & 1.244 \\ \hline
\end{tabular}
\caption{Device specifications for the 127-qubit devices Brisbane and Sherbrooke on May 17, 2024 (Experiment \#2; the main experiment reported in main text). Otherwise as in \cref{tab:127devices-limit15}.}
\label{tab:127devices}
\end{table}

\begin{table}[h!]
\centering
\begin{tabular}{|l|ccc|ccc|}
\hline
\multicolumn{1}{|c|}{} & \multicolumn{3}{c|}{Cairo} & \multicolumn{3}{c|}{Kolkata} \\ \hline
\multicolumn{1}{|c|}{} & \multicolumn{1}{c|}{Min} & \multicolumn{1}{c|}{Mean} & Max & \multicolumn{1}{c|}{Min} & \multicolumn{1}{c|}{Mean} & Max \\ \hline
$T_1 (\mu s)$ & \multicolumn{1}{c|}{63.79} & \multicolumn{1}{c|}{116.39} & 195.17 & \multicolumn{1}{c|}{37.13} & \multicolumn{1}{c|}{121.03} & 235.27 \\ \hline
$T_2 (\mu s)$ & \multicolumn{1}{c|}{15.59} & \multicolumn{1}{c|}{97.63} & 212.26 & \multicolumn{1}{c|}{16.35} & \multicolumn{1}{c|}{74.51} & 192.35 \\ \hline
1QG Error (\%) & \multicolumn{1}{c|}{0.043} & \multicolumn{1}{c|}{0.093} & 0.356 & \multicolumn{1}{c|}{0.047} & \multicolumn{1}{c|}{0.090} & 0.204 \\ \hline
2QG Error (\%) & \multicolumn{1}{c|}{0.491} & \multicolumn{1}{c|}{4.680} & 100 & \multicolumn{1}{c|}{0.591} & \multicolumn{1}{c|}{11.98} & 100 \\ \hline
1QG Duration $(\mu s)$ & \multicolumn{1}{c|}{0.075} & \multicolumn{1}{c|}{0.075} & 0.075 & \multicolumn{1}{c|}{0.107} & \multicolumn{1}{c|}{0.107} & 0.107 \\ \hline
2QG Duration $(\mu s)$ & \multicolumn{1}{c|}{0.160} & \multicolumn{1}{c|}{0.351} & 0.946 & \multicolumn{1}{c|}{0.341} & \multicolumn{1}{c|}{0.458} & 0.626 \\ \hline
RO Error (\%) & \multicolumn{1}{c|}{0.750} & \multicolumn{1}{c|}{1.679} & 6.120 & \multicolumn{1}{c|}{0.340} & \multicolumn{1}{c|}{2.607} & 11.76 \\ \hline
RO Duration $(\mu s)$ & \multicolumn{1}{c|}{0.732} & \multicolumn{1}{c|}{0.732} & 0.732 & \multicolumn{1}{c|}{0.640} & \multicolumn{1}{c|}{0.640} & 0.640 \\ \hline
\end{tabular}
\caption{Device specifications for the 27-qubit devices Cairo and Kolkata on April 20 and 22, 2023, respectively. Otherwise as in \cref{tab:127devices-limit15}.}
\label{tab:27devices}
\end{table}

\section{DD performance}
\label{app:dd-rank}

This section presents the performance evaluation of all DD sequences investigated in the experiments on Sherbrooke and Brisbane, specifically the nine sequences from the set $\mathcal{D}$ as introduced in \cref{sec:expt}. The performance assessed by the NTS metric is illustrated in \cref{fig:dd-perf}. The NTS values are presented for each DD sequence across the problem size $n$, in the range $[2,29]$. Sequences colored in blue ($\times$) indicate optimal performers identified on their corresponding machines at each problem size. The analysis reveals that XY4 exhibits the best performance among sequences on both Sherbrooke and Brisbane, consistently achieving the lowest NTS across a majority of $n$ values ranging from $5$ to $29$.

\begin{figure}
    \centering
    \includegraphics[width=0.49\textwidth]{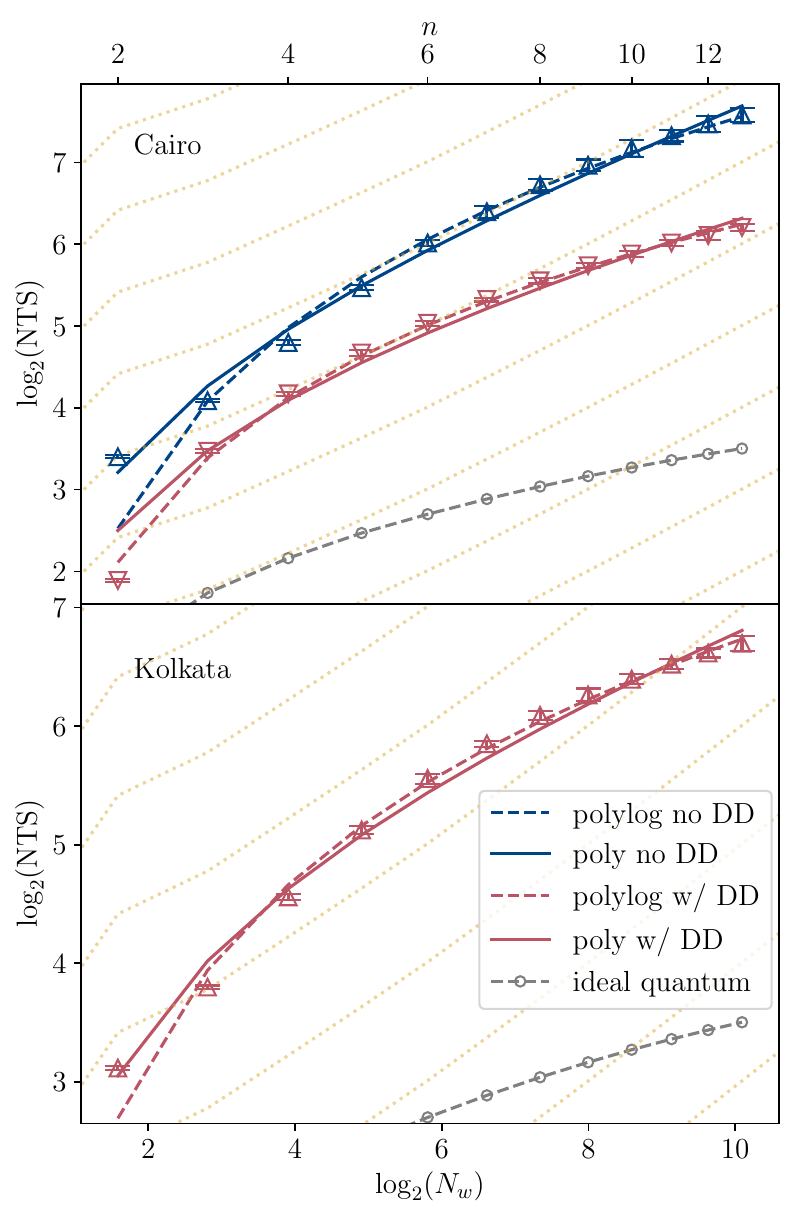}
    \caption{As in \cref{fig:main-plot} but for \wSimon{4}{13} run on Cairo (top) and Kolkata (bottom). The no-DD result is missing from the Kolkata plot since performance in this case was as poor as making a random guess.}
    \label{fig:27q-devices-speedup}
\end{figure}

\begin{figure}[t]
    \centering
    \includegraphics[width=0.49\textwidth]{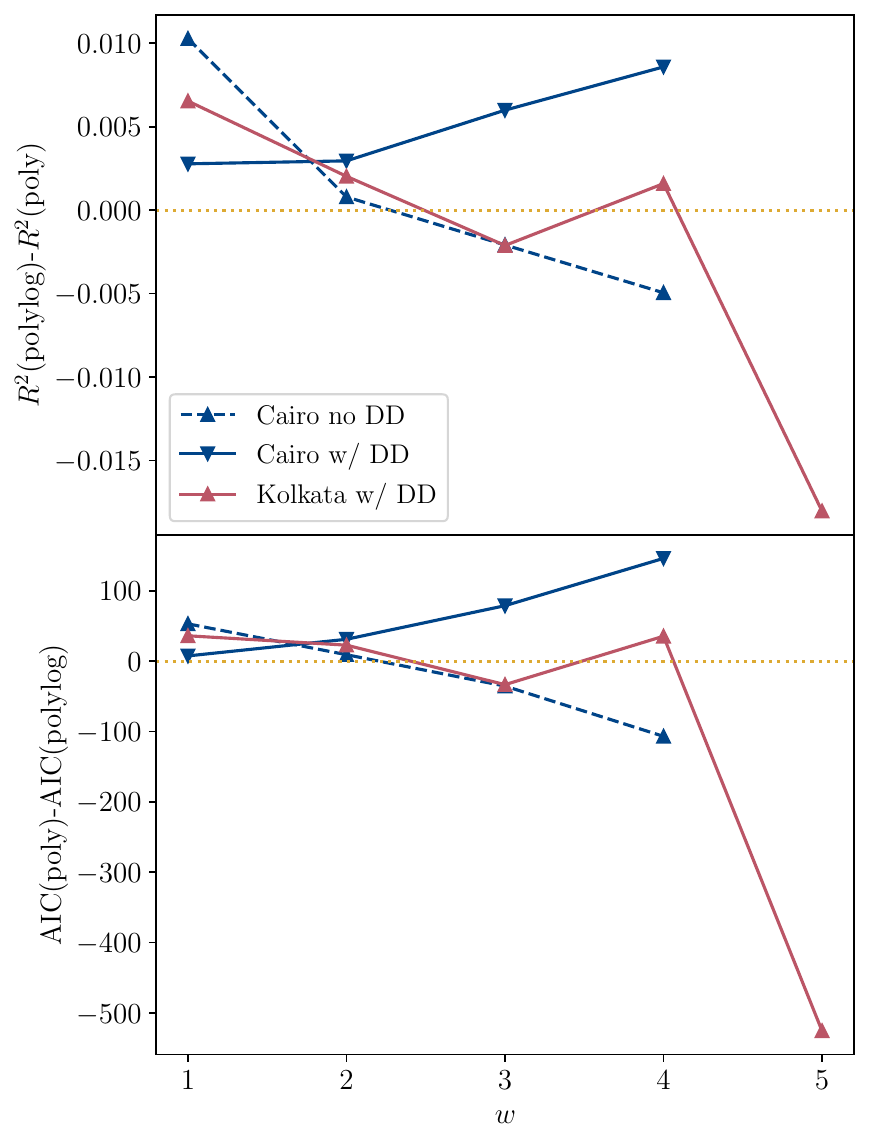}
    \caption{As in \cref{fig:mixed_R^2+AIC} but for \wSimon{w}{13} run on Cairo and Kolkata.}
    \label{fig:AIC-R2-27}
\end{figure}

\begin{figure}
    \centering
    \includegraphics[width=0.49\textwidth]{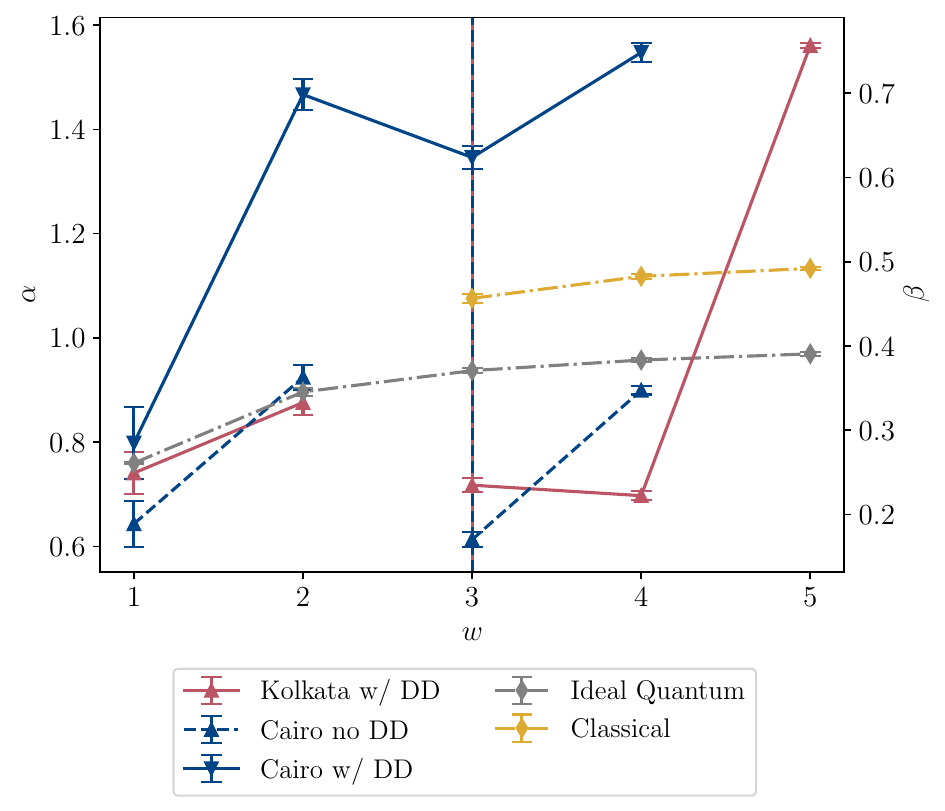}
    \caption{As in \cref{fig:allslopes} but for \wSimon{w}{13} run on Cairo and Kolkata.}
    \label{fig:27q-devices-scaling}
\end{figure}

\section{Results for 27-qubit devices: algorithmic quantum speedup at low Hamming weight}
\label{app:extra}

We performed the same set of experiments on two 27-qubit IBM Quantum devices, Cairo (ibm\_cairo) and Kolkata (ibmq\_kolkata), each with 100,000 shots. Both devices feature the heavy-hex lattice 
and the Falcon r5.11 processor. Their specifications at the time of our experiments are given in \cref{tab:27devices}. 

For these experiments, we set $n_{\max}=13$, the maximum possible value compatible with $27$ qubits. All the $27$-qubit experiments we report are without MEM: we tried MEM offline but the result was worse than without MEM. 

Since $27$-qubit devices are noisier than the $127$-qubit ones, we are restricted to the lower HW regime. For Cairo, we consider both DD and no-DD experiments, since the latter still perform better than random guessing. In the Kolkata case, the no-DD experiments underperform random guessing. We use a similar procedure for determining quantum speedup as described in \cref{sec:results} of the main text, except that we only consider the polylog and poly models in this section. A representative example is shown in \cref{fig:27q-devices-speedup}, illustrating the \wSimon{4}{13} case. The polylog model is the best fit in this case for both Cairo (with and without DD) and Kolkata (with DD).

As shown in \cref{fig:AIC-R2-27}, no model transition occurs for the with-DD  experiments on Cairo, and polylog consistently provides the best fit. Consequently, we declare an exponential speedup for $w\le 4$. Note that we exclude $w\ge 5$ from consideration in this particular case because there are only three data points when $w\ge 5$ [see \cref{app:all_results}], hence, any speedup does not extend to \wSimon{w}{13}.

In the no-DD experiments on Cairo, a model transition is observed at $w_t=3$, as depicted in \cref{fig:27q-devices-scaling} (dashed blue). At $w=3,4$, we observe that $\beta_Q < \beta_C$, indicating a polynomial speedup. Hence, we declare an exponential quantum speedup for $w\in[1,2]$ and a polynomial speedup for $w\in[3,4]$.

A similar pattern is observed for the with-DD experiments on Kolkata, with the same model transition point at $w_t=3$ in \cref{fig:27q-devices-scaling} (dashed blue). Likewise, we declare an exponential speedup for $w\in[1,2]$ and a polynomial speedup for $w\in[3,4]$. 

Our findings indicate that an exponential quantum speedup is observed at $w\le 4$ on the 27-qubit devices. This speedup is evident in experiments both with DD and without DD on Cairo and only in experiments with DD on Kolkata. The no-DD experiments on Kolkata are excluded because their performance is as poor as performing a random guess.

\section{Bootstrapping}
\label{app:bootstrap}

We conducted a robust statistical analysis employing a $100,000$-sample bootstrapping technique on the NTS values subjected to a $10$-fold cross-validation. Each sample, consisting of $9$ uniformly selected data points from the cross-validation set, underwent resampling to derive the mean and standard deviation. The resulting bootstrapped means and standard deviations are presented as individual data points in \cref{fig:main-plot,fig:27q-devices-speedup}, 
where the mean values are plotted as data points, with error bars extending $1\sigma$ in both directions. These statistics were then used to fit for the parameters $\alpha, \beta$, and $\gamma$ using Mathematica, taking into account the bootstrapped mean as the $\NTS$ value and one standard deviation as its error. In \cref{fig:allslopes,fig:27q-devices-scaling}, the parameters $\alpha$ and $\beta$ were fitted using the bootstrapped data in the same way as in \cref{fig:main-plot} but with different $w$ values. The standard deviation from the fitting is again reported as $1\sigma$ in each direction from each data point.

\begin{figure}
    \centering
    \includegraphics[width=0.47\textwidth]{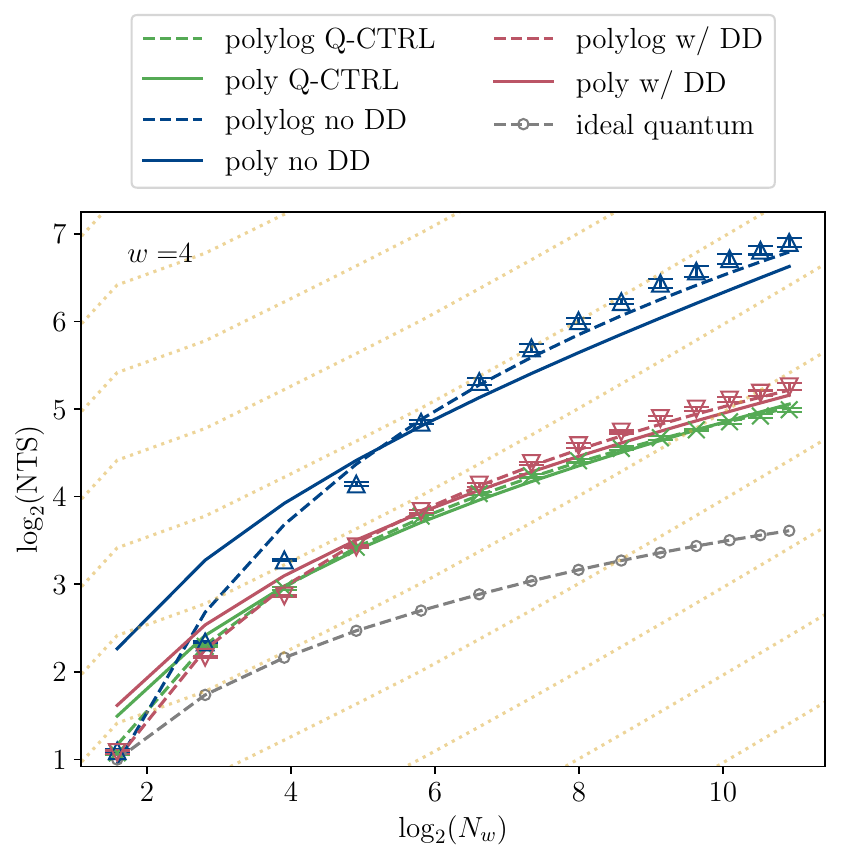}
    \caption{Brisbane results for \wSimon{4}{15}, exactly as in \cref{fig:main-plot} but with Q-CTRL data added (green). The latter exhibits somewhat better scaling relative to the polylog model than our manually optimized results with DD.}
    \label{fig:qctrl-main}
\end{figure}

\begin{figure}
    \centering
    \includegraphics[width=.47\textwidth]{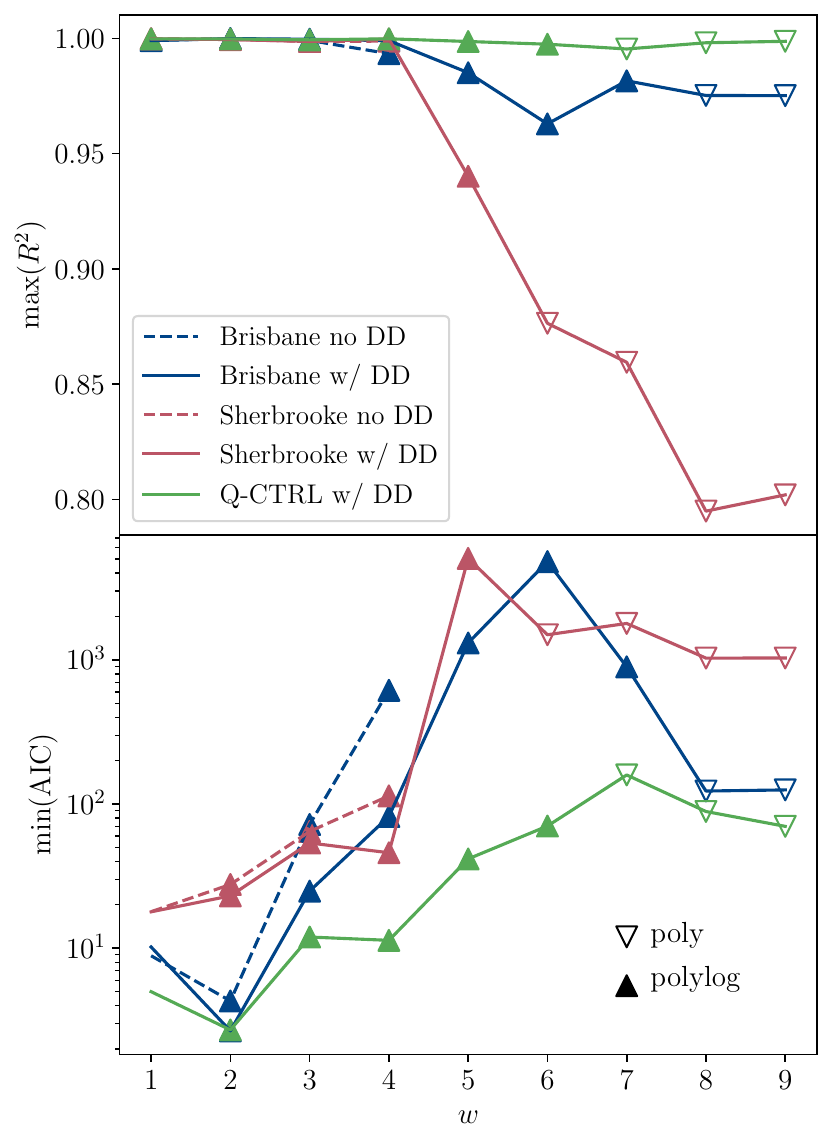}
    \caption{As in \cref{fig:mixed_R^2+AIC} but with results from Q-CTRL on Brisbane added.}
    \label{fig:mixed_R^2+AIC_qctrl}
\end{figure}

\begin{figure}
    \centering
    \includegraphics[width=0.49\textwidth]{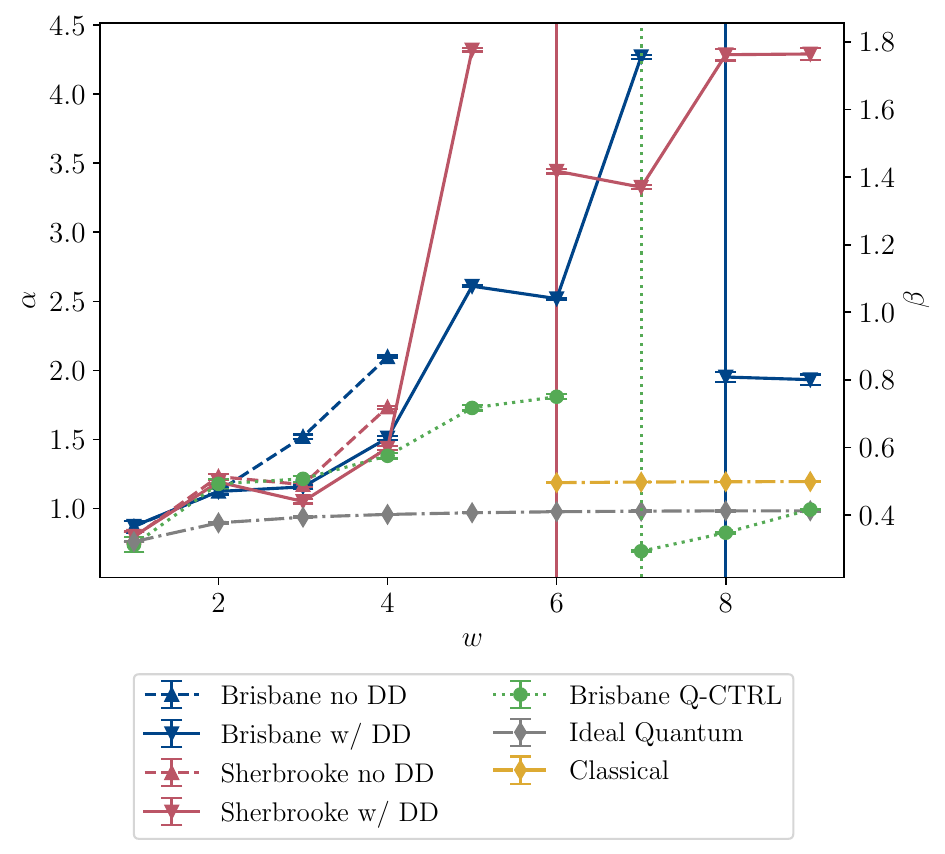}
    \caption{As in \cref{fig:allslopes}, but including the Brisbane results for Q-CTRL, for which the polylog model transitions to the poly model at $w_t=7$ (dotted green vertical line). The scaling exponent $\alpha$ in the exponential speedup regime of $w\in [1,4]$ is similar for Sherbrooke with DD, Brisbane with DD, and Q-CTRL. The latter's scaling exponent ($\alpha$) is significantly smaller for $w\in[5,6]$, but transitions to a polynomial speedup for $w\in[7,9]$, whereas Brisbane with DD retains an exponential speedup up to $w=7$.}
    \label{fig:qctrl-params}
\end{figure}

\begin{figure}
    \centering
    \includegraphics[width=0.49\textwidth]{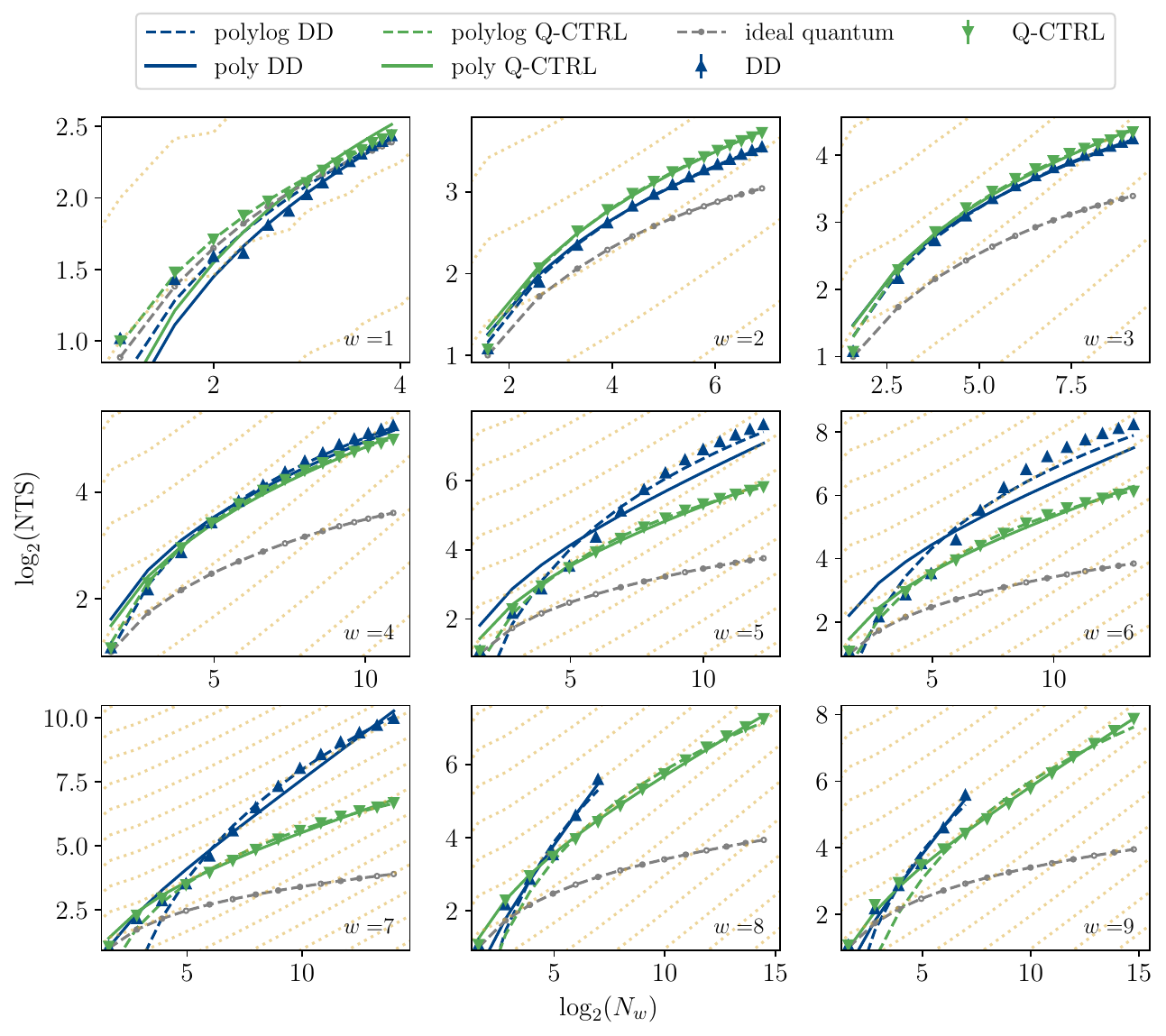}
    \caption{NTS scaling plots of Q-CTRL \textit{vs} manual DD on Brisbane from HW $w=1$ to $w=9$.}
    \label{fig:grid-qctrl}
\end{figure}

\section{Statistical comparison of the poly and polylog models}
\label{app:stat-analysis}

In the main text, we focused on the AIC and $R^2$ as statistical measures for deciding between the polylog and poly models. This section presents a more comprehensive statistical analysis comparing the performance of the poly and polylog models, across the four different cases and HW values. 
For each with-DD case, we use the Hamming weight $w\in\{1,\dots,9\}$ as a parameter, evaluating which model provides a better fit based on various statistical measures. We limit $w\in\{1,2,3,4\}$ for the without-DD cases because there are only three data points for $w>4$. 

The statistical measures include the $p$-value, $t$-statistic, Akaike Information Criterion (AIC), corrected AIC (AICc), Bayesian Information Criterion (BIC), R-squared ($R^2$), and adjusted R-squared ($\text{Adj }R^2$), which are reported without limiting the parameter range, i.e., for $w\in\{1,\dots,9\}$. These seven statistical measures are all computed using Mathematica's NonlinearModelFit while accounting for the confidence intervals we computed using bootstrapping, as documented in \cref{app:bootstrap}. By combining all seven statistical measures we obtain a comprehensive statistical verdict. 

We also compute the probability that each model is better based on Akaike weights.

\subsection{Tables of statistical measures}
In the tables below we abbreviate the four cases as:
\begin{itemize}
    \item Br w/ : Brisbane with DD
    \item Br no : Brisbane without DD
    \item Sh w/ : Sherbrooke with DD
    \item Sh no : Sherbrooke without DD
\end{itemize}
We present results for both the polylog and poly models, which are explicitly labeled in the table for $p$-values, \cref{tab:p-vals}. The remaining tables display the results for the other statistical measures and follow the same format of polylog and poly having white and gray backgrounds, respectively. 

The tables are: \cref{tab:t-stat} ($t$-statistic), \cref{tab:AIC} (Akaike Information Criterion), \cref{tab:AICc} (corrected Akaike Information Criterion), \cref{tab:BIC} (Akaike Information Criterion), \cref{tab:R^2} ($R^2$), and \cref{tab:aR^2} (adjusted $R^2$).

\begin{table*}[t]
\centering
\caption{$p$-values (lower values in boldface)}
\begin{tabular}{|c|c|c|c|c|c|c|c|c|c|c|}
\hline
 $w$ &  & 1 & 2 & 3 & 4 & 5 & 6 & 7 & 8 & 9 \\
\hline
\multirow{2}{*}{Br no} & polylog & $4.5 {e}{-18}$ & $\mathbf{9.2 {e}{-27}}$ & $\mathbf{9.7 {e}{-35}}$ & $\mathbf{3.6 {e}{-43}}$ & &&&& \\
\cline{2-11}
 & poly & $\mathbf{9.6 {e}{-36}}$ & $2.2 {e}{-05}$ & $2.6 {e}{-24}$ & $2.6 {e}{-39}$ &&&&& \\
\hline
\multirow{2}{*}{Sh no} & polylog & $4.7 {e}{-18}$ & $\mathbf{1.3 {e}{-28}}$ & $\mathbf{4.6 {e}{-33}}$ & $\mathbf{1.0 {e}{-39}}$ & &&&& \\
\cline{2-11}
 & poly & $\mathbf{2.6 {e}{-36}}$ & $5.3 {e}{-11}$ & $7.2 {e}{-12}$ & $1.9 {e}{-32}$ & &&&& \\
\hline
\multirow{2}{*}{Br w/} & polylog & $1.1 {e}{-18}$ & $\mathbf{4.0 {e}{-27}}$ & $\mathbf{1.6 {e}{-31}}$ & $\mathbf{2.6 {e}{-37}}$ & $\mathbf{6.0 {e}{-47}}$ & $\mathbf{8.8 {e}{-50}}$ & $2.0 {e}{-24}$ & $\mathbf{6.8 {e}{-07}}$ & $\mathbf{6.3 {e}{-07}}$ \\
\cline{2-11}
 & poly & $\mathbf{2.5 {e}{-36}}$ & $1.7 {e}{-05}$ & $6.3 {e}{-10}$ & $1.0 {e}{-26}$ & $2.4 {e}{-45}$ & $8.3 {e}{-48}$ & $\mathbf{4.1 {e}{-25}}$ & $8.2 {e}{-07}$ & $7.8 {e}{-07}$ \\
\hline
\multirow{2}{*}{Sh w/} & polylog & $9.5 {e}{-18}$ & $\mathbf{4.6 {e}{-29}}$ & $\mathbf{3.4 {e}{-31}}$ & $\mathbf{1.7 {e}{-37}}$ & $1.1 {e}{-34}$ & $2.0 {e}{-12}$ & $1.5 {e}{-12}$ & $6.2 {e}{-08}$ & $5.8 {e}{-08}$ \\
\cline{2-11}
 & poly & $\mathbf{5.8 {e}{-36}}$ & $9.9 {e}{-10}$ & $7.0 {e}{-02}$ & $1.0 {e}{-25}$ & $\mathbf{1.5 {e}{-35}}$ & $\mathbf{9.9 {e}{-13}}$ & $\mathbf{8.1 {e}{-13}}$ & $\mathbf{5.4 {e}{-08}}$ & $\mathbf{4.9 {e}{-08}}$ \\
\hline
\end{tabular}
\label{tab:p-vals}
\end{table*}

\begin{table}[h!]
\centering
\caption{$t$-statistic (higher is better). Background color: polylog = white, poly = gray.}
\begin{tabular}{|c|c|c|c|c|c|c|c|c|c|}
\hline
 $w$& 1 & 2 & 3 & 4 & 5 & 6 & 7 & 8 & 9 \\
\hline
\multirow{2}{*}{Br no} & 21.5 & \textbf{47.4} & \textbf{96.4} & \textbf{204} & &&&& \\
 & \cellcolor{gray!20}\textbf{105} & \cellcolor{gray!20}5.16 & \cellcolor{gray!20}38.0 & \cellcolor{gray!20}145 & &&&& \\
\hline
\multirow{2}{*}{Sh no} & 21.4 & \textbf{55.9} & \textbf{83.0} & \textbf{150} & & & & & \\
 & \cellcolor{gray!20}\textbf{111} & \cellcolor{gray!20}10.7 & \cellcolor{gray!20}11.7 & \cellcolor{gray!20}78.6 & & & & & \\
\hline
\multirow{2}{*}{Br w/} & 22.7 & \textbf{48.9} & \textbf{72.3} & \textbf{121} & \textbf{284} & \textbf{366} & 288 & \textbf{54.5} & \textbf{55.5} \\
 & \cellcolor{gray!20}\textbf{111} & \cellcolor{gray!20}5.25 & \cellcolor{gray!20}9.48 & \cellcolor{gray!20}47.2 & \cellcolor{gray!20}247 & \cellcolor{gray!20}307 & \cellcolor{gray!20}\textbf{329} & \cellcolor{gray!20}52.0 & \cellcolor{gray!20}52.6 \\
\hline
\multirow{2}{*}{Sh w/} & 20.8 & \textbf{58.2} & \textbf{70.3} & \textbf{123} & 373 & 180 & 188 & 99.0 & 101 \\
 & \cellcolor{gray!20}\textbf{107} & \cellcolor{gray!20}9.28 & \cellcolor{gray!20}1.89 & \cellcolor{gray!20}43.2 & \cellcolor{gray!20}\textbf{418} & \cellcolor{gray!20}\textbf{202} & \cellcolor{gray!20}\textbf{209} & \cellcolor{gray!20}\textbf{103} & \cellcolor{gray!20}\textbf{105} \\
\hline
\end{tabular}
\label{tab:t-stat}
\end{table}

\begin{table}[h!]
\centering
\caption{Akaike Information Criterion (AIC) values (lower is better).}
\begin{tabular}{|c|c|c|c|c|c|c|c|c|c|}
\hline
 $w$ & 1 & 2 & 3 & 4 & 5 & 6 & 7 & 8 & 9 \\
\hline
\multirow{2}{*}{Br no} & \textbf{-8.9} & \textbf{4.3} & \textbf{72.2} & \textbf{615} & &&&& \\
 & \cellcolor{gray!20}0.5 & \cellcolor{gray!20}8.4 & \cellcolor{gray!20}228 & \cellcolor{gray!20}2859 & &&&&\\
\hline
\multirow{2}{*}{Sh no} & \textbf{-17.9} & \textbf{27.7} & \textbf{64.6} & \textbf{114} & &&&& \\
 & \cellcolor{gray!20}9.4 & \cellcolor{gray!20}47.0 & \cellcolor{gray!20}107 & \cellcolor{gray!20}543 & &&&&\\
\hline
\multirow{2}{*}{Br w/} & \textbf{-10.2} & \textbf{2.7} & \textbf{24.9} & \textbf{82.1} & \textbf{1310} & \textbf{4806} & \textbf{895} & 363 & 374 \\
& \cellcolor{gray!20}-0.3 & \cellcolor{gray!20}7.1 & \cellcolor{gray!20}44.8 & \cellcolor{gray!20}318 & \cellcolor{gray!20}5174 & \cellcolor{gray!20}14979 & \cellcolor{gray!20}2315 & \cellcolor{gray!20}\textbf{123} & \cellcolor{gray!20}\textbf{125}\\
\hline
\multirow{2}{*}{Sh w/} & \textbf{-17.9} & \textbf{23.1} & \textbf{53.8} & \textbf{46.0} & \textbf{5060} & 1537 & \textbf{1750} & 1421 & 1432 \\
 & \cellcolor{gray!20}6.3 & \cellcolor{gray!20}40.3 & \cellcolor{gray!20}63.0 & \cellcolor{gray!20}188 & \cellcolor{gray!20}13406 & \cellcolor{gray!20}\textbf{1494} & \cellcolor{gray!20}1790 & \cellcolor{gray!20}\textbf{1029} & \cellcolor{gray!20}\textbf{1030} \\

\hline
\end{tabular}
\label{tab:AIC}
\end{table}

\begin{table}[h!]
\centering
\caption{Corrected AIC (AICc) values (lower is better).}
\begin{tabular}{|c|c|c|c|c|c|c|c|c|c|}
\hline
 $w$& 1 & 2 & 3 & 4 & 5 & 6 & 7 & 8 & 9 \\
\hline
\multirow{2}{*}{Br no} & \textbf{-7.9} & \textbf{5.3} & \textbf{73.2} & \textbf{616}&  &  &  &  &  \\
 & \cellcolor{gray!20}1.5 & \cellcolor{gray!20}9.4 & \cellcolor{gray!20}229 & \cellcolor{gray!20}2860 & &&&& \\
\hline
\multirow{2}{*}{Sh no} & \textbf{-16.9} & \textbf{28.7} & \textbf{65.6} & \textbf{115} &  &  &  &  &  \\
 & \cellcolor{gray!20}10.4 & \cellcolor{gray!20}48.0 & \cellcolor{gray!20}108 & \cellcolor{gray!20}544 & &&&& \\
\hline
\multirow{2}{*}{Br w/} & \textbf{-9.2} & \textbf{3.7} & \textbf{25.9} & \textbf{83.1} & \textbf{1311} & \textbf{4807} & \textbf{897} & 375 & 386 \\
& \cellcolor{gray!20}0.7 & \cellcolor{gray!20}8.1 & \cellcolor{gray!20}45.8 & \cellcolor{gray!20}319 & \cellcolor{gray!20}5175 & \cellcolor{gray!20}14980 & \cellcolor{gray!20}{2317} & \cellcolor{gray!20}\textbf{135} & \cellcolor{gray!20}\textbf{137} \\

\hline
\multirow{2}{*}{Sh w/} & \textbf{-16.9} & \textbf{24.1} & \textbf{54.8} & \textbf{47.0} & \textbf{5061} & 1543 & \textbf{1756} & 1433 & 1444 \\
 & \cellcolor{gray!20}7.3 & \cellcolor{gray!20}41.3 & \cellcolor{gray!20}64.0 & \cellcolor{gray!20}189 & \cellcolor{gray!20}13408 & \cellcolor{gray!20}\textbf{1500} & \cellcolor{gray!20}1796 & \cellcolor{gray!20}\textbf{1041} & \cellcolor{gray!20}\textbf{1042} \\
\hline
\end{tabular}
\label{tab:AICc}
\end{table}

\begin{table}[h!]
\centering
\caption{Bayesian Information Criterion (BIC) values (lower is better).}
\begin{tabular}{|c|c|c|c|c|c|c|c|c|c|}
\hline
 $w$& 1 & 2 & 3 & 4 & 5 & 6 & 7 & 8 & 9 \\
\hline
\multirow{2}{*}{Br no} & \textbf{-4.9} & \textbf{8.3} & \textbf{76.2} & \textbf{619} & &&&& \\
 & \cellcolor{gray!20}4.5 & \cellcolor{gray!20}12.4 & \cellcolor{gray!20}232 & \cellcolor{gray!20}2863 & &&&& \\
 \hline
\multirow{2}{*}{Sh no} & \textbf{-13.9} & \textbf{31.7} & \textbf{68.6} & \textbf{118}&&&&& \\
 & \cellcolor{gray!20}13.4 & \cellcolor{gray!20}51.0 & \cellcolor{gray!20}111 & \cellcolor{gray!20}547 & &&&& \\
 \hline
\multirow{2}{*}{Br w/} & \textbf{-6.2} & \textbf{6.7} & \textbf{28.9} & \textbf{86.1} & \textbf{1314} & \textbf{4810} & \textbf{897} & 362 & 373 \\
& \cellcolor{gray!20}3.7 & \cellcolor{gray!20}11.1 & \cellcolor{gray!20}48.8 & \cellcolor{gray!20}322 & \cellcolor{gray!20}5178 & \cellcolor{gray!20}14983 & \cellcolor{gray!20}2317 & \cellcolor{gray!20}\textbf{122} & \cellcolor{gray!20}\textbf{124}\\

\hline
\multirow{2}{*}{Sh w/} & \textbf{-13.9} & \textbf{27.1} & \textbf{57.8} & \textbf{50.0} & \textbf{5062} & 1537 & \textbf{1750} & 1421 & 1431 \\
 & \cellcolor{gray!20}10.3 & \cellcolor{gray!20}44.3 & \cellcolor{gray!20}67.0 & \cellcolor{gray!20}192 & \cellcolor{gray!20}13409 & \cellcolor{gray!20}\textbf{1494} & \cellcolor{gray!20}1790 & \cellcolor{gray!20}\textbf{1028} & \cellcolor{gray!20}\textbf{1030} \\
\hline
\end{tabular}
\label{tab:BIC}
\end{table}

\begin{table}[h!]
\centering
\caption{$R^2$ values (higher is better).}
\begin{tabular}{|c|c|c|c|c|c|c|c|c|c|}
\hline
 $w$& 1 & 2 & 3 & 4 & 5 & 6 & 7 & 8 & 9 \\
\hline
\multirow{2}{*}{Br no} & \textbf{0.999} & \textbf{1.000} & \textbf{0.999} & \textbf{0.994}
 & &&&& \\
 & \cellcolor{gray!20}0.998 & \cellcolor{gray!20}1.000 & \cellcolor{gray!20}0.995 & \cellcolor{gray!20}0.968 & &&&& \\
\hline
\multirow{2}{*}{Sh no} & \textbf{1.000} & \textbf{1.000} & \textbf{0.999} & \textbf{0.999} & &&&& \\
 & \cellcolor{gray!20}0.998 & \cellcolor{gray!20}0.999 & \cellcolor{gray!20}0.998 & \cellcolor{gray!20}0.990 & &&&& \\
\hline
\multirow{2}{*}{Br w/} & \textbf{0.999} & \textbf{1.000} & \textbf{1.000} & \textbf{0.999} & \textbf{0.986} & \textbf{0.966} & \textbf{0.984}
 & 0.950 & 0.950 \\
 & \cellcolor{gray!20}0.998 & \cellcolor{gray!20}1.000 & \cellcolor{gray!20}0.999 & \cellcolor{gray!20}0.994 & \cellcolor{gray!20}0.942 & \cellcolor{gray!20}0.891 & \cellcolor{gray!20}0.958 & \cellcolor{gray!20}\textbf{0.983} & \cellcolor{gray!20}\textbf{0.983} \\
\hline
\multirow{2}{*}{Sh w/} & \textbf{1.000} & \textbf{1.000} & \textbf{0.999} & \textbf{1.000} & \textbf{0.946} & 0.905 & \textbf{0.897}
 & 0.811 & 0.816 \\
 & \cellcolor{gray!20}0.998 & \cellcolor{gray!20}0.999 & \cellcolor{gray!20}0.998 & \cellcolor{gray!20}0.997 & \cellcolor{gray!20}0.856 & \cellcolor{gray!20}\textbf{0.907} & \cellcolor{gray!20}0.895 & \cellcolor{gray!20}\textbf{0.863} & \cellcolor{gray!20}\textbf{0.868} \\
\hline
\end{tabular}
\label{tab:R^2}
\end{table}

\begin{table}[h!]
\centering
\caption{Adjusted $R^2$ values (higher is better).}
\begin{tabular}{|c|c|c|c|c|c|c|c|c|c|}
\hline
 $w$& 1 & 2 & 3 & 4 & 5 & 6 & 7 & 8 & 9 \\
\hline
\multirow{2}{*}{Br no} & \textbf{0.999} & \textbf{1.000} & \textbf{0.999} & \textbf{0.993} & &&&&\\
 & \cellcolor{gray!20}0.998 & \cellcolor{gray!20}1.000 & \cellcolor{gray!20}0.994 & \cellcolor{gray!20}0.966 & &&&& \\
\hline
\multirow{2}{*}{Sh no} & \textbf{1.000} & \textbf{1.000} & \textbf{0.999} & \textbf{0.999} & &&&& \\
 & \cellcolor{gray!20}0.997 & \cellcolor{gray!20}0.999 & \cellcolor{gray!20}0.997 & \cellcolor{gray!20}0.990 & &&&& \\
\hline
\multirow{2}{*}{Br w/} & \textbf{0.999} & \textbf{1.000} & \textbf{1.000} & \textbf{0.999} & \textbf{0.985} & \textbf{0.963} & \textbf{0.982} & 0.926 & 0.924 \\
 & \cellcolor{gray!20}0.998 & \cellcolor{gray!20}1.000 & \cellcolor{gray!20}0.999 & \cellcolor{gray!20}0.993 & \cellcolor{gray!20}0.938 & \cellcolor{gray!20}0.883 & \cellcolor{gray!20}0.951 & \cellcolor{gray!20}\textbf{0.975} & \cellcolor{gray!20}\textbf{0.975} \\
\hline
\multirow{2}{*}{Sh w/} & \textbf{1.000} & \textbf{0.999} & \textbf{0.999} & \textbf{1.000} & \textbf{0.940} & 0.873 & \textbf{0.863} & 0.716 & 0.724 \\
 & \cellcolor{gray!20}0.998 & \cellcolor{gray!20}0.999 & \cellcolor{gray!20}0.998 & \cellcolor{gray!20}0.997 & \cellcolor{gray!20}0.840 & \cellcolor{gray!20}\textbf{0.876} & \cellcolor{gray!20}0.859 & \cellcolor{gray!20}\textbf{0.795} & \cellcolor{gray!20}\textbf{0.802} \\
\hline
\end{tabular}
\label{tab:aR^2}
\end{table}

\subsection{Results for each of the four cases}

Since there is no standard way to combine all the different measures we consider into a single measure, we instead report the number of measures favoring each model and use a majority vote to determine which model is preferred. For the two cases without DD, we only include $w\in\{1,2,3,4\}$ for the reasons discussed above. The results are displayed in \cref{tab-majority-Brn} (Brisbane without DD), \cref{tab-majority-Shn} (Sherbrooke without DD), \cref{tab-majority-Brw} (Brisbane with DD), and \cref{tab-majority-Shw} (Sherbrooke with DD). They are in full agreement with the summary of speedup results reported in \cref{tab:speedups} of the main text.

\begin{table}[h!]
\caption{Model preference by majority vote for Br no}
\begin{tabular}{c c c c}
\hline\hline
$w$ & better & {measures favoring polylog} & {measures favoring poly} \\
\hline
1 & polylog & 5 & 2 \\
2 & polylog & 7 & 0 \\
3 & polylog & 7 & 0 \\
4 & polylog & 7 & 0 \\
\hline\hline
\end{tabular}
\label{tab-majority-Brn}
\end{table}

\begin{table}[h!]
\caption{Model preference by majority vote for Sh no}
\begin{tabular}{c c c c}
\hline\hline
$w$ & better & {measures favoring polylog} & {measures favoring poly} \\
1 & polylog & 5 & 2 \\
2 & polylog & 7 & 0 \\
3 & polylog & 7 & 0 \\
4 & polylog & 7 & 0 \\
\hline\hline
\end{tabular}
\label{tab-majority-Shn}
\end{table}

\begin{table}[h!]
\caption{Model preference by majority vote for Br w/}
\begin{tabular}{c c c c}
\hline\hline
$w$ & better & {measures favoring polylog} & {measures favoring poly} \\
\hline
1 & {polylog} & 5 & 2 \\
2 & {polylog} & 7 & 0 \\
3 & {polylog} & 7 & 0 \\
4 & {polylog} & 7 & 0 \\
5 & {polylog} & 7 & 0 \\
6 & {polylog} & 7 & 0 \\
7 & {polylog} & 5 & 2 \\
8 & {poly} & 2 & 5 \\
9 & {poly} & 2 & 5 \\
\hline\hline
\end{tabular}
\label{tab-majority-Brw}
\end{table}

\begin{table}[h!]
\caption{Model preference by majority vote for Sh w/}
\begin{tabular}{c c c c}
\hline\hline
$w$ & better & {measures favoring polylog} & {measures favoring poly} \\
\hline
1 & {polylog} & 5 & 2 \\
2 & {polylog} & 7 & 0 \\
3 & {polylog} & 7 & 0 \\
4 & {polylog} & 7 & 0 \\
5 & {polylog} & 5 & 2 \\
6 & {poly} & 0 & 7 \\
7 & {polylog} & 5 & 2 \\
8 & {poly} & 0 & 7 \\
9 & {poly} & 0 & 7 \\
\hline\hline
\end{tabular}
\label{tab-majority-Shw}
\end{table}

\subsection{Model selection via model probabilities: Akaike weights}

To further boost confidence in our model selection, we compute probabilities for each model based on Akaike weights via the widely used methodology developed in Ref.~\cite{Wagenmakers:2004aa}. Akaike weights provide an interpretation as the probabilities of each model being the best model in an AIC sense, i.e., the model that has the smallest Kullback-Leibler distance~\cite{Burnham:book}.

The Akaike weight ($w_i$ for model $i$) for each case and parameter is calculated as:
\beq
w_i = \frac{\exp\left(-\frac{1}{2} \Delta \text{AIC}_i\right)}{\sum_{i=1}^{R} \exp\left(-\frac{1}{2} \Delta \text{AIC}_i\right)}
\eeq
where $R$ is the number of models compared (in our case, 2), $\Delta \text{AIC}_i= \text{AIC}_i - \text{AIC}_{\text{min}}$. The numerator is the relative likelihood, and the Akaike weight is the probability that a model is the best in the set.

As an example, here is the calculation for Brisbane with DD for $w=1$. In this case, 
$\text{AIC}_{\text{polylog}} = -10.2$ and $\text{AIC}_{\text{poly}} = -0.3$.

We compute $\Delta$AIC:
\begin{equation}
\begin{aligned}
&\Delta \text{AIC}_{\text{polylog}} = -10.2 - (-10.2) = 0 \\
&\Delta \text{AIC}_{\text{poly}} = -0.3 - (-10.2) = 9.9 ,
\end{aligned}
\end{equation}
relative likelihoods:
\begin{equation}
\begin{aligned}
&L_{\text{polylog}} = \exp\left(-\frac{1}{2} \times 0\right) = 1 \\
&L_{\text{poly}} = \exp\left(-\frac{1}{2} \times 9.9\right)  \approx 0.0071 ,
\end{aligned}
\end{equation}
and compute Akaike weights:
\begin{equation}
\begin{aligned}
&w_{\text{polylog}} = \frac{1}{1 + 0.0071} \approx 0.9929 \\
&w_{\text{poly}} = \frac{0.0071}{1 + 0.0071} \approx 0.0071
\end{aligned}
\end{equation}

The interpretation is that there is a 99.29\% probability that the polylog model is better.

\subsection{Akaike weights: results}

The results are presented in \cref{tab:AIC-weights-Brn} (Brisbane without DD), \cref{tab:AIC-weights-Shn} (Sherbrooke without DD), \cref{tab:AIC-weights-Brw} (Brisbane with DD), and \cref{tab:AIC-weights-Shw} (Sherbrooke with DD). The Akaike weights overwhelmingly support the conclusions in \cref{tab:speedups} of the main text, including the anomaly of the poly model being a better fit for Sherbrooke with DD at $w=6$.

\begin{table}[h!]
\centering
\caption{AIC and Akaike Weights for Brisbane without DD}
\begin{tabular}{|c|c|c|c|c|c|}
\hline
$w$ & $\text{AIC}_\text{polylog}$ & $\text{AIC}_\text{poly}$& $\Delta$AIC & $w_{\text{polylog}}$ & favored model \\
\hline
1 & -8.9 & 0.5 & 9.4 & 99.12\% & polylog \\
2 & 4.3 & 8.4 & 4.1 & 88.42\% & polylog \\
3 & 72.2 & 228 & 156 & $\sim$100\% & polylog \\
4 & 615 & 2859 & 2244 & $\sim$100\% & polylog \\
\hline
\end{tabular}
\label{tab:AIC-weights-Brn}
\end{table}

\begin{table}[h!]
\centering
\caption{AIC and Akaike Weights for Sherbrooke without DD}
\begin{tabular}{|c|c|c|c|c|c|}
\hline
$w$ & $\text{AIC}_\text{polylog}$ & $\text{AIC}_\text{poly}$& $\Delta$AIC & $w_{\text{polylog}}$ & favored model \\
\hline
1 & -17.9 & 10.4 & 28.3 & $\sim$100\% & polylog \\
2 & 27.7 & 48.0 & 20.3 & $\sim$100\% & polylog \\
3 & 64.6 & 108 & 43.7 & $\sim$100\% & polylog \\
4 & 114 & 544 & 430 & $\sim$100\% & polylog \\
\hline
\end{tabular}
\label{tab:AIC-weights-Shn}
\end{table}

\begin{table}[h!]
\centering
\caption{AIC and Akaike Weights for Brisbane with DD}
\begin{tabular}{|c|c|c|c|c|c|}
\hline
$w$ & $\text{AIC}_\text{polylog}$ & $\text{AIC}_\text{poly}$& $\Delta$AIC & $w_{\text{polylog}}$ & favored model \\
\hline
1 & -10.2 & -0.3 & 9.9 & 99.29\% & polylog \\
2 & 2.7 & 7.1 & 4.4 & 89.32\% & polylog \\
3 & 24.9 & 44.8 & 19.9 & $\sim$100\% & polylog \\
4 & 82.1 & 318.6 & 236 & $\sim$100\% & polylog \\
5 & 1310 & 5174 & 3864 & $\sim$100\% & polylog \\
6 & 4806 & 14979 & 10172 & $\sim$100\% & polylog \\
7 & 895 & 2315 & 1420 & $\sim$100\% & polylog \\
8 & 363 & 123 & -239 & $\sim$0\% & poly \\
9 & 374 & 125 & -248 & $\sim$0\% & poly \\
\hline
\end{tabular}
\label{tab:AIC-weights-Brw}
\end{table}

\begin{table}[h!]
\centering
\caption{AIC and Akaike Weights for Sherbrooke with DD}
\begin{tabular}{|c|c|c|c|c|c|}
\hline
$w$ & $\text{AIC}_\text{polylog}$ & $\text{AIC}_\text{poly}$& $\Delta$AIC & $w_{\text{polylog}}$ & favored model \\
\hline
1 & -17.9 & 6.3 & 24.2 & $\sim$100\% & polylog \\
2 & 23.1 & 40.3 & 17.2 & $\sim$100\% & polylog \\
3 & 53.8 & 63.0 & 9.2 & 99.01\% & polylog \\
4 & 46.0 & 188 & 142 & $\sim$100\% & polylog \\
5 & 5060 & 13406 & 8346 & $\sim$100\% & polylog \\
6 & 1537 & 1494 & -42.5 & $\sim$0\% & poly \\
7 & 1750 & 1790 & 39.3 & $\sim$ 100\% & polylog \\
8 & 1421 & 1029 & -392 & $\sim$0\% & poly \\
9 & 1432 & 1030 & -401 & $\sim$0\% & poly \\
\hline
\end{tabular}
\label{tab:AIC-weights-Shw}
\end{table}

\section{Results with the help of Q-CTRL}
\label{app:QCTRL}

The company Q-CTRL provides an IBM Qiskit interface via the \texttt{channel$\_$strategy="q-ctrl"}~\cite{QCTRL-IBM} command. Q-CTRL employs a sophisticated array of optimizations, some of which are AI-driven, including circuit depth reduction and logical transpilation, error-aware hardware mapping, DD for crosstalk reduction, optimized gate replacement, MEM, and others~\cite{Mundada:2023aa}. As we report below, we find that the Q-CTRL results improve upon our manually optimized DD approach in most cases. This suggests that there is room for additional improvement of our results with further, more aggressive optimization, such as implemented by Q-CTRL. However, it is important to note that Q-CTRL provides a black-box approach in the sense that the details of the optimizations are invisible to the user. Consequently, \emph{we are unable to verify that the experiments using Q-CTRL adhere to the restrictions defined for our compiler as outlined in \cref{app:rules:oracle}}. With this caveat, we next report our Q-CTRL results.

We conducted the same experiments on Brisbane using Q-CTRL  up to $n=15$ (30 qubits). \cref{fig:qctrl-main,fig:mixed_R^2+AIC_qctrl,fig:qctrl-params} are identical to \cref{fig:main-plot,fig:mixed_R^2+AIC,fig:allslopes}, except that we also include the data generated using Q-CTRL.
\cref{fig:qctrl-main} shows that with Q-CTRL the scaling of the \wSimon{4}{15} is slightly better than with our manually optimized DD.

In \cref{sec:results} of the main text, we showed that Brisbane exhibits an exponential speedup up to $w=7$ with the application of manually optimized DD and MEM. \cref{fig:mixed_R^2+AIC_qctrl}(left) shows that using Q-CTRL, the polylog model is the better fit up to $w=6$. For $w\in[7,9]$ the poly model is the better fit. 

Thus, with Q-CTRL's optimization, the evidence for the exponential speedup is weakened (due to the $w=7$ point) compared to our manual DD optimization. However, with Q-CTRL we do observe evidence for a polynomial quantum speedup at $w\in [8,9]$, where manual DD optimization does not yield even a polynomial quantum advantage. The corresponding scaling exponents are all shown in \cref{fig:qctrl-params}.

NTS scaling plots for Brisbane comparing manually optimized DD and Q-CTRL from $w=1$ to $w=9$ are shown in \cref{fig:grid-qctrl}. These plots reveal that manually optimized DD and Q-CTRL exhibit a similar performance at lower HW values, with DD outperforming Q-CTRL at $w=2,3$, but with a performance gap growing as a function of the HW $w$ in Q-CTRL's favor for $w\ge 4$. 

\section{Complete scaling results for Sherbrooke, Brisbane, Cairo, and Kolkata, for all values of $w$ and all three models}
\label{app:all_results}

\Cref{fig:mixed-SB,fig:mixed-BB} illustrate the transition from the polylog model (dashed blue) to the poly model (solid red) through the mixed model (dashed-dotted green) on Sherbrooke and Brisbane, respectively, both with DD. The mixed model is only shown when $\gamma\ne 0,1$, i.e., for Sherbrooke with DD at $w=6$ and for Brisbane with DD at $w=7$, and as expected, the green lines tend to lie between the blue and red lines. 

The full set of fitting parameters of the mixed model is given in \cref{tab:mixed-table}. In almost all cases (with just two exceptions) the mixed model yields $\gamma=0$ or $\gamma=1$, i.e., the same scaling as the polylog or poly model, respectively. This is why we claimed in the main text that we have no evidence for a subexponential/superpolynomial speedup.

Finally, \cref{fig:mixed-CR,fig:mixed-KK} illustrate the same for the Cairo and Kolkata results.

\begin{figure*}
    \centering
    \includegraphics[width=0.8\textwidth]{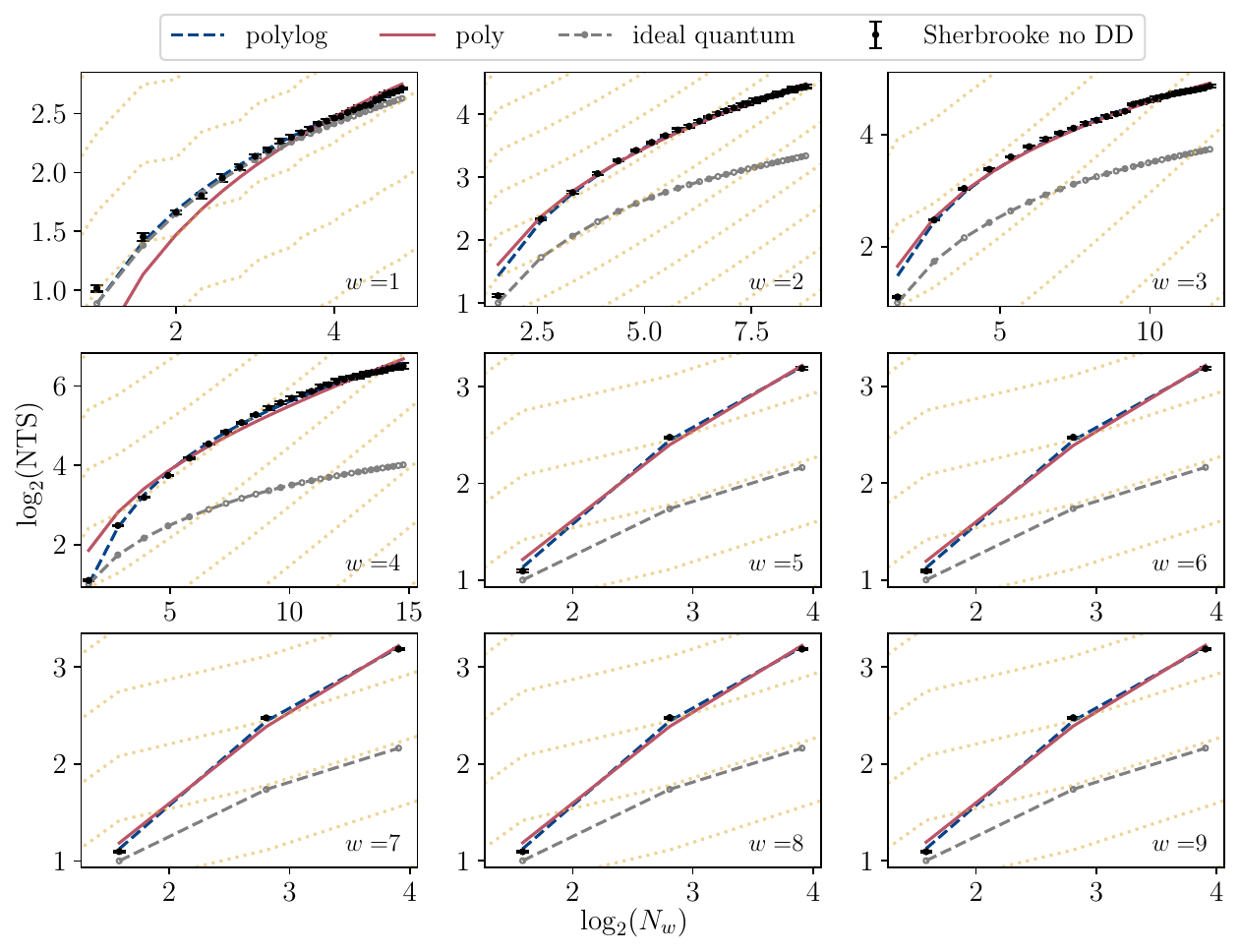}
    \includegraphics[width=0.8\textwidth]{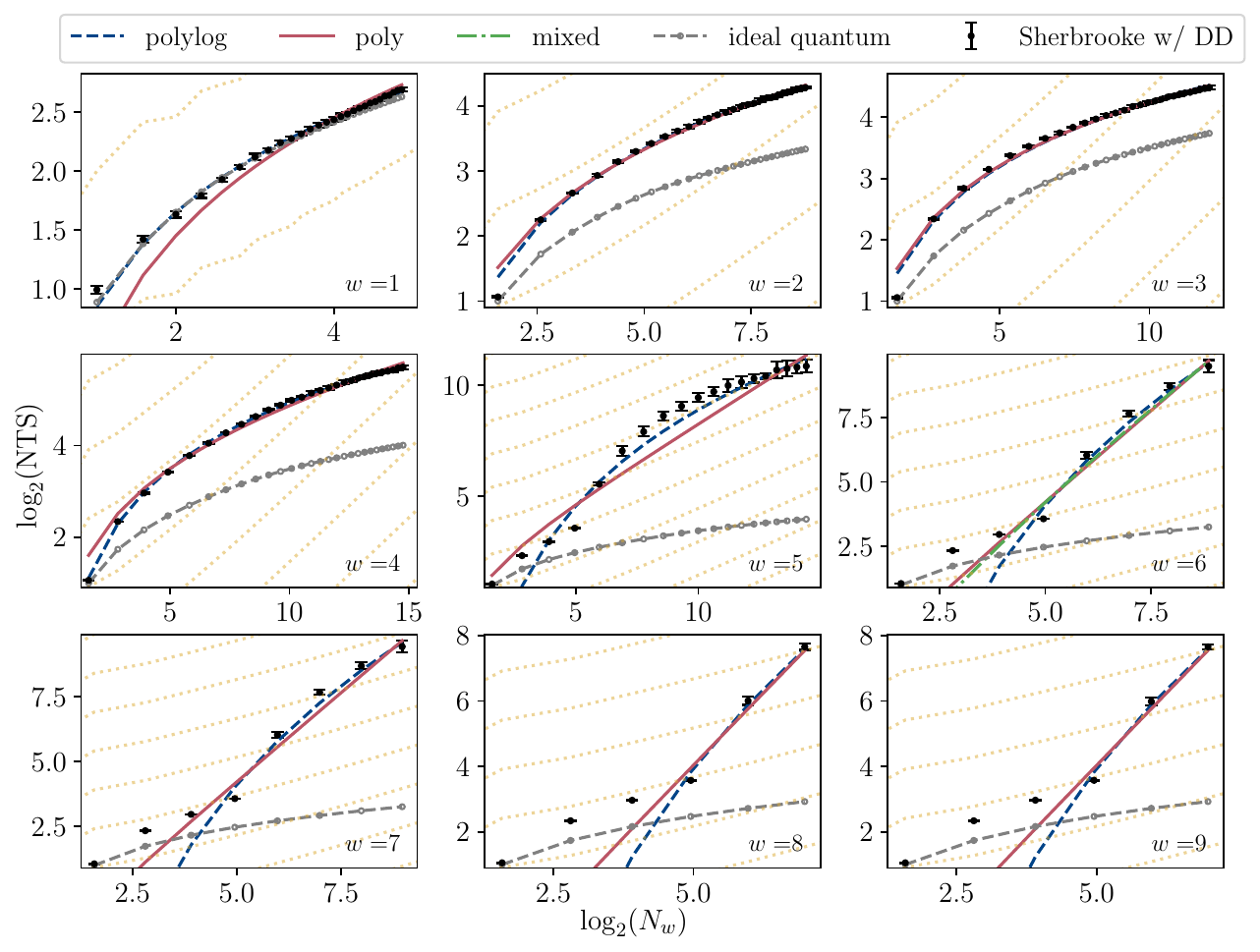}
    \caption{Transition from the polylog model (dashed blue) to the poly model (solid red) through the mixed model (dashed-dotted green for $w=6$) on Sherbrooke with and without DD from HW $w=1$ to $w=9$. Note the kink in the Sherbrooke and Brisbane w/ DD data at $\log_2(N_w)=5$ for $w\ge 5$; fits that account for this are addressed in \cref{app:stat-analysis2}.}
    \label{fig:mixed-SB}
\end{figure*}

\begin{figure*}
    \centering
    \includegraphics[width=0.8\textwidth]{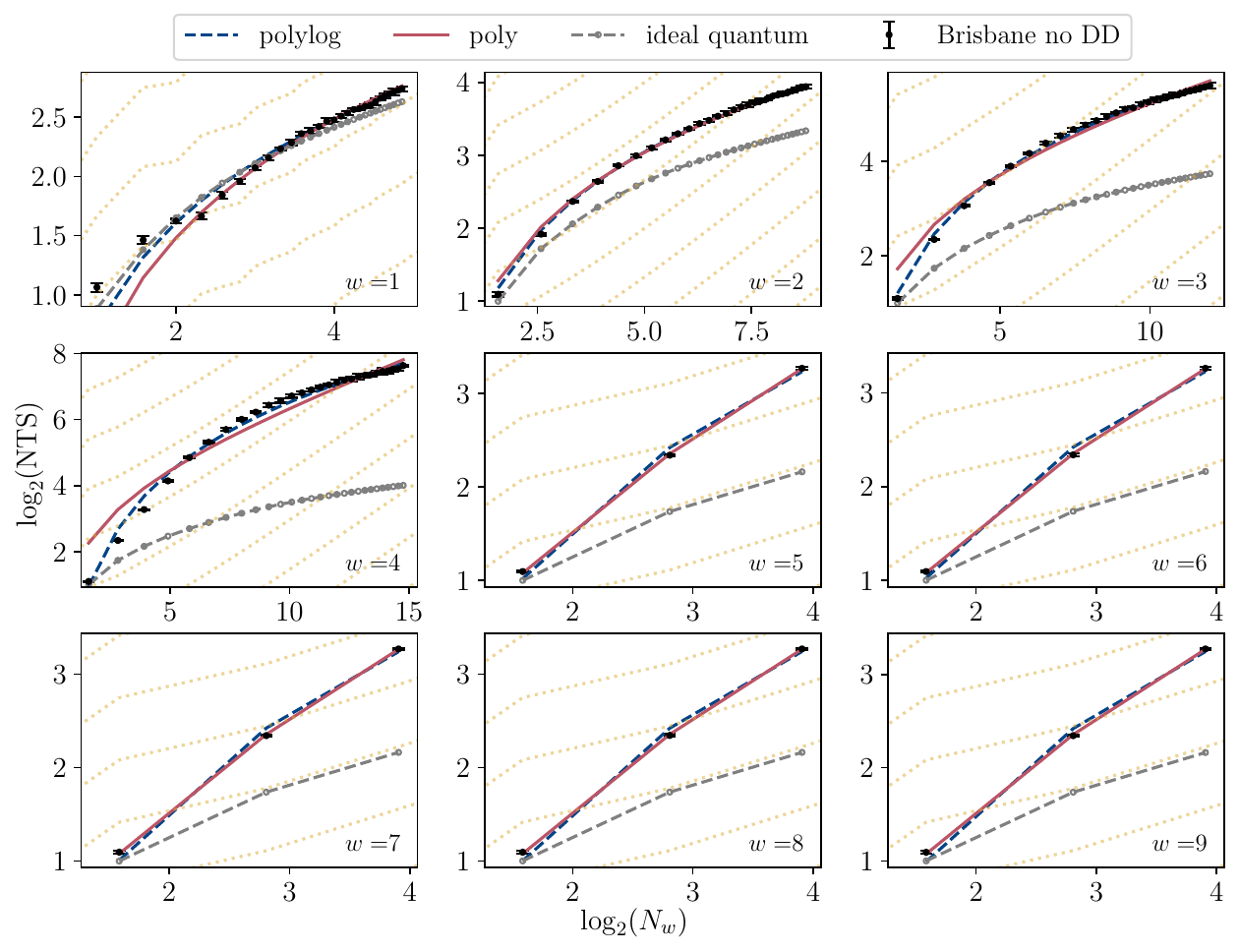}
    \includegraphics[width=0.8\textwidth]{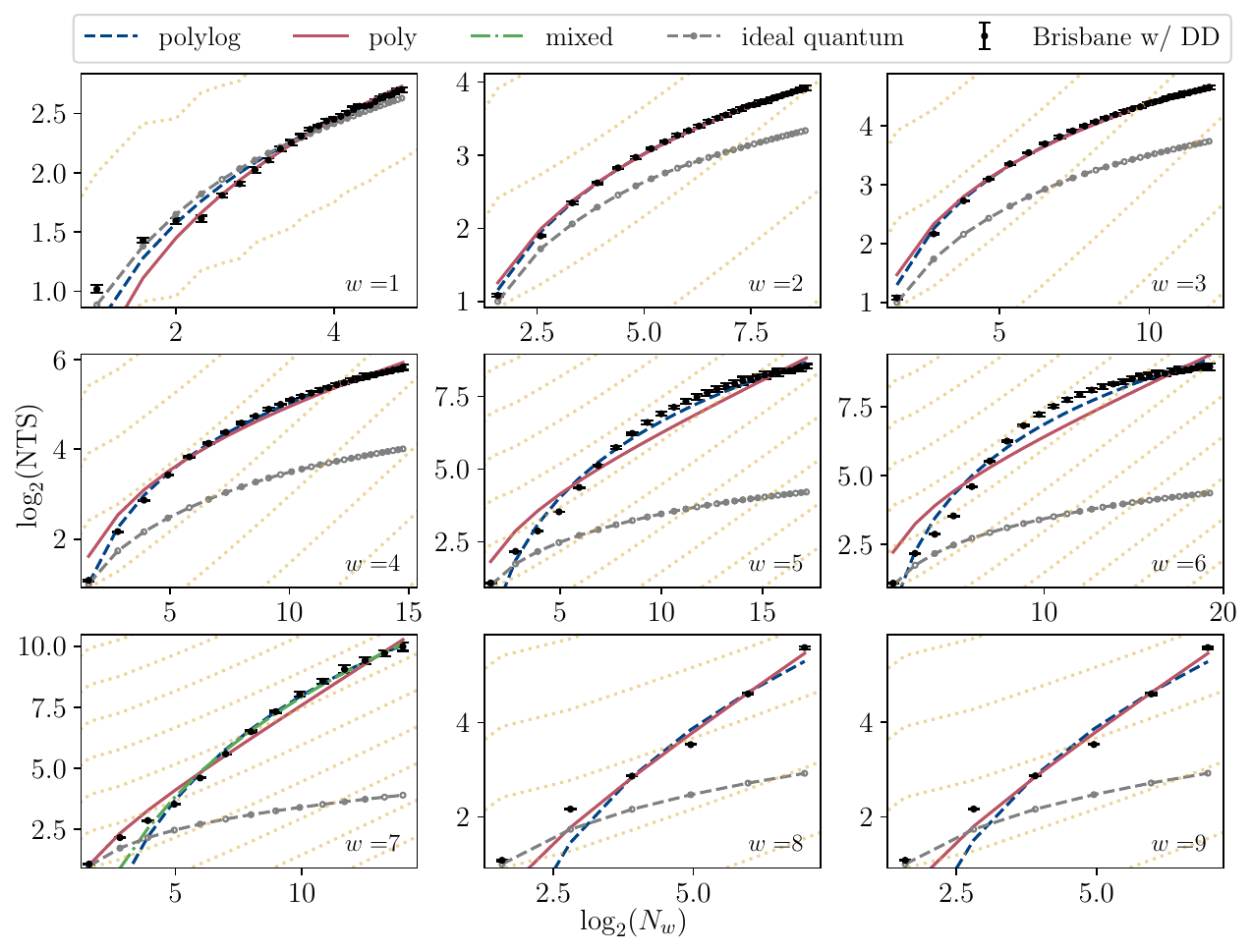}
    \caption{Transition from the polylog model (dashed blue) to the poly model (solid red) through the mixed model (dashed-dotted green for $w=7$) on Brisbane with and without DD from HW $w=1$ to $w=9$. Note the kink in the Sherbrooke and Brisbane w/ DD data at $\log_2(N_w)=5$ for $w\ge 5$; fits that account for this are addressed in \cref{app:stat-analysis2}.}
    \label{fig:mixed-BB}
\end{figure*}

\begin{table*}[h!]
\centering
\begin{tabular}{|c|c|c|c|c|c|c|c|c|c|}
\hline
\multicolumn{2}{|c|}{$w\rightarrow$}&2&3&4&5&6&7&8&9\\
\hline
\multirow{ 3}{*}{Brisbane w/ DD}&$C$&
$1.18 , 0.43$ &
$1.22 , 0.26$ &
$1.67 , 0.15$ &
$2.91 , 0.1$ &
$2.8 , 0.07$ &
$4.01 , 0.17$ &
$0.56 , 0.32$ &
$0.56 , 0.31$ \\
\cline{2-10}
&$\gamma$&
$0.0 , 0.13$ &
$0.0 , 0.07$ &
$0.0 , 0.03$ &
$0.0 , 0.01$ &
$0.0 , 0.01$ &
$\mathit{0.06 , 0.02}$ &
$1.0 , 0.33$ &
$1.0 , 0.33$ \\
\cline{2-10}
&$c$&
$1.1 , 0.61$ &
$1.18 , 0.41$ &
$0.6 , 0.11$ &
$0.09 , 0.01$ &
$0.14 , 0.01$ &
$0.01 , 0.0$ &
$0.9 , 0.6$ &
$0.93 , 0.61$ \\
\hline
\multirow{ 3}{*}{Brisbane no DD}&$C$&
$1.19 , 0.46$ &
$1.7 , 0.2$ &
$2.37 , 0.1$ &
- &
- &
- &
- &
- \\
\cline{2-10}
&$\gamma$&
$0.0 , 0.14$ &
$0.0 , 0.04$ &
$0.0 , 0.02$ &
- &
- &
- &
- &
- \\
\cline{2-10}
&$c$&
$1.12 , 0.66$ &
$0.65 , 0.15$ &
$0.31 , 0.04$ &
- &
- &
- &
- &
- \\
\hline
\multirow{ 3}{*}{Sherbrooke w/ DD}&$C$&
$1.28 , 0.35$ &
$1.06 , 0.23$ &
$0.09 , 0.04$ &
$4.85 , 0.16$ &
$1.55 , 0.2$ &
$7.49 , 0.86$ &
$1.22 , 0.36$ &
$1.22 , 0.36$ \\
\cline{2-10}
&$\gamma$&
$0.0 , 0.1$ &
$0.0 , 0.07$ &
$1.0 , 0.15$ &
$0.0 , 0.01$ &
$\mathit{0.75 , 0.07}$ &
$0.0 , 0.05$ &
$1.0 , 0.18$ &
$1.0 , 0.18$ \\
\cline{2-10}
&$c$&
$1.13 , 0.47$ &
$1.6 , 0.5$ &
$19.54 , 7.85$ &
$0.0 , 0.0$ &
$0.05 , 0.02$ &
$0.0 , 0.0$ &
$0.04 , 0.03$ &
$0.04 , 0.02$ \\
\hline
\multirow{ 3}{*}{Sherbrooke no DD}&$C$&
$1.34 , 0.36$ &
$1.24 , 0.2$ &
$1.94 , 0.13$ &
- &
- &
- &
- &
- \\
\cline{2-10}
&$\gamma$&
$0.0 , 0.1$ &
$0.0 , 0.06$ &
$0.0 , 0.02$ &
- &
- &
- &
- &
- \\
\cline{2-10}
&$c$&
$1.1 , 0.48$ &
$1.29 , 0.32$ &
$0.46 , 0.07$ &
- &
- &
- &
- &
- \\
\hline
\end{tabular}
\caption{Fitted parameters for $\NTS_\text{mixed}$ in \cref{eq:mixed-hw-NTS}, reported as ($p,u$), where $p$ is the fitted parameter and $u$ is the standard error. For $C$, $c$, and $\gamma\ne 0,1$, the value is $p\pm u$. For $\gamma=0$ or $1$, we have $\gamma\in [0,u]$ or $\gamma\in[1-u,1]$, respectively. The table starts from $w=2$ because the confidence interval of this constrained model could not be calculated at $w=1$. The no-DD models in both machines stop at $w=5$ because they only have three data points for $w>5$. Italic fonts are used for Sherbrooke with DD at $w=6$ and Brisbane with DD at $w=7$ to highlight the only two cases where $\gamma\ne 0,1$ as depicted in \cref{fig:mixed-SB} and \cref{fig:mixed-BB}.}
\label{tab:mixed-table}
\end{table*}

\begin{figure*}
    \centering
    \includegraphics[width=0.8\textwidth]{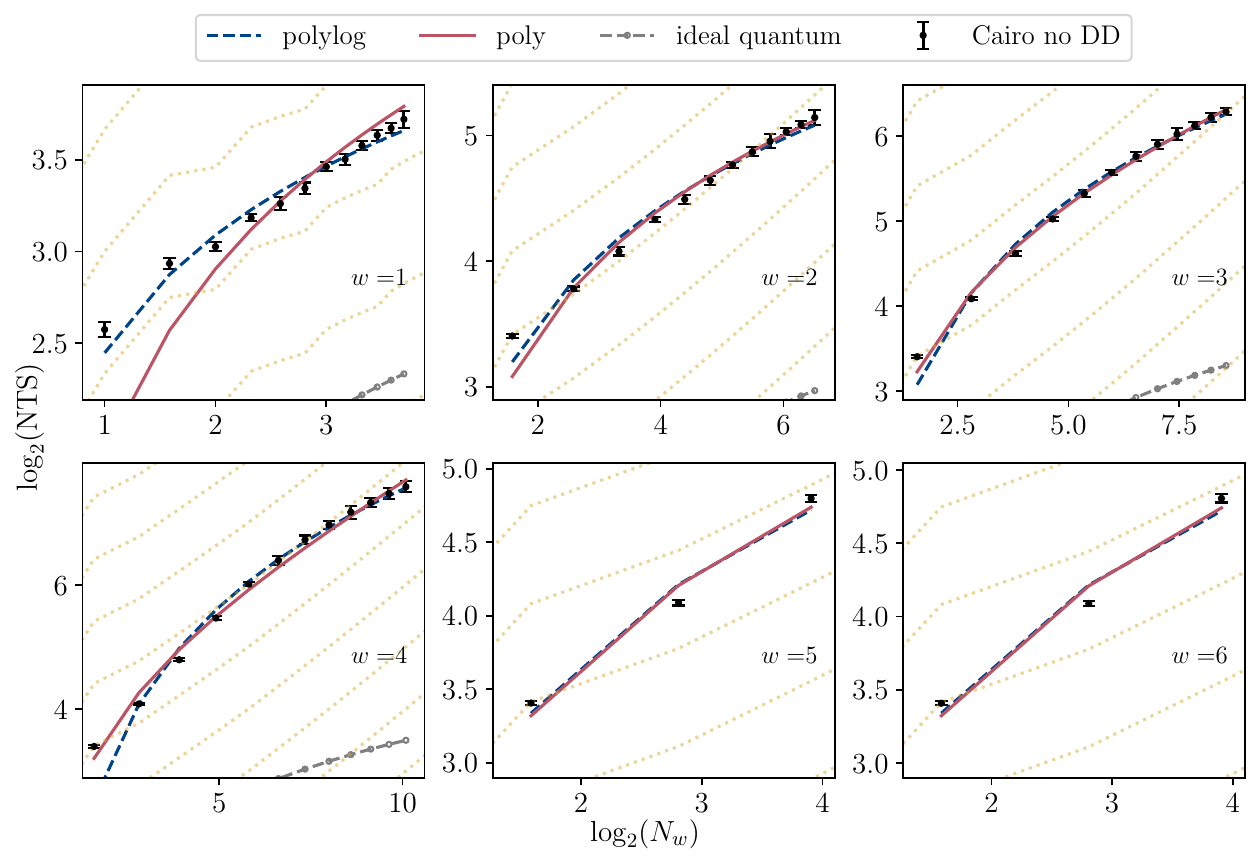}
    \includegraphics[width=0.8\textwidth]{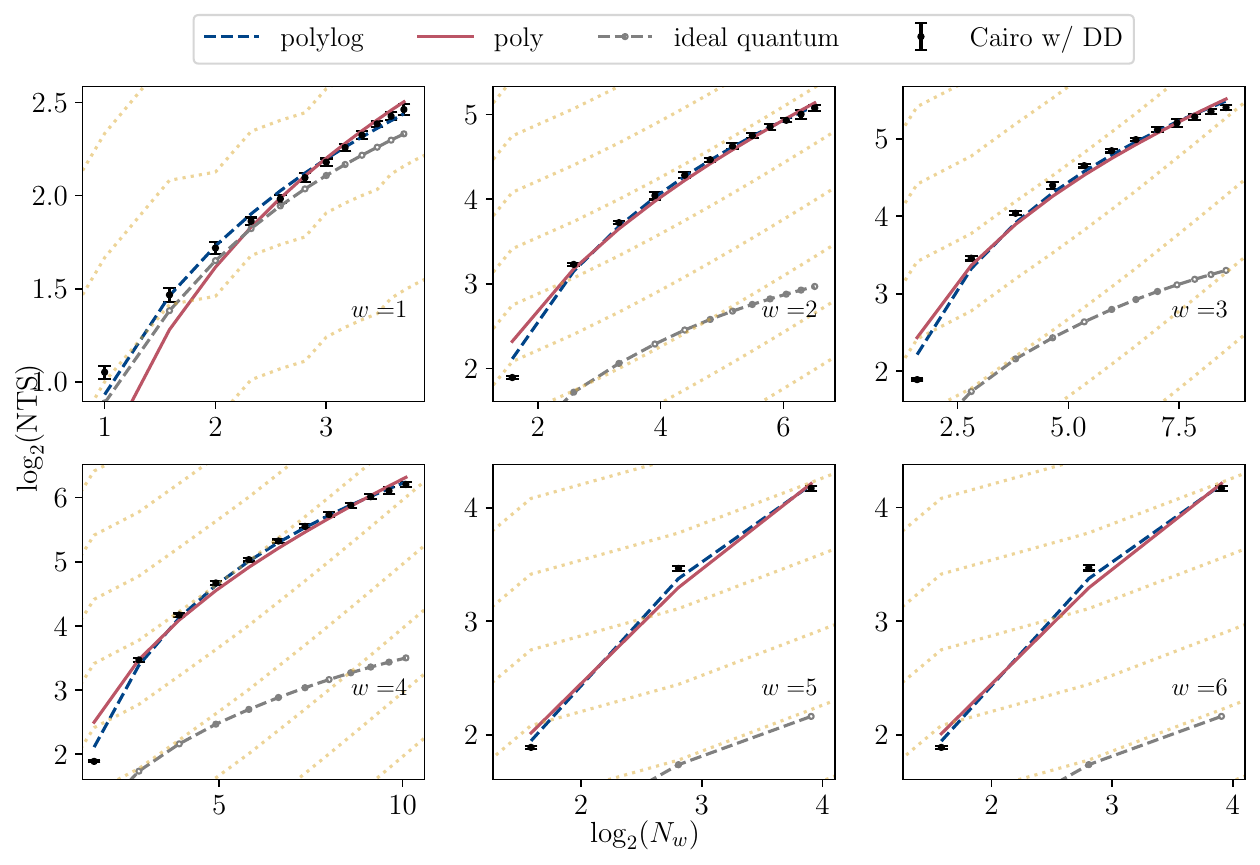}
    \caption{Transition from the polylog model (dashed blue) to the poly model (solid red) on Cairo with and without DD from HW $w=1$ to $w=6$.}
    \label{fig:mixed-CR}
\end{figure*}

\begin{figure*}
    \centering
    \includegraphics[width=0.8\textwidth]{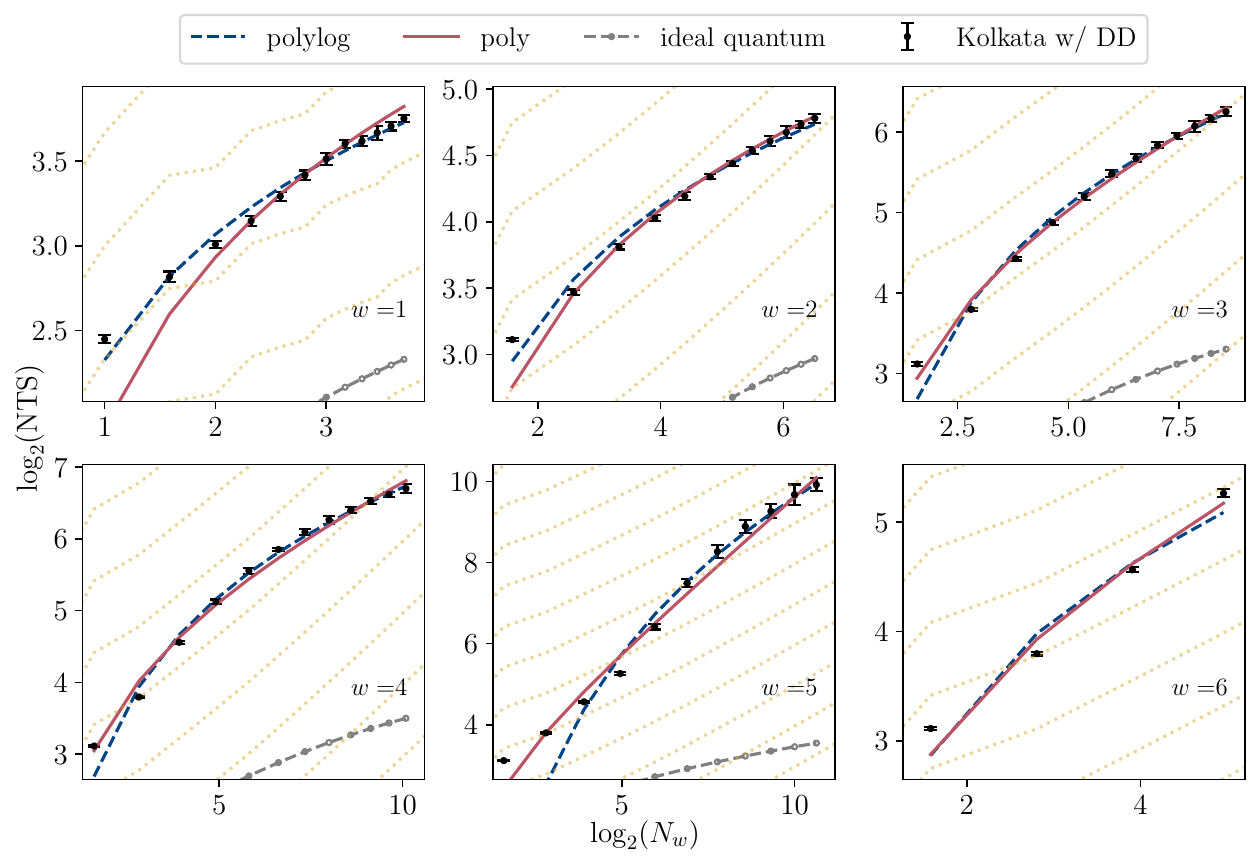}
    \caption{Transition from the polylog model (dashed blue) to the poly model (solid red) on Kolkata with DD from HW $w=1$ to $w=6$.}
    \label{fig:mixed-KK}
\end{figure*}

\section{Statistical comparison of the poly and polylog models after removing small-problem-size data points}
\label{app:stat-analysis2}

In \cref{fig:mixed-SB} and \cref{fig:mixed-BB}, we observe that a kink develops in the experimental Sherbrooke and Brisbane with DD data at problem size $\log_2(N_w)=5$, for $w\ge 5$. This could be attributed to a small-size effect that affects the first four data points in each of the plots. To account for the impact this might have on our model fits, in this section we conduct a statistical analysis similar to \cref{app:stat-analysis} after excluding these first four points.

\subsection{Tables of statistical measures}

The tables are: \cref{tab:t-stat-5plus} ($t$-statistic), \cref{tab:AIC-5plus} (Akaike Information Criterion), \cref{tab:AICc-5plus} (corrected Akaike Information Criterion), \cref{tab:BIC-5plus} (Akaike Information Criterion), \cref{tab:R^2-5plus} ($R^2$), and \cref{tab:aR^2-5plus} (adjusted $R^2$).

The tables in this section are presented for $w\le 7$ because after removing the first four points, we are left with only two data points for $w=8$ and $w=9$ [as can be seen from \cref{fig:mixed-SB} and \cref{fig:mixed-BB}], which is insufficient for curve fitting.

\begin{table*}[t]
\centering
\caption{$p$-values after removing the first 4 points (lower values in boldface)}
\begin{tabular}{|c|c|c|c|c|c|c|c|c|}
\hline
 & w & 1 & 2 & 3 & 4 & 5 & 6 & 7 \\
\hline
\multirow[t]{2}{*}{Br no} & plog & 3.7e-13 & \textbf{2.11e-19} & \textbf{1.09e-24} & \textbf{2.62e-33} &  &  &  \\
 & poly & \textbf{5.47e-31} & 0.0187 & 3.54e-14 & 7.44e-28 &  &  &  \\
\cline{1-9}
\multirow[t]{2}{*}{Sb no} & plog & 5.71e-12 & \textbf{7.95e-21} & \textbf{9.73e-23} & \textbf{1.35e-30} &  &  &  \\
 & poly & \textbf{1.89e-31} & 1.16e-05 & 0.000479 & 1.3e-23 &  &  &  \\
\cline{1-9}
\multirow[t]{2}{*}{Br w/} & plog & 1.14e-13 & \textbf{9.44e-20} & \textbf{4.69e-23} & \textbf{8.93e-28} & \textbf{1.6e-37} & \textbf{1.38e-39} & 1.35e-16 \\
 & poly & \textbf{1.71e-31} & 0.0153 & 0.000518 & 1.83e-17 & 5.22e-35 & 4.28e-36 & \textbf{1.14e-16} \\
\cline{1-9}
\multirow[t]{2}{*}{Sb w/} & plog & 1.65e-12 & \textbf{2.08e-20} & 6.61e-21 & \textbf{6.73e-28} & \textbf{1.07e-24} & 0.000158 & 0.000157 \\
 & poly & \textbf{3.59e-31} & 0.000419 & \textbf{2.45e-36} & 1.94e-17 & 1.45e-24 & \textbf{0.000153} & \textbf{0.000155} \\
\cline{1-9}
\hline
\end{tabular}
\label{tab:p-vals-5plus}
\end{table*}

\begin{table}[h!]
\centering
\caption{$t$-statistic after removing the first 4 points (higher is better).}
\begin{tabular}{|c|c|c|c|c|c|c|c|c|}
\hline
 & w & 1 & 2 & 3 & 4 & 5 & 6 & 7 \\
\hline
\multirow[t]{2}{*}{Br no} & plog & 15.21 & \textbf{30.2} & \textbf{52.92} & \textbf{130.9} &  &  &  \\
 & poly & \textbf{102.6} & 2.538 & 17.07 & 73.84 &  &  &  \\
\cline{1-9}
\multirow[t]{2}{*}{Sb no} & plog & 13.26 & \textbf{35.15} & \textbf{43.06} & \textbf{98.46} &  &  &  \\
 & poly & \textbf{107.7} & 5.63 & 4.095 & 47.24 &  &  &  \\
\cline{1-9}
\multirow[t]{2}{*}{Br w/} & plog & 16.12 & \textbf{31.35} & \textbf{44.53} & \textbf{73.23} & \textbf{203.4} & \textbf{252.5} & 231.8 \\
 & poly & \textbf{108.2} & 2.63 & 4.062 & 24.51 & 156.4 & 175.2 & \textbf{236.7} \\
\cline{1-9}
\multirow[t]{2}{*}{Sb w/} & plog & 14.12 & \textbf{33.62} & 35.46 & \textbf{74.18} & \textbf{223.7} & 79.49 & 79.74 \\
 & poly & \textbf{104.6} & 4.149 & \textbf{179.7} & 24.44 & 218.6 & \textbf{80.88} & \textbf{80.35} \\
\cline{1-9}
\hline
\end{tabular}
\label{tab:t-stat-5plus}
\end{table}

\begin{table}[h!]
\centering
\caption{Akaike Information Criterion (AIC) values after removing the first 4 points (lower is better).}
\begin{tabular}{|c|c|c|c|c|c|c|c|c|}
\hline
 & w & 1 & 2 & 3 & 4 & 5 & 6 & 7 \\
\hline
\multirow[t]{2}{*}{Br no} & plog & \textbf{-12.12} & \textbf{6.949} & \textbf{53.85} & \textbf{272.6} &  &  &  \\
 & poly & -11.48 & 7.202 & 66.91 & 606.5 &  &  &  \\
\cline{1-9}
\multirow[t]{2}{*}{Sb no} & plog & \textbf{-14.56} & \textbf{20.49} & \textbf{32.87} & \textbf{107.3} &  &  &  \\
 & poly & -7.17 & 21.45 & 33.16 & 212 &  &  &  \\
\cline{1-9}
\multirow[t]{2}{*}{Br w/} & plog & \textbf{-14.87} & \textbf{5.559} & \textbf{19.12} & \textbf{62.42} & \textbf{930.4} & \textbf{2922} & \textbf{216.7} \\
 & poly & -14.38 & 5.82 & 20.36 & 95.9 & 2557 & 6289 & 1431 \\
\cline{1-9}
\multirow[t]{2}{*}{Sb w/} & plog & \textbf{-14.66} & \textbf{15.34} & \textbf{15.74} & \textbf{48.72} & \textbf{1858} & \textbf{81.99} & \textbf{115.4} \\
 & poly & -6.222 & 15.96 & 18.64 & 74.56 & 4800 & 177.8 & 231.4 \\
\cline{1-9}
\hline
\end{tabular}
\label{tab:AIC-5plus}
\end{table}

\begin{table}[h!]
\centering
\caption{Corrected AIC (AICc) values after removing the first 4 points (lower is better).}
\begin{tabular}{|c|c|c|c|c|c|c|c|c|}
\hline
 & w & 1 & 2 & 3 & 4 & 5 & 6 & 7 \\
\hline
\multirow[t]{2}{*}{Br no} & plog & \textbf{-10.92} & \textbf{8.149} & \textbf{55.05} & \textbf{273.8} &  &  &  \\
 & poly & -10.28 & 8.402 & 68.11 & 607.7 &  &  &  \\
\cline{1-9}
\multirow[t]{2}{*}{Sb no} & plog & \textbf{-13.36} & \textbf{21.69} & \textbf{34.07} & \textbf{108.5} &  &  &  \\
 & poly & -5.97 & 22.65 & 34.36 & 213.2 &  &  &  \\
\cline{1-9}
\multirow[t]{2}{*}{Br w/} & plog & \textbf{-13.67} & \textbf{6.759} & \textbf{20.32} & \textbf{63.62} & \textbf{931.6} & \textbf{2923} & \textbf{220.7} \\
 & poly & -13.18 & 7.02 & 21.56 & 97.1 & 2558 & 6290 & 1435 \\
\cline{1-9}
\multirow[t]{2}{*}{Sb w/} & plog & \textbf{-13.46} & \textbf{16.54} & \textbf{16.94} & \textbf{49.92} & \textbf{1860} & inf & inf \\
 & poly & -5.022 & 17.16 & 19.84 & 75.76 & 4802 & inf & inf \\
\cline{1-9}
\hline
\end{tabular}
\label{tab:AICc-5plus}
\end{table}

\begin{table}[h!]
\centering
\caption{Bayesian Information Criterion (BIC) values after removing the first 4 points (lower is better).}
\begin{tabular}{|c|c|c|c|c|c|c|c|c|}
\hline
 & w & 1 & 2 & 3 & 4 & 5 & 6 & 7 \\
\hline
\multirow[t]{2}{*}{Br no} & plog & \textbf{-8.591} & \textbf{10.48} & \textbf{57.39} & \textbf{276.1} &  &  &  \\
 & poly & -7.949 & 10.74 & 70.44 & 610 &  &  &  \\
\cline{1-9}
\multirow[t]{2}{*}{Sb no} & plog & \textbf{-11.03} & \textbf{24.02} & \textbf{36.4} & \textbf{110.8} &  &  &  \\
 & poly & -3.636 & 24.99 & 36.7 & 215.5 &  &  &  \\
\cline{1-9}
\multirow[t]{2}{*}{Br w/} & plog & \textbf{-11.34} & \textbf{9.093} & \textbf{22.66} & \textbf{65.96} & \textbf{933.9} & \textbf{2926} & \textbf{217.6} \\
 & poly & -10.84 & 9.354 & 23.89 & 99.43 & 2560 & 6292 & 1432 \\
\cline{1-9}
\multirow[t]{2}{*}{Sb w/} & plog & \textbf{-11.13} & \textbf{18.88} & \textbf{19.27} & \textbf{52.25} & \textbf{1860} & \textbf{80.15} & \textbf{113.6} \\
 & poly & -2.688 & 19.49 & 22.18 & 78.09 & 4802 & 176 & 229.5 \\
\cline{1-9}
\hline
\end{tabular}
\label{tab:BIC-5plus}
\end{table}

\begin{table}[h!]
\centering
\caption{$R^2$ values after removing the first 4 points (higher is better).}
\begin{tabular}{|c|c|c|c|c|c|c|c|c|}
\hline
 & w & 1 & 2 & 3 & 4 & 5 & 6 & 7 \\
\hline
\multirow[t]{2}{*}{Br no} & plog & \textbf{0.9999} & \textbf{1.000} & \textbf{0.9998} & \textbf{0.9978} &  &  &  \\
 & poly & 0.9998 & 1.000 & 0.9994 & 0.9938 &  &  &  \\
\cline{1-9}
\multirow[t]{2}{*}{Sb no} & plog & \textbf{0.9999} & \textbf{1.000} & \textbf{0.9998} & \textbf{0.9991} &  &  &  \\
 & poly & 0.9993 & 0.9999 & 0.9998 & 0.9968 &  &  &  \\
\cline{1-9}
\multirow[t]{2}{*}{Br w/} & plog & \textbf{0.9999} & \textbf{1.000} & \textbf{1.000} & \textbf{0.9996} & \textbf{0.9902} & \textbf{0.9789} & \textbf{0.9967} \\
 & poly & 0.9998 & 1.000 & 0.9999 & 0.9988 & 0.9709 & 0.9536 & 0.9727 \\
\cline{1-9}
\multirow[t]{2}{*}{Sb w/} & plog & \textbf{1.000} & \textbf{1.000} & \textbf{1.000} & \textbf{0.9998} & \textbf{0.9804} & \textbf{0.9956} & \textbf{0.9932} \\
 & poly & 0.9992 & 1.000 & 0.9999 & 0.9992 & 0.9476 & 0.988 & 0.9845 \\
\cline{1-9}
\hline
\end{tabular}
\label{tab:R^2-5plus}
\end{table}

\begin{table}[h!]
\centering
\caption{Adjusted $R^2$ values after removing the first 4 points (higher is better).}
\begin{tabular}{|c|c|c|c|c|c|c|c|c|}
\hline
 & w & 1 & 2 & 3 & 4 & 5 & 6 & 7 \\
\hline
\multirow[t]{2}{*}{Br no} & plog & \textbf{0.9999} & \textbf{1.000} & \textbf{0.9998} & \textbf{0.9976} &  &  &  \\
 & poly & 0.9998 & 1.000 & 0.9993 & 0.9932 &  &  &  \\
\cline{1-9}
\multirow[t]{2}{*}{Sb no} & plog & \textbf{0.9999} & \textbf{1.000} & \textbf{0.9998} & \textbf{0.999} &  &  &  \\
 & poly & 0.9992 & 0.9999 & 0.9998 & 0.9965 &  &  &  \\
\cline{1-9}
\multirow[t]{2}{*}{Br w/} & plog & \textbf{0.9999} & \textbf{1.000} & \textbf{1.000} & \textbf{0.9996} & \textbf{0.9893} & \textbf{0.977} & \textbf{0.9959} \\
 & poly & 0.9998 & 1.000 & 0.9999 & 0.9987 & 0.9682 & 0.9494 & 0.9659 \\
\cline{1-9}
\multirow[t]{2}{*}{Sb w/} & plog & \textbf{1.000} & \textbf{1.000} & \textbf{1.000} & \textbf{0.9998} & \textbf{0.9774} & \textbf{0.9912} & \textbf{0.9865} \\
 & poly & 0.9991 & 0.9999 & 0.9999 & 0.9992 & 0.9395 & 0.9761 & 0.969 \\
\cline{1-9}
\hline
\end{tabular}
\label{tab:aR^2-5plus}
\end{table}

\subsection{Results for each of the four cases}

Similar to the previous section, we report the number of measures favoring each model and use a majority vote to determine which model is preferred.

The results are displayed in \cref{tab-majority-Brn-5plus} (Brisbane without DD), \cref{tab-majority-Shn-5plus} (Sherbrooke without DD), \cref{tab-majority-Brw-5plus} (Brisbane with DD), and \cref{tab-majority-Shw-5plus} (Sherbrooke with DD). The majority of measures suggest that the data follow the polylog model for all cases.

\begin{table}[h!]
\caption{Model preference by majority vote for Br no}
\begin{tabular}{c c c c}
\hline\hline
$w$ & better & {measures favoring polylog} & {measures favoring poly} \\
\hline
1 & polylog & 5 & 2 \\
2 & polylog & 7 & 0 \\
3 & polylog & 7 & 0 \\
4 & polylog & 7 & 0 \\
\hline\hline
\end{tabular}
\label{tab-majority-Brn-5plus}
\end{table}

\begin{table}[h!]
\caption{Model preference by majority vote for Sh no}
\begin{tabular}{c c c c}
\hline\hline
$w$ & better & {measures favoring polylog} & {measures favoring poly} \\
1 & polylog & 5 & 2 \\
2 & polylog & 7 & 0 \\
3 & polylog & 7 & 0 \\
4 & polylog & 7 & 0 \\
\hline\hline
\end{tabular}
\label{tab-majority-Shn-5plus}
\end{table}

\begin{table}[h!]
\caption{Model preference by majority vote for Br w/}
\begin{tabular}{c c c c}
\hline\hline
$w$ & better & {measures favoring polylog} & {measures favoring poly} \\
\hline
1 & {polylog} & 5 & 2 \\
2 & {polylog} & 7 & 0 \\
3 & {polylog} & 7 & 0 \\
4 & {polylog} & 7 & 0 \\
5 & {polylog} & 7 & 0 \\
6 & {polylog} & 7 & 0 \\
7 & {polylog} & 5 & 2 \\
\hline\hline
\end{tabular}
\label{tab-majority-Brw-5plus}
\end{table}

\begin{table}[h!]
\caption{Model preference by majority vote for Sh w/}
\begin{tabular}{c c c c}
\hline\hline
$w$ & better & {measures favoring polylog} & {measures favoring poly} \\
\hline
1 & {polylog} & 5 & 2 \\
2 & {polylog} & 7 & 0 \\
3 & {polylog} & 5 & 2 \\
4 & {polylog} & 7 & 0 \\
5 & {polylog} & 7 & 0 \\
6 & {polylog} & 4 & 2 \\
7 & {polylog} & 4 & 2 \\
\hline\hline
\end{tabular}
\label{tab-majority-Shw-5plus}
\end{table}

\subsection{Akaike weights: results}

The Akaike weights after removing the small problem-size data points indicate that the data follow the polylog model for all cases, in agreement with the majority vote results.

The results are presented in \cref{tab:AIC-weights-Brn-5plus} (Brisbane without DD), \cref{tab:AIC-weights-Shn-5plus} (Sherbrooke without DD), \cref{tab:AIC-weights-Brw-5plus} (Brisbane with DD), and \cref{tab:AIC-weights-Shw-5plus} (Sherbrooke with DD).

\begin{table}[h!]
\centering
\caption{AIC and Akaike Weights for Brisbane without DD}
\begin{tabular}{|c|c|c|c|c|c|}
\hline
 $w$ & AIC$_\text{polylog}$ & AIC$_\text{poly}$ & $\Delta$AIC & $w_\text{polylog}$ & favored model\\
\hline
1 & -12.12 & -11.48 & 0.64 & 57.95\% & polylog\\
2 & 6.95 & 7.20 & 0.25 & 53.16\% & polylog\\
3 & 53.85 & 66.91 & 13.05 & 99.85\% & polylog\\
4 & 272.58 & 606.46 & 333.88 & $\sim$ 100\% & polylog\\
\hline
\end{tabular}
\label{tab:AIC-weights-Brn-5plus}
\end{table}

\begin{table}[h!]
\centering
\caption{AIC and Akaike Weights for Sherbrooke without DD}
\begin{tabular}{|c|c|c|c|c|c|}
\hline
 $w$ & AIC$_\text{polylog}$ & AIC$_\text{poly}$ & $\Delta$AIC & $w_\text{polylog}$ & favored model\\
\hline
1 & -14.56 & -7.17 & 7.39 & 97.58\% & polylog\\
2 & 20.49 & 21.45 & 0.97 & 61.85\% & polylog\\
3 & 32.87 & 33.16 & 0.30 & 53.70\% & polylog\\
4 & 107.30 & 212.01 & 104.72 & $\sim$ 100\% & polylog\\
\hline
\end{tabular}
\label{tab:AIC-weights-Shn-5plus}
\end{table}

\begin{table}[h!]
\centering
\caption{AIC and Akaike Weights for Brisbane with DD}
\begin{tabular}{|c|c|c|c|c|c|}
\hline
 $w$ & AIC$_\text{polylog}$ & AIC$_\text{poly}$ & $\Delta$AIC & $w_\text{polylog}$ & favored model\\
\hline
1 & -14.87 & -14.38 & 0.50 & 56.18\% & polylog\\
2 & 5.56 & 5.82 & 0.26 & 53.26\% & polylog\\
3 & 19.12 & 20.36 & 1.23 & 64.95\% & polylog\\
4 & 62.42 & 95.90 & 33.48 & $\sim$ 100\% & polylog\\
5 & 930.37 & 2556.69 & 1626.32 & $\sim$ 100\% & polylog\\
6 & 2922.02 & 6288.71 & 3366.69 & $\sim$ 100\% & polylog\\
7 & 216.71 & 1431.37 & 1214.67 & $\sim$ 100\% & polylog\\
\hline
\end{tabular}
\label{tab:AIC-weights-Brw-5plus}
\end{table}

\begin{table}[h!]
\centering
\caption{AIC and Akaike Weights for Sherbrooke with DD}
\begin{tabular}{|c|c|c|c|c|c|}
\hline
 $w$ & AIC$_\text{polylog}$ & AIC$_\text{poly}$ & $\Delta$AIC & $w_\text{polylog}$ & favored model\\
\hline
1 & -14.66 & -6.22 & 8.44 & 98.55\% & polylog\\
2 & 15.34 & 15.96 & 0.61 & 57.63\% & polylog\\
3 & 15.74 & 18.64 & 2.90 & 81.03\% & polylog\\
4 & 48.72 & 74.56 & 25.84 & $\sim$ 100\% & polylog\\
5 & 1857.87 & 4800.09 & 2942.22 &$\sim$ 100\% & polylog\\
6 & 81.99 & 177.80 & 95.82 & $\sim$ 100\% & polylog\\
7 & 115.44 & 231.36 & 115.92 & $\sim$ 100\% & polylog\\
\hline
\end{tabular}
\label{tab:AIC-weights-Shw-5plus}
\end{table}

\clearpage


\begin{thebibliography}{79}%
\makeatletter
\providecommand \@ifxundefined [1]{%
 \@ifx{#1\undefined}
}%
\providecommand \@ifnum [1]{%
 \ifnum #1\expandafter \@firstoftwo
 \else \expandafter \@secondoftwo
 \fi
}%
\providecommand \@ifx [1]{%
 \ifx #1\expandafter \@firstoftwo
 \else \expandafter \@secondoftwo
 \fi
}%
\providecommand \natexlab [1]{#1}%
\providecommand \enquote  [1]{``#1''}%
\providecommand \bibnamefont  [1]{#1}%
\providecommand \bibfnamefont [1]{#1}%
\providecommand \citenamefont [1]{#1}%
\providecommand \href@noop [0]{\@secondoftwo}%
\providecommand \href [0]{\begingroup \@sanitize@url \@href}%
\providecommand \@href[1]{\@@startlink{#1}\@@href}%
\providecommand \@@href[1]{\endgroup#1\@@endlink}%
\providecommand \@sanitize@url [0]{\catcode `\\12\catcode `\$12\catcode
  `\&12\catcode `\#12\catcode `\^12\catcode `\_12\catcode `\%12\relax}%
\providecommand \@@startlink[1]{}%
\providecommand \@@endlink[0]{}%
\providecommand \url  [0]{\begingroup\@sanitize@url \@url }%
\providecommand \@url [1]{\endgroup\@href {#1}{\urlprefix }}%
\providecommand \urlprefix  [0]{URL }%
\providecommand \Eprint [0]{\href }%
\providecommand \doibase [0]{http://dx.doi.org/}%
\providecommand \selectlanguage [0]{\@gobble}%
\providecommand \bibinfo  [0]{\@secondoftwo}%
\providecommand \bibfield  [0]{\@secondoftwo}%
\providecommand \translation [1]{[#1]}%
\providecommand \BibitemOpen [0]{}%
\providecommand \bibitemStop [0]{}%
\providecommand \bibitemNoStop [0]{.\EOS\space}%
\providecommand \EOS [0]{\spacefactor3000\relax}%
\providecommand \BibitemShut  [1]{\csname bibitem#1\endcsname}%
\let\auto@bib@innerbib\@empty
\bibitem [{\citenamefont {Deutsch}\ and\ \citenamefont
  {Jozsa}(1992)}]{Deutsch:92}%
  \BibitemOpen
  \bibfield  {author} {\bibinfo {author} {\bibfnamefont {David}\ \bibnamefont
  {Deutsch}}\ and\ \bibinfo {author} {\bibfnamefont {Richard}\ \bibnamefont
  {Jozsa}},\ }\bibfield  {title} {\enquote {\bibinfo {title} {Rapid solution of
  problems by quantum computation},}\ }\bibfield  {booktitle} {\emph {\bibinfo
  {booktitle} {Proceedings of the Royal Society of London. Series A:
  Mathematical and Physical Sciences}},\ }\href {\doibase
  10.1098/rspa.1992.0167} {\bibfield  {journal} {\bibinfo  {journal}
  {Proceedings of the Royal Society of London. Series A: Mathematical and
  Physical Sciences}\ }\textbf {\bibinfo {volume} {439}},\ \bibinfo {pages}
  {553--558} (\bibinfo {year} {1992})}\BibitemShut {NoStop}%
\bibitem [{\citenamefont {Bernstein}\ and\ \citenamefont
  {Vazirani}(1997)}]{bernsteinQuantumComplexityTheory1997}%
  \BibitemOpen
  \bibfield  {author} {\bibinfo {author} {\bibfnamefont {Ethan}\ \bibnamefont
  {Bernstein}}\ and\ \bibinfo {author} {\bibfnamefont {Umesh}\ \bibnamefont
  {Vazirani}},\ }\bibfield  {title} {\enquote {\bibinfo {title} {{Quantum
  Complexity Theory}},}\ }\href {https://doi.org/10.1137/S0097539796300921}
  {\bibfield  {journal} {\bibinfo  {journal} {SIAM J. Comput.}\ }\textbf
  {\bibinfo {volume} {26}},\ \bibinfo {pages} {1411--1473} (\bibinfo {year}
  {1997})}\BibitemShut {NoStop}%
\bibitem [{\citenamefont {Simon}(1997)}]{Simon:94}%
  \BibitemOpen
  \bibfield  {author} {\bibinfo {author} {\bibfnamefont {Daniel~R.}\
  \bibnamefont {Simon}},\ }\bibfield  {title} {\enquote {\bibinfo {title} {On
  the power of quantum computation},}\ }\href
  {https://doi.org/10.1137/S0097539796298637} {\bibfield  {journal} {\bibinfo
  {journal} {SIAM Journal on Computing}\ }\textbf {\bibinfo {volume} {26}},\
  \bibinfo {pages} {1474--1483} (\bibinfo {year} {1997})}\BibitemShut {NoStop}%
\bibitem [{\citenamefont {Grover}(1997)}]{Grover:97a}%
  \BibitemOpen
  \bibfield  {author} {\bibinfo {author} {\bibfnamefont {Lov~K.}\ \bibnamefont
  {Grover}},\ }\bibfield  {title} {\enquote {\bibinfo {title} {Quantum
  mechanics helps in searching for a needle in a haystack},}\ }\href
  {http://link.aps.org/doi/10.1103/PhysRevLett.79.325} {\bibfield  {journal}
  {\bibinfo  {journal} {Phys. Rev. Lett.}\ }\textbf {\bibinfo {volume} {79}},\
  \bibinfo {pages} {325--328} (\bibinfo {year} {1997})}\BibitemShut {NoStop}%
\bibitem [{\citenamefont {Shor}(1997)}]{Shor:97}%
  \BibitemOpen
  \bibfield  {author} {\bibinfo {author} {\bibfnamefont {P.}~\bibnamefont
  {Shor}},\ }\bibfield  {title} {\enquote {\bibinfo {title} {Polynomial-time
  algorithms for prime factorization and discrete logarithms on a quantum
  computer},}\ }\href {\doibase 10.1137/S0097539795293172} {\bibfield
  {journal} {\bibinfo  {journal} {SIAM Journal on Computing}\ }\textbf
  {\bibinfo {volume} {26}},\ \bibinfo {pages} {1484--1509} (\bibinfo {year}
  {1997})}\BibitemShut {NoStop}%
\bibitem [{\citenamefont {Childs}\ \emph {et~al.}(2003)\citenamefont {Childs},
  \citenamefont {Cleve}, \citenamefont {Deotto}, \citenamefont {Farhi},
  \citenamefont {Gutmann},\ and\ \citenamefont
  {Spielman}}]{childs2003exponential}%
  \BibitemOpen
  \bibfield  {author} {\bibinfo {author} {\bibfnamefont {Andrew~M}\
  \bibnamefont {Childs}}, \bibinfo {author} {\bibfnamefont {Richard}\
  \bibnamefont {Cleve}}, \bibinfo {author} {\bibfnamefont {Enrico}\
  \bibnamefont {Deotto}}, \bibinfo {author} {\bibfnamefont {Edward}\
  \bibnamefont {Farhi}}, \bibinfo {author} {\bibfnamefont {Sam}\ \bibnamefont
  {Gutmann}}, \ and\ \bibinfo {author} {\bibfnamefont {Daniel~A}\ \bibnamefont
  {Spielman}},\ }\bibfield  {title} {\enquote {\bibinfo {title} {Exponential
  algorithmic speedup by a quantum walk},}\ }in\ \href {\doibase
  10.1145/780542.780552} {\emph {\bibinfo {booktitle} {Proceedings of the
  thirty-fifth annual ACM symposium on Theory of computing}}}\ (\bibinfo
  {organization} {ACM},\ \bibinfo {year} {2003})\ pp.\ \bibinfo {pages}
  {59--68}\BibitemShut {NoStop}%
\bibitem [{\citenamefont {Van~Dam}\ \emph {et~al.}(2006)\citenamefont
  {Van~Dam}, \citenamefont {Hallgren},\ and\ \citenamefont
  {Ip}}]{Van-Dam:2006aa}%
  \BibitemOpen
  \bibfield  {author} {\bibinfo {author} {\bibfnamefont {Wim}\ \bibnamefont
  {Van~Dam}}, \bibinfo {author} {\bibfnamefont {Sean}\ \bibnamefont
  {Hallgren}}, \ and\ \bibinfo {author} {\bibfnamefont {Lawrence}\ \bibnamefont
  {Ip}},\ }\bibfield  {title} {\enquote {\bibinfo {title} {Quantum algorithms
  for some hidden shift problems},}\ }\href
  {https://epubs.siam.org/doi/pdf/10.1137/S009753970343141X} {\bibfield
  {journal} {\bibinfo  {journal} {SIAM Journal on Computing}\ }\textbf
  {\bibinfo {volume} {36}},\ \bibinfo {pages} {763--778} (\bibinfo {year}
  {2006})}\BibitemShut {NoStop}%
\bibitem [{\citenamefont {Harrow}\ \emph {et~al.}(2009)\citenamefont {Harrow},
  \citenamefont {Hassidim},\ and\ \citenamefont {Lloyd}}]{Harrow:2009aa}%
  \BibitemOpen
  \bibfield  {author} {\bibinfo {author} {\bibfnamefont {Aram~W.}\ \bibnamefont
  {Harrow}}, \bibinfo {author} {\bibfnamefont {Avinatan}\ \bibnamefont
  {Hassidim}}, \ and\ \bibinfo {author} {\bibfnamefont {Seth}\ \bibnamefont
  {Lloyd}},\ }\bibfield  {title} {\enquote {\bibinfo {title} {Quantum algorithm
  for linear systems of equations},}\ }\href {\doibase
  10.1103/PhysRevLett.103.150502} {\bibfield  {journal} {\bibinfo  {journal}
  {Physical Review Letters}\ }\textbf {\bibinfo {volume} {103}},\ \bibinfo
  {pages} {150502--} (\bibinfo {year} {2009})}\BibitemShut {NoStop}%
\bibitem [{\citenamefont
  {Montanaro}(2016)}]{montanaroQuantumAlgorithmsOverview2016}%
  \BibitemOpen
  \bibfield  {author} {\bibinfo {author} {\bibfnamefont {Ashley}\ \bibnamefont
  {Montanaro}},\ }\bibfield  {title} {\enquote {\bibinfo {title} {Quantum
  algorithms: An overview},}\ }\href
  {https://www.nature.com/articles/npjqi201523} {\bibfield  {journal} {\bibinfo
   {journal} {NPJ Quantum Inf.}\ }\textbf {\bibinfo {volume} {2}},\ \bibinfo
  {pages} {15023} (\bibinfo {year} {2016})}\BibitemShut {NoStop}%
\bibitem [{\citenamefont {Bravyi}\ \emph {et~al.}(2018)\citenamefont {Bravyi},
  \citenamefont {Gosset},\ and\ \citenamefont {K{\"o}nig}}]{Bravyi:2017aa}%
  \BibitemOpen
  \bibfield  {author} {\bibinfo {author} {\bibfnamefont {Sergey}\ \bibnamefont
  {Bravyi}}, \bibinfo {author} {\bibfnamefont {David}\ \bibnamefont {Gosset}},
  \ and\ \bibinfo {author} {\bibfnamefont {Robert}\ \bibnamefont {K{\"o}nig}},\
  }\bibfield  {title} {\enquote {\bibinfo {title} {Quantum advantage with
  shallow circuits},}\ }\href {\doibase 10.1126/science.aar3106} {\bibfield
  {journal} {\bibinfo  {journal} {Science}\ }\textbf {\bibinfo {volume}
  {362}},\ \bibinfo {pages} {308} (\bibinfo {year} {2018})}\BibitemShut
  {NoStop}%
\bibitem [{\citenamefont {Bravyi}\ \emph {et~al.}(2020)\citenamefont {Bravyi},
  \citenamefont {Gosset}, \citenamefont {K{\"o}nig},\ and\ \citenamefont
  {Tomamichel}}]{Bravyi:2020aa}%
  \BibitemOpen
  \bibfield  {author} {\bibinfo {author} {\bibfnamefont {Sergey}\ \bibnamefont
  {Bravyi}}, \bibinfo {author} {\bibfnamefont {David}\ \bibnamefont {Gosset}},
  \bibinfo {author} {\bibfnamefont {Robert}\ \bibnamefont {K{\"o}nig}}, \ and\
  \bibinfo {author} {\bibfnamefont {Marco}\ \bibnamefont {Tomamichel}},\
  }\bibfield  {title} {\enquote {\bibinfo {title} {Quantum advantage with noisy
  shallow circuits},}\ }\href {\doibase 10.1038/s41567-020-0948-z} {\bibfield
  {journal} {\bibinfo  {journal} {Nature Physics}\ }\textbf {\bibinfo {volume}
  {16}},\ \bibinfo {pages} {1040--1045} (\bibinfo {year} {2020})}\BibitemShut
  {NoStop}%
\bibitem [{\citenamefont {Bharti}\ \emph {et~al.}(2022)\citenamefont {Bharti},
  \citenamefont {Cervera-Lierta}, \citenamefont {Kyaw}, \citenamefont {Haug},
  \citenamefont {Alperin-Lea}, \citenamefont {Anand}, \citenamefont {Degroote},
  \citenamefont {Heimonen}, \citenamefont {Kottmann}, \citenamefont {Menke},
  \citenamefont {Mok}, \citenamefont {Sim}, \citenamefont {Kwek},\ and\
  \citenamefont {Aspuru-Guzik}}]{Bharti:2022aa}%
  \BibitemOpen
  \bibfield  {author} {\bibinfo {author} {\bibfnamefont {Kishor}\ \bibnamefont
  {Bharti}}, \bibinfo {author} {\bibfnamefont {Alba}\ \bibnamefont
  {Cervera-Lierta}}, \bibinfo {author} {\bibfnamefont {Thi~Ha}\ \bibnamefont
  {Kyaw}}, \bibinfo {author} {\bibfnamefont {Tobias}\ \bibnamefont {Haug}},
  \bibinfo {author} {\bibfnamefont {Sumner}\ \bibnamefont {Alperin-Lea}},
  \bibinfo {author} {\bibfnamefont {Abhinav}\ \bibnamefont {Anand}}, \bibinfo
  {author} {\bibfnamefont {Matthias}\ \bibnamefont {Degroote}}, \bibinfo
  {author} {\bibfnamefont {Hermanni}\ \bibnamefont {Heimonen}}, \bibinfo
  {author} {\bibfnamefont {Jakob~S.}\ \bibnamefont {Kottmann}}, \bibinfo
  {author} {\bibfnamefont {Tim}\ \bibnamefont {Menke}}, \bibinfo {author}
  {\bibfnamefont {Wai-Keong}\ \bibnamefont {Mok}}, \bibinfo {author}
  {\bibfnamefont {Sukin}\ \bibnamefont {Sim}}, \bibinfo {author} {\bibfnamefont
  {Leong-Chuan}\ \bibnamefont {Kwek}}, \ and\ \bibinfo {author} {\bibfnamefont
  {Al{\'a}n}\ \bibnamefont {Aspuru-Guzik}},\ }\bibfield  {title} {\enquote
  {\bibinfo {title} {Noisy intermediate-scale quantum algorithms},}\ }\href
  {\doibase 10.1103/RevModPhys.94.015004} {\bibfield  {journal} {\bibinfo
  {journal} {Reviews of Modern Physics}\ }\textbf {\bibinfo {volume} {94}},\
  \bibinfo {pages} {015004--} (\bibinfo {year} {2022})}\BibitemShut {NoStop}%
\bibitem [{\citenamefont {Daley}\ \emph {et~al.}(2022)\citenamefont {Daley},
  \citenamefont {Bloch}, \citenamefont {Kokail}, \citenamefont {Flannigan},
  \citenamefont {Pearson}, \citenamefont {Troyer},\ and\ \citenamefont
  {Zoller}}]{Daley:2022vu}%
  \BibitemOpen
  \bibfield  {author} {\bibinfo {author} {\bibfnamefont {Andrew~J.}\
  \bibnamefont {Daley}}, \bibinfo {author} {\bibfnamefont {Immanuel}\
  \bibnamefont {Bloch}}, \bibinfo {author} {\bibfnamefont {Christian}\
  \bibnamefont {Kokail}}, \bibinfo {author} {\bibfnamefont {Stuart}\
  \bibnamefont {Flannigan}}, \bibinfo {author} {\bibfnamefont {Natalie}\
  \bibnamefont {Pearson}}, \bibinfo {author} {\bibfnamefont {Matthias}\
  \bibnamefont {Troyer}}, \ and\ \bibinfo {author} {\bibfnamefont {Peter}\
  \bibnamefont {Zoller}},\ }\bibfield  {title} {\enquote {\bibinfo {title}
  {Practical quantum advantage in quantum simulation},}\ }\href {\doibase
  10.1038/s41586-022-04940-6} {\bibfield  {journal} {\bibinfo  {journal}
  {Nature}\ }\textbf {\bibinfo {volume} {607}},\ \bibinfo {pages} {667--676}
  (\bibinfo {year} {2022})}\BibitemShut {NoStop}%
\bibitem [{\citenamefont {Preskill}(2018)}]{Preskill2018}%
  \BibitemOpen
  \bibfield  {author} {\bibinfo {author} {\bibfnamefont {John}\ \bibnamefont
  {Preskill}},\ }\bibfield  {title} {\enquote {\bibinfo {title} {Quantum
  {C}omputing in the {NISQ} era and beyond},}\ }\href {\doibase
  10.22331/q-2018-08-06-79} {\bibfield  {journal} {\bibinfo  {journal}
  {{Quantum}}\ }\textbf {\bibinfo {volume} {2}},\ \bibinfo {pages} {79}
  (\bibinfo {year} {2018})}\BibitemShut {NoStop}%
\bibitem [{\citenamefont {Albash}\ and\ \citenamefont
  {Lidar}(2018)}]{Albash:2017aa}%
  \BibitemOpen
  \bibfield  {author} {\bibinfo {author} {\bibfnamefont {Tameem}\ \bibnamefont
  {Albash}}\ and\ \bibinfo {author} {\bibfnamefont {Daniel~A.}\ \bibnamefont
  {Lidar}},\ }\bibfield  {title} {\enquote {\bibinfo {title} {Demonstration of
  a scaling advantage for a quantum annealer over simulated annealing},}\
  }\href {\doibase 10.1103/PhysRevX.8.031016} {\bibfield  {journal} {\bibinfo
  {journal} {Physical Review X}\ }\textbf {\bibinfo {volume} {8}},\ \bibinfo
  {pages} {031016--} (\bibinfo {year} {2018})}\BibitemShut {NoStop}%
\bibitem [{\citenamefont {King}\ \emph {et~al.}(2021)\citenamefont {King},
  \citenamefont {Raymond}, \citenamefont {Lanting}, \citenamefont {Isakov},
  \citenamefont {Mohseni}, \citenamefont {Poulin-Lamarre}, \citenamefont
  {Ejtemaee}, \citenamefont {Bernoudy}, \citenamefont {Ozfidan}, \citenamefont
  {Smirnov}, \citenamefont {Reis}, \citenamefont {Altomare}, \citenamefont
  {Babcock}, \citenamefont {Baron}, \citenamefont {Berkley}, \citenamefont
  {Boothby}, \citenamefont {Bunyk}, \citenamefont {Christiani}, \citenamefont
  {Enderud}, \citenamefont {Evert}, \citenamefont {Harris}, \citenamefont
  {Hoskinson}, \citenamefont {Huang}, \citenamefont {Jooya}, \citenamefont
  {Khodabandelou}, \citenamefont {Ladizinsky}, \citenamefont {Li},
  \citenamefont {Lott}, \citenamefont {MacDonald}, \citenamefont {Marsden},
  \citenamefont {Marsden}, \citenamefont {Medina}, \citenamefont {Molavi},
  \citenamefont {Neufeld}, \citenamefont {Norouzpour}, \citenamefont {Oh},
  \citenamefont {Pavlov}, \citenamefont {Perminov}, \citenamefont {Prescott},
  \citenamefont {Rich}, \citenamefont {Sato}, \citenamefont {Sheldan},
  \citenamefont {Sterling}, \citenamefont {Swenson}, \citenamefont {Tsai},
  \citenamefont {Volkmann}, \citenamefont {Whittaker}, \citenamefont
  {Wilkinson}, \citenamefont {Yao}, \citenamefont {Neven}, \citenamefont
  {Hilton}, \citenamefont {Ladizinsky}, \citenamefont {Johnson},\ and\
  \citenamefont {Amin}}]{King:2019aa}%
  \BibitemOpen
  \bibfield  {author} {\bibinfo {author} {\bibfnamefont {Andrew~D.}\
  \bibnamefont {King}}, \bibinfo {author} {\bibfnamefont {Jack}\ \bibnamefont
  {Raymond}}, \bibinfo {author} {\bibfnamefont {Trevor}\ \bibnamefont
  {Lanting}}, \bibinfo {author} {\bibfnamefont {Sergei~V.}\ \bibnamefont
  {Isakov}}, \bibinfo {author} {\bibfnamefont {Masoud}\ \bibnamefont
  {Mohseni}}, \bibinfo {author} {\bibfnamefont {Gabriel}\ \bibnamefont
  {Poulin-Lamarre}}, \bibinfo {author} {\bibfnamefont {Sara}\ \bibnamefont
  {Ejtemaee}}, \bibinfo {author} {\bibfnamefont {William}\ \bibnamefont
  {Bernoudy}}, \bibinfo {author} {\bibfnamefont {Isil}\ \bibnamefont
  {Ozfidan}}, \bibinfo {author} {\bibfnamefont {Anatoly~Yu.}\ \bibnamefont
  {Smirnov}}, \bibinfo {author} {\bibfnamefont {Mauricio}\ \bibnamefont
  {Reis}}, \bibinfo {author} {\bibfnamefont {Fabio}\ \bibnamefont {Altomare}},
  \bibinfo {author} {\bibfnamefont {Michael}\ \bibnamefont {Babcock}}, \bibinfo
  {author} {\bibfnamefont {Catia}\ \bibnamefont {Baron}}, \bibinfo {author}
  {\bibfnamefont {Andrew~J.}\ \bibnamefont {Berkley}}, \bibinfo {author}
  {\bibfnamefont {Kelly}\ \bibnamefont {Boothby}}, \bibinfo {author}
  {\bibfnamefont {Paul~I.}\ \bibnamefont {Bunyk}}, \bibinfo {author}
  {\bibfnamefont {Holly}\ \bibnamefont {Christiani}}, \bibinfo {author}
  {\bibfnamefont {Colin}\ \bibnamefont {Enderud}}, \bibinfo {author}
  {\bibfnamefont {Bram}\ \bibnamefont {Evert}}, \bibinfo {author}
  {\bibfnamefont {Richard}\ \bibnamefont {Harris}}, \bibinfo {author}
  {\bibfnamefont {Emile}\ \bibnamefont {Hoskinson}}, \bibinfo {author}
  {\bibfnamefont {Shuiyuan}\ \bibnamefont {Huang}}, \bibinfo {author}
  {\bibfnamefont {Kais}\ \bibnamefont {Jooya}}, \bibinfo {author}
  {\bibfnamefont {Ali}\ \bibnamefont {Khodabandelou}}, \bibinfo {author}
  {\bibfnamefont {Nicolas}\ \bibnamefont {Ladizinsky}}, \bibinfo {author}
  {\bibfnamefont {Ryan}\ \bibnamefont {Li}}, \bibinfo {author} {\bibfnamefont
  {P.~Aaron}\ \bibnamefont {Lott}}, \bibinfo {author} {\bibfnamefont {Allison
  J.~R.}\ \bibnamefont {MacDonald}}, \bibinfo {author} {\bibfnamefont {Danica}\
  \bibnamefont {Marsden}}, \bibinfo {author} {\bibfnamefont {Gaelen}\
  \bibnamefont {Marsden}}, \bibinfo {author} {\bibfnamefont {Teresa}\
  \bibnamefont {Medina}}, \bibinfo {author} {\bibfnamefont {Reza}\ \bibnamefont
  {Molavi}}, \bibinfo {author} {\bibfnamefont {Richard}\ \bibnamefont
  {Neufeld}}, \bibinfo {author} {\bibfnamefont {Mana}\ \bibnamefont
  {Norouzpour}}, \bibinfo {author} {\bibfnamefont {Travis}\ \bibnamefont {Oh}},
  \bibinfo {author} {\bibfnamefont {Igor}\ \bibnamefont {Pavlov}}, \bibinfo
  {author} {\bibfnamefont {Ilya}\ \bibnamefont {Perminov}}, \bibinfo {author}
  {\bibfnamefont {Thomas}\ \bibnamefont {Prescott}}, \bibinfo {author}
  {\bibfnamefont {Chris}\ \bibnamefont {Rich}}, \bibinfo {author}
  {\bibfnamefont {Yuki}\ \bibnamefont {Sato}}, \bibinfo {author} {\bibfnamefont
  {Benjamin}\ \bibnamefont {Sheldan}}, \bibinfo {author} {\bibfnamefont
  {George}\ \bibnamefont {Sterling}}, \bibinfo {author} {\bibfnamefont
  {Loren~J.}\ \bibnamefont {Swenson}}, \bibinfo {author} {\bibfnamefont
  {Nicholas}\ \bibnamefont {Tsai}}, \bibinfo {author} {\bibfnamefont {Mark~H.}\
  \bibnamefont {Volkmann}}, \bibinfo {author} {\bibfnamefont {Jed~D.}\
  \bibnamefont {Whittaker}}, \bibinfo {author} {\bibfnamefont {Warren}\
  \bibnamefont {Wilkinson}}, \bibinfo {author} {\bibfnamefont {Jason}\
  \bibnamefont {Yao}}, \bibinfo {author} {\bibfnamefont {Hartmut}\ \bibnamefont
  {Neven}}, \bibinfo {author} {\bibfnamefont {Jeremy~P.}\ \bibnamefont
  {Hilton}}, \bibinfo {author} {\bibfnamefont {Eric}\ \bibnamefont
  {Ladizinsky}}, \bibinfo {author} {\bibfnamefont {Mark~W.}\ \bibnamefont
  {Johnson}}, \ and\ \bibinfo {author} {\bibfnamefont {Mohammad~H.}\
  \bibnamefont {Amin}},\ }\bibfield  {title} {\enquote {\bibinfo {title}
  {Scaling advantage over path-integral monte carlo in quantum simulation of
  geometrically frustrated magnets},}\ }\href {\doibase
  10.1038/s41467-021-20901-5} {\bibfield  {journal} {\bibinfo  {journal}
  {Nature Communications}\ }\textbf {\bibinfo {volume} {12}},\ \bibinfo {pages}
  {1113} (\bibinfo {year} {2021})}\BibitemShut {NoStop}%
\bibitem [{\citenamefont {Saggio}\ \emph {et~al.}(2021)\citenamefont {Saggio},
  \citenamefont {Asenbeck}, \citenamefont {Hamann}, \citenamefont
  {Str{\"o}mberg}, \citenamefont {Schiansky}, \citenamefont {Dunjko},
  \citenamefont {Friis}, \citenamefont {Harris}, \citenamefont {Hochberg},
  \citenamefont {Englund}, \citenamefont {W{\"o}lk}, \citenamefont {Briegel},\
  and\ \citenamefont {Walther}}]{Saggio:2021vh}%
  \BibitemOpen
  \bibfield  {author} {\bibinfo {author} {\bibfnamefont {V.}~\bibnamefont
  {Saggio}}, \bibinfo {author} {\bibfnamefont {B.~E.}\ \bibnamefont
  {Asenbeck}}, \bibinfo {author} {\bibfnamefont {A.}~\bibnamefont {Hamann}},
  \bibinfo {author} {\bibfnamefont {T.}~\bibnamefont {Str{\"o}mberg}}, \bibinfo
  {author} {\bibfnamefont {P.}~\bibnamefont {Schiansky}}, \bibinfo {author}
  {\bibfnamefont {V.}~\bibnamefont {Dunjko}}, \bibinfo {author} {\bibfnamefont
  {N.}~\bibnamefont {Friis}}, \bibinfo {author} {\bibfnamefont {N.~C.}\
  \bibnamefont {Harris}}, \bibinfo {author} {\bibfnamefont {M.}~\bibnamefont
  {Hochberg}}, \bibinfo {author} {\bibfnamefont {D.}~\bibnamefont {Englund}},
  \bibinfo {author} {\bibfnamefont {S.}~\bibnamefont {W{\"o}lk}}, \bibinfo
  {author} {\bibfnamefont {H.~J.}\ \bibnamefont {Briegel}}, \ and\ \bibinfo
  {author} {\bibfnamefont {P.}~\bibnamefont {Walther}},\ }\bibfield  {title}
  {\enquote {\bibinfo {title} {Experimental quantum speed-up in reinforcement
  learning agents},}\ }\href {\doibase 10.1038/s41586-021-03242-7} {\bibfield
  {journal} {\bibinfo  {journal} {Nature}\ }\textbf {\bibinfo {volume} {591}},\
  \bibinfo {pages} {229--233} (\bibinfo {year} {2021})}\BibitemShut {NoStop}%
\bibitem [{\citenamefont {Centrone}\ \emph {et~al.}(2021)\citenamefont
  {Centrone}, \citenamefont {Kumar}, \citenamefont {Diamanti},\ and\
  \citenamefont {Kerenidis}}]{Centrone:2021tq}%
  \BibitemOpen
  \bibfield  {author} {\bibinfo {author} {\bibfnamefont {Federico}\
  \bibnamefont {Centrone}}, \bibinfo {author} {\bibfnamefont {Niraj}\
  \bibnamefont {Kumar}}, \bibinfo {author} {\bibfnamefont {Eleni}\ \bibnamefont
  {Diamanti}}, \ and\ \bibinfo {author} {\bibfnamefont {Iordanis}\ \bibnamefont
  {Kerenidis}},\ }\bibfield  {title} {\enquote {\bibinfo {title} {Experimental
  demonstration of quantum advantage for np verification with limited
  information},}\ }\href {https://doi.org/10.1038/s41467-021-21119-1}
  {\bibfield  {journal} {\bibinfo  {journal} {Nature Communications}\ }\textbf
  {\bibinfo {volume} {12}},\ \bibinfo {pages} {850} (\bibinfo {year}
  {2021})}\BibitemShut {NoStop}%
\bibitem [{\citenamefont {Maslov}\ \emph {et~al.}(2021)\citenamefont {Maslov},
  \citenamefont {Kim}, \citenamefont {Bravyi}, \citenamefont {Yoder},\ and\
  \citenamefont {Sheldon}}]{Maslov:2021aa}%
  \BibitemOpen
  \bibfield  {author} {\bibinfo {author} {\bibfnamefont {Dmitri}\ \bibnamefont
  {Maslov}}, \bibinfo {author} {\bibfnamefont {Jin-Sung}\ \bibnamefont {Kim}},
  \bibinfo {author} {\bibfnamefont {Sergey}\ \bibnamefont {Bravyi}}, \bibinfo
  {author} {\bibfnamefont {Theodore~J.}\ \bibnamefont {Yoder}}, \ and\ \bibinfo
  {author} {\bibfnamefont {Sarah}\ \bibnamefont {Sheldon}},\ }\bibfield
  {title} {\enquote {\bibinfo {title} {Quantum advantage for computations with
  limited space},}\ }\href {\doibase 10.1038/s41567-021-01271-7} {\bibfield
  {journal} {\bibinfo  {journal} {Nature Physics}\ }\textbf {\bibinfo {volume}
  {17}},\ \bibinfo {pages} {894--897} (\bibinfo {year} {2021})}\BibitemShut
  {NoStop}%
\bibitem [{\citenamefont {Xia}\ \emph {et~al.}(2021)\citenamefont {Xia},
  \citenamefont {Li}, \citenamefont {Zhuang},\ and\ \citenamefont
  {Zhang}}]{Xia:2021ux}%
  \BibitemOpen
  \bibfield  {author} {\bibinfo {author} {\bibfnamefont {Yi}~\bibnamefont
  {Xia}}, \bibinfo {author} {\bibfnamefont {Wei}\ \bibnamefont {Li}}, \bibinfo
  {author} {\bibfnamefont {Quntao}\ \bibnamefont {Zhuang}}, \ and\ \bibinfo
  {author} {\bibfnamefont {Zheshen}\ \bibnamefont {Zhang}},\ }\bibfield
  {title} {\enquote {\bibinfo {title} {Quantum-enhanced data classification
  with a variational entangled sensor network},}\ }\href
  {https://link.aps.org/doi/10.1103/PhysRevX.11.021047} {\bibfield  {journal}
  {\bibinfo  {journal} {Physical Review X}\ }\textbf {\bibinfo {volume} {11}},\
  \bibinfo {pages} {021047--} (\bibinfo {year} {2021})}\BibitemShut {NoStop}%
\bibitem [{\citenamefont {Huang}\ \emph {et~al.}(2022)\citenamefont {Huang},
  \citenamefont {Broughton}, \citenamefont {Cotler}, \citenamefont {Chen},
  \citenamefont {Li}, \citenamefont {Mohseni}, \citenamefont {Neven},
  \citenamefont {Babbush}, \citenamefont {Kueng}, \citenamefont {Preskill},\
  and\ \citenamefont {McClean}}]{Huang:2021}%
  \BibitemOpen
  \bibfield  {author} {\bibinfo {author} {\bibfnamefont {Hsin-Yuan}\
  \bibnamefont {Huang}}, \bibinfo {author} {\bibfnamefont {Michael}\
  \bibnamefont {Broughton}}, \bibinfo {author} {\bibfnamefont {Jordan}\
  \bibnamefont {Cotler}}, \bibinfo {author} {\bibfnamefont {Sitan}\
  \bibnamefont {Chen}}, \bibinfo {author} {\bibfnamefont {Jerry}\ \bibnamefont
  {Li}}, \bibinfo {author} {\bibfnamefont {Masoud}\ \bibnamefont {Mohseni}},
  \bibinfo {author} {\bibfnamefont {Hartmut}\ \bibnamefont {Neven}}, \bibinfo
  {author} {\bibfnamefont {Ryan}\ \bibnamefont {Babbush}}, \bibinfo {author}
  {\bibfnamefont {Richard}\ \bibnamefont {Kueng}}, \bibinfo {author}
  {\bibfnamefont {John}\ \bibnamefont {Preskill}}, \ and\ \bibinfo {author}
  {\bibfnamefont {Jarrod~R.}\ \bibnamefont {McClean}},\ }\bibfield  {title}
  {\enquote {\bibinfo {title} {Quantum advantage in learning from
  experiments},}\ }\href {https://doi.org/10.1126/science.abn7293} {\bibfield
  {journal} {\bibinfo  {journal} {Science}\ }\textbf {\bibinfo {volume}
  {376}},\ \bibinfo {pages} {1182--1186} (\bibinfo {year} {2022})}\BibitemShut
  {NoStop}%
\bibitem [{\citenamefont {Ebadi}\ \emph {et~al.}(2022)\citenamefont {Ebadi},
  \citenamefont {Keesling}, \citenamefont {Cain}, \citenamefont {Wang},
  \citenamefont {Levine}, \citenamefont {Bluvstein}, \citenamefont {Semeghini},
  \citenamefont {Omran}, \citenamefont {Liu}, \citenamefont {Samajdar},
  \citenamefont {Luo}, \citenamefont {Nash}, \citenamefont {Gao}, \citenamefont
  {Barak}, \citenamefont {Farhi}, \citenamefont {Sachdev}, \citenamefont
  {Gemelke}, \citenamefont {Zhou}, \citenamefont {Choi}, \citenamefont
  {Pichler}, \citenamefont {Wang}, \citenamefont {Greiner}, \citenamefont
  {Vuletic},\ and\ \citenamefont {Lukin}}]{Ebadi:22}%
  \BibitemOpen
  \bibfield  {author} {\bibinfo {author} {\bibfnamefont {S.}~\bibnamefont
  {Ebadi}}, \bibinfo {author} {\bibfnamefont {A.}~\bibnamefont {Keesling}},
  \bibinfo {author} {\bibfnamefont {M.}~\bibnamefont {Cain}}, \bibinfo {author}
  {\bibfnamefont {T.~T.}\ \bibnamefont {Wang}}, \bibinfo {author}
  {\bibfnamefont {H.}~\bibnamefont {Levine}}, \bibinfo {author} {\bibfnamefont
  {D.}~\bibnamefont {Bluvstein}}, \bibinfo {author} {\bibfnamefont
  {G.}~\bibnamefont {Semeghini}}, \bibinfo {author} {\bibfnamefont
  {A.}~\bibnamefont {Omran}}, \bibinfo {author} {\bibfnamefont {J.-G.}\
  \bibnamefont {Liu}}, \bibinfo {author} {\bibfnamefont {R.}~\bibnamefont
  {Samajdar}}, \bibinfo {author} {\bibfnamefont {X.-Z.}\ \bibnamefont {Luo}},
  \bibinfo {author} {\bibfnamefont {B.}~\bibnamefont {Nash}}, \bibinfo {author}
  {\bibfnamefont {X.}~\bibnamefont {Gao}}, \bibinfo {author} {\bibfnamefont
  {B.}~\bibnamefont {Barak}}, \bibinfo {author} {\bibfnamefont
  {E.}~\bibnamefont {Farhi}}, \bibinfo {author} {\bibfnamefont
  {S.}~\bibnamefont {Sachdev}}, \bibinfo {author} {\bibfnamefont
  {N.}~\bibnamefont {Gemelke}}, \bibinfo {author} {\bibfnamefont
  {L.}~\bibnamefont {Zhou}}, \bibinfo {author} {\bibfnamefont {S.}~\bibnamefont
  {Choi}}, \bibinfo {author} {\bibfnamefont {H.}~\bibnamefont {Pichler}},
  \bibinfo {author} {\bibfnamefont {S.-T.}\ \bibnamefont {Wang}}, \bibinfo
  {author} {\bibfnamefont {M.}~\bibnamefont {Greiner}}, \bibinfo {author}
  {\bibfnamefont {V.}~\bibnamefont {Vuletic}}, \ and\ \bibinfo {author}
  {\bibfnamefont {M.~D.}\ \bibnamefont {Lukin}},\ }\bibfield  {title} {\enquote
  {\bibinfo {title} {Quantum optimization of maximum independent set using
  rydberg atom arrays},}\ }\href {https://doi.org/10.1126/science.abo6587}
  {\bibfield  {journal} {\bibinfo  {journal} {Science}\ }\textbf {\bibinfo
  {volume} {376}},\ \bibinfo {pages} {1209--1215} (\bibinfo {year}
  {2022})}\BibitemShut {NoStop}%
\bibitem [{\citenamefont {Zhou}\ \emph {et~al.}(2022)\citenamefont {Zhou},
  \citenamefont {Cao}, \citenamefont {Lu}, \citenamefont {Wang}, \citenamefont
  {Bao}, \citenamefont {Jia}, \citenamefont {Fu}, \citenamefont {Yin},\ and\
  \citenamefont {Chen}}]{zhouExperimentalQuantumAdvantage2022}%
  \BibitemOpen
  \bibfield  {author} {\bibinfo {author} {\bibfnamefont {Min-Gang}\
  \bibnamefont {Zhou}}, \bibinfo {author} {\bibfnamefont {Xiao-Yu}\
  \bibnamefont {Cao}}, \bibinfo {author} {\bibfnamefont {Yu-Shuo}\ \bibnamefont
  {Lu}}, \bibinfo {author} {\bibfnamefont {Yang}\ \bibnamefont {Wang}},
  \bibinfo {author} {\bibfnamefont {Yu}~\bibnamefont {Bao}}, \bibinfo {author}
  {\bibfnamefont {Zhao-Ying}\ \bibnamefont {Jia}}, \bibinfo {author}
  {\bibfnamefont {Yao}\ \bibnamefont {Fu}}, \bibinfo {author} {\bibfnamefont
  {Hua-Lei}\ \bibnamefont {Yin}}, \ and\ \bibinfo {author} {\bibfnamefont
  {Zeng-Bing}\ \bibnamefont {Chen}},\ }\bibfield  {title} {\enquote {\bibinfo
  {title} {Experimental {{Quantum Advantage}} with {{Quantum Coupon
  Collector}}},}\ }\href
  {https://spj.sciencemag.org/journals/research/2022/9798679/} {\bibfield
  {journal} {\bibinfo  {journal} {Research}\ }\textbf {\bibinfo {volume}
  {2022}} (\bibinfo {year} {2022})}\BibitemShut {NoStop}%
\bibitem [{\citenamefont {King}\ \emph {et~al.}(2023)\citenamefont {King},
  \citenamefont {Raymond}, \citenamefont {Lanting}, \citenamefont {Harris},
  \citenamefont {Zucca}, \citenamefont {Altomare}, \citenamefont {Berkley},
  \citenamefont {Boothby}, \citenamefont {Ejtemaee}, \citenamefont {Enderud},
  \citenamefont {Hoskinson}, \citenamefont {Huang}, \citenamefont {Ladizinsky},
  \citenamefont {MacDonald}, \citenamefont {Marsden}, \citenamefont {Molavi},
  \citenamefont {Oh}, \citenamefont {Poulin-Lamarre}, \citenamefont {Reis},
  \citenamefont {Rich}, \citenamefont {Sato}, \citenamefont {Tsai},
  \citenamefont {Volkmann}, \citenamefont {Whittaker}, \citenamefont {Yao},
  \citenamefont {Sandvik},\ and\ \citenamefont {Amin}}]{King:22}%
  \BibitemOpen
  \bibfield  {author} {\bibinfo {author} {\bibfnamefont {Andrew~D.}\
  \bibnamefont {King}}, \bibinfo {author} {\bibfnamefont {Jack}\ \bibnamefont
  {Raymond}}, \bibinfo {author} {\bibfnamefont {Trevor}\ \bibnamefont
  {Lanting}}, \bibinfo {author} {\bibfnamefont {Richard}\ \bibnamefont
  {Harris}}, \bibinfo {author} {\bibfnamefont {Alex}\ \bibnamefont {Zucca}},
  \bibinfo {author} {\bibfnamefont {Fabio}\ \bibnamefont {Altomare}}, \bibinfo
  {author} {\bibfnamefont {Andrew~J.}\ \bibnamefont {Berkley}}, \bibinfo
  {author} {\bibfnamefont {Kelly}\ \bibnamefont {Boothby}}, \bibinfo {author}
  {\bibfnamefont {Sara}\ \bibnamefont {Ejtemaee}}, \bibinfo {author}
  {\bibfnamefont {Colin}\ \bibnamefont {Enderud}}, \bibinfo {author}
  {\bibfnamefont {Emile}\ \bibnamefont {Hoskinson}}, \bibinfo {author}
  {\bibfnamefont {Shuiyuan}\ \bibnamefont {Huang}}, \bibinfo {author}
  {\bibfnamefont {Eric}\ \bibnamefont {Ladizinsky}}, \bibinfo {author}
  {\bibfnamefont {Allison J.~R.}\ \bibnamefont {MacDonald}}, \bibinfo {author}
  {\bibfnamefont {Gaelen}\ \bibnamefont {Marsden}}, \bibinfo {author}
  {\bibfnamefont {Reza}\ \bibnamefont {Molavi}}, \bibinfo {author}
  {\bibfnamefont {Travis}\ \bibnamefont {Oh}}, \bibinfo {author} {\bibfnamefont
  {Gabriel}\ \bibnamefont {Poulin-Lamarre}}, \bibinfo {author} {\bibfnamefont
  {Mauricio}\ \bibnamefont {Reis}}, \bibinfo {author} {\bibfnamefont {Chris}\
  \bibnamefont {Rich}}, \bibinfo {author} {\bibfnamefont {Yuki}\ \bibnamefont
  {Sato}}, \bibinfo {author} {\bibfnamefont {Nicholas}\ \bibnamefont {Tsai}},
  \bibinfo {author} {\bibfnamefont {Mark}\ \bibnamefont {Volkmann}}, \bibinfo
  {author} {\bibfnamefont {Jed~D.}\ \bibnamefont {Whittaker}}, \bibinfo
  {author} {\bibfnamefont {Jason}\ \bibnamefont {Yao}}, \bibinfo {author}
  {\bibfnamefont {Anders~W.}\ \bibnamefont {Sandvik}}, \ and\ \bibinfo {author}
  {\bibfnamefont {Mohammad~H.}\ \bibnamefont {Amin}},\ }\bibfield  {title}
  {\enquote {\bibinfo {title} {Quantum critical dynamics in a 5,000-qubit
  programmable spin glass},}\ }\href {\doibase 10.1038/s41586-023-05867-2}
  {\bibfield  {journal} {\bibinfo  {journal} {Nature}\ }\textbf {\bibinfo
  {volume} {617}},\ \bibinfo {pages} {61--66} (\bibinfo {year}
  {2023})}\BibitemShut {NoStop}%
\bibitem [{\citenamefont {Kim}\ \emph {et~al.}(2023)\citenamefont {Kim},
  \citenamefont {Eddins}, \citenamefont {Anand}, \citenamefont {Wei},
  \citenamefont {van~den Berg}, \citenamefont {Rosenblatt}, \citenamefont
  {Nayfeh}, \citenamefont {Wu}, \citenamefont {Zaletel}, \citenamefont
  {Temme},\ and\ \citenamefont {Kandala}}]{Kim:2023aa}%
  \BibitemOpen
  \bibfield  {author} {\bibinfo {author} {\bibfnamefont {Youngseok}\
  \bibnamefont {Kim}}, \bibinfo {author} {\bibfnamefont {Andrew}\ \bibnamefont
  {Eddins}}, \bibinfo {author} {\bibfnamefont {Sajant}\ \bibnamefont {Anand}},
  \bibinfo {author} {\bibfnamefont {Ken~Xuan}\ \bibnamefont {Wei}}, \bibinfo
  {author} {\bibfnamefont {Ewout}\ \bibnamefont {van~den Berg}}, \bibinfo
  {author} {\bibfnamefont {Sami}\ \bibnamefont {Rosenblatt}}, \bibinfo {author}
  {\bibfnamefont {Hasan}\ \bibnamefont {Nayfeh}}, \bibinfo {author}
  {\bibfnamefont {Yantao}\ \bibnamefont {Wu}}, \bibinfo {author} {\bibfnamefont
  {Michael}\ \bibnamefont {Zaletel}}, \bibinfo {author} {\bibfnamefont
  {Kristan}\ \bibnamefont {Temme}}, \ and\ \bibinfo {author} {\bibfnamefont
  {Abhinav}\ \bibnamefont {Kandala}},\ }\bibfield  {title} {\enquote {\bibinfo
  {title} {Evidence for the utility of quantum computing before fault
  tolerance},}\ }\href {\doibase 10.1038/s41586-023-06096-3} {\bibfield
  {journal} {\bibinfo  {journal} {Nature}\ }\textbf {\bibinfo {volume} {618}},\
  \bibinfo {pages} {500--505} (\bibinfo {year} {2023})}\BibitemShut {NoStop}%
\bibitem [{\citenamefont {Aaronson}\ and\ \citenamefont
  {Chen}(2017)}]{aaronson2016}%
  \BibitemOpen
  \bibfield  {author} {\bibinfo {author} {\bibfnamefont {Scott}\ \bibnamefont
  {Aaronson}}\ and\ \bibinfo {author} {\bibfnamefont {Lijie}\ \bibnamefont
  {Chen}},\ }\bibfield  {title} {\enquote {\bibinfo {title}
  {Complexity-theoretic foundations of quantum supremacy experiments},}\ }in\
  \href {https://arxiv.org/abs/1612.05903} {\emph {\bibinfo {booktitle}
  {Proceedings of the 32nd Computational Complexity Conference}}},\ \bibinfo
  {series and number} {CCC '17}\ (\bibinfo  {publisher} {Schloss
  Dagstuhl--Leibniz-Zentrum fuer Informatik},\ \bibinfo {address} {Dagstuhl,
  DEU},\ \bibinfo {year} {2017})\BibitemShut {NoStop}%
\bibitem [{\citenamefont {Arute}\ \emph {et~al.}(2019)\citenamefont {Arute},
  \citenamefont {Arya}, \citenamefont {Babbush}, \citenamefont {Bacon},
  \citenamefont {Bardin}, \citenamefont {Barends}, \citenamefont {Biswas},
  \citenamefont {Boixo}, \citenamefont {Brandao}, \citenamefont {Buell},
  \citenamefont {Burkett}, \citenamefont {Chen}, \citenamefont {Chen},
  \citenamefont {Chiaro}, \citenamefont {Collins}, \citenamefont {Courtney},
  \citenamefont {Dunsworth}, \citenamefont {Farhi}, \citenamefont {Foxen},
  \citenamefont {Fowler}, \citenamefont {Gidney}, \citenamefont {Giustina},
  \citenamefont {Graff}, \citenamefont {Guerin}, \citenamefont {Habegger},
  \citenamefont {Harrigan}, \citenamefont {Hartmann}, \citenamefont {Ho},
  \citenamefont {Hoffmann}, \citenamefont {Huang}, \citenamefont {Humble},
  \citenamefont {Isakov}, \citenamefont {Jeffrey}, \citenamefont {Jiang},
  \citenamefont {Kafri}, \citenamefont {Kechedzhi}, \citenamefont {Kelly},
  \citenamefont {Klimov}, \citenamefont {Knysh}, \citenamefont {Korotkov},
  \citenamefont {Kostritsa}, \citenamefont {Landhuis}, \citenamefont
  {Lindmark}, \citenamefont {Lucero}, \citenamefont {Lyakh}, \citenamefont
  {Mandr{\`a}}, \citenamefont {McClean}, \citenamefont {McEwen}, \citenamefont
  {Megrant}, \citenamefont {Mi}, \citenamefont {Michielsen}, \citenamefont
  {Mohseni}, \citenamefont {Mutus}, \citenamefont {Naaman}, \citenamefont
  {Neeley}, \citenamefont {Neill}, \citenamefont {Niu}, \citenamefont {Ostby},
  \citenamefont {Petukhov}, \citenamefont {Platt}, \citenamefont {Quintana},
  \citenamefont {Rieffel}, \citenamefont {Roushan}, \citenamefont {Rubin},
  \citenamefont {Sank}, \citenamefont {Satzinger}, \citenamefont {Smelyanskiy},
  \citenamefont {Sung}, \citenamefont {Trevithick}, \citenamefont
  {Vainsencher}, \citenamefont {Villalonga}, \citenamefont {White},
  \citenamefont {Yao}, \citenamefont {Yeh}, \citenamefont {Zalcman},
  \citenamefont {Neven},\ and\ \citenamefont {Martinis}}]{Arute:2019aa}%
  \BibitemOpen
  \bibfield  {author} {\bibinfo {author} {\bibfnamefont {Frank}\ \bibnamefont
  {Arute}}, \bibinfo {author} {\bibfnamefont {Kunal}\ \bibnamefont {Arya}},
  \bibinfo {author} {\bibfnamefont {Ryan}\ \bibnamefont {Babbush}}, \bibinfo
  {author} {\bibfnamefont {Dave}\ \bibnamefont {Bacon}}, \bibinfo {author}
  {\bibfnamefont {Joseph~C.}\ \bibnamefont {Bardin}}, \bibinfo {author}
  {\bibfnamefont {Rami}\ \bibnamefont {Barends}}, \bibinfo {author}
  {\bibfnamefont {Rupak}\ \bibnamefont {Biswas}}, \bibinfo {author}
  {\bibfnamefont {Sergio}\ \bibnamefont {Boixo}}, \bibinfo {author}
  {\bibfnamefont {Fernando G. S.~L.}\ \bibnamefont {Brandao}}, \bibinfo
  {author} {\bibfnamefont {David~A.}\ \bibnamefont {Buell}}, \bibinfo {author}
  {\bibfnamefont {Brian}\ \bibnamefont {Burkett}}, \bibinfo {author}
  {\bibfnamefont {Yu}~\bibnamefont {Chen}}, \bibinfo {author} {\bibfnamefont
  {Zijun}\ \bibnamefont {Chen}}, \bibinfo {author} {\bibfnamefont {Ben}\
  \bibnamefont {Chiaro}}, \bibinfo {author} {\bibfnamefont {Roberto}\
  \bibnamefont {Collins}}, \bibinfo {author} {\bibfnamefont {William}\
  \bibnamefont {Courtney}}, \bibinfo {author} {\bibfnamefont {Andrew}\
  \bibnamefont {Dunsworth}}, \bibinfo {author} {\bibfnamefont {Edward}\
  \bibnamefont {Farhi}}, \bibinfo {author} {\bibfnamefont {Brooks}\
  \bibnamefont {Foxen}}, \bibinfo {author} {\bibfnamefont {Austin}\
  \bibnamefont {Fowler}}, \bibinfo {author} {\bibfnamefont {Craig}\
  \bibnamefont {Gidney}}, \bibinfo {author} {\bibfnamefont {Marissa}\
  \bibnamefont {Giustina}}, \bibinfo {author} {\bibfnamefont {Rob}\
  \bibnamefont {Graff}}, \bibinfo {author} {\bibfnamefont {Keith}\ \bibnamefont
  {Guerin}}, \bibinfo {author} {\bibfnamefont {Steve}\ \bibnamefont
  {Habegger}}, \bibinfo {author} {\bibfnamefont {Matthew~P.}\ \bibnamefont
  {Harrigan}}, \bibinfo {author} {\bibfnamefont {Michael~J.}\ \bibnamefont
  {Hartmann}}, \bibinfo {author} {\bibfnamefont {Alan}\ \bibnamefont {Ho}},
  \bibinfo {author} {\bibfnamefont {Markus}\ \bibnamefont {Hoffmann}}, \bibinfo
  {author} {\bibfnamefont {Trent}\ \bibnamefont {Huang}}, \bibinfo {author}
  {\bibfnamefont {Travis~S.}\ \bibnamefont {Humble}}, \bibinfo {author}
  {\bibfnamefont {Sergei~V.}\ \bibnamefont {Isakov}}, \bibinfo {author}
  {\bibfnamefont {Evan}\ \bibnamefont {Jeffrey}}, \bibinfo {author}
  {\bibfnamefont {Zhang}\ \bibnamefont {Jiang}}, \bibinfo {author}
  {\bibfnamefont {Dvir}\ \bibnamefont {Kafri}}, \bibinfo {author}
  {\bibfnamefont {Kostyantyn}\ \bibnamefont {Kechedzhi}}, \bibinfo {author}
  {\bibfnamefont {Julian}\ \bibnamefont {Kelly}}, \bibinfo {author}
  {\bibfnamefont {Paul~V.}\ \bibnamefont {Klimov}}, \bibinfo {author}
  {\bibfnamefont {Sergey}\ \bibnamefont {Knysh}}, \bibinfo {author}
  {\bibfnamefont {Alexander}\ \bibnamefont {Korotkov}}, \bibinfo {author}
  {\bibfnamefont {Fedor}\ \bibnamefont {Kostritsa}}, \bibinfo {author}
  {\bibfnamefont {David}\ \bibnamefont {Landhuis}}, \bibinfo {author}
  {\bibfnamefont {Mike}\ \bibnamefont {Lindmark}}, \bibinfo {author}
  {\bibfnamefont {Erik}\ \bibnamefont {Lucero}}, \bibinfo {author}
  {\bibfnamefont {Dmitry}\ \bibnamefont {Lyakh}}, \bibinfo {author}
  {\bibfnamefont {Salvatore}\ \bibnamefont {Mandr{\`a}}}, \bibinfo {author}
  {\bibfnamefont {Jarrod~R.}\ \bibnamefont {McClean}}, \bibinfo {author}
  {\bibfnamefont {Matthew}\ \bibnamefont {McEwen}}, \bibinfo {author}
  {\bibfnamefont {Anthony}\ \bibnamefont {Megrant}}, \bibinfo {author}
  {\bibfnamefont {Xiao}\ \bibnamefont {Mi}}, \bibinfo {author} {\bibfnamefont
  {Kristel}\ \bibnamefont {Michielsen}}, \bibinfo {author} {\bibfnamefont
  {Masoud}\ \bibnamefont {Mohseni}}, \bibinfo {author} {\bibfnamefont {Josh}\
  \bibnamefont {Mutus}}, \bibinfo {author} {\bibfnamefont {Ofer}\ \bibnamefont
  {Naaman}}, \bibinfo {author} {\bibfnamefont {Matthew}\ \bibnamefont
  {Neeley}}, \bibinfo {author} {\bibfnamefont {Charles}\ \bibnamefont {Neill}},
  \bibinfo {author} {\bibfnamefont {Murphy~Yuezhen}\ \bibnamefont {Niu}},
  \bibinfo {author} {\bibfnamefont {Eric}\ \bibnamefont {Ostby}}, \bibinfo
  {author} {\bibfnamefont {Andre}\ \bibnamefont {Petukhov}}, \bibinfo {author}
  {\bibfnamefont {John~C.}\ \bibnamefont {Platt}}, \bibinfo {author}
  {\bibfnamefont {Chris}\ \bibnamefont {Quintana}}, \bibinfo {author}
  {\bibfnamefont {Eleanor~G.}\ \bibnamefont {Rieffel}}, \bibinfo {author}
  {\bibfnamefont {Pedram}\ \bibnamefont {Roushan}}, \bibinfo {author}
  {\bibfnamefont {Nicholas~C.}\ \bibnamefont {Rubin}}, \bibinfo {author}
  {\bibfnamefont {Daniel}\ \bibnamefont {Sank}}, \bibinfo {author}
  {\bibfnamefont {Kevin~J.}\ \bibnamefont {Satzinger}}, \bibinfo {author}
  {\bibfnamefont {Vadim}\ \bibnamefont {Smelyanskiy}}, \bibinfo {author}
  {\bibfnamefont {Kevin~J.}\ \bibnamefont {Sung}}, \bibinfo {author}
  {\bibfnamefont {Matthew~D.}\ \bibnamefont {Trevithick}}, \bibinfo {author}
  {\bibfnamefont {Amit}\ \bibnamefont {Vainsencher}}, \bibinfo {author}
  {\bibfnamefont {Benjamin}\ \bibnamefont {Villalonga}}, \bibinfo {author}
  {\bibfnamefont {Theodore}\ \bibnamefont {White}}, \bibinfo {author}
  {\bibfnamefont {Z.~Jamie}\ \bibnamefont {Yao}}, \bibinfo {author}
  {\bibfnamefont {Ping}\ \bibnamefont {Yeh}}, \bibinfo {author} {\bibfnamefont
  {Adam}\ \bibnamefont {Zalcman}}, \bibinfo {author} {\bibfnamefont {Hartmut}\
  \bibnamefont {Neven}}, \ and\ \bibinfo {author} {\bibfnamefont {John~M.}\
  \bibnamefont {Martinis}},\ }\bibfield  {title} {\enquote {\bibinfo {title}
  {Quantum supremacy using a programmable superconducting processor},}\ }\href
  {https://doi.org/10.1038/s41586-019-1666-5} {\bibfield  {journal} {\bibinfo
  {journal} {Nature}\ }\textbf {\bibinfo {volume} {574}},\ \bibinfo {pages}
  {505--510} (\bibinfo {year} {2019})}\BibitemShut {NoStop}%
\bibitem [{\citenamefont {Wu}\ \emph {et~al.}(2021)\citenamefont {Wu},
  \citenamefont {Bao}, \citenamefont {Cao}, \citenamefont {Chen}, \citenamefont
  {Chen}, \citenamefont {Chen}, \citenamefont {Chung}, \citenamefont {Deng},
  \citenamefont {Du}, \citenamefont {Fan}, \citenamefont {Gong}, \citenamefont
  {Guo}, \citenamefont {Guo}, \citenamefont {Guo}, \citenamefont {Han},
  \citenamefont {Hong}, \citenamefont {Huang}, \citenamefont {Huo},
  \citenamefont {Li}, \citenamefont {Li}, \citenamefont {Li}, \citenamefont
  {Li}, \citenamefont {Liang}, \citenamefont {Lin}, \citenamefont {Lin},
  \citenamefont {Qian}, \citenamefont {Qiao}, \citenamefont {Rong},
  \citenamefont {Su}, \citenamefont {Sun}, \citenamefont {Wang}, \citenamefont
  {Wang}, \citenamefont {Wu}, \citenamefont {Xu}, \citenamefont {Yan},
  \citenamefont {Yang}, \citenamefont {Yang}, \citenamefont {Ye}, \citenamefont
  {Yin}, \citenamefont {Ying}, \citenamefont {Yu}, \citenamefont {Zha},
  \citenamefont {Zhang}, \citenamefont {Zhang}, \citenamefont {Zhang},
  \citenamefont {Zhang}, \citenamefont {Zhao}, \citenamefont {Zhao},
  \citenamefont {Zhou}, \citenamefont {Zhu}, \citenamefont {Lu}, \citenamefont
  {Peng}, \citenamefont {Zhu},\ and\ \citenamefont {Pan}}]{wu2021strong}%
  \BibitemOpen
  \bibfield  {author} {\bibinfo {author} {\bibfnamefont {Yulin}\ \bibnamefont
  {Wu}}, \bibinfo {author} {\bibfnamefont {Wan-Su}\ \bibnamefont {Bao}},
  \bibinfo {author} {\bibfnamefont {Sirui}\ \bibnamefont {Cao}}, \bibinfo
  {author} {\bibfnamefont {Fusheng}\ \bibnamefont {Chen}}, \bibinfo {author}
  {\bibfnamefont {Ming-Cheng}\ \bibnamefont {Chen}}, \bibinfo {author}
  {\bibfnamefont {Xiawei}\ \bibnamefont {Chen}}, \bibinfo {author}
  {\bibfnamefont {Tung-Hsun}\ \bibnamefont {Chung}}, \bibinfo {author}
  {\bibfnamefont {Hui}\ \bibnamefont {Deng}}, \bibinfo {author} {\bibfnamefont
  {Yajie}\ \bibnamefont {Du}}, \bibinfo {author} {\bibfnamefont {Daojin}\
  \bibnamefont {Fan}}, \bibinfo {author} {\bibfnamefont {Ming}\ \bibnamefont
  {Gong}}, \bibinfo {author} {\bibfnamefont {Cheng}\ \bibnamefont {Guo}},
  \bibinfo {author} {\bibfnamefont {Chu}\ \bibnamefont {Guo}}, \bibinfo
  {author} {\bibfnamefont {Shaojun}\ \bibnamefont {Guo}}, \bibinfo {author}
  {\bibfnamefont {Lianchen}\ \bibnamefont {Han}}, \bibinfo {author}
  {\bibfnamefont {Linyin}\ \bibnamefont {Hong}}, \bibinfo {author}
  {\bibfnamefont {He-Liang}\ \bibnamefont {Huang}}, \bibinfo {author}
  {\bibfnamefont {Yong-Heng}\ \bibnamefont {Huo}}, \bibinfo {author}
  {\bibfnamefont {Liping}\ \bibnamefont {Li}}, \bibinfo {author} {\bibfnamefont
  {Na}~\bibnamefont {Li}}, \bibinfo {author} {\bibfnamefont {Shaowei}\
  \bibnamefont {Li}}, \bibinfo {author} {\bibfnamefont {Yuan}\ \bibnamefont
  {Li}}, \bibinfo {author} {\bibfnamefont {Futian}\ \bibnamefont {Liang}},
  \bibinfo {author} {\bibfnamefont {Chun}\ \bibnamefont {Lin}}, \bibinfo
  {author} {\bibfnamefont {Jin}\ \bibnamefont {Lin}}, \bibinfo {author}
  {\bibfnamefont {Haoran}\ \bibnamefont {Qian}}, \bibinfo {author}
  {\bibfnamefont {Dan}\ \bibnamefont {Qiao}}, \bibinfo {author} {\bibfnamefont
  {Hao}\ \bibnamefont {Rong}}, \bibinfo {author} {\bibfnamefont {Hong}\
  \bibnamefont {Su}}, \bibinfo {author} {\bibfnamefont {Lihua}\ \bibnamefont
  {Sun}}, \bibinfo {author} {\bibfnamefont {Liangyuan}\ \bibnamefont {Wang}},
  \bibinfo {author} {\bibfnamefont {Shiyu}\ \bibnamefont {Wang}}, \bibinfo
  {author} {\bibfnamefont {Dachao}\ \bibnamefont {Wu}}, \bibinfo {author}
  {\bibfnamefont {Yu}~\bibnamefont {Xu}}, \bibinfo {author} {\bibfnamefont
  {Kai}\ \bibnamefont {Yan}}, \bibinfo {author} {\bibfnamefont {Weifeng}\
  \bibnamefont {Yang}}, \bibinfo {author} {\bibfnamefont {Yang}\ \bibnamefont
  {Yang}}, \bibinfo {author} {\bibfnamefont {Yangsen}\ \bibnamefont {Ye}},
  \bibinfo {author} {\bibfnamefont {Jianghan}\ \bibnamefont {Yin}}, \bibinfo
  {author} {\bibfnamefont {Chong}\ \bibnamefont {Ying}}, \bibinfo {author}
  {\bibfnamefont {Jiale}\ \bibnamefont {Yu}}, \bibinfo {author} {\bibfnamefont
  {Chen}\ \bibnamefont {Zha}}, \bibinfo {author} {\bibfnamefont {Cha}\
  \bibnamefont {Zhang}}, \bibinfo {author} {\bibfnamefont {Haibin}\
  \bibnamefont {Zhang}}, \bibinfo {author} {\bibfnamefont {Kaili}\ \bibnamefont
  {Zhang}}, \bibinfo {author} {\bibfnamefont {Yiming}\ \bibnamefont {Zhang}},
  \bibinfo {author} {\bibfnamefont {Han}\ \bibnamefont {Zhao}}, \bibinfo
  {author} {\bibfnamefont {Youwei}\ \bibnamefont {Zhao}}, \bibinfo {author}
  {\bibfnamefont {Liang}\ \bibnamefont {Zhou}}, \bibinfo {author}
  {\bibfnamefont {Qingling}\ \bibnamefont {Zhu}}, \bibinfo {author}
  {\bibfnamefont {Chao-Yang}\ \bibnamefont {Lu}}, \bibinfo {author}
  {\bibfnamefont {Cheng-Zhi}\ \bibnamefont {Peng}}, \bibinfo {author}
  {\bibfnamefont {Xiaobo}\ \bibnamefont {Zhu}}, \ and\ \bibinfo {author}
  {\bibfnamefont {Jian-Wei}\ \bibnamefont {Pan}},\ }\bibfield  {title}
  {\enquote {\bibinfo {title} {Strong quantum computational advantage using a
  superconducting quantum processor},}\ }\href
  {https://link.aps.org/doi/10.1103/PhysRevLett.127.180501} {\bibfield
  {journal} {\bibinfo  {journal} {Phys. Rev. Lett.}\ }\textbf {\bibinfo
  {volume} {127}},\ \bibinfo {pages} {180501--} (\bibinfo {year}
  {2021})}\BibitemShut {NoStop}%
\bibitem [{\citenamefont {Zhong}\ \emph {et~al.}(2020)\citenamefont {Zhong},
  \citenamefont {Wang}, \citenamefont {Deng}, \citenamefont {Chen},
  \citenamefont {Peng}, \citenamefont {Luo}, \citenamefont {Qin}, \citenamefont
  {Wu}, \citenamefont {Ding}, \citenamefont {Hu}, \citenamefont {Hu},
  \citenamefont {Yang}, \citenamefont {Zhang}, \citenamefont {Li},
  \citenamefont {Li}, \citenamefont {Jiang}, \citenamefont {Gan}, \citenamefont
  {Yang}, \citenamefont {You}, \citenamefont {Wang}, \citenamefont {Li},
  \citenamefont {Liu}, \citenamefont {Lu},\ and\ \citenamefont
  {Pan}}]{Zhong:2020aa}%
  \BibitemOpen
  \bibfield  {author} {\bibinfo {author} {\bibfnamefont {Han-Sen}\ \bibnamefont
  {Zhong}}, \bibinfo {author} {\bibfnamefont {Hui}\ \bibnamefont {Wang}},
  \bibinfo {author} {\bibfnamefont {Yu-Hao}\ \bibnamefont {Deng}}, \bibinfo
  {author} {\bibfnamefont {Ming-Cheng}\ \bibnamefont {Chen}}, \bibinfo {author}
  {\bibfnamefont {Li-Chao}\ \bibnamefont {Peng}}, \bibinfo {author}
  {\bibfnamefont {Yi-Han}\ \bibnamefont {Luo}}, \bibinfo {author}
  {\bibfnamefont {Jian}\ \bibnamefont {Qin}}, \bibinfo {author} {\bibfnamefont
  {Dian}\ \bibnamefont {Wu}}, \bibinfo {author} {\bibfnamefont {Xing}\
  \bibnamefont {Ding}}, \bibinfo {author} {\bibfnamefont {Yi}~\bibnamefont
  {Hu}}, \bibinfo {author} {\bibfnamefont {Peng}\ \bibnamefont {Hu}}, \bibinfo
  {author} {\bibfnamefont {Xiao-Yan}\ \bibnamefont {Yang}}, \bibinfo {author}
  {\bibfnamefont {Wei-Jun}\ \bibnamefont {Zhang}}, \bibinfo {author}
  {\bibfnamefont {Hao}\ \bibnamefont {Li}}, \bibinfo {author} {\bibfnamefont
  {Yuxuan}\ \bibnamefont {Li}}, \bibinfo {author} {\bibfnamefont {Xiao}\
  \bibnamefont {Jiang}}, \bibinfo {author} {\bibfnamefont {Lin}\ \bibnamefont
  {Gan}}, \bibinfo {author} {\bibfnamefont {Guangwen}\ \bibnamefont {Yang}},
  \bibinfo {author} {\bibfnamefont {Lixing}\ \bibnamefont {You}}, \bibinfo
  {author} {\bibfnamefont {Zhen}\ \bibnamefont {Wang}}, \bibinfo {author}
  {\bibfnamefont {Li}~\bibnamefont {Li}}, \bibinfo {author} {\bibfnamefont
  {Nai-Le}\ \bibnamefont {Liu}}, \bibinfo {author} {\bibfnamefont {Chao-Yang}\
  \bibnamefont {Lu}}, \ and\ \bibinfo {author} {\bibfnamefont {Jian-Wei}\
  \bibnamefont {Pan}},\ }\bibfield  {title} {\enquote {\bibinfo {title}
  {Quantum computational advantage using photons},}\ }\href
  {https://www.science.org/doi/abs/10.1126/science.abe8770} {\bibfield
  {journal} {\bibinfo  {journal} {Science}\ }\textbf {\bibinfo {volume}
  {370}},\ \bibinfo {pages} {1460--1463} (\bibinfo {year} {2020})}\BibitemShut
  {NoStop}%
\bibitem [{\citenamefont {Zhong}\ \emph {et~al.}(2021)\citenamefont {Zhong},
  \citenamefont {Deng}, \citenamefont {Qin}, \citenamefont {Wang},
  \citenamefont {Chen}, \citenamefont {Peng}, \citenamefont {Luo},
  \citenamefont {Wu}, \citenamefont {Gong}, \citenamefont {Su}, \citenamefont
  {Hu}, \citenamefont {Hu}, \citenamefont {Yang}, \citenamefont {Zhang},
  \citenamefont {Li}, \citenamefont {Li}, \citenamefont {Jiang}, \citenamefont
  {Gan}, \citenamefont {Yang}, \citenamefont {You}, \citenamefont {Wang},
  \citenamefont {Li}, \citenamefont {Liu}, \citenamefont {Renema},
  \citenamefont {Lu},\ and\ \citenamefont {Pan}}]{Zhong:2021wv}%
  \BibitemOpen
  \bibfield  {author} {\bibinfo {author} {\bibfnamefont {Han-Sen}\ \bibnamefont
  {Zhong}}, \bibinfo {author} {\bibfnamefont {Yu-Hao}\ \bibnamefont {Deng}},
  \bibinfo {author} {\bibfnamefont {Jian}\ \bibnamefont {Qin}}, \bibinfo
  {author} {\bibfnamefont {Hui}\ \bibnamefont {Wang}}, \bibinfo {author}
  {\bibfnamefont {Ming-Cheng}\ \bibnamefont {Chen}}, \bibinfo {author}
  {\bibfnamefont {Li-Chao}\ \bibnamefont {Peng}}, \bibinfo {author}
  {\bibfnamefont {Yi-Han}\ \bibnamefont {Luo}}, \bibinfo {author}
  {\bibfnamefont {Dian}\ \bibnamefont {Wu}}, \bibinfo {author} {\bibfnamefont
  {Si-Qiu}\ \bibnamefont {Gong}}, \bibinfo {author} {\bibfnamefont {Hao}\
  \bibnamefont {Su}}, \bibinfo {author} {\bibfnamefont {Yi}~\bibnamefont {Hu}},
  \bibinfo {author} {\bibfnamefont {Peng}\ \bibnamefont {Hu}}, \bibinfo
  {author} {\bibfnamefont {Xiao-Yan}\ \bibnamefont {Yang}}, \bibinfo {author}
  {\bibfnamefont {Wei-Jun}\ \bibnamefont {Zhang}}, \bibinfo {author}
  {\bibfnamefont {Hao}\ \bibnamefont {Li}}, \bibinfo {author} {\bibfnamefont
  {Yuxuan}\ \bibnamefont {Li}}, \bibinfo {author} {\bibfnamefont {Xiao}\
  \bibnamefont {Jiang}}, \bibinfo {author} {\bibfnamefont {Lin}\ \bibnamefont
  {Gan}}, \bibinfo {author} {\bibfnamefont {Guangwen}\ \bibnamefont {Yang}},
  \bibinfo {author} {\bibfnamefont {Lixing}\ \bibnamefont {You}}, \bibinfo
  {author} {\bibfnamefont {Zhen}\ \bibnamefont {Wang}}, \bibinfo {author}
  {\bibfnamefont {Li}~\bibnamefont {Li}}, \bibinfo {author} {\bibfnamefont
  {Nai-Le}\ \bibnamefont {Liu}}, \bibinfo {author} {\bibfnamefont {Jelmer~J.}\
  \bibnamefont {Renema}}, \bibinfo {author} {\bibfnamefont {Chao-Yang}\
  \bibnamefont {Lu}}, \ and\ \bibinfo {author} {\bibfnamefont {Jian-Wei}\
  \bibnamefont {Pan}},\ }\bibfield  {title} {\enquote {\bibinfo {title}
  {Phase-programmable gaussian boson sampling using stimulated squeezed
  light},}\ }\href {https://link.aps.org/doi/10.1103/PhysRevLett.127.180502}
  {\bibfield  {journal} {\bibinfo  {journal} {Phys. Rev. Lett.}\ }\textbf
  {\bibinfo {volume} {127}},\ \bibinfo {pages} {180502--} (\bibinfo {year}
  {2021})}\BibitemShut {NoStop}%
\bibitem [{\citenamefont {Morvan}\ \emph {et~al.}(2023)\citenamefont {Morvan},
  \citenamefont {Villalonga}, \citenamefont {Mi}, \citenamefont {Mandr{\`a}},
  \citenamefont {Bengtsson}, \citenamefont {Klimov}, \citenamefont {Chen},
  \citenamefont {Hong}, \citenamefont {Erickson}, \citenamefont {Drozdov},
  \citenamefont {Chau}, \citenamefont {Laun}, \citenamefont {Movassagh},
  \citenamefont {Asfaw}, \citenamefont {Brand{\~a}o}, \citenamefont {Peralta},
  \citenamefont {Abanin}, \citenamefont {Acharya}, \citenamefont {Allen},
  \citenamefont {Andersen}, \citenamefont {Anderson}, \citenamefont {Ansmann},
  \citenamefont {Arute}, \citenamefont {Arya}, \citenamefont {Atalaya},
  \citenamefont {Bardin}, \citenamefont {Bilmes}, \citenamefont {Bortoli},
  \citenamefont {Bourassa}, \citenamefont {Bovaird}, \citenamefont {Brill},
  \citenamefont {Broughton}, \citenamefont {Buckley}, \citenamefont {Buell},
  \citenamefont {Burger}, \citenamefont {Burkett}, \citenamefont {Bushnell},
  \citenamefont {Campero}, \citenamefont {Chang}, \citenamefont {Chiaro},
  \citenamefont {Chik}, \citenamefont {Chou}, \citenamefont {Cogan},
  \citenamefont {Collins}, \citenamefont {Conner}, \citenamefont {Courtney},
  \citenamefont {Crook}, \citenamefont {Curtin}, \citenamefont {Debroy},
  \citenamefont {Barba}, \citenamefont {Demura}, \citenamefont {Paolo},
  \citenamefont {Dunsworth}, \citenamefont {Faoro}, \citenamefont {Farhi},
  \citenamefont {Fatemi}, \citenamefont {Ferreira}, \citenamefont {Burgos},
  \citenamefont {Forati}, \citenamefont {Fowler}, \citenamefont {Foxen},
  \citenamefont {Garcia}, \citenamefont {Genois}, \citenamefont {Giang},
  \citenamefont {Gidney}, \citenamefont {Gilboa}, \citenamefont {Giustina},
  \citenamefont {Gosula}, \citenamefont {Dau}, \citenamefont {Gross},
  \citenamefont {Habegger}, \citenamefont {Hamilton}, \citenamefont {Hansen},
  \citenamefont {Harrigan}, \citenamefont {Harrington}, \citenamefont {Heu},
  \citenamefont {Hoffmann}, \citenamefont {Huang}, \citenamefont {Huff},
  \citenamefont {Huggins}, \citenamefont {Ioffe}, \citenamefont {Isakov},
  \citenamefont {Iveland}, \citenamefont {Jeffrey}, \citenamefont {Jiang},
  \citenamefont {Jones}, \citenamefont {Juhas}, \citenamefont {Kafri},
  \citenamefont {Khattar}, \citenamefont {Khezri}, \citenamefont
  {Kieferov{\'a}}, \citenamefont {Kim}, \citenamefont {Kitaev}, \citenamefont
  {Klots}, \citenamefont {Korotkov}, \citenamefont {Kostritsa}, \citenamefont
  {Kreikebaum}, \citenamefont {Landhuis}, \citenamefont {Laptev}, \citenamefont
  {Lau}, \citenamefont {Laws}, \citenamefont {Lee}, \citenamefont {Lee},
  \citenamefont {Lensky}, \citenamefont {Lester}, \citenamefont {Lill},
  \citenamefont {Liu}, \citenamefont {Locharla}, \citenamefont {Malone},
  \citenamefont {Martin}, \citenamefont {Martin}, \citenamefont {McClean},
  \citenamefont {McEwen}, \citenamefont {Miao}, \citenamefont {Mieszala},
  \citenamefont {Montazeri}, \citenamefont {Mruczkiewicz}, \citenamefont
  {Naaman}, \citenamefont {Neeley}, \citenamefont {Neill}, \citenamefont
  {Nersisyan}, \citenamefont {Newman}, \citenamefont {Ng}, \citenamefont
  {Nguyen}, \citenamefont {Nguyen}, \citenamefont {Niu}, \citenamefont
  {O'Brien}, \citenamefont {Omonije}, \citenamefont {Opremcak}, \citenamefont
  {Petukhov}, \citenamefont {Potter}, \citenamefont {Pryadko}, \citenamefont
  {Quintana}, \citenamefont {Rhodes}, \citenamefont {Rocque}, \citenamefont
  {Roushan}, \citenamefont {Rubin}, \citenamefont {Saei}, \citenamefont {Sank},
  \citenamefont {Sankaragomathi}, \citenamefont {Satzinger}, \citenamefont
  {Schurkus}, \citenamefont {Schuster}, \citenamefont {Shearn}, \citenamefont
  {Shorter}, \citenamefont {Shutty}, \citenamefont {Shvarts}, \citenamefont
  {Sivak}, \citenamefont {Skruzny}, \citenamefont {Smith}, \citenamefont
  {Somma}, \citenamefont {Sterling}, \citenamefont {Strain}, \citenamefont
  {Szalay}, \citenamefont {Thor}, \citenamefont {Torres}, \citenamefont
  {Vidal}, \citenamefont {Heidweiller}, \citenamefont {White}, \citenamefont
  {Woo}, \citenamefont {Xing}, \citenamefont {Yao}, \citenamefont {Yeh},
  \citenamefont {Yoo}, \citenamefont {Young}, \citenamefont {Zalcman},
  \citenamefont {Zhang}, \citenamefont {Zhu}, \citenamefont {Zobrist},
  \citenamefont {Rieffel}, \citenamefont {Biswas}, \citenamefont {Babbush},
  \citenamefont {Bacon}, \citenamefont {Hilton}, \citenamefont {Lucero},
  \citenamefont {Neven}, \citenamefont {Megrant}, \citenamefont {Kelly},
  \citenamefont {Aleiner}, \citenamefont {Smelyanskiy}, \citenamefont
  {Kechedzhi}, \citenamefont {Chen},\ and\ \citenamefont
  {Boixo}}]{morvan2023phase}%
  \BibitemOpen
  \bibfield  {author} {\bibinfo {author} {\bibfnamefont {A.}~\bibnamefont
  {Morvan}}, \bibinfo {author} {\bibfnamefont {B.}~\bibnamefont {Villalonga}},
  \bibinfo {author} {\bibfnamefont {X.}~\bibnamefont {Mi}}, \bibinfo {author}
  {\bibfnamefont {S.}~\bibnamefont {Mandr{\`a}}}, \bibinfo {author}
  {\bibfnamefont {A.}~\bibnamefont {Bengtsson}}, \bibinfo {author}
  {\bibfnamefont {P.~V.}\ \bibnamefont {Klimov}}, \bibinfo {author}
  {\bibfnamefont {Z.}~\bibnamefont {Chen}}, \bibinfo {author} {\bibfnamefont
  {S.}~\bibnamefont {Hong}}, \bibinfo {author} {\bibfnamefont {C.}~\bibnamefont
  {Erickson}}, \bibinfo {author} {\bibfnamefont {I.~K.}\ \bibnamefont
  {Drozdov}}, \bibinfo {author} {\bibfnamefont {J.}~\bibnamefont {Chau}},
  \bibinfo {author} {\bibfnamefont {G.}~\bibnamefont {Laun}}, \bibinfo {author}
  {\bibfnamefont {R.}~\bibnamefont {Movassagh}}, \bibinfo {author}
  {\bibfnamefont {A.}~\bibnamefont {Asfaw}}, \bibinfo {author} {\bibfnamefont
  {L.~T. A.~N.}\ \bibnamefont {Brand{\~a}o}}, \bibinfo {author} {\bibfnamefont
  {R.}~\bibnamefont {Peralta}}, \bibinfo {author} {\bibfnamefont
  {D.}~\bibnamefont {Abanin}}, \bibinfo {author} {\bibfnamefont
  {R.}~\bibnamefont {Acharya}}, \bibinfo {author} {\bibfnamefont
  {R.}~\bibnamefont {Allen}}, \bibinfo {author} {\bibfnamefont {T.~I.}\
  \bibnamefont {Andersen}}, \bibinfo {author} {\bibfnamefont {K.}~\bibnamefont
  {Anderson}}, \bibinfo {author} {\bibfnamefont {M.}~\bibnamefont {Ansmann}},
  \bibinfo {author} {\bibfnamefont {F.}~\bibnamefont {Arute}}, \bibinfo
  {author} {\bibfnamefont {K.}~\bibnamefont {Arya}}, \bibinfo {author}
  {\bibfnamefont {J.}~\bibnamefont {Atalaya}}, \bibinfo {author} {\bibfnamefont
  {J.~C.}\ \bibnamefont {Bardin}}, \bibinfo {author} {\bibfnamefont
  {A.}~\bibnamefont {Bilmes}}, \bibinfo {author} {\bibfnamefont
  {G.}~\bibnamefont {Bortoli}}, \bibinfo {author} {\bibfnamefont
  {A.}~\bibnamefont {Bourassa}}, \bibinfo {author} {\bibfnamefont
  {J.}~\bibnamefont {Bovaird}}, \bibinfo {author} {\bibfnamefont
  {L.}~\bibnamefont {Brill}}, \bibinfo {author} {\bibfnamefont
  {M.}~\bibnamefont {Broughton}}, \bibinfo {author} {\bibfnamefont {B.~B.}\
  \bibnamefont {Buckley}}, \bibinfo {author} {\bibfnamefont {D.~A.}\
  \bibnamefont {Buell}}, \bibinfo {author} {\bibfnamefont {T.}~\bibnamefont
  {Burger}}, \bibinfo {author} {\bibfnamefont {B.}~\bibnamefont {Burkett}},
  \bibinfo {author} {\bibfnamefont {N.}~\bibnamefont {Bushnell}}, \bibinfo
  {author} {\bibfnamefont {J.}~\bibnamefont {Campero}}, \bibinfo {author}
  {\bibfnamefont {H.~S.}\ \bibnamefont {Chang}}, \bibinfo {author}
  {\bibfnamefont {B.}~\bibnamefont {Chiaro}}, \bibinfo {author} {\bibfnamefont
  {D.}~\bibnamefont {Chik}}, \bibinfo {author} {\bibfnamefont {C.}~\bibnamefont
  {Chou}}, \bibinfo {author} {\bibfnamefont {J.}~\bibnamefont {Cogan}},
  \bibinfo {author} {\bibfnamefont {R.}~\bibnamefont {Collins}}, \bibinfo
  {author} {\bibfnamefont {P.}~\bibnamefont {Conner}}, \bibinfo {author}
  {\bibfnamefont {W.}~\bibnamefont {Courtney}}, \bibinfo {author}
  {\bibfnamefont {A.~L.}\ \bibnamefont {Crook}}, \bibinfo {author}
  {\bibfnamefont {B.}~\bibnamefont {Curtin}}, \bibinfo {author} {\bibfnamefont
  {D.~M.}\ \bibnamefont {Debroy}}, \bibinfo {author} {\bibfnamefont
  {A.~Del~Toro}\ \bibnamefont {Barba}}, \bibinfo {author} {\bibfnamefont
  {S.}~\bibnamefont {Demura}}, \bibinfo {author} {\bibfnamefont {A.~Di}\
  \bibnamefont {Paolo}}, \bibinfo {author} {\bibfnamefont {A.}~\bibnamefont
  {Dunsworth}}, \bibinfo {author} {\bibfnamefont {L.}~\bibnamefont {Faoro}},
  \bibinfo {author} {\bibfnamefont {E.}~\bibnamefont {Farhi}}, \bibinfo
  {author} {\bibfnamefont {R.}~\bibnamefont {Fatemi}}, \bibinfo {author}
  {\bibfnamefont {V.~S.}\ \bibnamefont {Ferreira}}, \bibinfo {author}
  {\bibfnamefont {L.~Flores}\ \bibnamefont {Burgos}}, \bibinfo {author}
  {\bibfnamefont {E.}~\bibnamefont {Forati}}, \bibinfo {author} {\bibfnamefont
  {A.~G.}\ \bibnamefont {Fowler}}, \bibinfo {author} {\bibfnamefont
  {B.}~\bibnamefont {Foxen}}, \bibinfo {author} {\bibfnamefont
  {G.}~\bibnamefont {Garcia}}, \bibinfo {author} {\bibfnamefont
  {E.}~\bibnamefont {Genois}}, \bibinfo {author} {\bibfnamefont
  {W.}~\bibnamefont {Giang}}, \bibinfo {author} {\bibfnamefont
  {C.}~\bibnamefont {Gidney}}, \bibinfo {author} {\bibfnamefont
  {D.}~\bibnamefont {Gilboa}}, \bibinfo {author} {\bibfnamefont
  {M.}~\bibnamefont {Giustina}}, \bibinfo {author} {\bibfnamefont
  {R.}~\bibnamefont {Gosula}}, \bibinfo {author} {\bibfnamefont {A.~Grajales}\
  \bibnamefont {Dau}}, \bibinfo {author} {\bibfnamefont {J.~A.}\ \bibnamefont
  {Gross}}, \bibinfo {author} {\bibfnamefont {S.}~\bibnamefont {Habegger}},
  \bibinfo {author} {\bibfnamefont {M.~C.}\ \bibnamefont {Hamilton}}, \bibinfo
  {author} {\bibfnamefont {M.}~\bibnamefont {Hansen}}, \bibinfo {author}
  {\bibfnamefont {M.~P.}\ \bibnamefont {Harrigan}}, \bibinfo {author}
  {\bibfnamefont {S.~D.}\ \bibnamefont {Harrington}}, \bibinfo {author}
  {\bibfnamefont {P.}~\bibnamefont {Heu}}, \bibinfo {author} {\bibfnamefont
  {M.~R.}\ \bibnamefont {Hoffmann}}, \bibinfo {author} {\bibfnamefont
  {T.}~\bibnamefont {Huang}}, \bibinfo {author} {\bibfnamefont
  {A.}~\bibnamefont {Huff}}, \bibinfo {author} {\bibfnamefont {W.~J.}\
  \bibnamefont {Huggins}}, \bibinfo {author} {\bibfnamefont {L.~B.}\
  \bibnamefont {Ioffe}}, \bibinfo {author} {\bibfnamefont {S.~V.}\ \bibnamefont
  {Isakov}}, \bibinfo {author} {\bibfnamefont {J.}~\bibnamefont {Iveland}},
  \bibinfo {author} {\bibfnamefont {E.}~\bibnamefont {Jeffrey}}, \bibinfo
  {author} {\bibfnamefont {Z.}~\bibnamefont {Jiang}}, \bibinfo {author}
  {\bibfnamefont {C.}~\bibnamefont {Jones}}, \bibinfo {author} {\bibfnamefont
  {P.}~\bibnamefont {Juhas}}, \bibinfo {author} {\bibfnamefont
  {D.}~\bibnamefont {Kafri}}, \bibinfo {author} {\bibfnamefont
  {T.}~\bibnamefont {Khattar}}, \bibinfo {author} {\bibfnamefont
  {M.}~\bibnamefont {Khezri}}, \bibinfo {author} {\bibfnamefont
  {M.}~\bibnamefont {Kieferov{\'a}}}, \bibinfo {author} {\bibfnamefont
  {S.}~\bibnamefont {Kim}}, \bibinfo {author} {\bibfnamefont {A.}~\bibnamefont
  {Kitaev}}, \bibinfo {author} {\bibfnamefont {A.~R.}\ \bibnamefont {Klots}},
  \bibinfo {author} {\bibfnamefont {A.~N.}\ \bibnamefont {Korotkov}}, \bibinfo
  {author} {\bibfnamefont {F.}~\bibnamefont {Kostritsa}}, \bibinfo {author}
  {\bibfnamefont {J.~M.}\ \bibnamefont {Kreikebaum}}, \bibinfo {author}
  {\bibfnamefont {D.}~\bibnamefont {Landhuis}}, \bibinfo {author}
  {\bibfnamefont {P.}~\bibnamefont {Laptev}}, \bibinfo {author} {\bibfnamefont
  {K.~M.}\ \bibnamefont {Lau}}, \bibinfo {author} {\bibfnamefont
  {L.}~\bibnamefont {Laws}}, \bibinfo {author} {\bibfnamefont {J.}~\bibnamefont
  {Lee}}, \bibinfo {author} {\bibfnamefont {K.~W.}\ \bibnamefont {Lee}},
  \bibinfo {author} {\bibfnamefont {Y.~D.}\ \bibnamefont {Lensky}}, \bibinfo
  {author} {\bibfnamefont {B.~J.}\ \bibnamefont {Lester}}, \bibinfo {author}
  {\bibfnamefont {A.~T.}\ \bibnamefont {Lill}}, \bibinfo {author}
  {\bibfnamefont {W.}~\bibnamefont {Liu}}, \bibinfo {author} {\bibfnamefont
  {A.}~\bibnamefont {Locharla}}, \bibinfo {author} {\bibfnamefont {F.~D.}\
  \bibnamefont {Malone}}, \bibinfo {author} {\bibfnamefont {O.}~\bibnamefont
  {Martin}}, \bibinfo {author} {\bibfnamefont {S.}~\bibnamefont {Martin}},
  \bibinfo {author} {\bibfnamefont {J.~R.}\ \bibnamefont {McClean}}, \bibinfo
  {author} {\bibfnamefont {M.}~\bibnamefont {McEwen}}, \bibinfo {author}
  {\bibfnamefont {K.~C.}\ \bibnamefont {Miao}}, \bibinfo {author}
  {\bibfnamefont {A.}~\bibnamefont {Mieszala}}, \bibinfo {author}
  {\bibfnamefont {S.}~\bibnamefont {Montazeri}}, \bibinfo {author}
  {\bibfnamefont {W.}~\bibnamefont {Mruczkiewicz}}, \bibinfo {author}
  {\bibfnamefont {O.}~\bibnamefont {Naaman}}, \bibinfo {author} {\bibfnamefont
  {M.}~\bibnamefont {Neeley}}, \bibinfo {author} {\bibfnamefont
  {C.}~\bibnamefont {Neill}}, \bibinfo {author} {\bibfnamefont
  {A.}~\bibnamefont {Nersisyan}}, \bibinfo {author} {\bibfnamefont
  {M.}~\bibnamefont {Newman}}, \bibinfo {author} {\bibfnamefont {J.~H.}\
  \bibnamefont {Ng}}, \bibinfo {author} {\bibfnamefont {A.}~\bibnamefont
  {Nguyen}}, \bibinfo {author} {\bibfnamefont {M.}~\bibnamefont {Nguyen}},
  \bibinfo {author} {\bibfnamefont {M.~Yuezhen}\ \bibnamefont {Niu}}, \bibinfo
  {author} {\bibfnamefont {T.~E.}\ \bibnamefont {O'Brien}}, \bibinfo {author}
  {\bibfnamefont {S.}~\bibnamefont {Omonije}}, \bibinfo {author} {\bibfnamefont
  {A.}~\bibnamefont {Opremcak}}, \bibinfo {author} {\bibfnamefont
  {A.}~\bibnamefont {Petukhov}}, \bibinfo {author} {\bibfnamefont
  {R.}~\bibnamefont {Potter}}, \bibinfo {author} {\bibfnamefont {L.~P.}\
  \bibnamefont {Pryadko}}, \bibinfo {author} {\bibfnamefont {C.}~\bibnamefont
  {Quintana}}, \bibinfo {author} {\bibfnamefont {D.~M.}\ \bibnamefont
  {Rhodes}}, \bibinfo {author} {\bibfnamefont {C.}~\bibnamefont {Rocque}},
  \bibinfo {author} {\bibfnamefont {P.}~\bibnamefont {Roushan}}, \bibinfo
  {author} {\bibfnamefont {N.~C.}\ \bibnamefont {Rubin}}, \bibinfo {author}
  {\bibfnamefont {N.}~\bibnamefont {Saei}}, \bibinfo {author} {\bibfnamefont
  {D.}~\bibnamefont {Sank}}, \bibinfo {author} {\bibfnamefont {K.}~\bibnamefont
  {Sankaragomathi}}, \bibinfo {author} {\bibfnamefont {K.~J.}\ \bibnamefont
  {Satzinger}}, \bibinfo {author} {\bibfnamefont {H.~F.}\ \bibnamefont
  {Schurkus}}, \bibinfo {author} {\bibfnamefont {C.}~\bibnamefont {Schuster}},
  \bibinfo {author} {\bibfnamefont {M.~J.}\ \bibnamefont {Shearn}}, \bibinfo
  {author} {\bibfnamefont {A.}~\bibnamefont {Shorter}}, \bibinfo {author}
  {\bibfnamefont {N.}~\bibnamefont {Shutty}}, \bibinfo {author} {\bibfnamefont
  {V.}~\bibnamefont {Shvarts}}, \bibinfo {author} {\bibfnamefont
  {V.}~\bibnamefont {Sivak}}, \bibinfo {author} {\bibfnamefont
  {J.}~\bibnamefont {Skruzny}}, \bibinfo {author} {\bibfnamefont {W.~C.}\
  \bibnamefont {Smith}}, \bibinfo {author} {\bibfnamefont {R.~D.}\ \bibnamefont
  {Somma}}, \bibinfo {author} {\bibfnamefont {G.}~\bibnamefont {Sterling}},
  \bibinfo {author} {\bibfnamefont {D.}~\bibnamefont {Strain}}, \bibinfo
  {author} {\bibfnamefont {M.}~\bibnamefont {Szalay}}, \bibinfo {author}
  {\bibfnamefont {D.}~\bibnamefont {Thor}}, \bibinfo {author} {\bibfnamefont
  {A.}~\bibnamefont {Torres}}, \bibinfo {author} {\bibfnamefont
  {G.}~\bibnamefont {Vidal}}, \bibinfo {author} {\bibfnamefont {C.~Vollgraff}\
  \bibnamefont {Heidweiller}}, \bibinfo {author} {\bibfnamefont
  {T.}~\bibnamefont {White}}, \bibinfo {author} {\bibfnamefont {B.~W.~K.}\
  \bibnamefont {Woo}}, \bibinfo {author} {\bibfnamefont {C.}~\bibnamefont
  {Xing}}, \bibinfo {author} {\bibfnamefont {Z.~J.}\ \bibnamefont {Yao}},
  \bibinfo {author} {\bibfnamefont {P.}~\bibnamefont {Yeh}}, \bibinfo {author}
  {\bibfnamefont {J.}~\bibnamefont {Yoo}}, \bibinfo {author} {\bibfnamefont
  {G.}~\bibnamefont {Young}}, \bibinfo {author} {\bibfnamefont
  {A.}~\bibnamefont {Zalcman}}, \bibinfo {author} {\bibfnamefont
  {Y.}~\bibnamefont {Zhang}}, \bibinfo {author} {\bibfnamefont
  {N.}~\bibnamefont {Zhu}}, \bibinfo {author} {\bibfnamefont {N.}~\bibnamefont
  {Zobrist}}, \bibinfo {author} {\bibfnamefont {E.~G.}\ \bibnamefont
  {Rieffel}}, \bibinfo {author} {\bibfnamefont {R.}~\bibnamefont {Biswas}},
  \bibinfo {author} {\bibfnamefont {R.}~\bibnamefont {Babbush}}, \bibinfo
  {author} {\bibfnamefont {D.}~\bibnamefont {Bacon}}, \bibinfo {author}
  {\bibfnamefont {J.}~\bibnamefont {Hilton}}, \bibinfo {author} {\bibfnamefont
  {E.}~\bibnamefont {Lucero}}, \bibinfo {author} {\bibfnamefont
  {H.}~\bibnamefont {Neven}}, \bibinfo {author} {\bibfnamefont
  {A.}~\bibnamefont {Megrant}}, \bibinfo {author} {\bibfnamefont
  {J.}~\bibnamefont {Kelly}}, \bibinfo {author} {\bibfnamefont
  {I.}~\bibnamefont {Aleiner}}, \bibinfo {author} {\bibfnamefont
  {V.}~\bibnamefont {Smelyanskiy}}, \bibinfo {author} {\bibfnamefont
  {K.}~\bibnamefont {Kechedzhi}}, \bibinfo {author} {\bibfnamefont
  {Y.}~\bibnamefont {Chen}}, \ and\ \bibinfo {author} {\bibfnamefont
  {S.}~\bibnamefont {Boixo}},\ }\bibfield  {title} {\enquote {\bibinfo {title}
  {Phase transition in random circuit sampling},}\ }\href
  {https://arxiv.org/abs/2304.11119} {\bibfield  {journal} {\bibinfo  {journal}
  {arXiv e-prints}\ } (\bibinfo {year} {2023})},\ \Eprint
  {http://arxiv.org/abs/2304.11119} {arXiv:2304.11119 [quant-ph]} \BibitemShut
  {NoStop}%
\bibitem [{\citenamefont {Figgatt}\ \emph {et~al.}(2017)\citenamefont
  {Figgatt}, \citenamefont {Maslov}, \citenamefont {Landsman}, \citenamefont
  {Linke}, \citenamefont {Debnath},\ and\ \citenamefont
  {Monroe}}]{figgattComplete3QubitGrover2017}%
  \BibitemOpen
  \bibfield  {author} {\bibinfo {author} {\bibfnamefont {C.}~\bibnamefont
  {Figgatt}}, \bibinfo {author} {\bibfnamefont {D.}~\bibnamefont {Maslov}},
  \bibinfo {author} {\bibfnamefont {K.~A.}\ \bibnamefont {Landsman}}, \bibinfo
  {author} {\bibfnamefont {N.~M.}\ \bibnamefont {Linke}}, \bibinfo {author}
  {\bibfnamefont {S.}~\bibnamefont {Debnath}}, \ and\ \bibinfo {author}
  {\bibfnamefont {C.}~\bibnamefont {Monroe}},\ }\bibfield  {title} {\enquote
  {\bibinfo {title} {Complete 3-{{Qubit Grover}} search on a programmable
  quantum computer},}\ }\href
  {https://www.nature.com/articles/s41467-017-01904-7} {\bibfield  {journal}
  {\bibinfo  {journal} {Nat. Commun.}\ }\textbf {\bibinfo {volume} {8}},\
  \bibinfo {pages} {1--9} (\bibinfo {year} {2017})}\BibitemShut {NoStop}%
\bibitem [{\citenamefont {Wright}\ \emph {et~al.}(2019)\citenamefont {Wright},
  \citenamefont {Beck}, \citenamefont {Debnath}, \citenamefont {Amini},
  \citenamefont {Nam}, \citenamefont {Grzesiak}, \citenamefont {Chen},
  \citenamefont {Pisenti}, \citenamefont {Chmielewski}, \citenamefont
  {Collins}, \citenamefont {Hudek}, \citenamefont {Mizrahi}, \citenamefont
  {{Wong-Campos}}, \citenamefont {Allen}, \citenamefont {Apisdorf},
  \citenamefont {Solomon}, \citenamefont {Williams}, \citenamefont {Ducore},
  \citenamefont {Blinov}, \citenamefont {Kreikemeier}, \citenamefont {Chaplin},
  \citenamefont {Keesan}, \citenamefont {Monroe},\ and\ \citenamefont
  {Kim}}]{wrightBenchmarking11qubitQuantum2019}%
  \BibitemOpen
  \bibfield  {author} {\bibinfo {author} {\bibfnamefont {K.}~\bibnamefont
  {Wright}}, \bibinfo {author} {\bibfnamefont {K.~M.}\ \bibnamefont {Beck}},
  \bibinfo {author} {\bibfnamefont {S.}~\bibnamefont {Debnath}}, \bibinfo
  {author} {\bibfnamefont {J.~M.}\ \bibnamefont {Amini}}, \bibinfo {author}
  {\bibfnamefont {Y.}~\bibnamefont {Nam}}, \bibinfo {author} {\bibfnamefont
  {N.}~\bibnamefont {Grzesiak}}, \bibinfo {author} {\bibfnamefont {J.-S.}\
  \bibnamefont {Chen}}, \bibinfo {author} {\bibfnamefont {N.~C.}\ \bibnamefont
  {Pisenti}}, \bibinfo {author} {\bibfnamefont {M.}~\bibnamefont
  {Chmielewski}}, \bibinfo {author} {\bibfnamefont {C.}~\bibnamefont
  {Collins}}, \bibinfo {author} {\bibfnamefont {K.~M.}\ \bibnamefont {Hudek}},
  \bibinfo {author} {\bibfnamefont {J.}~\bibnamefont {Mizrahi}}, \bibinfo
  {author} {\bibfnamefont {J.~D.}\ \bibnamefont {{Wong-Campos}}}, \bibinfo
  {author} {\bibfnamefont {S.}~\bibnamefont {Allen}}, \bibinfo {author}
  {\bibfnamefont {J.}~\bibnamefont {Apisdorf}}, \bibinfo {author}
  {\bibfnamefont {P.}~\bibnamefont {Solomon}}, \bibinfo {author} {\bibfnamefont
  {M.}~\bibnamefont {Williams}}, \bibinfo {author} {\bibfnamefont {A.~M.}\
  \bibnamefont {Ducore}}, \bibinfo {author} {\bibfnamefont {A.}~\bibnamefont
  {Blinov}}, \bibinfo {author} {\bibfnamefont {S.~M.}\ \bibnamefont
  {Kreikemeier}}, \bibinfo {author} {\bibfnamefont {V.}~\bibnamefont
  {Chaplin}}, \bibinfo {author} {\bibfnamefont {M.}~\bibnamefont {Keesan}},
  \bibinfo {author} {\bibfnamefont {C.}~\bibnamefont {Monroe}}, \ and\ \bibinfo
  {author} {\bibfnamefont {J.}~\bibnamefont {Kim}},\ }\bibfield  {title}
  {\enquote {\bibinfo {title} {Benchmarking an 11-qubit quantum computer},}\
  }\href {https://www.nature.com/articles/s41467-019-13534-2} {\bibfield
  {journal} {\bibinfo  {journal} {Nat. Commun.}\ }\textbf {\bibinfo {volume}
  {10}},\ \bibinfo {pages} {5464} (\bibinfo {year} {2019})}\BibitemShut
  {NoStop}%
\bibitem [{\citenamefont {Roy}\ \emph {et~al.}(2020)\citenamefont {Roy},
  \citenamefont {Hazra}, \citenamefont {Kundu}, \citenamefont {Chand},
  \citenamefont {Patankar},\ and\ \citenamefont
  {Vijay}}]{royProgrammableSuperconductingProcessor2020}%
  \BibitemOpen
  \bibfield  {author} {\bibinfo {author} {\bibfnamefont {Tanay}\ \bibnamefont
  {Roy}}, \bibinfo {author} {\bibfnamefont {Sumeru}\ \bibnamefont {Hazra}},
  \bibinfo {author} {\bibfnamefont {Suman}\ \bibnamefont {Kundu}}, \bibinfo
  {author} {\bibfnamefont {Madhavi}\ \bibnamefont {Chand}}, \bibinfo {author}
  {\bibfnamefont {Meghan~P.}\ \bibnamefont {Patankar}}, \ and\ \bibinfo
  {author} {\bibfnamefont {R.}~\bibnamefont {Vijay}},\ }\bibfield  {title}
  {\enquote {\bibinfo {title} {Programmable {{Superconducting Processor}} with
  {{Native Three-Qubit Gates}}},}\ }\href
  {https://journals.aps.org/prapplied/abstract/10.1103/PhysRevApplied.14.014072}
  {\bibfield  {journal} {\bibinfo  {journal} {Phys. Rev. Applied}\ }\textbf
  {\bibinfo {volume} {14}},\ \bibinfo {pages} {014072} (\bibinfo {year}
  {2020})}\BibitemShut {NoStop}%
\bibitem [{\citenamefont {Pelofske}\ \emph {et~al.}(2022)\citenamefont
  {Pelofske}, \citenamefont {Bartschi},\ and\ \citenamefont
  {Eidenbenz}}]{Pelofske2022}%
  \BibitemOpen
  \bibfield  {author} {\bibinfo {author} {\bibfnamefont {Elijah}\ \bibnamefont
  {Pelofske}}, \bibinfo {author} {\bibfnamefont {Andreas}\ \bibnamefont
  {Bartschi}}, \ and\ \bibinfo {author} {\bibfnamefont {Stephan}\ \bibnamefont
  {Eidenbenz}},\ }\bibfield  {title} {\enquote {\bibinfo {title} {Quantum
  volume in practice: What users can expect from nisq devices},}\ }\href
  {\doibase 10.1109/TQE.2022.3184764} {\bibfield  {journal} {\bibinfo
  {journal} {IEEE Transactions on Quantum Engineering}\ }\textbf {\bibinfo
  {volume} {3}},\ \bibinfo {pages} {1--19} (\bibinfo {year}
  {2022})}\BibitemShut {NoStop}%
\bibitem [{\citenamefont {Lubinski}\ \emph {et~al.}(2023)\citenamefont
  {Lubinski}, \citenamefont {Johri}, \citenamefont {Varosy}, \citenamefont
  {Coleman}, \citenamefont {Zhao}, \citenamefont {Necaise}, \citenamefont
  {Baldwin}, \citenamefont {Mayer},\ and\ \citenamefont
  {Proctor}}]{Lubinski2021}%
  \BibitemOpen
  \bibfield  {author} {\bibinfo {author} {\bibfnamefont {T.}~\bibnamefont
  {Lubinski}}, \bibinfo {author} {\bibfnamefont {S.}~\bibnamefont {Johri}},
  \bibinfo {author} {\bibfnamefont {P.}~\bibnamefont {Varosy}}, \bibinfo
  {author} {\bibfnamefont {J.}~\bibnamefont {Coleman}}, \bibinfo {author}
  {\bibfnamefont {L.}~\bibnamefont {Zhao}}, \bibinfo {author} {\bibfnamefont
  {J.}~\bibnamefont {Necaise}}, \bibinfo {author} {\bibfnamefont {C.~H.}\
  \bibnamefont {Baldwin}}, \bibinfo {author} {\bibfnamefont {K.}~\bibnamefont
  {Mayer}}, \ and\ \bibinfo {author} {\bibfnamefont {T.}~\bibnamefont
  {Proctor}},\ }\bibfield  {title} {\enquote {\bibinfo {title}
  {Application-oriented performance benchmarks for quantum computing},}\ }\href
  {\doibase 10.1109/TQE.2023.3253761} {\bibfield  {journal} {\bibinfo
  {journal} {IEEE Transactions on Quantum Engineering}\ }\textbf {\bibinfo
  {volume} {4}},\ \bibinfo {pages} {1--32} (\bibinfo {year}
  {2023})}\BibitemShut {NoStop}%
\bibitem [{\citenamefont {Barak}\ \emph {et~al.}(2021)\citenamefont {Barak},
  \citenamefont {Chou},\ and\ \citenamefont {Gao}}]{Barak:spoofing}%
  \BibitemOpen
  \bibfield  {author} {\bibinfo {author} {\bibfnamefont {Boaz}\ \bibnamefont
  {Barak}}, \bibinfo {author} {\bibfnamefont {Chi-Ning}\ \bibnamefont {Chou}},
  \ and\ \bibinfo {author} {\bibfnamefont {Xun}\ \bibnamefont {Gao}},\
  }\bibfield  {title} {\enquote {\bibinfo {title} {{Spoofing Linear
  Cross-Entropy Benchmarking in Shallow Quantum Circuits}},}\ }in\ \href
  {https://drops.dagstuhl.de/opus/volltexte/2021/13569} {\emph {\bibinfo
  {booktitle} {12th Innovations in Theoretical Computer Science Conference
  (ITCS 2021)}}},\ \bibinfo {series} {Leibniz International Proceedings in
  Informatics (LIPIcs)}, Vol.\ \bibinfo {volume} {185},\ \bibinfo {editor}
  {edited by\ \bibinfo {editor} {\bibfnamefont {James~R.}\ \bibnamefont
  {Lee}}}\ (\bibinfo  {publisher} {Schloss Dagstuhl--Leibniz-Zentrum f{\"u}r
  Informatik},\ \bibinfo {address} {Dagstuhl, Germany},\ \bibinfo {year}
  {2021})\ pp.\ \bibinfo {pages} {30:1--30:20}\BibitemShut {NoStop}%
\bibitem [{\citenamefont {Zlokapa}\ \emph {et~al.}(2023)\citenamefont
  {Zlokapa}, \citenamefont {Villalonga}, \citenamefont {Boixo},\ and\
  \citenamefont {Lidar}}]{Zlokapa:2023aa}%
  \BibitemOpen
  \bibfield  {author} {\bibinfo {author} {\bibfnamefont {Alexander}\
  \bibnamefont {Zlokapa}}, \bibinfo {author} {\bibfnamefont {Benjamin}\
  \bibnamefont {Villalonga}}, \bibinfo {author} {\bibfnamefont {Sergio}\
  \bibnamefont {Boixo}}, \ and\ \bibinfo {author} {\bibfnamefont {Daniel~A.}\
  \bibnamefont {Lidar}},\ }\bibfield  {title} {\enquote {\bibinfo {title}
  {Boundaries of quantum supremacy via random circuit sampling},}\ }\href
  {\doibase 10.1038/s41534-023-00703-x} {\bibfield  {journal} {\bibinfo
  {journal} {npj Quantum Information}\ }\textbf {\bibinfo {volume} {9}},\
  \bibinfo {pages} {36} (\bibinfo {year} {2023})}\BibitemShut {NoStop}%
\bibitem [{\citenamefont {Aharonov}\ \emph {et~al.}(2023)\citenamefont
  {Aharonov}, \citenamefont {Gao}, \citenamefont {Landau}, \citenamefont
  {Liu},\ and\ \citenamefont {Vazirani}}]{Aharonov:22}%
  \BibitemOpen
  \bibfield  {author} {\bibinfo {author} {\bibfnamefont {Dorit}\ \bibnamefont
  {Aharonov}}, \bibinfo {author} {\bibfnamefont {Xun}\ \bibnamefont {Gao}},
  \bibinfo {author} {\bibfnamefont {Zeph}\ \bibnamefont {Landau}}, \bibinfo
  {author} {\bibfnamefont {Yunchao}\ \bibnamefont {Liu}}, \ and\ \bibinfo
  {author} {\bibfnamefont {Umesh}\ \bibnamefont {Vazirani}},\ }\bibfield
  {title} {\enquote {\bibinfo {title} {A polynomial-time classical algorithm
  for noisy random circuit sampling},}\ }in\ \href {\doibase
  10.1145/3564246.3585234} {\emph {\bibinfo {booktitle} {Proceedings of the
  55th Annual ACM Symposium on Theory of Computing}}},\ \bibinfo {series and
  number} {STOC 2023}\ (\bibinfo  {publisher} {Association for Computing
  Machinery},\ \bibinfo {address} {New York, NY, USA},\ \bibinfo {year}
  {2023})\ pp.\ \bibinfo {pages} {945--957}\BibitemShut {NoStop}%
\bibitem [{\citenamefont {Ronnow}\ \emph {et~al.}(2014)\citenamefont {Ronnow},
  \citenamefont {Wang}, \citenamefont {Job}, \citenamefont {Boixo},
  \citenamefont {Isakov}, \citenamefont {Wecker}, \citenamefont {Martinis},
  \citenamefont {Lidar},\ and\ \citenamefont {Troyer}}]{speedup}%
  \BibitemOpen
  \bibfield  {author} {\bibinfo {author} {\bibfnamefont {Troels~F.}\
  \bibnamefont {Ronnow}}, \bibinfo {author} {\bibfnamefont {Zhihui}\
  \bibnamefont {Wang}}, \bibinfo {author} {\bibfnamefont {Joshua}\ \bibnamefont
  {Job}}, \bibinfo {author} {\bibfnamefont {Sergio}\ \bibnamefont {Boixo}},
  \bibinfo {author} {\bibfnamefont {Sergei~V.}\ \bibnamefont {Isakov}},
  \bibinfo {author} {\bibfnamefont {David}\ \bibnamefont {Wecker}}, \bibinfo
  {author} {\bibfnamefont {John~M.}\ \bibnamefont {Martinis}}, \bibinfo
  {author} {\bibfnamefont {Daniel~A.}\ \bibnamefont {Lidar}}, \ and\ \bibinfo
  {author} {\bibfnamefont {Matthias}\ \bibnamefont {Troyer}},\ }\bibfield
  {title} {\enquote {\bibinfo {title} {{Defining and detecting quantum
  speedup}},}\ }\href {http://science.sciencemag.org/content/345/6195/420}
  {\bibfield  {journal} {\bibinfo  {journal} {Science}\ }\textbf {\bibinfo
  {volume} {345}},\ \bibinfo {pages} {420--424} (\bibinfo {year}
  {2014})}\BibitemShut {NoStop}%
\bibitem [{\citenamefont {Pokharel}\ and\ \citenamefont
  {Lidar}(2023)}]{pokharel2022demonstration}%
  \BibitemOpen
  \bibfield  {author} {\bibinfo {author} {\bibfnamefont {Bibek}\ \bibnamefont
  {Pokharel}}\ and\ \bibinfo {author} {\bibfnamefont {Daniel~A.}\ \bibnamefont
  {Lidar}},\ }\bibfield  {title} {\enquote {\bibinfo {title} {Demonstration of
  algorithmic quantum speedup},}\ }\href {\doibase
  10.1103/PhysRevLett.130.210602} {\bibfield  {journal} {\bibinfo  {journal}
  {Physical Review Letters}\ }\textbf {\bibinfo {volume} {130}},\ \bibinfo
  {pages} {210602--} (\bibinfo {year} {2023})}\BibitemShut {NoStop}%
\bibitem [{\citenamefont {Viola}\ and\ \citenamefont {Lloyd}(1998)}]{Viola:98}%
  \BibitemOpen
  \bibfield  {author} {\bibinfo {author} {\bibfnamefont {Lorenza}\ \bibnamefont
  {Viola}}\ and\ \bibinfo {author} {\bibfnamefont {Seth}\ \bibnamefont
  {Lloyd}},\ }\bibfield  {title} {\enquote {\bibinfo {title} {Dynamical
  suppression of decoherence in two-state quantum systems},}\ }\href
  {https://link.aps.org/doi/10.1103/PhysRevA.58.2733} {\bibfield  {journal}
  {\bibinfo  {journal} {Phys. Rev. A}\ }\textbf {\bibinfo {volume} {58}},\
  \bibinfo {pages} {2733--2744} (\bibinfo {year} {1998})}\BibitemShut {NoStop}%
\bibitem [{\citenamefont {Viola}\ \emph {et~al.}(1999)\citenamefont {Viola},
  \citenamefont {Knill},\ and\ \citenamefont {Lloyd}}]{Viola:99}%
  \BibitemOpen
  \bibfield  {author} {\bibinfo {author} {\bibfnamefont {Lorenza}\ \bibnamefont
  {Viola}}, \bibinfo {author} {\bibfnamefont {Emanuel}\ \bibnamefont {Knill}},
  \ and\ \bibinfo {author} {\bibfnamefont {Seth}\ \bibnamefont {Lloyd}},\
  }\bibfield  {title} {\enquote {\bibinfo {title} {Dynamical decoupling of open
  quantum systems},}\ }\href
  {http://link.aps.org/doi/10.1103/PhysRevLett.82.2417} {\bibfield  {journal}
  {\bibinfo  {journal} {Physical Review Letters}\ }\textbf {\bibinfo {volume}
  {82}},\ \bibinfo {pages} {2417--2421} (\bibinfo {year} {1999})}\BibitemShut
  {NoStop}%
\bibitem [{\citenamefont {Zanardi}(1999)}]{Zanardi:1999fk}%
  \BibitemOpen
  \bibfield  {author} {\bibinfo {author} {\bibfnamefont {Paolo}\ \bibnamefont
  {Zanardi}},\ }\bibfield  {title} {\enquote {\bibinfo {title} {Symmetrizing
  evolutions},}\ }\href
  {http://www.sciencedirect.com/science/article/pii/S0375960199003655}
  {\bibfield  {journal} {\bibinfo  {journal} {Physics Letters A}\ }\textbf
  {\bibinfo {volume} {258}},\ \bibinfo {pages} {77--82} (\bibinfo {year}
  {1999})}\BibitemShut {NoStop}%
\bibitem [{\citenamefont {Vitali}\ and\ \citenamefont
  {Tombesi}(1999)}]{Vitali:99}%
  \BibitemOpen
  \bibfield  {author} {\bibinfo {author} {\bibfnamefont {D.}~\bibnamefont
  {Vitali}}\ and\ \bibinfo {author} {\bibfnamefont {P.}~\bibnamefont
  {Tombesi}},\ }\bibfield  {title} {\enquote {\bibinfo {title} {Using parity
  kicks for decoherence control},}\ }\href
  {https://link.aps.org/doi/10.1103/PhysRevA.59.4178} {\bibfield  {journal}
  {\bibinfo  {journal} {Physical Review A}\ }\textbf {\bibinfo {volume} {59}},\
  \bibinfo {pages} {4178--4186} (\bibinfo {year} {1999})}\BibitemShut {NoStop}%
\bibitem [{\citenamefont {Duan}\ and\ \citenamefont {Guo}(1999)}]{Duan:98e}%
  \BibitemOpen
  \bibfield  {author} {\bibinfo {author} {\bibfnamefont {Lu-Ming}\ \bibnamefont
  {Duan}}\ and\ \bibinfo {author} {\bibfnamefont {Guang-Can}\ \bibnamefont
  {Guo}},\ }\bibfield  {title} {\enquote {\bibinfo {title} {Suppressing
  environmental noise in quantum computation through pulse control},}\ }\href
  {https://www.sciencedirect.com/science/article/pii/S0375960199005927}
  {\bibfield  {journal} {\bibinfo  {journal} {Physics Letters A}\ }\textbf
  {\bibinfo {volume} {261}},\ \bibinfo {pages} {139--144} (\bibinfo {year}
  {1999})}\BibitemShut {NoStop}%
\bibitem [{\citenamefont {Pokharel}\ \emph {et~al.}(2018)\citenamefont
  {Pokharel}, \citenamefont {Anand}, \citenamefont {Fortman},\ and\
  \citenamefont {Lidar}}]{Pokharel2018}%
  \BibitemOpen
  \bibfield  {author} {\bibinfo {author} {\bibfnamefont {Bibek}\ \bibnamefont
  {Pokharel}}, \bibinfo {author} {\bibfnamefont {Namit}\ \bibnamefont {Anand}},
  \bibinfo {author} {\bibfnamefont {Benjamin}\ \bibnamefont {Fortman}}, \ and\
  \bibinfo {author} {\bibfnamefont {Daniel~A.}\ \bibnamefont {Lidar}},\
  }\bibfield  {title} {\enquote {\bibinfo {title} {Demonstration of fidelity
  improvement using dynamical decoupling with superconducting qubits},}\ }\href
  {https://link.aps.org/doi/10.1103/PhysRevLett.121.220502} {\bibfield
  {journal} {\bibinfo  {journal} {Phys. Rev. Lett.}\ }\textbf {\bibinfo
  {volume} {121}},\ \bibinfo {pages} {220502} (\bibinfo {year}
  {2018})}\BibitemShut {NoStop}%
\bibitem [{\citenamefont {Souza}(2021)}]{souza2020process}%
  \BibitemOpen
  \bibfield  {author} {\bibinfo {author} {\bibfnamefont {Alexandre~M.}\
  \bibnamefont {Souza}},\ }\bibfield  {title} {\enquote {\bibinfo {title}
  {Process tomography of robust dynamical decoupling with superconducting
  qubits},}\ }\href {https://doi.org/10.1007/s11128-021-03176-z} {\bibfield
  {journal} {\bibinfo  {journal} {Quantum Information Processing}\ }\textbf
  {\bibinfo {volume} {20}},\ \bibinfo {pages} {237} (\bibinfo {year}
  {2021})}\BibitemShut {NoStop}%
\bibitem [{\citenamefont {Tripathi}\ \emph {et~al.}(2022)\citenamefont
  {Tripathi}, \citenamefont {Chen}, \citenamefont {Khezri}, \citenamefont
  {Yip}, \citenamefont {Levenson-Falk},\ and\ \citenamefont
  {Lidar}}]{tripathi2021suppression}%
  \BibitemOpen
  \bibfield  {author} {\bibinfo {author} {\bibfnamefont {Vinay}\ \bibnamefont
  {Tripathi}}, \bibinfo {author} {\bibfnamefont {Huo}\ \bibnamefont {Chen}},
  \bibinfo {author} {\bibfnamefont {Mostafa}\ \bibnamefont {Khezri}}, \bibinfo
  {author} {\bibfnamefont {Ka-Wa}\ \bibnamefont {Yip}}, \bibinfo {author}
  {\bibfnamefont {E.~M.}\ \bibnamefont {Levenson-Falk}}, \ and\ \bibinfo
  {author} {\bibfnamefont {Daniel~A.}\ \bibnamefont {Lidar}},\ }\bibfield
  {title} {\enquote {\bibinfo {title} {Suppression of crosstalk in
  superconducting qubits using dynamical decoupling},}\ }\href {\doibase
  10.1103/PhysRevApplied.18.024068} {\bibfield  {journal} {\bibinfo  {journal}
  {Physical Review Applied}\ }\textbf {\bibinfo {volume} {18}},\ \bibinfo
  {pages} {024068--} (\bibinfo {year} {2022})}\BibitemShut {NoStop}%
\bibitem [{\citenamefont {Jurcevic}\ \emph {et~al.}(2021)\citenamefont
  {Jurcevic}, \citenamefont {{Javadi-Abhari}}, \citenamefont {Bishop},
  \citenamefont {Lauer}, \citenamefont {Bogorin}, \citenamefont {Brink},
  \citenamefont {Capelluto}, \citenamefont {G{\"u}nl{\"u}k}, \citenamefont
  {Itoko}, \citenamefont {Kanazawa}, \citenamefont {Kandala}, \citenamefont
  {Keefe}, \citenamefont {Krsulich}, \citenamefont {Landers}, \citenamefont
  {Lewandowski}, \citenamefont {McClure}, \citenamefont {Nannicini},
  \citenamefont {Narasgond}, \citenamefont {Nayfeh}, \citenamefont {Pritchett},
  \citenamefont {Rothwell}, \citenamefont {Srinivasan}, \citenamefont
  {Sundaresan}, \citenamefont {Wang}, \citenamefont {Wei}, \citenamefont
  {Wood}, \citenamefont {Yau}, \citenamefont {Zhang}, \citenamefont {Dial},
  \citenamefont {Chow},\ and\ \citenamefont
  {Gambetta}}]{jurcevicDemonstrationQuantumVolume2021}%
  \BibitemOpen
  \bibfield  {author} {\bibinfo {author} {\bibfnamefont {Petar}\ \bibnamefont
  {Jurcevic}}, \bibinfo {author} {\bibfnamefont {Ali}\ \bibnamefont
  {{Javadi-Abhari}}}, \bibinfo {author} {\bibfnamefont {Lev~S.}\ \bibnamefont
  {Bishop}}, \bibinfo {author} {\bibfnamefont {Isaac}\ \bibnamefont {Lauer}},
  \bibinfo {author} {\bibfnamefont {Daniela~F.}\ \bibnamefont {Bogorin}},
  \bibinfo {author} {\bibfnamefont {Markus}\ \bibnamefont {Brink}}, \bibinfo
  {author} {\bibfnamefont {Lauren}\ \bibnamefont {Capelluto}}, \bibinfo
  {author} {\bibfnamefont {Oktay}\ \bibnamefont {G{\"u}nl{\"u}k}}, \bibinfo
  {author} {\bibfnamefont {Toshinari}\ \bibnamefont {Itoko}}, \bibinfo {author}
  {\bibfnamefont {Naoki}\ \bibnamefont {Kanazawa}}, \bibinfo {author}
  {\bibfnamefont {Abhinav}\ \bibnamefont {Kandala}}, \bibinfo {author}
  {\bibfnamefont {George~A.}\ \bibnamefont {Keefe}}, \bibinfo {author}
  {\bibfnamefont {Kevin}\ \bibnamefont {Krsulich}}, \bibinfo {author}
  {\bibfnamefont {William}\ \bibnamefont {Landers}}, \bibinfo {author}
  {\bibfnamefont {Eric~P.}\ \bibnamefont {Lewandowski}}, \bibinfo {author}
  {\bibfnamefont {Douglas~T.}\ \bibnamefont {McClure}}, \bibinfo {author}
  {\bibfnamefont {Giacomo}\ \bibnamefont {Nannicini}}, \bibinfo {author}
  {\bibfnamefont {Adinath}\ \bibnamefont {Narasgond}}, \bibinfo {author}
  {\bibfnamefont {Hasan~M.}\ \bibnamefont {Nayfeh}}, \bibinfo {author}
  {\bibfnamefont {Emily}\ \bibnamefont {Pritchett}}, \bibinfo {author}
  {\bibfnamefont {Mary~Beth}\ \bibnamefont {Rothwell}}, \bibinfo {author}
  {\bibfnamefont {Srikanth}\ \bibnamefont {Srinivasan}}, \bibinfo {author}
  {\bibfnamefont {Neereja}\ \bibnamefont {Sundaresan}}, \bibinfo {author}
  {\bibfnamefont {Cindy}\ \bibnamefont {Wang}}, \bibinfo {author}
  {\bibfnamefont {Ken~X.}\ \bibnamefont {Wei}}, \bibinfo {author}
  {\bibfnamefont {Christopher~J.}\ \bibnamefont {Wood}}, \bibinfo {author}
  {\bibfnamefont {Jeng-Bang}\ \bibnamefont {Yau}}, \bibinfo {author}
  {\bibfnamefont {Eric~J.}\ \bibnamefont {Zhang}}, \bibinfo {author}
  {\bibfnamefont {Oliver~E.}\ \bibnamefont {Dial}}, \bibinfo {author}
  {\bibfnamefont {Jerry~M.}\ \bibnamefont {Chow}}, \ and\ \bibinfo {author}
  {\bibfnamefont {Jay~M.}\ \bibnamefont {Gambetta}},\ }\bibfield  {title}
  {\enquote {\bibinfo {title} {Demonstration of quantum volume 64 on a
  superconducting quantum computing system},}\ }\href
  {https://iopscience.iop.org/article/10.1088/2058-9565/abe519} {\bibfield
  {journal} {\bibinfo  {journal} {Quantum Sci. Technol.}\ }\textbf {\bibinfo
  {volume} {6}},\ \bibinfo {pages} {025020} (\bibinfo {year}
  {2021})}\BibitemShut {NoStop}%
\bibitem [{\citenamefont {Ravi}\ \emph {et~al.}(2022)\citenamefont {Ravi},
  \citenamefont {Smith}, \citenamefont {Gokhale}, \citenamefont {Mari},
  \citenamefont {Earnest}, \citenamefont {Javadi-Abhari},\ and\ \citenamefont
  {Chong}}]{raviVAQEMVariationalApproach2021}%
  \BibitemOpen
  \bibfield  {author} {\bibinfo {author} {\bibfnamefont {G.~S.}\ \bibnamefont
  {Ravi}}, \bibinfo {author} {\bibfnamefont {K.~N.}\ \bibnamefont {Smith}},
  \bibinfo {author} {\bibfnamefont {P.}~\bibnamefont {Gokhale}}, \bibinfo
  {author} {\bibfnamefont {A.}~\bibnamefont {Mari}}, \bibinfo {author}
  {\bibfnamefont {N.}~\bibnamefont {Earnest}}, \bibinfo {author} {\bibfnamefont
  {A.}~\bibnamefont {Javadi-Abhari}}, \ and\ \bibinfo {author} {\bibfnamefont
  {F.~T.}\ \bibnamefont {Chong}},\ }\bibfield  {title} {\enquote {\bibinfo
  {title} {Vaqem: A variational approach to quantum error mitigation},}\
  }\bibfield  {booktitle} {\emph {\bibinfo {booktitle} {2022 IEEE International
  Symposium on High-Performance Computer Architecture (HPCA)}},\ }\href
  {\doibase 10.1109/HPCA53966.2022.00029} {\bibfield  {journal} {\bibinfo
  {journal} {2022 IEEE International Symposium on High-Performance Computer
  Architecture (HPCA)}\ ,\ \bibinfo {pages} {288--303}} (\bibinfo {year}
  {2022})}\BibitemShut {NoStop}%
\bibitem [{\citenamefont {Zhou}\ \emph {et~al.}(2023)\citenamefont {Zhou},
  \citenamefont {Sitler}, \citenamefont {Oda}, \citenamefont {Schultz},\ and\
  \citenamefont {Quiroz}}]{Zeyuan:22}%
  \BibitemOpen
  \bibfield  {author} {\bibinfo {author} {\bibfnamefont {Zeyuan}\ \bibnamefont
  {Zhou}}, \bibinfo {author} {\bibfnamefont {Ryan}\ \bibnamefont {Sitler}},
  \bibinfo {author} {\bibfnamefont {Yasuo}\ \bibnamefont {Oda}}, \bibinfo
  {author} {\bibfnamefont {Kevin}\ \bibnamefont {Schultz}}, \ and\ \bibinfo
  {author} {\bibfnamefont {Gregory}\ \bibnamefont {Quiroz}},\ }\bibfield
  {title} {\enquote {\bibinfo {title} {Quantum crosstalk robust quantum
  control},}\ }\href {\doibase 10.1103/PhysRevLett.131.210802} {\bibfield
  {journal} {\bibinfo  {journal} {Physical Review Letters}\ }\textbf {\bibinfo
  {volume} {131}},\ \bibinfo {pages} {210802--} (\bibinfo {year}
  {2023})}\BibitemShut {NoStop}%
\bibitem [{\citenamefont {B{\"a}umer}\ \emph {et~al.}(2023)\citenamefont
  {B{\"a}umer}, \citenamefont {Tripathi}, \citenamefont {Wang}, \citenamefont
  {Rall}, \citenamefont {Chen}, \citenamefont {Majumder}, \citenamefont
  {Seif},\ and\ \citenamefont {Minev}}]{baumer2023efficient}%
  \BibitemOpen
  \bibfield  {author} {\bibinfo {author} {\bibfnamefont {Elisa}\ \bibnamefont
  {B{\"a}umer}}, \bibinfo {author} {\bibfnamefont {Vinay}\ \bibnamefont
  {Tripathi}}, \bibinfo {author} {\bibfnamefont {Derek~S.}\ \bibnamefont
  {Wang}}, \bibinfo {author} {\bibfnamefont {Patrick}\ \bibnamefont {Rall}},
  \bibinfo {author} {\bibfnamefont {Edward~H.}\ \bibnamefont {Chen}}, \bibinfo
  {author} {\bibfnamefont {Swarnadeep}\ \bibnamefont {Majumder}}, \bibinfo
  {author} {\bibfnamefont {Alireza}\ \bibnamefont {Seif}}, \ and\ \bibinfo
  {author} {\bibfnamefont {Zlatko~K.}\ \bibnamefont {Minev}},\ }\href@noop {}
  {\enquote {\bibinfo {title} {Efficient long-range entanglement using dynamic
  circuits},}\ } (\bibinfo {year} {2023}),\ \Eprint
  {http://arxiv.org/abs/2308.13065} {arXiv:2308.13065 [quant-ph]} \BibitemShut
  {NoStop}%
\bibitem [{\citenamefont {Seif}\ \emph {et~al.}(2024)\citenamefont {Seif},
  \citenamefont {Liao}, \citenamefont {Tripathi}, \citenamefont {Krsulich},
  \citenamefont {Malekakhlagh}, \citenamefont {Amico}, \citenamefont
  {Jurcevic},\ and\ \citenamefont {Javadi-Abhari}}]{seif2024suppressing}%
  \BibitemOpen
  \bibfield  {author} {\bibinfo {author} {\bibfnamefont {Alireza}\ \bibnamefont
  {Seif}}, \bibinfo {author} {\bibfnamefont {Haoran}\ \bibnamefont {Liao}},
  \bibinfo {author} {\bibfnamefont {Vinay}\ \bibnamefont {Tripathi}}, \bibinfo
  {author} {\bibfnamefont {Kevin}\ \bibnamefont {Krsulich}}, \bibinfo {author}
  {\bibfnamefont {Moein}\ \bibnamefont {Malekakhlagh}}, \bibinfo {author}
  {\bibfnamefont {Mirko}\ \bibnamefont {Amico}}, \bibinfo {author}
  {\bibfnamefont {Petar}\ \bibnamefont {Jurcevic}}, \ and\ \bibinfo {author}
  {\bibfnamefont {Ali}\ \bibnamefont {Javadi-Abhari}},\ }\href@noop {}
  {\enquote {\bibinfo {title} {Suppressing correlated noise in quantum
  computers via context-aware compiling},}\ } (\bibinfo {year} {2024}),\
  \Eprint {http://arxiv.org/abs/2403.06852} {arXiv:2403.06852 [quant-ph]}
  \BibitemShut {NoStop}%
\bibitem [{\citenamefont {Shirizly}\ \emph {et~al.}(2024)\citenamefont
  {Shirizly}, \citenamefont {Misguich},\ and\ \citenamefont
  {Landa}}]{Shirizly:2024aa}%
  \BibitemOpen
  \bibfield  {author} {\bibinfo {author} {\bibfnamefont {Liran}\ \bibnamefont
  {Shirizly}}, \bibinfo {author} {\bibfnamefont {Gr{\'e}goire}\ \bibnamefont
  {Misguich}}, \ and\ \bibinfo {author} {\bibfnamefont {Haggai}\ \bibnamefont
  {Landa}},\ }\bibfield  {title} {\enquote {\bibinfo {title} {Dissipative
  dynamics of graph-state stabilizers with superconducting qubits},}\ }\href
  {\doibase 10.1103/PhysRevLett.132.010601} {\bibfield  {journal} {\bibinfo
  {journal} {Physical Review Letters}\ }\textbf {\bibinfo {volume} {132}},\
  \bibinfo {pages} {010601--} (\bibinfo {year} {2024})}\BibitemShut {NoStop}%
\bibitem [{\citenamefont {B{\"a}umer}\ \emph {et~al.}(2024)\citenamefont
  {B{\"a}umer}, \citenamefont {Tripathi}, \citenamefont {Seif}, \citenamefont
  {Lidar},\ and\ \citenamefont {Wang}}]{Baumer2024}%
  \BibitemOpen
  \bibfield  {author} {\bibinfo {author} {\bibfnamefont {Elisa}\ \bibnamefont
  {B{\"a}umer}}, \bibinfo {author} {\bibfnamefont {Vinay}\ \bibnamefont
  {Tripathi}}, \bibinfo {author} {\bibfnamefont {Alireza}\ \bibnamefont
  {Seif}}, \bibinfo {author} {\bibfnamefont {Daniel}\ \bibnamefont {Lidar}}, \
  and\ \bibinfo {author} {\bibfnamefont {Derek~S.}\ \bibnamefont {Wang}},\
  }\href@noop {} {\enquote {\bibinfo {title} {Quantum fourier transform using
  dynamic circuits},}\ } (\bibinfo {year} {2024}),\ \Eprint
  {http://arxiv.org/abs/2403.09514} {arXiv:2403.09514 [quant-ph]} \BibitemShut
  {NoStop}%
\bibitem [{\citenamefont {Evert}\ \emph {et~al.}(2024)\citenamefont {Evert},
  \citenamefont {Izquierdo}, \citenamefont {Sud}, \citenamefont {Hu},
  \citenamefont {Grabbe}, \citenamefont {Rieffel}, \citenamefont {Reagor},\
  and\ \citenamefont {Wang}}]{evert2024syncopated}%
  \BibitemOpen
  \bibfield  {author} {\bibinfo {author} {\bibfnamefont {Bram}\ \bibnamefont
  {Evert}}, \bibinfo {author} {\bibfnamefont {Zoe~Gonzalez}\ \bibnamefont
  {Izquierdo}}, \bibinfo {author} {\bibfnamefont {James}\ \bibnamefont {Sud}},
  \bibinfo {author} {\bibfnamefont {Hong-Ye}\ \bibnamefont {Hu}}, \bibinfo
  {author} {\bibfnamefont {Shon}\ \bibnamefont {Grabbe}}, \bibinfo {author}
  {\bibfnamefont {Eleanor~G.}\ \bibnamefont {Rieffel}}, \bibinfo {author}
  {\bibfnamefont {Matthew~J.}\ \bibnamefont {Reagor}}, \ and\ \bibinfo {author}
  {\bibfnamefont {Zhihui}\ \bibnamefont {Wang}},\ }\href
  {https://arxiv.org/abs/2403.07836} {\enquote {\bibinfo {title} {Syncopated
  dynamical decoupling for suppressing crosstalk in quantum circuits},}\ }
  (\bibinfo {year} {2024}),\ \Eprint {http://arxiv.org/abs/2403.07836}
  {arXiv:2403.07836 [quant-ph]} \BibitemShut {NoStop}%
\bibitem [{\citenamefont {Brown}\ and\ \citenamefont
  {Lidar}(2024)}]{brown2024efficient}%
  \BibitemOpen
  \bibfield  {author} {\bibinfo {author} {\bibfnamefont {Amy~F.}\ \bibnamefont
  {Brown}}\ and\ \bibinfo {author} {\bibfnamefont {Daniel~A.}\ \bibnamefont
  {Lidar}},\ }\href@noop {} {\enquote {\bibinfo {title} {Efficient
  chromatic-number-based multi-qubit decoherence and crosstalk suppression},}\
  } (\bibinfo {year} {2024}),\ \Eprint {http://arxiv.org/abs/2406.13901}
  {arXiv:2406.13901} \BibitemShut {NoStop}%
\bibitem [{\citenamefont {Tripathi}\ \emph {et~al.}(2024)\citenamefont
  {Tripathi}, \citenamefont {Goss}, \citenamefont {Vezvaee}, \citenamefont
  {Nguyen}, \citenamefont {Siddiqi},\ and\ \citenamefont
  {Lidar}}]{tripathi2024quditdynamicaldecouplingsuperconducting}%
  \BibitemOpen
  \bibfield  {author} {\bibinfo {author} {\bibfnamefont {Vinay}\ \bibnamefont
  {Tripathi}}, \bibinfo {author} {\bibfnamefont {Noah}\ \bibnamefont {Goss}},
  \bibinfo {author} {\bibfnamefont {Arian}\ \bibnamefont {Vezvaee}}, \bibinfo
  {author} {\bibfnamefont {Long~B.}\ \bibnamefont {Nguyen}}, \bibinfo {author}
  {\bibfnamefont {Irfan}\ \bibnamefont {Siddiqi}}, \ and\ \bibinfo {author}
  {\bibfnamefont {Daniel~A.}\ \bibnamefont {Lidar}},\ }\href
  {https://arxiv.org/abs/2407.04893} {\enquote {\bibinfo {title} {Qudit
  dynamical decoupling on a superconducting quantum processor},}\ } (\bibinfo
  {year} {2024}),\ \Eprint {http://arxiv.org/abs/2407.04893} {arXiv:2407.04893
  [quant-ph]} \BibitemShut {NoStop}%
\bibitem [{\citenamefont {Jozsa}(2001)}]{Jozsa:2001aa}%
  \BibitemOpen
  \bibfield  {author} {\bibinfo {author} {\bibfnamefont {R.}~\bibnamefont
  {Jozsa}},\ }\bibfield  {title} {\enquote {\bibinfo {title} {Quantum
  factoring, discrete logarithms, and the hidden subgroup problem},}\ }\href
  {\doibase 10.1109/5992.909000} {\bibfield  {journal} {\bibinfo  {journal}
  {Computing in Science \& Engineering}\ }\textbf {\bibinfo {volume} {3}},\
  \bibinfo {pages} {34--43} (\bibinfo {year} {2001})}\BibitemShut {NoStop}%
\bibitem [{\citenamefont {Zantema}(2022)}]{Zantema2022}%
  \BibitemOpen
  \bibfield  {author} {\bibinfo {author} {\bibfnamefont {Hans}\ \bibnamefont
  {Zantema}},\ }\bibfield  {title} {\enquote {\bibinfo {title} {{Complexity of
  Simon's problem in classical sense}},}\ }\href
  {http://arxiv.org/abs/2211.01776} {\bibfield  {journal} {\bibinfo  {journal}
  {arXiv e-prints}\ } (\bibinfo {year} {2022})},\ \Eprint
  {http://arxiv.org/abs/2211.01776} {2211.01776} \BibitemShut {NoStop}%
\bibitem [{\citenamefont {Cai}\ and\ \citenamefont {Qiu}(2018)}]{Cai2018}%
  \BibitemOpen
  \bibfield  {author} {\bibinfo {author} {\bibfnamefont {Guangya}\ \bibnamefont
  {Cai}}\ and\ \bibinfo {author} {\bibfnamefont {Daowen}\ \bibnamefont {Qiu}},\
  }\bibfield  {title} {\enquote {\bibinfo {title} {Optimal separation in exact
  query complexities for simon's problem},}\ }\href {\doibase
  10.1016/j.jcss.2018.05.001} {\bibfield  {journal} {\bibinfo  {journal}
  {Journal of Computer and System Sciences}\ }\textbf {\bibinfo {volume} {97}}
  (\bibinfo {year} {2018}),\ 10.1016/j.jcss.2018.05.001}\BibitemShut {NoStop}%
\bibitem [{\citenamefont {Sloane}(2001)}]{Erdos-Borwein}%
  \BibitemOpen
  \bibfield  {author} {\bibinfo {author} {\bibfnamefont {N.~J.~A.}\
  \bibnamefont {Sloane}},\ }\href {https://oeis.org/A065442} {\enquote
  {\bibinfo {title} {{The On-Line Encyclopedia of Integer Sequences:
  Erd\"os-Borwein constant}},}\ } (\bibinfo {year} {2001})\BibitemShut
  {NoStop}%
\bibitem [{\citenamefont {Lenstra}(2011)}]{Lenstra:2011aa}%
  \BibitemOpen
  \bibfield  {author} {\bibinfo {author} {\bibfnamefont {Arjen~K.}\
  \bibnamefont {Lenstra}},\ }\enquote {\bibinfo {title} {\textit{L}
  notation},}\ in\ \href {https://doi.org/10.1007/978-1-4419-5906-5_459} {\emph
  {\bibinfo {booktitle} {{Encyclopedia of Cryptography and Security}}}},\
  \bibinfo {editor} {edited by\ \bibinfo {editor} {\bibfnamefont {Henk C.~A.}\
  \bibnamefont {van Tilborg}}\ and\ \bibinfo {editor} {\bibfnamefont {Sushil}\
  \bibnamefont {Jajodia}}}\ (\bibinfo  {publisher} {Springer US},\ \bibinfo
  {address} {Boston, MA},\ \bibinfo {year} {2011})\ pp.\ \bibinfo {pages}
  {709--710}\BibitemShut {NoStop}%
\bibitem [{\citenamefont {Ezzell}\ \emph {et~al.}(2023)\citenamefont {Ezzell},
  \citenamefont {Pokharel}, \citenamefont {Tewala}, \citenamefont {Quiroz},\
  and\ \citenamefont {Lidar}}]{DD-survey}%
  \BibitemOpen
  \bibfield  {author} {\bibinfo {author} {\bibfnamefont {Nic}\ \bibnamefont
  {Ezzell}}, \bibinfo {author} {\bibfnamefont {Bibek}\ \bibnamefont
  {Pokharel}}, \bibinfo {author} {\bibfnamefont {Lina}\ \bibnamefont {Tewala}},
  \bibinfo {author} {\bibfnamefont {Gregory}\ \bibnamefont {Quiroz}}, \ and\
  \bibinfo {author} {\bibfnamefont {Daniel~A.}\ \bibnamefont {Lidar}},\
  }\bibfield  {title} {\enquote {\bibinfo {title} {Dynamical decoupling for
  superconducting qubits: A performance survey},}\ }\href {\doibase
  10.1103/PhysRevApplied.20.064027} {\bibfield  {journal} {\bibinfo  {journal}
  {Physical Review Applied}\ }\textbf {\bibinfo {volume} {20}},\ \bibinfo
  {pages} {064027--} (\bibinfo {year} {2023})}\BibitemShut {NoStop}%
\bibitem [{\citenamefont {Genov}\ \emph {et~al.}(2017)\citenamefont {Genov},
  \citenamefont {Schraft}, \citenamefont {Vitanov},\ and\ \citenamefont
  {Halfmann}}]{Genov:2017aa}%
  \BibitemOpen
  \bibfield  {author} {\bibinfo {author} {\bibfnamefont {Genko~T.}\
  \bibnamefont {Genov}}, \bibinfo {author} {\bibfnamefont {Daniel}\
  \bibnamefont {Schraft}}, \bibinfo {author} {\bibfnamefont {Nikolay~V.}\
  \bibnamefont {Vitanov}}, \ and\ \bibinfo {author} {\bibfnamefont {Thomas}\
  \bibnamefont {Halfmann}},\ }\bibfield  {title} {\enquote {\bibinfo {title}
  {Arbitrarily accurate pulse sequences for robust dynamical decoupling},}\
  }\href {\doibase 10.1103/PhysRevLett.118.133202} {\bibfield  {journal}
  {\bibinfo  {journal} {Physical Review Letters}\ }\textbf {\bibinfo {volume}
  {118}},\ \bibinfo {pages} {133202--} (\bibinfo {year} {2017})}\BibitemShut
  {NoStop}%
\bibitem [{\citenamefont {Quiroz}\ and\ \citenamefont
  {Lidar}(2013)}]{Quiroz:2013fv}%
  \BibitemOpen
  \bibfield  {author} {\bibinfo {author} {\bibfnamefont {Gregory}\ \bibnamefont
  {Quiroz}}\ and\ \bibinfo {author} {\bibfnamefont {Daniel~A.}\ \bibnamefont
  {Lidar}},\ }\bibfield  {title} {\enquote {\bibinfo {title} {Optimized
  dynamical decoupling via genetic algorithms},}\ }\href
  {http://link.aps.org/doi/10.1103/PhysRevA.88.052306} {\bibfield  {journal}
  {\bibinfo  {journal} {Phys. Rev. A}\ }\textbf {\bibinfo {volume} {88}},\
  \bibinfo {pages} {052306--} (\bibinfo {year} {2013})}\BibitemShut {NoStop}%
\bibitem [{\citenamefont {Akaike}(1974)}]{Akaike:1974aa}%
  \BibitemOpen
  \bibfield  {author} {\bibinfo {author} {\bibfnamefont {H.}~\bibnamefont
  {Akaike}},\ }\bibfield  {title} {\enquote {\bibinfo {title} {A new look at
  the statistical model identification},}\ }\href {\doibase
  10.1109/TAC.1974.1100705} {\bibfield  {journal} {\bibinfo  {journal} {IEEE
  Transactions on Automatic Control}\ }\textbf {\bibinfo {volume} {19}},\
  \bibinfo {pages} {716--723} (\bibinfo {year} {1974})}\BibitemShut {NoStop}%
\bibitem [{\citenamefont {Rines}\ and\ \citenamefont
  {Chuang}(2018)}]{rines2018high}%
  \BibitemOpen
  \bibfield  {author} {\bibinfo {author} {\bibfnamefont {Rich}\ \bibnamefont
  {Rines}}\ and\ \bibinfo {author} {\bibfnamefont {Isaac}\ \bibnamefont
  {Chuang}},\ }\href@noop {} {\enquote {\bibinfo {title} {High performance
  quantum modular multipliers},}\ } (\bibinfo {year} {2018}),\ \Eprint
  {http://arxiv.org/abs/1801.01081} {arXiv:1801.01081 [quant-ph]} \BibitemShut
  {NoStop}%
\bibitem [{\citenamefont {Oonishi}\ and\ \citenamefont
  {Kunihiro}(2023)}]{10262370}%
  \BibitemOpen
  \bibfield  {author} {\bibinfo {author} {\bibfnamefont {K.}~\bibnamefont
  {Oonishi}}\ and\ \bibinfo {author} {\bibfnamefont {N.}~\bibnamefont
  {Kunihiro}},\ }\bibfield  {title} {\enquote {\bibinfo {title} {Shor's
  algorithm using efficient approximate quantum fourier transform},}\ }\href
  {\doibase 10.1109/TQE.2023.3319044} {\bibfield  {journal} {\bibinfo
  {journal} {IEEE Transactions on Quantum Engineering}\ }\textbf {\bibinfo
  {volume} {4}},\ \bibinfo {pages} {1--16} (\bibinfo {year}
  {2023})}\BibitemShut {NoStop}%
\bibitem [{\citenamefont {Poblete}\ \emph {et~al.}(2006)\citenamefont
  {Poblete}, \citenamefont {Munro},\ and\ \citenamefont
  {Papadakis}}]{poblete2006binomial}%
  \BibitemOpen
  \bibfield  {author} {\bibinfo {author} {\bibfnamefont {Patricio~V}\
  \bibnamefont {Poblete}}, \bibinfo {author} {\bibfnamefont {J~Ian}\
  \bibnamefont {Munro}}, \ and\ \bibinfo {author} {\bibfnamefont {Thomas}\
  \bibnamefont {Papadakis}},\ }\bibfield  {title} {\enquote {\bibinfo {title}
  {The binomial transform and the analysis of skip lists},}\ }\href
  {https://www.sciencedirect.com/science/article/pii/S0304397505007711}
  {\bibfield  {journal} {\bibinfo  {journal} {Theoretical computer science}\
  }\textbf {\bibinfo {volume} {352}},\ \bibinfo {pages} {136--158} (\bibinfo
  {year} {2006})}\BibitemShut {NoStop}%
\bibitem [{\citenamefont {Temme}\ \emph {et~al.}(2017)\citenamefont {Temme},
  \citenamefont {Bravyi},\ and\ \citenamefont {Gambetta}}]{Temme:2017aa}%
  \BibitemOpen
  \bibfield  {author} {\bibinfo {author} {\bibfnamefont {Kristan}\ \bibnamefont
  {Temme}}, \bibinfo {author} {\bibfnamefont {Sergey}\ \bibnamefont {Bravyi}},
  \ and\ \bibinfo {author} {\bibfnamefont {Jay~M.}\ \bibnamefont {Gambetta}},\
  }\bibfield  {title} {\enquote {\bibinfo {title} {Error mitigation for
  short-depth quantum circuits},}\ }\href
  {https://link.aps.org/doi/10.1103/PhysRevLett.119.180509} {\bibfield
  {journal} {\bibinfo  {journal} {Physical Review Letters}\ }\textbf {\bibinfo
  {volume} {119}},\ \bibinfo {pages} {180509--} (\bibinfo {year}
  {2017})}\BibitemShut {NoStop}%
\bibitem [{\citenamefont {Srinivasan}\ \emph {et~al.}(2022)\citenamefont
  {Srinivasan}, \citenamefont {Pokharel}, \citenamefont {Quiroz},\ and\
  \citenamefont {Boots}}]{Srinivasan:22}%
  \BibitemOpen
  \bibfield  {author} {\bibinfo {author} {\bibfnamefont {Siddarth}\
  \bibnamefont {Srinivasan}}, \bibinfo {author} {\bibfnamefont {Bibek}\
  \bibnamefont {Pokharel}}, \bibinfo {author} {\bibfnamefont {Gregory}\
  \bibnamefont {Quiroz}}, \ and\ \bibinfo {author} {\bibfnamefont {Byron}\
  \bibnamefont {Boots}},\ }\bibfield  {title} {\enquote {\bibinfo {title}
  {Scalable measurement error mitigation via iterative bayesian unfolding},}\
  }\href {https://arxiv.org/abs/2210.12284} {\bibfield  {journal} {\bibinfo
  {journal} {arXiv e-prints}\ } (\bibinfo {year} {2022})},\ \Eprint
  {http://arxiv.org/abs/2210.12284} {2210.12284} \BibitemShut {NoStop}%
\bibitem [{\citenamefont {Nation}\ \emph {et~al.}(2021)\citenamefont {Nation},
  \citenamefont {Kang}, \citenamefont {Sundaresan},\ and\ \citenamefont
  {Gambetta}}]{Nation:2021aa}%
  \BibitemOpen
  \bibfield  {author} {\bibinfo {author} {\bibfnamefont {Paul~D.}\ \bibnamefont
  {Nation}}, \bibinfo {author} {\bibfnamefont {Hwajung}\ \bibnamefont {Kang}},
  \bibinfo {author} {\bibfnamefont {Neereja}\ \bibnamefont {Sundaresan}}, \
  and\ \bibinfo {author} {\bibfnamefont {Jay~M.}\ \bibnamefont {Gambetta}},\
  }\bibfield  {title} {\enquote {\bibinfo {title} {Scalable mitigation of
  measurement errors on quantum computers},}\ }\href {\doibase
  10.1103/PRXQuantum.2.040326} {\bibfield  {journal} {\bibinfo  {journal} {PRX
  Quantum}\ }\textbf {\bibinfo {volume} {2}},\ \bibinfo {pages} {040326--}
  (\bibinfo {year} {2021})}\BibitemShut {NoStop}%
\bibitem [{\citenamefont {Guntuboyina}\ \emph {et~al.}(2013)\citenamefont
  {Guntuboyina}, \citenamefont {Saha},\ and\ \citenamefont
  {Schiebinger}}]{guntuboyina2013sharp}%
  \BibitemOpen
  \bibfield  {author} {\bibinfo {author} {\bibfnamefont {Adityanand}\
  \bibnamefont {Guntuboyina}}, \bibinfo {author} {\bibfnamefont {Sujayam}\
  \bibnamefont {Saha}}, \ and\ \bibinfo {author} {\bibfnamefont {Geoffrey}\
  \bibnamefont {Schiebinger}},\ }\bibfield  {title} {\enquote {\bibinfo {title}
  {Sharp inequalities for $ f $-divergences},}\ }\href
  {https://ieeexplore.ieee.org/document/6655891} {\bibfield  {journal}
  {\bibinfo  {journal} {IEEE transactions on information theory}\ }\textbf
  {\bibinfo {volume} {60}},\ \bibinfo {pages} {104--121} (\bibinfo {year}
  {2013})}\BibitemShut {NoStop}%
\bibitem [{\citenamefont {Wagenmakers}\ and\ \citenamefont
  {Farrell}(2004)}]{Wagenmakers:2004aa}%
  \BibitemOpen
  \bibfield  {author} {\bibinfo {author} {\bibfnamefont {Eric-Jan}\
  \bibnamefont {Wagenmakers}}\ and\ \bibinfo {author} {\bibfnamefont {Simon}\
  \bibnamefont {Farrell}},\ }\bibfield  {title} {\enquote {\bibinfo {title}
  {{AIC model selection using Akaike weights}},}\ }\href {\doibase
  10.3758/BF03206482} {\bibfield  {journal} {\bibinfo  {journal} {Psychonomic
  Bulletin \& Review}\ }\textbf {\bibinfo {volume} {11}},\ \bibinfo {pages}
  {192--196} (\bibinfo {year} {2004})}\BibitemShut {NoStop}%
\bibitem [{\citenamefont {Burnham}\ and\ \citenamefont
  {Anderson}(2002)}]{Burnham:book}%
  \BibitemOpen
  \bibfield  {author} {\bibinfo {author} {\bibfnamefont {K.~P.}\ \bibnamefont
  {Burnham}}\ and\ \bibinfo {author} {\bibfnamefont {D.~R.}\ \bibnamefont
  {Anderson}},\ }\href {https://doi.org/10.1007/b97636} {\emph {\bibinfo
  {title} {Model Selection and Multimodel Inference: A Practical
  Information-Theoretic Approach}}},\ \bibinfo {edition} {2nd}\ ed.\ (\bibinfo
  {publisher} {Springer},\ \bibinfo {year} {2002})\BibitemShut {NoStop}%
\bibitem [{QCT()}]{QCTRL-IBM}%
  \BibitemOpen
  \href {https://docs.q-ctrl.com/q-ctrl-embedded/ibm/get-started} {\enquote
  {\bibinfo {title} {{Q-CTRL Embedded on IBM Quantum services overview}},}\
  }\BibitemShut {NoStop}%
\bibitem [{\citenamefont {Mundada}\ \emph {et~al.}(2023)\citenamefont
  {Mundada}, \citenamefont {Barbosa}, \citenamefont {Maity}, \citenamefont
  {Wang}, \citenamefont {Merkh}, \citenamefont {Stace}, \citenamefont
  {Nielson}, \citenamefont {Carvalho}, \citenamefont {Hush}, \citenamefont
  {Biercuk},\ and\ \citenamefont {Baum}}]{Mundada:2023aa}%
  \BibitemOpen
  \bibfield  {author} {\bibinfo {author} {\bibfnamefont {Pranav~S.}\
  \bibnamefont {Mundada}}, \bibinfo {author} {\bibfnamefont {Aaron}\
  \bibnamefont {Barbosa}}, \bibinfo {author} {\bibfnamefont {Smarak}\
  \bibnamefont {Maity}}, \bibinfo {author} {\bibfnamefont {Yulun}\ \bibnamefont
  {Wang}}, \bibinfo {author} {\bibfnamefont {Thomas}\ \bibnamefont {Merkh}},
  \bibinfo {author} {\bibfnamefont {T.~M.}\ \bibnamefont {Stace}}, \bibinfo
  {author} {\bibfnamefont {Felicity}\ \bibnamefont {Nielson}}, \bibinfo
  {author} {\bibfnamefont {Andre R.~R.}\ \bibnamefont {Carvalho}}, \bibinfo
  {author} {\bibfnamefont {Michael}\ \bibnamefont {Hush}}, \bibinfo {author}
  {\bibfnamefont {Michael~J.}\ \bibnamefont {Biercuk}}, \ and\ \bibinfo
  {author} {\bibfnamefont {Yuval}\ \bibnamefont {Baum}},\ }\bibfield  {title}
  {\enquote {\bibinfo {title} {Experimental benchmarking of an automated
  deterministic error-suppression workflow for quantum algorithms},}\ }\href
  {\doibase 10.1103/PhysRevApplied.20.024034} {\bibfield  {journal} {\bibinfo
  {journal} {Physical Review Applied}\ }\textbf {\bibinfo {volume} {20}},\
  \bibinfo {pages} {024034--} (\bibinfo {year} {2023})}\BibitemShut {NoStop}%
\end{thebibliography}

%

\end{document}